\documentclass[11pt]{article}

\usepackage{amsmath}
\usepackage{enumitem}
\usepackage{amsfonts}
\usepackage{amsthm}
\usepackage{amssymb}

\usepackage{textcomp}
%\usepackage{cleveref}

%mess with sizing
\usepackage[margin=1in]{geometry}
%\addtolength{\topmargin}{-.5in}
\usepackage{indentfirst}

%sectioning
% \usepackage{authblk}
\usepackage{abstract}
%\usepackage{natbib}

%convenience
\usepackage{pbox}
\usepackage{comment}
\usepackage{pifont}

%figures etc
\usepackage[boxruled,noend]{algorithm2e}
\setlength{\algomargin}{2em}

\usepackage{pgfplots}
\pgfplotsset{compat=1.4}
\usetikzlibrary{shapes}
\usetikzlibrary{plotmarks}

\usepackage{framed}
\usepackage{mdframed}
\usepackage{placeins}
\usepackage{subcaption}
\usepackage{afterpage}
\usepackage[colorlinks,citecolor=blue,linkcolor=blue,urlcolor=red,pagebackref]{hyperref}
\usepackage{courier}
\usepackage{multirow}
\usepackage{mathrsfs}
\usepackage{microtype}

% Margin comments
% \usepackage[textwidth=2cm,textsize=footnotesize]{todonotes}
\usepackage{stmaryrd}
\usepackage[capitalise]{cleveref}
\usepackage{thm-restate}

\allowdisplaybreaks

%\usepackage{stackengine}
%\stackMath

\usepackage[numbers]{natbib}

\makeatletter
\DeclareRobustCommand\citep
{\begingroup\tiny\NAT@swatrue\let\NAT@ctype\z@\NAT@partrue
    \@ifstar{\NAT@fulltrue\NAT@citetp}{\NAT@fullfalse\NAT@citetp}}
\makeatother

%math symbols

\newcommand{\eps}{\varepsilon}
\newcommand{\Oish}{\widetilde{O}}

\newcommand{\outdeg}{\text{\normalfont outdeg}}

\DeclareMathOperator{\dist}{dist}

% quality of life (stolen from seth)

\newcommand{\polylog}{\operatorname{polylog}}

\newcommand{\hopdist}{\texttt{hopdist}}
\newcommand{\tO}{\widetilde{O}}
% names

\newcommand{\UDP}{\textsc{UDP}}
\newcommand{\DDP}{\textsc{DDP}}
\newcommand{\DADP}{\textsc{DADP}}

%theoremy things
\newtheorem{theorem}{Theorem}[section]
\newtheorem{lemma}[theorem]{Lemma}

\newtheorem{corollary}[theorem]{Corollary}
\newtheorem{observation}[theorem]{Observation}

\newtheorem{claim}[theorem]{Claim}
\newtheorem{open}{Open Question}

\theoremstyle{definition}
\newtheorem{definition}[theorem]{Definition}

%\numberwithin{theorem}{section}

\usepackage{amsmath}

\newtheorem*{definition*}{Definition}
\newtheorem*{theorem*}{Theorem}

\newtheorem*{lemma*}{Lemma}

\newcommand{\ints}[1]{\llbracket #1 \rrbracket}

\title{New Separations and Reductions for Directed Preservers and Hopsets}
\author{
Gary Hoppenworth\thanks{University of Michigan. \texttt{garytho@umich.edu}. Supported by NSF:AF 2153680. } \and Yinzhan Xu\thanks{Massachusetts Institute of Technology. \texttt{xyzhan@mit.edu}. Supported by NSF Grant CCF-2330048, a Simons
Investigator Award and HDR TRIPODS Phase II grant 2217058. } \and Zixuan Xu\thanks{Massachusetts Institute of Technology. \texttt{zixuanxu@mit.edu}.}
}
\date{}

\begin{document}

\maketitle

\begin{abstract}
    We study distance preservers, hopsets, and shortcut sets in $n$-node, $m$-edge directed graphs, and show improved bounds and new reductions for various settings of these problems. 
    Our first set of results is about exact and approximate distance preservers. We give the following bounds on the size of directed distance preservers with $p$ demand pairs:
    \begin{itemize}
        \item $\Oish(n^{5/6}p^{2/3} + n)$ edges for exact distance preservers in unweighted graphs, and
        \item $\Omega(n^{2/3}p^{2/3})$ edges for approximate distance preservers with any given finite stretch, in graphs with arbitrary aspect ratio.
    \end{itemize}

Additionally, we establish a new directed-to-undirected reduction for exact distance preservers. We show that if undirected distance preservers have size $O(n^{\lambda}p^{\mu} + n)$ for constants $\lambda, \mu > 0$, then directed distance preservers have size $O\left( n^{\frac{1}{2-\lambda}}p^{\frac{1+\mu-\lambda}{2-\lambda}} + n^{1/2}p + n\right).$ As a consequence of the reduction, if state-of-the-art upper bounds for undirected preservers can be improved for some $p \leq n$, then so can  state-of-the-art upper bounds for directed preservers.

Our second set of results is about directed hopsets and shortcut sets. For hopsets in directed graphs, we prove that the hopbound is:
\begin{itemize}
    \item $\Omega(n^{2/9})$ for $O(m)$-size shortcut sets,  improving the previous $\Omega(n^{1/5})$ bound [Vassilevska Williams, Xu and Xu, SODA 2024],
    \item $\Omega(n^{2/7})$ for $O(m)$-size exact hopsets in unweighted graphs, improving the previous $\Omega(n^{1/4})$ bound [Bodwin and Hoppenworth, FOCS 2023], and
    \item $\Omega(n^{1/2})$ for $O(n)$-size approximate hopsets with any given finite stretch, in graphs with arbitrary aspect ratio. This result establishes a separation between this setting and $O(n)$-size approximate hopsets for graphs with polynomial aspect ratio,  which are known to have an $\widetilde{O}(n^{1/3})$ upper bound [Bernstein and Wein, SODA 2023].
\end{itemize}
\end{abstract}

\setcounter{page}{0}
\thispagestyle{empty}

\newpage

\tableofcontents
\setcounter{page}{0}
\thispagestyle{empty}
\newpage

\section{Introduction}

In network design, two commonly studied categories of problems are graph augmentation problems, where the goal is to augment the graph so that it gains some useful property, and graph sparsification problems, where the goal is to sparsify the graph while preserving some of its features. In this work, we study two fundamental problems from these two categories: hopsets and distance preservers. 

The hopset problem  is a typical graph augmentation problem. We are given a  graph,\footnote{We assume graphs to be directed unless otherwise specified.}
and we want to add a small number of additional edges to the graph so that all pairs of vertices $(s, t)$ have an approximate shortest $s \leadsto t$ path with few edges. Another popular graph augmentation object, the shortcut set (first introduced by \cite{Thorup92}), is actually a special case of hopsets where we only care about reachability. Hopsets and shortcut sets have a wide range of applications in areas such as parallel computing and distributed computing (e.g.~\cite{KS97,HKN14b,EN16,FN18,JLS19,KS21,EN19,Fineman19,ASZ20,BGW20,CFR20,ElkinM21}).

On the other hand, the (approximate) distance preserver problem is a graph sparsification problem. Here, we are given an input graph $G$ and a set of demand pairs $P \subseteq V(G) \times V(G)$, and we need to find a sparse subgraph $H$ of $G$ that preserves all distances between demand pairs in $G$ up to an approximation factor $\alpha$. Approximate distance preservers are a generalization of another well-studied graph sparsification object,  reachability preservers. Distance preservers have applications to problems such as distance oracles \cite{EP16, DBLP:conf/focs/ElkinS23} and spanners~\cite{BCE05, BV15, AB16soda, AB17jacm, ABP17, EFN17, GK17, BV21}.

Despite being graph augmentation and graph sparsification problems respectively, hopsets and distance preservers are known to be related in some ways. For instance, distance preserver lower bounds imply lower bounds for hopsets \cite{KP22, BHT22}, and  small-stretch hopsets can be translated into  small-stretch distance preservers in a black-box way \cite{KP22}.

In this paper, we develop better understandings of hopsets and distance preservers in directed graphs. We  discuss our results and compare them with prior work in \cref{sec:intro:hopset,sec:intro:preserver}.

\subsection{ Preservers}
\label{sec:intro:preserver}

The problem of constructing sparse distance preservers is an important basic combinatorial problem with connections to incidence geometry \cite{CE06, kalia2016generalizations}, distance oracles \cite{EP16, DBLP:conf/focs/ElkinS23}, and fast graph algorithms \cite{Alon02, CGMW18, DBLP:conf/focs/ElkinS23}. We now formally define the more general object of approximate distance preservers, or approximate preservers for short. 

\begin{definition}[Approximate Preservers]
    Given an $n$-node graph $G$ and a set of demand pairs $P \subseteq V(G) \times V(G)$ of size $|P| = p$ in the transitive closure of $G$, a subgraph $H$ of $G$ is an $\alpha$-approximate distance preserver of $G, P$ if for all $(s, t) \in P$, $s$ can reach $t$ in $H$, and 
    $$ \dist_H(s, t) \leq \alpha \cdot \dist_G(s, t).$$ 
\end{definition}
\noindent There are three settings of approximate preservers that have received special attention in prior work:
\begin{itemize}
    \item $\alpha = 1$ \qquad \hspace{3mm} (also known as \textit{Exact Distance Preservers}.)
    \item $\alpha = 1 +  \varepsilon$ \quad \hspace{0.5mm} (also known as $(1+\varepsilon)$-\textit{Approximate Preservers}.)
    \item $\alpha = \infty$ \qquad \hspace{1mm} (also known as \textit{Reachability Preservers.})\footnote{Naturally, reachability preservers only preserve the reachability between demand pairs.}
\end{itemize}
\newpage

\begin{table}[ht]
\small
    \centering
    \begin{tabular}{ c c c c}
    \hline 
        Setting  & Upper Bound & Lower Bound & Beats Consistency? \\
        \hline 
        \hline 
\begin{tabular}{@{}c@{}} $\alpha = 1$  \\ undirected 
\end{tabular}
        & 
        \begin{tabular}{@{}c@{}} $O(\min\{n^{1/2}p + n, np^{1/2}\})$  \\ \citep{CE06} \end{tabular}
          & 
                  \begin{tabular}{@{}c@{}} $\Omega(n^{2/3} p^{2/3} + n)$  \\ \citep{CE06} \end{tabular}
          & \ding{55}          \\
                  \hline 
\begin{tabular}{@{}c@{}} $\alpha = 1$  \\ undirected \\
unweighted \end{tabular}
        & 
        \begin{tabular}{@{}c@{}} $O\left(\min\left\{n^{2/3}p^{2/3}+np^{1/3},  \frac{n^2}{\texttt{RS}(n)}\right\}\right)$  \\ \citep{BV21, bodwin2017linear} \end{tabular}
          & 
                  \begin{tabular}{@{}c@{}}$\Omega(n^{2/(d+1)}p^{(d-1)/d})$ \\ for integer $d \ge 1$\quad \citep{CE06}\end{tabular}
          & \ding{51}          \\
        \hline
        \begin{tabular}{@{}c@{}} $\alpha = 1$  \\ directed \end{tabular} &
        \begin{tabular}{@{}c@{}} $O(\min\{n^{2/3}p + n, np^{1/2}\})$  \\ \citep{CE06, Bodwin21} \end{tabular} &
        \begin{tabular}{@{}c@{}} $\Omega(n^{2/3} p^{2/3} + n)$  \\ \citep{CE06} \end{tabular}  & \ding{55} 
         \\
        \hline
        \begin{tabular}{@{}c@{}} $\alpha = 1+\varepsilon$  \\ undirected \end{tabular} & 
        \begin{tabular}{@{}c@{}} $\tO_{\varepsilon}(n+p\cdot n^{o(1)})$  \\ \citep{KP22,DBLP:conf/focs/ElkinS23} \end{tabular}
          & \begin{tabular}{@{}c@{}} $\Omega(n + p)$  \\  \end{tabular} &  \ding{51}  \\
        \hline 
        \begin{tabular}{@{}c@{}} $\alpha = 1+\varepsilon$  \\ directed \end{tabular} & 
        \begin{tabular}{@{}c@{}}  $\tO_{\varepsilon}(n^{2/3}p^{2/3} + np^{1/3})$ \\ \citep{KP22, BW23} \end{tabular} & \begin{tabular}{@{}c@{}}$\Omega(n^{2/(d+1)}p^{(d-1)/d})$ \\ for integer $d \ge 1$\quad \citep{CE06}\end{tabular}  & \ding{51} 
        \\
        \hline
          \begin{tabular}{@{}c@{}} $\alpha = \infty$  \\ directed \end{tabular}
          &
        \begin{tabular}{@{}c@{}} $O(n^{3/4}p^{1/2} + n^{0.59}p^{0.71} + n)$  \\ \citep{ BHT22, bodwin2024improved} \end{tabular} & 
        \begin{tabular}{@{}c@{}}$\Omega(n^{2/(d+1)}p^{(d-1)/d})$ \\ for integer $d \ge 1$\quad \citep{CE06}\end{tabular} & \ding{51} 
        \\
        \hline
    \end{tabular}
    \caption{Some previous bounds for various settings of approximate preservers on $n$-node graphs with $p$ demand pairs. We say  existing upper bounds for approximate preservers beat consistency if they imply a separation with the bounds for consistent tiebreaking schemes for some $p \in [1, n^2]$. In the second row, $\texttt{RS}(n)$ denotes the Rusza-Szemer\'{e}di function from extremal graph theory.  }
    \label{tab:distance-preserver-previous}
\end{table}

Exact distance preservers were first formally introduced in the seminal work of Coppersmith and Elkin \cite{CE06}. In their work, Coppersmith and Elkin proved upper bounds and lower bounds for exact distance preservers in several different settings. As can be seen in Table \ref{tab:distance-preserver-previous}, many of the bounds proved in \cite{CE06} are still state-of-the-art.  

A fundamental tool used in the upper bounds of \cite{CE06} is \textit{consistent tiebreaking schemes}. Given an $n$-node graph $G$, a tiebreaking scheme roughly corresponds to a collection of shortest paths $\Pi$ in $G$. Informally, we say that tiebreaking scheme $\Pi$ is \textit{consistent} if no two shortest paths $\pi_1, \pi_2 \in \Pi$ intersect, split apart, and then intersect later again (see Definition \ref{def:consistency} for a formal definition). Upper bounds for the number of distinct edges in consistent tiebreaking schemes imply upper bounds for exact distance preservers, since we may assume without loss of generality that shortest paths in the input graph are consistent. A  line of work initiated by \cite{CE06} has established tight bounds on the size of consistent tiebreaking schemes. 

\begin{theorem}[Consistent Tiebreaking Schemes] Consistent tiebreaking schemes on $n$ nodes and $p$ paths have 
\begin{itemize}
    \item $\Theta(\min(n^{1/2}p + n, np^{1/2}))$ edges in undirected graphs \cite{CE06, BV21}, and
    \item $\Theta(\min(n^{2/3}p + n, np^{1/2}))$  edges in directed graphs \cite{CE06, bodwin2017linear, BHT22}.
\end{itemize}
\end{theorem}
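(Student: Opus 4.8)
The plan is to prove the two directions separately: an upper bound showing that every consistent system on $n$ nodes with $p$ paths has at most the stated number of edges, and a matching family of constructions.

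I would build everything on the rigidity that consistency forces. The first observation is that if $\pi,\pi'$ are two paths of a consistent system and both pass through a vertex $u$ and later through a vertex $v$, then they traverse the same $u$--$v$ subpath: otherwise either the tiebreaking scheme has chosen two distinct shortest $u$--$v$ paths, or two paths intersect, separate, and reintersect, violating Definition~\ref{def:consistency}. Hence the intersection of any two paths is one contiguous subpath on which they agree; in particular, the paths sharing a fixed source form an out-branching and those sharing a fixed sink form an in-branching, which already gives the weak bound $|E(\Pi)|\le \min(\sigma,\tau)\cdot(n-1)$ for $\sigma,\tau$ the numbers of distinct sources and sinks (so we may assume both are $\gg \sqrt p$). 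It also lets us count ``forks'' --- vertices where two paths agreeing on their incoming edge use different outgoing edges --- and write $|E(\Pi)| \le n + (\#\text{forks})$, reducing the upper bound to a bound on the number of forks (and, symmetrically, of merges).

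For the upper bounds I would split the paths by length at a threshold $\ell$. Paths shorter than $\ell$ contribute at most $p\ell$ edges in total, and such a path also creates at most $\ell$ forks. For the long paths (length $\ge \ell$) the key point is that they cannot be spread out: a maximal collection of pairwise vertex-disjoint long paths has size $\le n/\ell$, so its union is a ``core'' with at most $n$ edges, and by consistency every other long path meets this core in at most $n/\ell$ contiguous pieces whose gaps must each be short (a gap of length $\ge \ell$ disjoint from the core would contradict maximality, and distinct long paths reuse the same gap whenever their endpoints on the core agree). Turning this decomposition into the numerical bound --- carefully, possibly iterating the split --- gives the target $O(n\sqrt p)$ from the choice $\ell \approx n/\sqrt p$, and a more delicate fork-counting version gives $O(n^{2/3}p + n)$ in the small-$p$ regime. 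I expect this directed fork / long-path count to be the main obstacle: the undirected argument of \cite{CE06,BV21} runs the overlap analysis symmetrically from both ends of each path, and a one-sided substitute costs an extra $n^{1/6}$ factor, which is precisely why the directed exponent on $n$ is $2/3$ rather than $1/2$.

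For the lower bound I would exhibit, for every relevant $(n,p)$, a directed graph in which each demand pair has a \emph{unique} shortest path, so that any consistent tiebreaking scheme must contain all $p$ of them. In the range $p\gtrsim n^{2/3}$ a grid/routing construction on $\Theta(n)$ vertices realizes $p$ pairwise edge-disjoint shortest paths of length $\approx n/\sqrt p$ (degenerating to the complete graph with length-one paths as $p\to n^2$), yielding $\Omega(n\sqrt p)$ edges; in the range $p\lesssim n^{2/3}$ the incidence-geometry construction of Coppersmith and Elkin \cite{CE06} --- a layered graph whose shortest paths behave like lines meeting pairwise in $O(1)$ points --- gives $p$ essentially edge-disjoint shortest paths of length $\Theta(n^{2/3})$ and hence $\Omega(n^{2/3}p)$ edges, while the $\Omega(n)$ term comes from a single sparse spanning structure all of whose edges lie on some demand-pair shortest path. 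The remaining work here is routine but fiddly: certifying shortest-path uniqueness and tuning the construction parameters so the bound meets the threshold of the minimum exactly.
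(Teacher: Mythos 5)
First, a caveat: the paper does not prove this theorem --- it is stated as background with citations to \cite{CE06, BV21, bodwin2017linear, BHT22} --- so there is no in-paper proof to compare against; I am judging your argument on its own terms.

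The central structural claim you build everything on --- that consistency forces the intersection of any two paths to be a single contiguous subpath --- is false in directed graphs, and this is not a technicality but exactly the phenomenon that makes the directed bound different. Consistency only applies when $x$ precedes $y$ on \emph{both} paths; two consistent directed shortest paths can share many vertices that appear in one order on the first path and in the \emph{reverse} order on the second (this is precisely Claim~\ref{clm:path_inter} and Figure~\ref{fig:dag} of the paper). Consequently a single pair of directed paths can produce up to $\Theta(n)$ branching events, your fork/merge accounting and the ``core with short gaps'' decomposition both break, and no amount of ``one-sided'' patching of the undirected argument yields $n^{2/3}$ --- the $O(n^{2/3}p+n)$ directed upper bound is a genuinely separate and substantially harder result of \cite{bodwin2017linear}, not an $n^{1/6}$-lossy variant of \cite{CE06}. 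Even in the undirected case your core argument has a gap: maximality of a vertex-disjoint family of long paths only forbids adding another \emph{whole path of the system}, so a long ``gap'' subpath disjoint from the core does not contradict maximality. The clean route to the upper bounds is the branching-event lemma the paper itself quotes in Appendix~B ($|E(H)| = O((nb)^{1/2}+n)$ for $b$ branching events), combined with $b=O(p^2)$ for undirected consistent paths and with the per-vertex count $\sum_v \binom{d_v}{2} \le b$ for the $O(np^{1/2})$ bound.

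On the lower bounds, your attribution is also off in the regime that matters: the Coppersmith--Elkin point-line incidence construction gives $\Omega(n^{2/3}p^{2/3})$, which does not match $n^{2/3}p$; the tight $\Omega(n^{2/3}p)$ lower bound for directed consistent tiebreaking (and likewise the tight $\Omega(n^{1/2}p)$ undirected bound of \cite{BV21}) require different, purpose-built constructions in which $p$ paths of length $\Theta(n^{2/3})$ (resp.\ $\Theta(n^{1/2})$) are pairwise nearly edge-disjoint unique shortest paths. Asserting that such constructions exist ``up to fiddly tuning'' is where the actual content of the theorem lives, so as written the proposal establishes neither direction of either bullet.
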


As we mentioned earlier, upper bounds on the number of edges in consistent tiebreaking schemes yield upper bounds for exact distance preservers, which in turn imply upper bounds for $\alpha$-approximate preservers for all  $\alpha> 1$. Recently, there has been a tremendous amount of progress in constructing approximate preservers with sparsity \textit{better} than consistent tiebreaking schemes. We summarize this progress below. 
\begin{itemize}
    \item Exact distance preservers of size $O(n^{2/3}p^{2/3} + np^{1/3})$ edges in undirected, unweighted graphs were constructed in \cite{BV21}.
    \item Almost-optimal $(1+\varepsilon)$-preservers in undirected graphs were constructed in \cite{KP22, DBLP:conf/focs/ElkinS23}. Approximate preservers of size $\tO_{\varepsilon}(n^{2/3}p^{2/3}+np^{1/3})$ edges for directed graphs were constructed in~\cite{KP22, BW23}. 
    \item Reachability preservers of size $O(n^{3/4}p^{1/2}+n^{2 - \sqrt{2}+o(1)}p^{1/\sqrt{2}} + n)$ 
 edges were constructed in \cite{BHT22, bodwin2024improved}, and \textit{online} reachability preservers of size $O(n^{0.72}p^{0.56}+n^{0.6}p^{0.7} + n)$ were constructed in \cite{bodwin2024improved}. 
\end{itemize}

All of these preserver constructions are significantly sparser than consistent tiebreaking schemes in some regime of $p$. However, as we can see in Table \ref{tab:distance-preserver-previous}, the state-of-the-art bounds for exact distance preservers in weighted (un)directed graphs do not achieve sparsity better than consistent tiebreaking. This motivates the following question. 

\begin{open}
    Is there a separation between exact distance preservers and consistent tiebreaking schemes?
    \label{open:consistency}
\end{open}

In addition to asking when we can separate approximate  preservers from consistent tiebreaking schemes, we can also ask whether we can separate different settings of $\alpha$-approximate preservers from each other. For example, the $\Omega(n^{2/3}p^{2/3})$ lower bound for exact distance preservers in weighted graphs due to \cite{CE06}, along with the recent progress on reachability preserver upper bounds due to~\cite{BHT22}, imply a separation between $\alpha$-approximate preservers when $\alpha = 1$ and $\alpha$-approximate preservers when $\alpha = \infty$. However, the following questions remained: 

\begin{open}
    Is there a separation between approximate preservers with finite stretch and approximate preservers with infinite stretch (reachability preservers)? Is there a separation between exact preservers and approximate preservers with finite stretch?
    \label{open:approx}
\end{open}

Towards answering \cref{open:consistency} and \cref{open:approx}, we prove the following results for distance preservers:

\begin{theorem}[Main Result 1]
\label{thm:main2}
    For $n$-node graphs and $p$ demand pairs, the size of the distance preserver is:
    \begin{enumerate}
             \item \label{item:thm:main2:item2}
        (\cref{thm:unw_dp}) $\widetilde{O}(n^{5/6}p^{2/3}+n)$ edges, for exact distance preservers in unweighted graphs, and        \item \label{item:thm:main2:item1} (\cref{thm:aprx_pres_lb}) $\Omega(n^{2/3}p^{2/3})$ edges, for any given finite stretch.
    \end{enumerate}
\end{theorem}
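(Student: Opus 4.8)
Here is the plan for each of the two parts.

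For the lower bound (\cref{thm:aprx_pres_lb}), the plan is to upgrade Coppersmith and Elkin's $\Omega(n^{2/3}p^{2/3})$ lower bound for \emph{exact} distance preservers in weighted directed graphs so that it survives any prescribed finite stretch $\alpha$, at the cost of the (permitted) superpolynomial aspect ratio. Recall their construction yields a graph $G$, a demand set $P$ with $|P|=p$, and for each $(s,t)\in P$ a designated $s\leadsto t$ path $\pi_{s,t}$, such that the $\pi_{s,t}$ are shortest paths, $\bigl|\bigcup_{(s,t)\in P}\pi_{s,t}\bigr|=\Omega(n^{2/3}p^{2/3})$, and the branching pattern of the $\pi_{s,t}$ is laid out along a natural layering of $V(G)$ (each $\pi_{s,t}$ traverses the gadget monotonically). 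The idea is to reweight $G$ so that for every demand pair, \emph{every} $s\leadsto t$ walk other than $\pi_{s,t}$ has length strictly more than $\alpha\cdot\dist_G(s,t)$. Given such a reweighting, any $\alpha$-approximate preserver $H$ must contain an $s\leadsto t$ walk of length $\le\alpha\dist_G(s,t)$, hence must contain all of $\pi_{s,t}$; so $|E(H)|\ge\bigl|\bigcup_{(s,t)\in P}\pi_{s,t}\bigr|=\Omega(n^{2/3}p^{2/3})$. This also answers the first half of \cref{open:approx}.

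The crux is the reweighting: every deviation must become \emph{multiplicatively}, not merely additively, expensive, and cheap walks that weave among several of the $\pi_{s,t}$ must be ruled out. I would assign weights that grow geometrically along the layering, giving an edge in layer $i$ weight about $\Lambda^{i}$ for a base $\Lambda=\Lambda(\alpha)$ chosen so that (i) the length of $\pi_{s,t}$ is dominated by its single deepest edge and (ii) any walk that leaves $\pi_{s,t}$ must either reach a strictly deeper layer than $\pi_{s,t}$ visits or revisit a layer, in either case overshooting $\dist_G(s,t)$ by more than a factor $\alpha$. Checking (ii) is the main work: one must verify that the gadget admits a layer function monotone along all the $\pi_{s,t}$ at once — which it does, since the gadget is essentially already layered — and that with $\Lambda$ polynomially bounded the aspect ratio stays $2^{\poly(n)}$, which is fine here. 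A clean way to package this is to start from a distance-scaled copy of the gadget and verify the uniqueness of the $\le\alpha$-stretch path directly from the layer structure.

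For the upper bound (\cref{thm:unw_dp}) — stated for unweighted graphs, the interesting case being directed since undirected unweighted preservers already enjoy the stronger $O(n^{2/3}p^{2/3}+np^{1/3})$ bound — I would take a minimal exact preserver $H$, a union of $p$ shortest paths that we may take to be consistent (any two meet in a single contiguous subpath), and use $|E(H)|\le n+\sum_v(\outdeg_H(v)-1)^+$; it then suffices to bound the number of branching-out events $X:=\sum_v(\outdeg_H(v)-1)^+$. Each such event is charged to a pair of demand paths diverging at $v$, and by consistency a pair diverges at most once, giving $X=O(p^2)$ for free — so the content is beating this for $p\in(\sqrt n,n)$, which must exploit unweightedness. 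The plan: bucket the demand pairs by hop-distance, so that within each bucket all paths have a common scale $d$; the $O(\log n)$ per-bucket bounds sum to $\widetilde O(n^{5/6}p^{2/3}+n)$ since $p\mapsto p^{2/3}$ is concave, so it suffices to bound one bucket with $p$ paths each of length $\Theta(d)$. For this, split on a degree threshold $k$: vertices with $\outdeg_H(v)<k$ contribute $O(k)$ each, bounded by $O(k\cdot\min(n,pd))$; for a vertex $v$ with $\outdeg_H(v)\ge k$, pick $k$ demand paths leaving $v$ along distinct edges, observe their suffixes are pairwise vertex-disjoint by consistency, and charge the event to the fresh vertices those suffixes carry — here the unweighted common scale $d$ is what forces the suffixes to be long enough that the global budget of $n$ vertices caps the high-degree contribution. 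Balancing $k$ and $d$ against $n$ and $p$ should yield $X=\widetilde O(n^{5/6}p^{2/3})$, and since $n^{5/6}p^{2/3}=o(n^{2/3}p)$ for $p\gg\sqrt n$ this also bears on \cref{open:consistency}.

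The main obstacle in the upper bound is precisely the high-degree regime: turning ``$\outdeg_H(v)\ge k$'' into a genuine charge against $\Theta(n)$ distinct vertices, since a priori the path suffixes out of $v$ may be short and many high-degree vertices may cluster near the ends of paths. I expect the fix to combine the consistency-based disjointness of suffixes with a second, symmetric use of consistency from the in-side — pairing distinct in-edges with distinct out-edges at $v$, so that a branching cluster near the end of one bundle of paths is a branching cluster near the \emph{start} of another — together with the single-scale unweighted assumption to lower-bound the lengths of the relevant fragments; this is the step where directed unweighted graphs provably beat directed weighted ones, and where the exponent $5/6$ (rather than $2/3$) enters. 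The boundary cases (pairs at distance $O(1)$, paths that are subpaths of one another, and the additive $n$) should be routine.
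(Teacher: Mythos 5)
Both halves of your plan contain genuine gaps.

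For the lower bound (\cref{thm:aprx_pres_lb}): your reweighting is indexed by the \emph{layer} (position along the monotone ordering of the gadget), and your condition (ii) asserts that any deviating walk must reach a deeper layer or revisit one. In the point--line incidence graph this is false: the graph is a DAG under the first coordinate, every $s\leadsto t$ walk traverses exactly the range of first coordinates between $s$ and $t$, never revisits a layer, and never goes deeper than $t$. A layer-indexed geometric weight therefore charges the honest path $\pi_{s,t}$ and its competitors comparably (a competitor may even skip layers and pay less), so deviations are not penalized multiplicatively. The paper's weighting is indexed instead by the \emph{slope order} of the lines, $w(e)=(2\alpha n)^i$ for edges on the $i$-th steepest line, and the argument hinges on a geometric fact your sketch does not use (the ``Over-Under'' property, \cref{claim:over_under}): since all edges advance the first coordinate, any walk that leaves line $L_i$ and still terminates at $x_k$ must at some point traverse an edge lying on a line of slope strictly greater than that of $L_i$, hence of weight $\ge(2\alpha n)^{i+1}>\alpha\cdot w(\pi_{L_i})$. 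Without some substitute for this slope-monotonicity argument, your reweighting does not force the honest path to be the unique $\alpha$-approximate path.

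For the upper bound (\cref{thm:unw_dp}): your starting point ``by consistency a pair diverges at most once, giving $X=O(p^2)$ for free'' is an undirected fact that fails in directed graphs. Two edge-disjoint consistent shortest paths in a directed graph can share arbitrarily many vertices --- they simply appear in reverse order on the two paths (this is exactly \cref{clm:path_inter}) --- and each shared vertex is a branching event; this is precisely why directed preservers are harder than undirected ones. More importantly, the step you yourself flag as the obstacle (converting high out-degree into a charge against $\Theta(n)$ fresh vertices) is where the paper does something structurally different that your sketch does not contain: it proves a Dense Low-Diameter Cluster Lemma (\cref{lem:low-diam-scc}), extracting from the high-degree branching events along a single base path $\pi^*$ a set $S$ of $\Theta(d)$ vertices of degree $\Omega(d)$ with weak diameter $O(\widehat{\ell}p^2/(nd^2))$ (unweightedness enters here, since roundtrip hop-distances between consecutive branch points along $\pi^*$ sum to $O(\widehat{\ell})$), and then applies a sourcewise preserver bound $O((nsph)^{1/2}+n)$ (\cref{lem:sourcewise_pres}) whose proof pigeonholes the offsets $\dist(t,v_i)-\dist(u,v_i)$ into a window of width $O(h)$. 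It is the product of these two ingredients, iterated and balanced against the $O(n^2/\widehat{\ell})$ bound for long paths, that produces the exponent $5/6$; a direct suffix-disjointness charging of the kind you propose only reproduces consistency-type bounds and will not beat them.
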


\paragraph{Remark.} At first glance, the lower bound in \cref{thm:main2} \cref{item:thm:main2:item1} appears to nearly tightly match the $\tO_{\varepsilon}(n^{2/3}p^{2/3}+np^{1/3})$ upper bound for $(1+\varepsilon)$-approximate preservers due to \cite{KP22, BW23}. However, the upper bounds of \cite{KP22, BW23} hold only for graphs with $2^{\tO(1)}$ aspect ratio, while our lower bound graph has $\text{exp}(n)$ aspect ratio. 
\\

\noindent 
\cref{thm:main2} has the following consequences:
\begin{itemize}
    \item    \cref{thm:main2} \cref{item:thm:main2:item2}  answers \cref{open:consistency} in the affirmative in the natural graph class of unweighted directed graphs. Specifically, the upper bound of \cref{thm:main2} \cref{item:thm:main2:item2} beats consistent tiebreaking in the regime of $n^{1/2} \leq p \leq n$. 
    \item \cref{thm:main2} \cref{item:thm:main2:item1} partially answers \cref{open:approx} in the affirmative for general graphs, establishing a separation between approximate preservers with finite stretch and reachability preservers. 
    This can be verified by comparing the bound in  \cref{thm:main2} \cref{item:thm:main2:item1} to the upper bound for $\alpha = \infty$ in Table \ref{tab:distance-preserver-previous}.
  
\end{itemize}

Additionally, we establish a new directed-to-undirected reduction for exact distance preservers. Let $\UDP(n,p)$ denote the best upper bound for undirected distance preservers on $n$ vertices and $p$ demand pairs, and let $\DDP(n,p)$ denotes the same for directed distance preservers.

\begin{theorem}[Main Result 2]
\label{thm:reduction}
If $\textnormal{\UDP}(n, p) = O(n^{\lambda}p^{\mu} + n)$ for constants $\lambda, \mu > 0$, then $$\textnormal{\DDP}(n, p) = O\left( n^{\frac{1}{2-\lambda}}p^{\frac{1+\mu-\lambda}{2-\lambda}} + n^{1/2}p + n\right).$$  
\end{theorem}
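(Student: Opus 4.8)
The plan is to combine the classical ``threshold plus hitting set'' toolkit for directed preservers with a layered \emph{time-expansion} that turns a bounded-distance directed instance into an undirected one, so that the hypothesized bound $\UDP$ can be invoked as a black box; the exponents $\tfrac{1}{2-\lambda}$ and $\tfrac{1+\mu-\lambda}{2-\lambda}$ should then fall out of optimizing a single threshold parameter $\ell$. First, fix $\ell$ and assume the input shortest paths are consistent. For every demand pair $(s,t)$ with $\dist_G(s,t)<\ell$, add a shortest $s\leadsto t$ path to the output, costing $O(p\ell)$ edges; it remains to preserve the at most $p$ \emph{long} demand pairs. Sample $B\subseteq V(G)$ with $|B|=\tilde O(n/\ell)$ so that, with high probability, every length-$\ell$ subpath of every long demand path meets $B$. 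For a long pair $(s,t)$ with shortest path $\pi$, let $b,b'$ be the first and last vertices of $B$ on $\pi$: then the prefix $s\leadsto b$ and suffix $b'\leadsto t$ have length $<\ell$, so it suffices to exactly preserve $(s,b)$, $(b,b')$, $(b',t)$; the prefix/suffix pairs are short and cost another $O(p\ell)$ edges by brute force, leaving at most $p$ demand pairs all of whose endpoints lie in $B$. Repeating this step from the side of $B$ lets us additionally reduce to the case where every surviving demand pair has distance $O(\ell)$, at the price of creating up to $\tilde O(np/\ell)$ such pairs (controlling this blow-up is delicate; alternatively one sets up a recursion on $(n,p)$ and solves for its fixed point).

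Next, reduce this bounded-distance directed instance to an undirected one. Let $D=O(\ell)$ bound all residual demand distances, and form the undirected layered graph $G'$ on vertex set $\{v^{(i)}:v\in V(G),\,0\le i\le D\}$ with undirected edges $\{u^{(i)},v^{(i+w)}\}$ for each directed edge $(u,v)$ of weight $w$. Since $G'$ is layered, the undirected distance in $G'$ from $u^{(0)}$ to $v^{(d)}$ equals $\dist_G(u,v)$ whenever this value is $d\le D$, and a shortest such path projects layer-by-layer to a shortest directed $u\leadsto v$ path in $G$; so we replace each demand pair $(u,v)$ by $(u^{(0)},v^{(\dist_G(u,v))})$. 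Crucially, we restrict $G'$ to vertices actually on shortest paths of demand pairs: grouping by source these form depth-$\le D$ subtrees with few leaves, so the restricted $G'$ has only $\tilde O(D\cdot(\#\text{pairs}))$ vertices, which keeps the additive $+|V(G')|$ term of $\UDP$ from exploding. Running the hypothesized undirected bound on the restricted $G'$ yields a preserver $H'$ with $|E(H')|=O(|V(G')|^{\lambda}(\#\text{pairs})^{\mu}+|V(G')|)$; keeping every directed edge $(u,v)$ of $G$ that has a copy in $H'$ gives a subgraph of $G$ with at most $|E(H')|$ edges that, by the projection property, exactly preserves all residual directed demand pairs.

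Finally, add the $O(p\ell)$ edges from the short-path peeling to the $\UDP$ bound, with $|V(G')|$ and the pair count written in terms of $n,p,\ell$, obtaining a total of the form $O\bigl(p\ell+(\text{poly in }n,\ell)^{\lambda}(\text{poly in }n,p,\ell)^{\mu}+\cdots+n\bigr)$. Balancing the $p\ell$ term against the $\UDP$ term over $\ell$ and simplifying should give $\DDP(n,p)=O\bigl(n^{\frac{1}{2-\lambda}}p^{\frac{1+\mu-\lambda}{2-\lambda}}+n^{1/2}p+n\bigr)$, where the $n^{1/2}p$ term is the residue of the brute-force cost $p\ell$ at the optimal scale $\ell\approx n^{1/2}$ and persists regardless of how strong $\UDP$ is. As a sanity check, plugging in the Coppersmith--Elkin undirected bound $\lambda=\tfrac12,\mu=1$ recovers the known directed bound $O(n^{2/3}p+n)$.

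The technical heart is the interaction of the anchoring step and the time-expansion: one must reduce the long demand pairs to bounded-distance pairs anchored in a set of size $\tilde O(n/\ell)$ while simultaneously controlling (i) the number of resulting demand pairs and (ii) the effective size of the undirected instance $G'$, so that the additive $|V(G')|$ term in $\UDP$ never dominates. Threading these two accountings together against a single parameter $\ell$ is precisely what produces the stated exponents; the remaining ingredients (hitting sets, consistency, brute-forcing short paths, projecting a layered preserver) are standard.
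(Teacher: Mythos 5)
Your proposal takes a genuinely different route from the paper, but as written it has a gap that is not a detail --- it is the crux, and the arithmetic does not close. The problem is the interaction you yourself flag as ``delicate'': after hitting-set anchoring you must break each long demand path into segments between consecutive vertices of $B$, creating up to $p'=\Theta(\min\{np/\ell,\,n^2/\ell^2\})$ residual pairs, and your time-expanded undirected instance then has $|V(G')|=\Theta(\ell p')$ vertices. Feeding this into $\UDP$ gives a cost of order $(\ell p')^{\lambda}(p')^{\mu}+\ell p'$, and for \emph{every} choice of $\ell$ at least one of these terms exceeds the target. Concretely, run your own sanity check with $\lambda=\tfrac12,\mu=1$ and $p=n$: the target is $n^{5/3}$, the additive term alone is $\ell p'\geq \min\{np,\,n^2/\ell\}$, which forces $\ell\gtrsim n/p^{1/2}$ and hence a brute-force cost $p\ell\gtrsim np^{1/2}=n^{3/2}$, while the main term $(\ell p')^{1/2}p'$ is at best $n^{7/4}$ at $\ell=n^{1/2}$. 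The parenthetical ``set up a recursion and solve for its fixed point'' gestures at the missing mechanism but is never executed; the exponent $\tfrac{1}{2-\lambda}$ can only arise from an inequality in which $\DDP(n,p)$ appears on both sides with a power $2-\lambda$ after substitution, and nothing in your peeling-plus-expansion scheme produces that. Your interpretation of the $n^{1/2}p$ term as ``the residue of brute force at $\ell\approx n^{1/2}$'' is likewise a plausible story but not one your construction supports.

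The paper avoids segmentation entirely. It fixes a single longest demand path $\pi^*$ (of length $\Theta(\DDP(n,p)/p)$) in a minimal preserver with consistent, pairwise edge-disjoint demand paths, and observes that any other demand path meets $V(\pi^*)$ in the \emph{reverse} order (Claim~\ref{clm:path_inter}), so the restrictions of all other paths to $V(\pi^*)$ form a DAG $D$ on only $\Theta(\DDP(n,p)/p)$ vertices carrying at most $p$ unique shortest paths. A degree-counting argument along $\pi^*$ (Claim~\ref{claim:degree}) gives the self-referential inequality $\DDP(n,p)^2/(np)\leq O(\DDP(n,p)/p + p + \DADP(\DDP(n,p)/p,\,p))$, whose third case yields the $\tfrac{1}{2-\lambda}$ exponent and whose second case is the true source of the $n^{1/2}p$ term. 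The DAG is then made undirected by adding $W\cdot(j-i)$ to each edge $(v_i,v_j)$ along a topological order --- this is the one place where your time-expansion idea has a genuine counterpart, though the paper's reweighting works for arbitrary weighted DAGs without blowing up the vertex count. To salvage your approach you would need to either (i) show the residual pair count and $|V(G')|$ can simultaneously be kept at $O(p)$ and $O(\DDP(n,p)/p)$ respectively, which is essentially what the single-path DAG argument accomplishes, or (ii) actually set up and solve the fixed-point recursion you allude to.
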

This reduction has the property that if we plug  the state-of-the-art $O(n^{1/2}p + n)$ upper bound   for undirected distance preservers into the reduction, we recover the state-of-the-art $O(n^{2/3}p+n)$ upper bound for directed distance preservers. 
More generally, we obtain the following consequence of the reduction. 
\begin{corollary}
Consistent tiebreaking is optimal for directed distance preservers when $p \leq n^{2/3}$ only if consistent tiebreaking is optimal for undirected distance preservers when $p \leq n$. 
\label{corr:reduction}
\end{corollary}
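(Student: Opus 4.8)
The plan is to prove the contrapositive: assuming consistent tiebreaking is \emph{not} optimal for undirected distance preservers in the range $p \le n$, I will deduce that it is not optimal for directed distance preservers in the range $p \le n^{2/3}$. First recall, from the Consistent Tiebreaking Schemes theorem, that for $p \le n$ the undirected consistency bound $\Theta(\min(n^{1/2}p+n,\ np^{1/2}))$ equals $\Theta(n^{1/2}p+n)$, and for $p \le n^{2/3}$ the directed consistency bound $\Theta(\min(n^{2/3}p+n,\ np^{1/2}))$ equals $\Theta(n^{2/3}p+n)$ (in each case the first term of the $\min$ is the active one in the stated range). So ``not optimal for undirected when $p \le n$'' means there is an upper bound that is $o(n^{1/2}p+n)$ for some $p \le n$, and we must produce an upper bound that is $o(n^{2/3}p+n)$ for some $p \le n^{2/3}$.

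The next step is to normalize the hypothetical undirected improvement into the form that \cref{thm:reduction} accepts. Using that a distance preserver for $P$ only grows when demand pairs are added, and that a preserver bound at $p_0$ yields one at $p \ge p_0$ by partitioning the demand pairs into $O(p/p_0)$ blocks and taking the union, one checks that any genuine improvement can be massaged into a single power-law bound $\UDP(n,p) = O(n^{\lambda}p^{\mu}+n)$ with constants $\lambda,\mu>0$ satisfying $\lambda+\mu<3/2$; indeed $\lambda+\mu<3/2$ is exactly the condition for $n^{\lambda}p^{\mu}+n$ to beat $n^{1/2}p+n$ at $p=n$, and the union/splitting argument lets us assume without loss of generality that the improvement is witnessed at $p=n$ (improvements cannot occur for $p=o(n^{1/2})$, where consistency is already $\Theta(n)$). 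This normalization — verifying that a generic, possibly multi-term or range-restricted improvement reduces to this clean form — is the one place requiring genuine care, and is where I expect the main (though routine) work to lie. Granting it, \cref{thm:reduction} gives
$$\DDP(n,p) = O\!\left(n^{\frac{1}{2-\lambda}}\,p^{\frac{1+\mu-\lambda}{2-\lambda}} + n^{1/2}p + n\right).$$

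Finally I evaluate this bound at $p=n^{2/3}$. A short exponent computation shows the first term becomes $n^{(5+2\mu-2\lambda)/(6-3\lambda)}$, and since $6-3\lambda>0$ one has $(5+2\mu-2\lambda)/(6-3\lambda) < 4/3 \iff 3(5+2\mu-2\lambda) < 4(6-3\lambda) \iff \lambda+\mu < 3/2$, which holds by hypothesis; the middle term is $n^{1/2}\cdot n^{2/3} = n^{7/6}$, also strictly below $n^{4/3}$. Hence $\DDP(n,n^{2/3}) = o(n^{4/3})$, whereas the directed consistency bound at $p=n^{2/3}$ is $\Theta(n^{2/3}\cdot n^{2/3}+n) = \Theta(n^{4/3})$. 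Thus directed consistent tiebreaking is beaten at $p=n^{2/3}\le n^{2/3}$, i.e.\ it is not optimal for $p\le n^{2/3}$, which completes the contrapositive. As a sanity check, substituting the exact undirected bound $(\lambda,\mu)=(1/2,1)$ into the same computation yields exponent exactly $4/3$, matching the remark preceding the corollary that the reduction recovers the directed consistency bound from the undirected one.
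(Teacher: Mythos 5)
Your contrapositive strategy is logically legitimate, and your exponent arithmetic at $p=n^{2/3}$ (including the sanity check at $(\lambda,\mu)=(1/2,1)$) is correct. However, the step you flag as ``routine'' --- normalizing a hypothetical undirected improvement into a power-law bound $\textsc{UDP}(n,p)=O(n^{\lambda}p^{\mu}+n)$ with constants satisfying $\lambda+\mu<3/2$ --- is not merely delicate; it is impossible in general. The corollary's formal content (see \cref{corr:reduction-main}) is that the consistency bound is tight up to \emph{constant} factors, so ``not optimal'' includes subpolynomial improvements, e.g.\ $\textsc{UDP}(n,n)=O(n^{3/2}/\log n)$. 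No constants $\lambda,\mu$ with $\lambda+\mu<3/2$ can express such a bound, since at $p=n$ that form asserts a polynomial improvement. Consequently, routing the argument through \cref{thm:reduction} only shows that \emph{polynomial} separations transfer from undirected to directed, which is strictly weaker than the stated corollary.

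The paper avoids this entirely by arguing in the forward direction with the raw intermediate inequality obtained by chaining \cref{lem:dir_dag} and \cref{lem:dag_und},
$$
\textsc{DDP}(n,p) \le O\!\left(\textsc{UDP}\!\left(\tfrac{\textsc{DDP}(n,p)}{p},\,p\right)^{1/2}\cdot n^{1/2}p^{1/2} + n^{1/2}p + n\right),
$$
rather than through the power-law packaging of \cref{thm:reduction}. Plugging in the assumed tightness $\textsc{DDP}(n,p)=\Theta(n^{2/3}p)$ for $n^{1/3}\le p\le n^{2/3}$ yields $\textsc{UDP}(n^{2/3},p)=\Omega(n^{1/3}p)$, i.e.\ $\textsc{UDP}(N,p)=\Omega(N^{1/2}p)$ for $N^{1/2}\lesssim p\le N$, which matches the consistency upper bound up to constants and so handles improvements of any magnitude. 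To repair your proof, either argue directly as the paper does, or run your contrapositive through this raw inequality instead of \cref{thm:reduction}; as written, the normalization step is a genuine gap. (A secondary point: even for polynomial improvements, \cref{thm:reduction} needs the power-law bound at the pairs $(\textsc{DDP}(n,p)/p,\,p)$ arising inside the reduction, not just at $(n,n)$, so your ``witnessed at $p=n$'' reduction would still need an explicit monotonicity/interpolation argument.)
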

Our reduction makes significant progress on \cref{open:consistency}. 
If consistent tiebreaking is optimal for directed distance preservers for $p \leq n^{2/3}$, then consistent tiebreaking is optimal for undirected distance preservers for $p \leq n$. Conversely, if consistent tiebreaking is suboptimal in undirected graphs for some $p \leq n$, then consistent tiebreaking is suboptimal in directed graphs for some $p \leq n^{2/3}$. 

Along the way to proving Theorem \ref{thm:reduction}, we also obtain the following interesting algorithmic result (\cref{corr:apsp-main} in the main body), which does not seem to appear in the literature to the best of our knowledge, though it could be folklore. 
\begin{corollary}
    All-pairs shortest paths (APSP) on $n$-node, $m$-edge directed acyclic graphs can be reduced to APSP on $n$-node, $m$-edge undirected graphs in $O(m)$ time. 
    \label{corr:apsp}
\end{corollary}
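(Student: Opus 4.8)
The plan is to give an explicit $O(n+m)$-time transformation $G \mapsto G'$ taking a (possibly negatively weighted) $n$-node, $m$-edge DAG $G$ to an undirected, \emph{positively} weighted graph $G'$ with exactly $n$ nodes and $m$ edges, together with a rule that recovers every directed distance of $G$ from the corresponding undirected distance of $G'$. The device is the standard one: use a topological order to attach a large ``potential'' to each vertex, so that a forward traversal of a directed edge becomes distinguishable, by its length alone, from a backward traversal.

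Concretely, first I would topologically sort $G$, obtaining $v_1,\dots,v_n$ with every edge pointing from a smaller to a larger index; let $W$ be the maximum absolute edge weight, and set $M := 1 + nW$ and $\phi(v_i) := iM$. For each directed edge $(v_i,v_j)$ of $G$ (so $i<j$), I put one undirected edge $\{v_i,v_j\}$ into $G'$ of weight $w_G(v_i,v_j) + (j-i)M$. Since $(j-i)M \ge M > W$, every edge of $G'$ has weight at least $1$; in particular shortest paths of $G'$ are simple, hence use at most $n-1$ edges. This step is $O(n+m)$ and yields a graph with exactly $n$ nodes and $m$ edges.

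The heart of the argument is the following identity. Consider a simple $s$--$t$ path $Q$ in $G'$; traverse it from $s$ to $t$, and at each step let $\Delta\phi$ be the change in potential and $w_{\mathrm{dir}}$ the weight of the underlying directed edge of $G$, so the step contributes $|\Delta\phi| + w_{\mathrm{dir}}$ to $\mathrm{len}_{G'}(Q)$. Writing $S^-$ for the total of $|\Delta\phi|$ over the steps of $Q$ that \emph{decrease} the potential, telescoping gives $\mathrm{len}_{G'}(Q) = \bigl(\phi(t)-\phi(s)\bigr) + 2S^- + \sum w_{\mathrm{dir}}$, hence
\[
d(Q) \;:=\; \mathrm{len}_{G'}(Q) + \phi(s) - \phi(t) \;=\; 2S^- + \textstyle\sum w_{\mathrm{dir}},
\]
where $S^-$ is a nonnegative multiple of $M$ and $\bigl|\sum w_{\mathrm{dir}}\bigr| \le (n-1)W < M$. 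Thus if $Q$ uses no potential-decreasing step, $Q$ is exactly the undirected image of a directed $s\to t$ path of $G$ and $d(Q)$ equals that path's length, with $|d(Q)| \le (n-1)W$; while if $Q$ uses any potential-decreasing step, then $d(Q) \ge 2M - (n-1)W > (n-1)W$. Minimizing over all $s$--$t$ paths then shows: if $s$ reaches $t$ in $G$ then $\mathrm{dist}_{G'}(s,t) + \phi(s) - \phi(t) = \mathrm{dist}_G(s,t) \le (n-1)W$, and otherwise (which in a DAG includes every pair with $\mathrm{pos}(s) \ge \mathrm{pos}(t)$, $s \ne t$) it exceeds $(n-1)W$, possibly being $\infty$. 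So after running undirected APSP on $G'$, I output for each pair $(s,t)$ the value $d := \mathrm{dist}_{G'}(s,t) + \phi(s) - \phi(t)$ if $d \le (n-1)W$, and $\infty$ otherwise; this equals $\mathrm{dist}_G(s,t)$ in every case (acyclicity gives $\mathrm{dist}_G(s,s)=0$ and no negative cycles). The postprocessing is $O(1)$ per pair, so it is dominated by the APSP call and the preprocessing.

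I expect the only real subtlety — the main obstacle in an otherwise routine argument — to be isolating the right potential scale $M$ and output threshold so that the two regimes (paths that are honest directed paths of $G$ versus paths that zig-zag, or that run ``backward'' when $\mathrm{pos}(s) > \mathrm{pos}(t)$) are provably separated simultaneously for all $\binom{n}{2}$ pairs, including those of infinite directed distance; the choices $M = 1+nW$ and threshold $(n-1)W$ are dictated by the worst case that a shortest path of $G'$ has $n-1$ edges. The topological sort, the positivity of $G'$'s weights, the telescoping identity, and the size and time bounds are then all immediate.
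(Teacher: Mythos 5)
Your proposal is correct and follows essentially the same route as the paper: a topological-order potential added to the edge weights (the paper's Lemma \ref{lem:dag_und}), followed by the same subtract-and-threshold postprocessing to recover directed distances and detect unreachability. Your version is slightly more careful in that it explicitly accommodates negative edge weights and pins down the separation threshold via the telescoping identity, but the underlying construction and recovery rule are the same.
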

\noindent A related result is the asymptotic equivalence between all-pairs shortest paths in directed dense graphs and all-pairs shortest paths in undirected dense graphs as shown by Vassilevska Williams and Williams \cite{DBLP:journals/jacm/WilliamsW18}; however, such an equivalence is not known for sparse graphs.

\subsection{ Hopsets}
\label{sec:intro:hopset}
A fundamental problem in algorithm design is how to compute (approximate) shortest paths and reachability in directed graphs efficiently. In many algorithmic settings where we want to compute shortest paths, it is useful to assume that shortest paths in the input graph contain few edges. However, in general this assumption does not hold. Instead, we can try adding a small set of additional edges to the input graph and hope that these edges reduce the lengths of shortest paths. This is the idea behind \textit{hopsets}, which we now formally define. 

\begin{definition}[Approximate Hopsets]
    Given a graph $G$, an $\alpha$-approximate hopset with hopbound $\beta$ is a set of additional edges $H$ such that:
    \begin{itemize}
        \item Every edge $(u, v) \in H$ has weight $w(u, v) = \dist_G(u, v)$.
        \item For every pair of nodes $(s, t)$ in the transitive closure of $G$, there exists an $s \leadsto t$ path $\pi$ in $G \cup H$ with at most $|\pi| \leq \beta$ edges and weight at most $w(\pi) \leq \alpha \cdot \dist_G(s, t)$.
    \end{itemize}
\end{definition}

\noindent There are three settings of approximate hopsets that have received special attention in prior work:
\begin{itemize}
    \item $\alpha = 1$ \qquad \hspace{3mm} (also known as \textit{Exact Hopsets}.)
    \item $\alpha = 1 +  \varepsilon$ \quad \hspace{0.5mm} (also known as \textit{$(1 +  \varepsilon)$-Approximate Hopsets.})
    \item $\alpha = \infty$ \qquad \hspace{1mm} (also known as \textit{Shortcut Sets}.)
\end{itemize}

\begin{table}[ht]
\small
    \centering
    \begin{tabular}{c c c c c}
    \hline 
        Setting & Size & Upper Bound & Lower Bound & Beats Folklore Sampling?  \\
        \hline 
        \hline 
        $\alpha = 1$ & $O(n)$ & 
                \begin{tabular}{@{}c@{}} $O(\sqrt{n})$ \\\citep{UY91, BRR10} \end{tabular}
          & \begin{tabular}{@{}c@{}} $\widetilde{\Omega}(\sqrt{n})$ \\\citep{BH22} \end{tabular}
          & \ding{55}
          \\
        \hline
         $\alpha = 1+\varepsilon$  
          & $O_{\varepsilon}(n)$ & 
          \begin{tabular}{@{}c@{}} $\tO(n^{1/3})$  \\ \citep{BW23} \end{tabular}
          &
           \begin{tabular}{@{}c@{}} $\Omega(n^{1/4})$ \\ \citep{BH22,DBLP:conf/soda/WilliamsXX24} \end{tabular} & \ding{51}
           \\
        \hline
        $\alpha = \infty$ & $O(n)$ & 
         \begin{tabular}{@{}c@{}} $\tO(n^{1/3})$\\ \citep{KP22a} \end{tabular}
          &
          \begin{tabular}{@{}c@{}} $\Omega(n^{1/4})$ \\ \citep{BH22,DBLP:conf/soda/WilliamsXX24} \end{tabular}& \ding{51}
        \\
        \hline
        $\alpha = \infty$ & $O(m)$ & 
         \begin{tabular}{@{}c@{}} $\tO((n^2/m)^{1/3})$ \\ \citep{KP22a} \end{tabular}
          & 
          \begin{tabular}{@{}c@{}}  $\Omega(n^{1/5})$  \\ \citep{DBLP:conf/soda/WilliamsXX24} \end{tabular} & \ding{51}
        \\
    \hline
    \end{tabular}
    \caption{Some previous bounds on various settings of $\alpha$-approximate hopsets in directed graphs. Note that the upper bound  for $\alpha = 1 + \varepsilon$ holds specifically for graphs with  aspect ratio $2^{\widetilde{O}(1)}$.} 
    \label{tab:hopset-previous}
\end{table}

Hopsets were first studied implicitly in the 90's by Ullman and Yannakakis \cite{UY91}, and defined explicitly by Cohen \cite{Cohen00}. In particular, Ullman and Yannakakis gave an implicit construction for exact hopsets using a simple sampling scheme, which has since become folklore (and slightly improved \cite{BRR10}). 

\begin{theorem}[Folklore Sampling \cite{UY91, BRR10}]
Every $n$-node graph has an $O(n)$-size hopset  with hopbound $O(n^{1/2})$.
\label{thm:folklore}
\end{theorem}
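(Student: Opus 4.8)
The plan is to build the hopset by random vertex sampling together with a weighted clique on the sample. Include each vertex of $G$ in a set $S$ independently with probability $p$ (to be fixed), and let $H := \{(u,v) : u,v\in S,\ u\neq v\}$ with weight $w(u,v) := \dist_G(u,v)$. These weights are valid hopset weights by definition, and $|H| \le |S|^2$, where $\E|S| = pn$ and $|S| = O(pn)$ with probability $1 - e^{-\Omega(pn)}$ by a Chernoff bound, so $|H| = O(p^2 n^2)$ with high probability.

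For the hopbound, fix an ordered pair $(s,t)$ with $t$ reachable from $s$, fix (once, before sampling) a shortest $s\leadsto t$ path $\pi = (v_0 = s, v_1, \dots, v_\ell = t)$ in $G$, and let $w$ be a window parameter. If $\ell \le 2w$, then $\pi$ itself is a path in $G\cup H$ with at most $2w$ hops and weight $\dist_G(s,t)$, so we are done. Otherwise, say $S$ \emph{serves} $\pi$ if it contains some $v_i$ with $i \le w$ and some $v_j$ with $j \ge \ell - w$; since $\ell > 2w$ we then have $i < j$. If $S$ serves $\pi$, concatenate the prefix $v_0\cdots v_i$, the hopset edge $(v_i,v_j)$, and the suffix $v_j\cdots v_\ell$: this is an $s\leadsto t$ walk in $G\cup H$ with at most $2w+1$ hops, and since every subpath of a shortest path is shortest, $w(v_i,v_j) = \dist_G(v_i,v_j)$ equals the length of $\pi$ between $v_i$ and $v_j$, so the walk has weight exactly $\dist_G(s,t)$. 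The probability that $S$ fails to serve a fixed $\pi$ is at most $2(1-p)^w$, so choosing $w = \Theta(p^{-1}\log n)$ makes this at most $n^{-3}$; a union bound over the at most $n^2$ pairs shows that with positive probability one sample $S$ serves every pair while also satisfying $|S| = O(pn)$. Taking $p = \Theta(\log n/\sqrt n)$ yields hopbound $O(\sqrt n)$ and size $O(n\log^2 n)$; taking instead $p = \Theta(1/\sqrt n)$ yields size $O(n)$ and hopbound $O(\sqrt n\log n)$.

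The argument above already proves the theorem up to logarithmic factors, and the main obstacle is removing them to get size $O(n)$ \emph{and} hopbound $O(\sqrt n)$ simultaneously. This is precisely where the crude union bound over all $n^2$ pairs is lossy: the length-$w$ prefixes (and suffixes) that $S$ must hit are not arbitrary $w$-subsets of $V$ but are themselves shortest paths, and for a fixed source $s$ all the relevant depth-$w$ prefixes are root-to-node paths in the shortest-path tree of $s$; exploiting this structure — rather than treating the $n^2$ hitting constraints as independent — is what allows $|S| = O(\sqrt n)$ with window $w = O(\sqrt n)$ without a logarithmic blow-up, as in the refinement attributed to \citep{BRR10}. (When $G$ is a DAG one can moreover replace the random $S$ by a deterministic choice driven by a topological/hop-layer ordering, which is the Ullman--Yannakakis viewpoint.) I expect this last step — the logarithm-free analysis of the structured hitting problem — to be the only real difficulty; everything else is the routine sample-and-clique calculation sketched above.
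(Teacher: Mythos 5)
The paper does not prove this theorem; it is stated as cited background (the result of \cite{UY91} ``slightly improved'' by \cite{BRR10}), so there is no in-paper proof to compare against. Your sample-and-clique argument is the standard folklore construction and is correct as far as it goes: the size bound $O(p^2n^2)$, the serve/concatenate step giving $2w+1$ hops at exact weight, and the union bound are all fine.

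The genuine gap is that your completed argument does not establish the theorem as stated: it yields either $O(n)$ edges with hopbound $O(\sqrt{n}\log n)$, or hopbound $O(\sqrt{n})$ with $O(n\log^2 n)$ edges, and the log-free statement --- $O(n)$ edges \emph{and} hopbound $O(\sqrt{n})$ simultaneously --- is exactly the content you defer to \cite{BRR10} without proving. This loss is not an artifact of a lazy union bound that your sketch obviously repairs: for a pair chosen by an adversary after the sample, the index of the first sampled vertex on its shortest path is the maximum of $\Theta(n^2)$ geometric-type variables with mean $\sqrt{n}$, which is genuinely $\Theta(\sqrt{n}\log n)$ unless one changes the argument, not just the counting. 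Your proposed fix --- that for a fixed source $s$ the relevant depth-$w$ prefixes are root-to-node paths in the shortest-path tree of $s$ --- does not by itself reduce the number of hitting constraints, since a shortest-path tree can have $\Theta(n)$ distinct vertices at depth $w$ and hence $\Theta(n)$ distinct prefixes per source, for $\Theta(n^2)$ events in total. Removing the logarithms requires an additional idea (a deterministic hitting-set construction exploiting path consistency, or a more careful charging argument), and that missing step is precisely the nontrivial part of the theorem as stated. Everything before it is correct, and you are right that it is routine.
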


While many algorithmic applications for hopsets have been found since the 90's, no progress had been made on approximate hopset upper bounds in directed graphs until recently. In 2003, Hesse  gave the first polynomial lower bounds for $O(n)$-size and $O(m)$-size shortcut sets, proving that indeed polynomial hopbound is necessary for hopsets \cite{Hesse03}. Nevertheless, it remained open whether the folklore sampling algorithm could be improved, even in the easiest setting of $\alpha = \infty$.

In a  breakthrough result by Kogan and Parter \cite{KP22a}, it was shown that there exist $O(n)$-size shortcut sets ($\infty$-approximate hopsets) with hopbound $\tO(n^{1/3})$. 
Kogan and Parter also proved that there exist $O(n)$-size $(1+\varepsilon)$-approximate hopsets with hopbound $\widetilde{O}_{\varepsilon}(n^{2/5})$. The hopbound for $(1+\varepsilon)$-approximate hopsets was  subsequently improved to $\widetilde{O}_{\varepsilon}(n^{1/3})$ by Bernstein and Wein \cite{BW23}, using a careful analysis of approximate shortest paths in directed graphs.\footnote{$\tO$ hides $\polylog(n)$ factors.} 

The possibility of extending these improvements to exact hopsets was ruled out by \cite{BH22}, where it was proved that $O(n)$-size hopsets have hopbound $\widetilde{\Omega}(n^{1/2})$ in the worst case. This lower bound proved that folklore sampling is indeed optimal for exact hopsets. Additionally, this lower bound, along with the upper bounds of \cite{KP22a, BW23}, established a separation between exact hopsets and approximate hopsets. 

% \newpage

\noindent However, the following question remained:

\begin{open}
 \label{open:open1}
    Is there a separation between approximate hopsets with finite stretch and approximate hopsets with infinite stretch (shortcut sets)?
\end{open}

One of the barriers to making progress on this question is that all known lower bounds for hopsets follow from existing lower bounds for shortcut sets (with the exception of \cite{BH22, KP22}). Indeed, while there has been a long series of work improving lower bounds for shortcut sets~\cite{Hesse03,HuangP21, LVWX22, BH22, DBLP:conf/soda/WilliamsXX24}, there has been relatively little progress in proving lower bounds for hopsets that surpass existing lower bounds for shortcut sets. This barrier motivates the following question.
\begin{open}
 \label{open:open2}
Can we prove new lower bounds for hopsets that surpass existing lower bounds for shortcut sets? 
\end{open}

Towards answering these open questions, we show a range of new lower bounds for various settings of hopsets, summarized in \cref{thm:main1} below:

\begin{theorem}[Main Result 3]
\label{thm:main1}
For hopsets in directed graphs, the hopbound is:
\begin{enumerate}
    \item (\cref{thm:dir_apx_hop_lb}.) \label{item:thm:main1:item1} $\Omega(n^{1/2})$ for $O(n)$-size approximate hopsets with any given finite stretch,
    \item (\cref{thm:shortcut}.) \label{item:thm:main1:item2} $\Omega(n^{2/9})$ for $O(m)$-size hopsets and stretch infinity (shortcut sets), and
     \item (\cref{thm:hopset}.) \label{item:thm:main1:item3} $\Omega(n^{2/7})$, for $O(m)$-size exact hopsets, in unweighted graphs (this bound holds even for undirected graphs). 
\end{enumerate}
\end{theorem}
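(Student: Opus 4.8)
The three lower bounds are proved by three separate constructions, but they share a common strategy that traces back to the alternation-product / layered-grid framework used in prior shortcut-set and hopset lower bounds (e.g.\ \cite{Hesse03, LVWX22, BH22, DBLP:conf/soda/WilliamsXX24}). In each case the plan is to build a directed (or, for part~3, undirected) graph with a carefully chosen set of ``critical'' source--sink pairs whose unique (or unique short-weight) paths are long, mutually ``spread out,'' and pairwise edge-disjoint enough that no set of $O(m)$ shortcut edges can simultaneously provide a low-hop alternative path for more than a small fraction of the pairs. Concretely, I would first fix a base gadget—a grid-like DAG in which the shortest $s\leadsto t$ path between each designated pair is forced to traverse $\Theta(\ell)$ vertices and is the \emph{only} path of that (weight, or in the reachability case, existence) type—then take an iterated product / tensor of $r$ copies so that paths in the product correspond to tuples of paths in the factors, giving roughly $n$ vertices, $m = \Theta(n \cdot \mathrm{poly})$ edges, and path length $\ell = \Theta(\text{product of the per-layer lengths})$. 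Optimizing the two free parameters (the per-gadget dimension and the number of product levels $r$) against the constraint ``$O(m)$ added edges can shortcut at most $o(p)$ critical pairs'' yields the exponents $2/9$, $2/7$, and $1/2$ respectively.

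For part~1 ($\Omega(n^{1/2})$ for $O(n)$-size finite-stretch hopsets in graphs with unbounded aspect ratio) I would aim for a cleaner, weight-based argument rather than a product construction: use exponentially growing edge weights so that, for an appropriate family of $\Theta(n)$ source--sink pairs, the \emph{only} path of weight at most $\alpha$ times the true distance is the original long path of $\Theta(\sqrt n)$ hops—regardless of how large the finite constant $\alpha$ is—and any hopset edge $(u,v)$, being forced to have weight exactly $\dist_G(u,v)$, cannot participate in a cheap-enough $s\leadsto t$ walk unless it essentially lies on that path. A natural candidate is a ``staircase'' or path-plus-weights gadget where deviating from the canonical path incurs a weight blow-up by an unbounded factor; then $O(n)$ hopset edges touch only $O(n)$ of the canonical-path vertices total, leaving some pair with no $o(\sqrt n)$-hop $\alpha$-approximate path. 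This is exactly the mechanism that separates finite-stretch hopsets in the unbounded-aspect-ratio regime from the $\tO(n^{1/3})$ upper bound of Bernstein--Wein, which crucially assumes polynomial aspect ratio.

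For parts~2 and~3 the plan is to start from the current best shortcut-set lower bound instance of \cite{DBLP:conf/soda/WilliamsXX24} (which gives $\Omega(n^{1/5})$ with $O(m)$ edges for reachability) and improve the product/parameter balance: either add one more level of recursion to the alternation structure, or replace a layer of the construction with a denser incidence-type gadget (in the spirit of \szt{} / \ruzem{} constructions) so that the number of edge-disjoint long critical paths grows faster relative to $m$. For the exact-hopset version (part~3) one can afford an unweighted \emph{undirected} graph, using distance-preserver-style lower-bound gadgets where a unique shortest path of length $\ell$ is forced between each demand pair (as in \cite{CE06}); the extra rigidity of ``shortest'' over ``reachable'' is what pushes the exponent from $2/9$ up to $2/7$. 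In all three cases the key lemma to establish is a counting/charging bound of the form: \emph{any set $S$ of added edges with $|S| = O(m)$ can reduce the hop-distance below $\beta$ for at most $|S|\cdot \beta / \ell$ of the critical pairs}, because each such shortcut path must ``cover'' an $\Omega(\ell/\beta)$-length contiguous stretch of some critical path and each added edge covers only $O(1)$ such stretches per critical path it meets; then choosing parameters so the right-hand side is $o(p)$ finishes the argument by pigeonhole.

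The main obstacle I expect is the charging step for parts~2 and~3: in the product construction a single shortcut edge can be ``reused'' across many critical pairs because the factor structure creates combinatorially many paths through any given vertex, so the naive edge-covers-$O(1)$-stretches claim fails, and one must instead argue that a shortcut useful for many pairs forces a large, structured subset of the pairs to share long common subpaths—contradicting the spread-out / near-disjointness property that the product was engineered to guarantee. Getting that tension between ``one shortcut serves many pairs'' and ``the pairs are edge-disjoint'' to balance exactly at the claimed exponents, while keeping $m = \tilde O(n)$ (so that $O(m)$-size really is $\tilde O(n)$-size for these instances), is the delicate part of the calculation; the weighted part~1 is comparatively routine once the exponential-weight gadget is set up correctly.
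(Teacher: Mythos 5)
Your high-level plan matches the paper's: Over--Under exponential weights on a grid for part~1, a concatenation/blow-up of the \cite{DBLP:conf/soda/WilliamsXX24} layered construction for part~2, an unlayered unweighted variant for part~3, and a potential-function charging argument throughout. But there are concrete gaps. For part~1, which you call ``comparatively routine,'' your charging step (``$O(n)$ hopset edges touch only $O(n)$ of the canonical-path vertices, leaving some pair uncovered'') does not work: the construction has $\Theta(n)$ critical paths each of length $\Theta(\sqrt n)$ living on only $n$ vertices, so the paths overlap massively and a single hopset edge can lie on many critical paths at once. The actual proof needs exactly the same machinery as parts~2--3, namely the quantitative intersection bound that any $k$ distinct critical paths share at most $O(\sqrt n / k)$ consecutive vertices (enforced via the bit-reversal permutation $q$ and a dyadic-interval argument), so that an edge serving $k$ pairs saves only $O(\sqrt n / k)$ hops on each, for $O(\sqrt n)$ total potential drop per edge. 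Without that lemma the argument collapses for all three parts, and you correctly flag this tension for parts~2--3 but leave it unresolved and wrongly wave it off for part~1.

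Two further missing pieces: (i) for the $O(m)$-size bounds one must first add, for free, the set $E_\Delta$ of all shortcuts spanning at most $\Delta$ layers and measure hop-distance in units of $\Delta$; otherwise shortcut edges spanning few layers are not controlled by the ``covers a long stretch'' argument. The final exponents $2/9$ and $2/7$ come from balancing $|H| = O(\min\{|P|, |V|\Delta\})$ against the potential, and this normalization is part of the calculation, not an afterthought. (ii) For part~3 the claim that each critical path is the \emph{unique shortest} path in an unlayered, undirected, unweighted graph is genuinely delicate --- the paper proves it by pairing forward and backward layer-crossings and comparing, per unit cost, dot products with a tailored vector $\delta = (1,\, 1 - \frac{d_2 - 0.5}{d_1 - 0.5},\, \frac{1}{d_1 - 0.5})$; a generic ``CE06-style unique-shortest-path gadget'' does not supply this. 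Also, the gain from $2/9$ to $2/7$ is not ``rigidity of shortest over reachable'' per se: it is that needing only unique \emph{shortest} (rather than unique) paths lets one drop the explicit layering, shrinking the vertex count from $\Theta(c^4 r^5)$ to $\Theta(c^3 r^4)$ while keeping comparable numbers of critical paths.
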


\paragraph{Remark.} At first glance, the $\Omega(n^{1/2})$ lower bound in \cref{item:thm:main1:item1} of \cref{thm:main1} for $O(n)$-size approximate hopsets seems to contradict the $O(n^{1/3})$ upper bound for $\tO_{\varepsilon}(n)$-size $(1+\varepsilon)$-approximate hopsets due to \cite{BW23}. However, the upper bounds of \cite{BW23} hold only for graphs with $2^{\widetilde{O}(1)}$ aspect ratio. Our lower bound graph  has $\text{exp}(n)$ aspect ratio. 
\\

\noindent \cref{thm:main1} has the following consequences:
\begin{itemize}
    \item \cref{thm:main1} \cref{item:thm:main1:item1} answers \cref{open:open1} in the affirmative for general graphs. In particular, it shows that for graphs with exponential aspect ratio, the folklore sampling $O(\sqrt{n})$ bound of \cite{UY91, BRR10} is optimal.

    \item \cref{thm:main1} \cref{item:thm:main1:item1} also separates approximate hopsets in graphs with polynomial aspect ratio from approximate hopsets in graphs with exponential aspect ratio. This is highly unusual behavior in network design problems. For example, in undirected graphs the state-of-the-art bounds for multiplicative spanners \cite{ADDJS93}, approximate distance preservers \cite{KP22}, and hopsets \cite{EN19, HP19, KP22} have no dependency on the aspect ratio of the input graph.
    
    \item \cref{thm:main1} \cref{item:thm:main1:item2} improves the previous $\Omega(n^{1/5})$ lower bound  for $O(m)$-size shortcut sets  from \cite{DBLP:conf/soda/WilliamsXX24} to $\Omega(n^{2/9})$.

    \item \cref{thm:main1} \cref{item:thm:main1:item3} makes progress on \cref{open:open2}. The best known lower bound for $O(m)$-size shortcut is $\Omega(n^{2/9})$, as shown in \cref{item:thm:main1:item2}. Our bound in \cref{item:thm:main1:item3} beats the current best bound for $O(m)$-size shortcut, establishing another example where the current best lower bound for some setting of hopset is higher than that for shortcut set. In fact, our $\Omega(n^{2/7})$ lower bound is even higher than the $\Omega(n^{1/4})$ lower bound for $O(n)$-size shortcut set \cite{BH22, DBLP:conf/soda/WilliamsXX24}.
    Improving existing lower bounds for $O(n)$-size shortcut set beyond $\Omega(n^{1/4})$ 
    is likely to be difficult due to known barriers for improving reachability preserver lower bounds \cite{BHT22}. 
\end{itemize}

\section{Technical Overview}

In this section, we provide intuition and proof overviews for our results. 
\begin{itemize}
    \item In Section \ref{sec:reduction_tech}, we discuss several new insights about the (non)existence of cycles in systems of shortest paths in directed graphs, with applications to exact distance preservers and APSP. (This corresponds to Item \ref{item:thm:main2:item2} of Theorem \ref{thm:main2} and  Theorem \ref{thm:reduction}.)
    \item In Section \ref{sec:apx_hop_tech}, we discuss the new weighting scheme we developed to prove new lower bounds for approximate hopsets and preservers (Item \ref{item:thm:main1:item1} of Theorem \ref{thm:main1} and Item \ref{item:thm:main2:item1} of Theorem \ref{thm:main2}). 
    \item In Section \ref{sec:other_tech}, we discuss our new lower bound constructions for $O(m)$-size shortcut sets and $O(m)$-size exact hopsets in unweighted graphs (Items \ref{item:thm:main1:item2} and \ref{item:thm:main1:item3} of Theorem \ref{thm:main1}).
\end{itemize}

\subsection{Exact Preserver Reductions and Upper Bounds}
\label{sec:reduction_tech}

In this section, we discuss the ideas behind our new reduction from directed distance preservers to undirected distance preservers, as well as our new upper bound for  distance preservers in directed unweighted graphs. Both of these results stem from new insights about the existence of acyclic \textit{and cyclic} structures in systems of shortest paths in directed graphs. These insights are summarized in Observation \ref{obs:undirected} and Lemma \ref{obs:directed}, respectively. 

\paragraph{Directed to Undirected Reduction.}
We will briefly sketch the strategy we use to obtain an extremal reduction from directed to undirected distance preservers. Let $G$ be an $n$-node directed graph, and let $P$ be a set of $|P| = p$ demand pairs in $G$. Our reduction from directed distance preservers to undirected distance preservers happens in two steps:
\begin{enumerate}
    \item First, we give an extremal reduction from directed distance preservers to distance preservers in DAGs. 
    \item Second, we give an extremal reduction from distance preservers in DAGs to distance preservers in undirected graphs. 
\end{enumerate}

The key idea behind our first reduction from directed distance preservers to distance preservers in DAGs is summarized in Observation \ref{obs:undirected} and visualized in Figure \ref{fig:dag}.

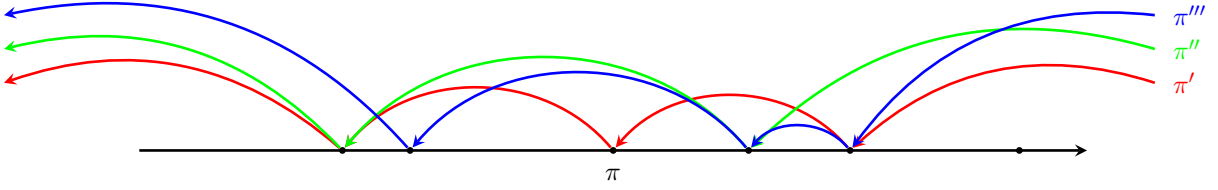
\begin{figure}[ht]
    \centering
    \begin{tikzpicture}[scale=0.9, transform shape]
		\node at (7,0) [label = below:$\pi$] (pi){};

		\node at (3, 0) [circle, fill, inner sep = 1pt] (v1){};
		\node at (7, 0) [circle, fill, inner sep = 1pt] (v2){};
		\node at (9, 0) [circle, fill, inner sep = 1pt] (v3){};
		\node at (10.5, 0) [circle, fill, inner sep = 1pt] (v4){};
		\node at (13, 0) [circle, fill, inner sep = 1pt] (v5){};
		\node at (4, 0) [circle, fill, inner sep = 1pt] (v6){};
	
    \draw [-stealth, line width = 1pt] (0, 0) to[] (14, 0);
    
    \node at (15,1) [label = {[text=red]right:$\pi'$}] (pi'){};
	\draw [-stealth, line width = 1pt, bend right, color = red] (15, 1) to[] (v4);
	\draw [-stealth, line width = 1pt, out = 130, in = 50, color = red] (v4) to[] (v2);
	\draw [-stealth, line width = 1pt, out = 130, in = 50, color = red] (v2) to[] (v1);
	\draw [-stealth, line width = 1pt, bend right, color = red] (v1) to[] (-2, 1);
	
	\node at (15,1.5) [label = {[text=green]right:$\pi''$}] (pi''){};
	\draw [-stealth, line width = 1pt, bend right, color = green] (15, 1.5) to[] (v3);
	\draw [-stealth, line width = 1pt, out = 130, in = 50, color = green] (v3) to[] (v1);
	\draw [-stealth, line width = 1pt, bend right, color = green] (v1) to[] (-2, 1.5);
	
	\node at (15,2) [label = {[text=blue]right:$\pi'''$}] (pi'''){};
	\draw [-stealth, line width = 1pt, bend right, color = blue] (15, 2) to[] (v4);
	\draw [-stealth, line width = 1pt, out = 130, in = 50, color = blue] (v4) to[] (v3);
	\draw [-stealth, line width = 1pt, out = 130, in = 50, color = blue] (v3) to[] (v6);
	\draw [-stealth, line width = 1pt, bend right, color = blue] (v6) to[] (-2, 2);

    \end{tikzpicture}
    \caption{This figure depicts an example of a path system $\Pi = \{\pi, \pi', \pi'', \pi'''\}$. The paths in $\Pi \setminus \{\pi\}$, restricted to the nodes in path $\pi$, form a DAG. }
    \label{fig:dag}
\end{figure}

\begin{observation}[DAG Structure (Informal)]
\label{obs:undirected}
Let $\Pi$ be a \textit{consistent} collection of shortest paths in a directed graph $G$, and let $\pi$ be a path in $\Pi$. Then the paths in $\Pi \setminus \{\pi\}$, restricted to the nodes in path $\pi$, form a DAG.
\end{observation}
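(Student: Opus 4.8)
The plan is to prove the contrapositive-flavored structural claim directly: suppose for contradiction that the paths in $\Pi \setminus \{\pi\}$, after restricting each path to the subset of its vertices that also lie on $\pi$, contain a directed cycle. We want to derive a violation of consistency. First I would set up notation: for a path $\sigma \in \Pi \setminus \{\pi\}$, let $\sigma|_\pi$ denote the subsequence of vertices of $\sigma$ that belong to $V(\pi)$, in the order they appear along $\sigma$, and orient an edge $(u,v)$ in the ``restricted'' digraph whenever $u$ immediately precedes $v$ in some $\sigma|_\pi$. A key preliminary observation is that for a \emph{single} path $\sigma$, the sequence $\sigma|_\pi$ visits vertices of $\pi$ in an order consistent with $\pi$'s own order — i.e., $\sigma|_\pi$ is a monotone subsequence of $\pi$. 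This is exactly where consistency of $\Pi$ (applied to the pair $\{\pi, \sigma\}$) gets used: if $\sigma$ left $\pi$ at vertex $a$ (with $a$ before $b$ on $\pi$), wandered off, and rejoined $\pi$ at $b$, then the subpaths of $\pi$ and $\sigma$ between $a$ and $b$ are both shortest paths of the same length, so we could reroute; the ``intersect, split, intersect'' pattern with the \emph{wrong} orientation would force either a consistency violation or a contradiction with $\pi$ being a shortest path (a shorter route from $a$ to the later vertex). So each individual path contributes only ``forward'' edges in the $\pi$-order, meaning the restricted digraph's edges all point in the direction of $\pi$ — hence it is acyclic.

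The main steps in order: (1) formalize the restricted digraph on $V(\pi)$; (2) prove the single-path monotonicity lemma via consistency, being careful about the case where $\sigma$ and $\pi$ share a vertex and then share another vertex out of $\pi$-order — here I'd argue that between two common vertices $x \prec_\pi y$, the segment of $\pi$ is a shortest $x\leadsto y$ path, and if $\sigma$ visited $y$ before $x$ then $\sigma$ would contain a shortest $y \leadsto x$ path, so $x$ and $y$ are at equal distance in both directions, which in a graph with no zero-weight cycles forces $x = y$; and if there are zero-weight structures, consistency is the cleaner tool — two shortest paths sharing $x$ and $y$ but traversing the in-between segment in opposite relative orders is precisely a forbidden split-and-rejoin; (3) conclude that every edge of the restricted digraph respects the total order $\prec_\pi$ on $V(\pi)$, so a topological order exists, so no directed cycle — contradiction. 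Since the claim is stated informally, I'd also state precisely in the formal version what ``restricted to the nodes in path $\pi$'' means and note the (mild) assumption needed (either strictly positive weights, or simply that the consistency hypothesis already rules out the degenerate cases).

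The hard part will be pinning down the edge case where a path $\sigma$ touches $\pi$, leaves, and comes back — and making sure the ``monotone subsequence'' claim is airtight under the precise definition of consistency given in Definition~\ref{def:consistency} (which we haven't seen the statement of, only the informal ``intersect, split apart, intersect again'' description). In particular I need to confirm that consistency applies not just to pairs of distinct paths but constrains a path relative to $\pi$ even when reuse of the shared segment is allowed; if consistency is only a pairwise no-bad-pattern condition, the monotonicity of $\sigma|_\pi$ follows from applying it to the pair $\{\pi,\sigma\}$ together with the shortest-path property, and I should make sure no subtlety arises from $\sigma$ and $\pi$ sharing multiple disjoint segments (a ``zigzag''), which the argument rules out one adjacent pair of shared vertices at a time. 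Everything else — assembling local forwardness into global acyclicity — is routine once monotonicity is in hand.
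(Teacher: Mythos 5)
Your key lemma is stated with the wrong orientation, and the argument you sketch for it does not go through in directed graphs. The correct structural fact (the paper's Path Intersection Claim, Claim \ref{clm:path_inter}) is that the common vertices of $\sigma$ and $\pi$ appear on $\sigma$ in the \emph{reverse} of their order on $\pi$ --- so the restricted digraph is a DAG because all of its edges point \emph{backward} along $\pi$, not forward. The forward case is exactly the one that must be excluded: if $x$ precedes $y$ on both $\sigma$ and $\pi$, then consistency forces $\sigma[x \leadsto y] = \pi[x \leadsto y]$, so $\sigma$ and $\pi$ share an edge; this is not a consistency violation at all, and it is ruled out only by the additional assumption that the paths in $\Pi$ are pairwise edge-disjoint (the paper explicitly notes that the observation is false without it). Your proposal never invokes edge-disjointness, and without it the statement genuinely fails: take $\pi = (a, b, c)$, one path whose restriction contributes the forward edge $(a,b)$ because it reuses $\pi$'s edge, and another whose restriction contributes $(b,a)$ via a detour; all pairs are consistent, yet the restricted digraph contains a $2$-cycle.

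Conversely, your argument for excluding the backward order is the step that would fail. From ``$\pi$ contains a shortest $x \leadsto y$ path and $\sigma$ contains a shortest $y \leadsto x$ path'' you conclude that $x$ and $y$ are at equal distance in both directions and hence $x = y$. In a directed graph $\dist(x,y)$ and $\dist(y,x)$ are unrelated, a cycle formed by two shortest paths traversed in opposite directions is perfectly legal, and Definition \ref{def:consistency} imposes no constraint on a pair of paths that visit $x$ and $y$ in \emph{opposite} orders. So the backward configuration is not forbidden --- it is the generic one, and it is precisely what makes the restriction acyclic, with topological order equal to the reverse of $\pi$'s order. To repair the proof: add the edge-disjointness hypothesis, use consistency to eliminate same-order pairs of shared vertices as above, conclude that every pair of shared vertices appears in opposite orders on $\sigma$ and on $\pi$, and read off the reverse of $\pi$'s vertex order as a topological order for the restricted digraph.
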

In Observation \ref{obs:undirected}, we define the restriction  of  a path $\pi' \in \Pi \setminus \pi$ to the nodes in path $\pi$ as the path $\pi' \cap \pi$, with nodes ordered in the sequence they appear in $\pi'$. Then, as we can visualize in Figure \ref{fig:dag}, Observation \ref{obs:undirected} tells us that the paths in $\Pi$, restricted to the nodes in path $\pi$, form a DAG with a topological ordering that is the reverse of the order of nodes in $\pi$. 
We note that Observation \ref{obs:undirected} is not strictly true, unless paths in $\Pi$ are pairwise edge-disjoint. However, edge-disjointness holds without loss of generality for shortest paths in distance preservers, as we discuss at the beginning of Section \ref{sec:reduction}.

Here is an outline of how we can use Observation \ref{obs:undirected} to reduce extremal bounds for directed distance preservers to extremal bounds for distance preservers in DAGs:
\begin{itemize}
    \item Let $H$ be a minimal distance preserver of directed graph $G$ and demand pairs $P$.
    \item If $H$ is dense, then there exists a shortest $s \leadsto t$ path $\pi$ in $H$, for some $(s, t) \in P$, such that $\pi$ contains many high-degree nodes. 
    \item Let $D$ be the DAG in Observation \ref{obs:undirected} corresponding to the shortest paths in $H$ induced on path $\pi$. Then since path $\pi$ contains many high-degree nodes, DAG $D$ is dense. 
    \item We can apply extremal bounds for distance preservers in DAGs to obtain an upper bound on $|E(D)|$. This will in turn yield an upper bound on $|E(H)|$. 
\end{itemize}
This completes the sketch of our reduction from directed distance preservers to distance preservers in DAGs; the full reduction is given in Section \ref{sec:dir_to_dag}. 

The second step of our reduction is an extremal reduction from distance preservers in DAGs to distance preservers in undirected graphs. This reduction is simple, but to our knowledge previously unknown. Here is a rough sketch of how we can convert shortest paths in a DAG to shortest paths in an undirected graph: Let $v_1, \ldots, v_n$ be the topological order of the DAG, and let $W$ be a sufficiently large number (say larger than the sum of all edge weights in the DAG). Then we add $W \cdot (j - i)$ to the weight of any edge $(v_i, v_j)$ in the graph, and make the edge undirected.

A simple analysis of the undirected reweighted graph shows that shortest paths remain the same in this new graph. This reduction immediately implies that the extremal size of distance preservers in DAGs is at most the extremal size of distance preservers in undirected graphs. Additionally, this reduction implies a fast \textit{algorithmic} reduction from APSP in DAGs to APSP in undirected graphs. We give a complete presentation of our reduction and results in Section \ref{sec:dag_to_und}.

\paragraph{Exact Preserver Upper Bound.}
Our exact distance preserver upper bound (\cref{item:thm:main1:item2} in \cref{thm:main2}) for directed unweighted graphs emerges from the following intuition. Let $G$ be an $n$-node directed, unweighted graph, and let $P$ be a set of $|P| = p$ demand pairs in $G$. If graph $G$ is a DAG, then we can apply our extremal reduction from distance preservers in DAGs to distance preservers in undirected graphs. This implies that $G, P$ has a distance preserver of size $O(\min\{n^{1/2} p + n, np^{1/2}\})$. 

Then if graph $G$ has $|E(G)| = \omega(n^{1/2}p+n)$ edges,  we know that $G$ must contain a directed cycle.  More generally, if $|E(G)| \gg n^{1/2}p+n$, then $G$ contains \textit{many} short directed cycles. This fact provides the intuition behind the following (informal) lemma. 

\begin{lemma}[Dense Low-Diameter Cluster Lemma (cf. Lemma \ref{lem:low-diam-scc})]
\label{obs:directed}
    Let $H$ be a minimal distance preserver of directed, unweighted graph $G$ and demand pairs $P$. If $H$ has (sufficiently large) average degree $d$, then there exists a set $S \subseteq V(G)$ such that
    \begin{enumerate}
        \item $|S| = \Theta(d)$,
        \item $\deg_H(v) = \Omega(d)$ for all $v \in S$, and
        \item Set $S$ is strongly connected and has small (weak) diameter in $H$. 
    \end{enumerate}
\end{lemma}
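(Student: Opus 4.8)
The plan is to show that the density of $H$ \emph{localizes}: a constant fraction of $E(H)$ survives into a subgraph of uniformly high in- and out-degree, and inside such a subgraph the minimality of $H$ — exploited via the extremal DAG-to-undirected reduction of \cref{sec:dag_to_und} — forces a small, tightly connected ``blob'' that we take to be $S$. For the first step, since $H$ has average degree $d$ it has $\Omega(nd)$ edges; repeatedly deleting any vertex whose in-degree or out-degree in the current graph is below $d/c$ (for a suitable constant $c$) destroys at most a constant fraction of $E(H)$, so the process terminates with a nonempty subgraph $H_1 \subseteq H$ in which every vertex has in-degree and out-degree at least $d/c$. In particular $\deg_H(v) = \Omega(d)$ for every $v \in V(H_1)$, so property~(2) of the lemma holds automatically for any $S \subseteq V(H_1)$; it remains only to find $S \subseteq V(H_1)$ that is strongly connected, of size $\Theta(d)$, and of small weak diameter in $H$.

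For the second step I would use forward/backward ball growing. For $r \in V(H_1)$ and radius $k$, let $\mathrm{Out}_k(r)$ and $\mathrm{In}_k(r)$ be the vertices reachable from $r$, resp.\ reaching $r$, by a path of at most $k$ edges in $H_1$. Since out-degrees are $\ge d/c$ we have $|\mathrm{Out}_1(r)| \ge d/c$, and the forward ball keeps growing until it ``saturates'': letting $k(r)$ be the first radius with $|\mathrm{Out}_{k+1}(r)| \le (1+\eps)|\mathrm{Out}_k(r)|$ for a small constant $\eps$, we get $k(r) = O(\log n)$ because the ball cannot exceed $n$ vertices. At radius $k(r)$ the forward ball is nearly out-closed, so the sub-digraph it induces has out-degrees comparable to its own size and is therefore strongly connected with $O(1)$ internal diameter; symmetrically for the backward ball. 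Intersecting forward and backward balls at this scale (note $\mathrm{Out}_{k}(r)\cap \mathrm{In}_{k'}(r)$ is automatically strongly connected with weak diameter $\le k+k' = O(\log n)$ in $H$, since every vertex in it reaches $r$ and is reached from $r$ by short paths), or — where a single high-out-degree hub causes the ball to overshoot — passing instead to that hub together with a size-$\Theta(d)$ portion of its out-neighbourhood that co-reaches it, produces a candidate set $S$.

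For the third step, the lower bound $|S| = \Omega(d)$ is the easy direction: every $v \in S$ has $\Omega(d)$ out-neighbours in $H_1$, and the stopping rule forces all but an $\eps$-fraction of them to lie in $S$ (or, in the overshoot case, they are out-neighbours of the chosen hub). The upper bound $|S| = O(d)$ is the delicate point and is obtained by choosing the seed $r$ carefully: a degree-weighted averaging argument over $V(H_1)$ should yield a vertex whose forward ball first reaches size $\Theta(d)$ at a scale where it has not yet ballooned, the crucial input being that the extremal DAG bound — applied to $H$, and to the acyclic strata of $H$ — rules out the alternative that $H$ (hence $H_1$) is everywhere spread thin with no such small dense region. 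Once $S$ is fixed, strong connectivity, the $O(\log n)$ weak-diameter bound, and property~(2) are then routine.

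The crux, and the step I expect to be the main obstacle, is exactly the size calibration in the third step: simultaneously pinning $|S|$ between $\Omega(d)$ and $O(d)$ while keeping the weak diameter polylogarithmic and the $H$-degrees $\Omega(d)$. A single hub of out-degree $\gg d$ can make a forward ball jump from $o(d)$ to $\omega(d)$ vertices in one BFS step, so the clean ``stop when the ball stops growing'' rule must be supplemented with a separate treatment of such hubs; and it is precisely here that minimality of the distance preserver does the real work, through the DAG-to-undirected reduction, which prevents $H$ from behaving like a sparse DAG and hence guarantees that a genuinely dense, short-cycle-rich region exists to extract $S$ from.
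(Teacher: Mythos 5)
There is a genuine gap at exactly the point you flag as ``the crux.'' Your plan is: degree-prune to $H_1$ with min in/out-degree $\Omega(d)$, grow forward and backward balls from a well-chosen seed, and intersect them. But in a directed graph, uniformly high out-degree gives you no strongly connected set at all (a dense DAG such as the directed biclique is a minimal distance preserver with huge average degree and no nontrivial strongly connected subgraph), and you never actually supply the argument that rules this out. You gesture at ``minimality plus the DAG-to-undirected reduction'' as the input that guarantees a short-cycle-rich region, but the reduction of \cref{lem:dag_und} is an extremal statement about preserver \emph{size}; turning it into the existence of a specific set $S$ of $\Theta(d)$ vertices, each of degree $\Omega(d)$, pairwise at small roundtrip distance, is the entire content of the lemma, and the proposal leaves it unproved. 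The hub-overshoot and size-calibration issues you raise are real but secondary; even granting them, the ball-growing scheme has no mechanism forcing $\mathrm{Out}_k(r)$ and $\mathrm{In}_{k'}(r)$ to intersect in more than $\{r\}$.

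The paper's proof (\cref{subsec:low_diam_scc}) obtains the cycles from a different source entirely: consistency of the tiebreaking scheme. If two edge-disjoint consistent shortest paths $\pi$ and $\pi^*$ meet at many vertices, those vertices appear in \emph{reverse} order on the two paths (\cref{clm:path_inter}), so consecutive meeting points $s_i, s_{i+1}$ are joined by a $\pi$-subpath in one direction and a $\pi^*$-subpath in the other, i.e.\ they lie on a short directed cycle. A Cauchy--Schwarz count of high-degree branching events (\cref{clm:be}) plus a deletion argument on a path-intersection graph (\cref{lem:base_path}) concentrates $\Omega(|S|d)$ such meetings on a single path $\pi^*$ among its high-outdegree vertices $S$; since $|\pi|+|\pi^*| \le 4\widehat{\ell}$, most consecutive pairs have roundtrip distance $O(\widehat{\ell}p^2/(nd^2))$ (\cref{clm:pi_close-to-close}), and an averaging step picks a hub $v$ close to $\Omega(d)$ of these vertices. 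None of this machinery — branching events, the reverse-order intersection property, or the concentration on one path — appears in your proposal, and without some substitute for it the construction of $S$ does not go through. (Separately, the hypothesis $d \ge 10p/n^{1/2}$, which is what excludes the dense-DAG counterexample by making $nd^2/p^2$ a large constant, plays no role in your argument, which is a sign that the DAG obstruction has not actually been addressed.)
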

Lemma \ref{obs:directed} can be understood as a directed generalization of the following standard lemma in undirected graphs:
\begin{lemma}[\cite{diestel2000graduate}, cf. \cite{BV21}]
    Let $H$ be an undirected, unweighted graph with average degree $d$. Then $H$ has a set $S$ of $|S| = \Theta(d)$ nodes, where $\deg_H(v) = \Omega(d)$ for all $v \in S$ and $S$ has strong diameter 2. 
    \label{lem:und_lem}
\end{lemma}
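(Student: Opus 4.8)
The plan is the standard two-stage argument: first prune $H$ down to a subgraph of large minimum degree, then extract $S$ from a single closed neighborhood inside that subgraph. The first stage is what guarantees every vertex we keep still has degree $\Omega(d)$ in $H$, and the second stage is what gives both the size control $|S| = \Theta(d)$ and the strong-diameter-$2$ property.

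\textbf{Step 1 (pruning to minimum degree $\ge d/2$).} Since $H$ has average degree $d$ on $n$ nodes, it has exactly $nd/2$ edges. I would repeatedly delete any vertex whose degree in the current graph is less than $d/2$. Each such deletion removes fewer than $d/2$ edges, so if the process ever deleted all $n$ vertices it would have removed fewer than $nd/2 = |E(H)|$ edges in total, a contradiction. Hence the process terminates at a nonempty subgraph $H'$ with minimum degree at least $d/2$; since $H'$ is a subgraph of $H$, every vertex of $H'$ also has $\deg_H(\cdot) \ge d/2$.

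\textbf{Step 2 (extracting the cluster).} Fix any vertex $v \in V(H')$ and consider its closed neighborhood $N_{H'}[v]$, which has at least $\lceil d/2 \rceil + 1$ vertices. Let $S$ be any subset of $N_{H'}[v]$ that contains $v$ and has size exactly $\lceil d/2 \rceil + 1$, truncating $N_{H'}[v]$ if it is larger. Then $|S| = \Theta(d)$; every $u \in S \subseteq V(H')$ satisfies $\deg_H(u) \ge d/2 = \Omega(d)$; and for any two $u, w \in S$, both are equal or adjacent to $v$ with $v \in S$, so the path through $v$ witnesses $\dist_{H[S]}(u,w) \le 2$. Thus $H[S]$ has diameter at most $2$, i.e.\ $S$ has strong diameter $2$ in $H$, completing the proof.

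The argument is elementary, so I do not expect a genuine obstacle; the only point that needs care is the size \emph{upper} bound $|S| = O(d)$, because the closed neighborhood of a high-degree vertex can be as large as all of $V(H)$. Truncating $N_{H'}[v]$ down to $\Theta(d)$ vertices while retaining the center $v$ is precisely what simultaneously preserves the degree lower bound and the diameter-$2$ guarantee, so this truncation must be performed rather than skipped.
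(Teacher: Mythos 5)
Your proof is correct. The paper does not actually prove Lemma \ref{lem:und_lem} --- it is stated as a known fact with citations to \cite{diestel2000graduate} and \cite{BV21} --- and your two-stage argument (prune to a nonempty subgraph of minimum degree at least $d/2$ via the edge-counting contradiction, then take a truncated closed neighborhood of any surviving vertex) is exactly the standard argument those references rely on, including the correct handling of the size upper bound by truncation.
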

Note that we cannot hope for a generalization of  Lemma \ref{lem:und_lem} to general directed graphs because directed graphs can be arbitrarily dense and acyclic (e.g., the directed biclique). We formalize Lemma \ref{obs:directed} in Lemma \ref{lem:low-diam-scc} and we prove it in Section \ref{subsec:low_diam_scc}. 

Once we prove the Dense Low-Diameter Cluster Lemma, our distance preserver upper bounds in directed, unweighted graphs follow from a straightforward simulation of the upper bound argument for  distance preservers in undirected, unweighted graphs due to \cite{BV21}. The key subroutine of the  upper bound in \cite{BV21} is a sourcewise distance preserver construction, using the set $S$ in Lemma \ref{lem:und_lem} as the set of source nodes.
The authors of \cite{BV21} prove that there exists a particularly sparse sourcewise distance preserver of demand pairs $P \subseteq S \times V(G)$,  using a clever application of the pigeonhole principle to shortest path distances. In our upper bound we follow this framework, while using the Dense Low-Diameter Cluster Lemma instead of  Lemma \ref{lem:und_lem}.

\subsection{Approximate Hopset and Preserver Lower Bounds}
\label{sec:apx_hop_tech}

Our lower bounds for approximate hopsets and preservers emerge from the following observation. 

\begin{observation}[Informal]
    Existing lower bounds for exact preservers and hopsets in undirected graphs can be converted into lower bounds for $\alpha$-approximate preservers and hopsets in directed graphs for any finite $\alpha$. 
\end{observation}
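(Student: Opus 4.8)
The final statement is the informal observation that existing lower bounds for exact preservers and hopsets in undirected graphs can be lifted to lower bounds for $\alpha$-approximate preservers and hopsets in directed graphs, for any finite stretch $\alpha$. The natural approach is a \emph{weighting / blowup reduction}: start from an undirected unweighted graph $G$ together with its demand pairs $P$ (or critical paths) that witnesses a strong lower bound against exact preservers, and produce a directed weighted graph $\vec G$ in which the unique shortest path between each demand pair is forced to be the image of the corresponding path in $G$, with such a large multiplicative gap to every competing path that even an $\alpha$-approximate preserver must keep (essentially) all of the original edges. First I would fix a DAG-style orientation of $G$ — or, more robustly, replace each undirected edge $\{u,v\}$ by a gadget containing both arcs $u\to v$ and $v\to u$ — and assign edge weights of the form $w(e) = 1 + W\cdot(\text{something})$, where $W$ is astronomically large (exponential in $n$), chosen so that any path that is not the intended one incurs at least one extra unit of the $W$-scaled penalty and hence is more than an $\alpha$ factor longer. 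This is exactly the mechanism already sketched in Section~\ref{sec:reduction_tech} for the DAG-to-undirected reduction (adding $W\cdot(j-i)$ to edge $(v_i,v_j)$), run in reverse and pushed to the point where the gap dominates $\alpha$; it explains the remark in the introduction that the lower-bound graphs have $\exp(n)$ aspect ratio.

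The key steps, in order, are: (1) recall/select the undirected exact-preserver lower bound instance — the Coppersmith–Elkin $\Omega(n^{2/3}p^{2/3})$ construction (and analogously the Bodwin–Hoppenworth / folklore $\widetilde\Omega(\sqrt n)$ hopset instance) — isolating the family $\Pi$ of shortest paths whose edges cannot be simultaneously omitted; (2) define the directed weighted graph $\vec G$ on the same vertex set, with weights $w(e)=1+W\cdot \phi(e)$ where $\phi$ is a potential (e.g.\ a topological-order index difference, or a suitable layer index) that is additively telescoping along the intended paths and strictly sub-additive along any deviation; (3) verify that for each demand pair the intended path is the \emph{unique} shortest path and that every other $s\leadsto t$ path has weight at least $(1+1/\alpha)$ times larger than that — more precisely has an extra additive $W$ term, which for $W$ large enough exceeds the $\alpha$ slack since the total unweighted length is only $\poly(n)\ll W$; (4) conclude that any $\alpha$-approximate preserver of $\vec G, P$ must contain every edge on every intended path, so it inherits the $\Omega(n^{2/3}p^{2/3})$ (resp.\ $\widetilde\Omega(\sqrt n)$) edge lower bound; (5) for the hopset version, observe additionally that shortcut/hopset edges added to $\vec G$ have weight equal to the true distance and are therefore only useful if they "jump along" an intended path, so the same argument shows an added edge can save at most $O(1)$ hops unless it shortcuts length — which it cannot without violating the weight constraint — giving the hopbound lower bound.

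The main obstacle, I expect, is step (3): making the deviation penalty \emph{robust}. A deviating path need not leave and immediately return; it may wander through many vertices and paths of $\Pi$ before rejoining, and in a dense construction there are many alternative routes, so one must design $\phi$ (and the underlying orientation) so that \emph{any} departure from the prescribed path provably accumulates at least one unit of $W$-penalty that is never recovered. The cleanest way to guarantee this is to route everything inside a DAG with a fixed topological order and set $\phi(v_i\to v_j) = j-i$ (so that $\phi$ telescopes to the endpoint-index-difference along \emph{every} path, and the unweighted length is what breaks ties) — but then one must first argue, using Observation~\ref{obs:undirected} and the consistency/edge-disjointness reductions, that the undirected lower-bound instance can be oriented into a DAG without losing the extremal path structure, or alternatively absorb the handling of back-arcs into the gadget. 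A secondary, more bookkeeping obstacle is checking that the blowup in edge count and vertex count is only by constant (or $\poly\log$) factors, so that the asymptotic form of the inherited lower bound is unchanged; this should follow since the reduction does not add vertices (only reweights, or adds $O(1)$ gadget vertices per edge) and the demand set $P$ is preserved verbatim.
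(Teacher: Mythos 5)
Your high-level strategy matches the paper's (orient the edges of a known undirected exact lower-bound instance, then reweight so that the intended path is the unique $\alpha$-approximate path), but the specific weighting mechanism you propose does not achieve a \emph{multiplicative} gap of $\alpha$, and this is exactly where the construction lives or dies. You set $w(e)=1+W\cdot\phi(e)$ with $\phi$ telescoping along intended paths (e.g.\ $\phi(v_i\to v_j)=j-i$) so that deviations pick up at least one extra unit of $W$. But if $\phi$ telescopes, every $s\leadsto t$ path in the DAG has exactly the same $W$-contribution $W\cdot\Phi$, so a deviation incurs \emph{no} penalty in the $W$-part and differs only in the $\poly(n)$ unweighted part --- a $(1+o(1))$ factor that any $\alpha>1$ tolerates. (This telescoping trick is precisely what the paper uses for the \emph{exact} DAG-to-undirected reduction in Lemma \ref{lem:dag_und}, where preserving ties suffices; it cannot separate $\alpha$-approximate paths.) If instead deviations gain $+W$ on top of a baseline of $W\cdot\Phi$, the multiplicative gap is only $1+1/\Phi$, and $\Phi$ can be as large as $n$, so again the gap is far below $\alpha$. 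A third natural fix --- make intended edges light and ``deviation edges'' heavy --- fails because in the Coppersmith--Elkin instance every edge is an intended edge for \emph{some} demand pair; a deviation for one pair travels along edges intended for others.

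What the paper actually does (Section \ref{sec:aprx_pres_lb}) is an exponential \emph{hierarchy} of weights keyed to a geometric order: lines are sorted by slope and all edges on line $L_i$ get weight $(2\alpha n)^i$, so that a single edge from any line of index $j>i$ already outweighs $\alpha$ times the \emph{entire} intended path $\pi_{L_i}$. The missing ingredient in your proposal is the geometric fact that makes this usable --- the Over-Under Property: because vertices advance monotonically in the first coordinate, any $x_1\leadsto x_k$ path other than $\pi_{L_i}$ that ever drops to a lower-slope line must later cross back over using an edge of strictly larger slope, hence of weight $\ge(2\alpha n)^{i+1}$. Without identifying some such structural reason why every deviation is forced through a catastrophically heavy edge class, step (3) of your plan cannot be completed. (The hopset half has the same issue, plus you would still need the potential-function accounting and the path-intersection bound of Lemma \ref{lem:prop_dir_apx_hop} to control how much a single hopset edge can help.)
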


We convert lower bounds for exact distance problems in undirected graphs into lower bounds for approximate distance problems in directed graphs using a simple argument. Given an undirected lower bound graph, we do the following:
\begin{enumerate}
    \item Direct the edges of the lower bound graph in a natural way.
    \item Reweight the edges of the lower bound graph using the \textit{Over-Under} weighting scheme.
\end{enumerate}

We now illustrate the Over-Under weighting scheme,  by sketching  the proof of our directed approximate preserver lower bound. Our application of the Over-Under weighting scheme to directed approximate hopset lower bounds follows a similar argument. 

\paragraph{The Over-Under Weighting Scheme for Approximate Preserver Lower Bounds.} 
For exact distance preservers in undirected weighted graphs, the best known size lower bound is $\Omega(n^{2/3}p^{2/3})$  due to a construction by \cite{CE06}. The lower bound graph of \cite{CE06} corresponds to an arrangement of points and lines in the plane with many point-line incidences. Formally, there exists an arrangement of points $\mathcal{P} \subseteq \mathbb{R}^2$ and lines $\mathcal{L}$ in the plane such that:
\begin{enumerate}
    \item $|\mathcal{P}| = n$ and $|\mathcal{L}| = p$,
    \item There are $\Omega(n^{2/3}p^{2/3})$ distinct point-line incidences between points in $\mathcal{P}$ and lines in $\mathcal{L}$,
    \item Any two distinct lines $L, L' \in \mathcal{L}$ intersect on at most one point, and 
    \item Every line $L \in \mathcal{L}$ is a unique shortest path in $\mathbb{R}^2$ under the Euclidean metric. 
\end{enumerate}
The set of points $\mathcal{P}$ and set of lines $\mathcal{L}$ can naturally be converted into an $n$-node graph $G$ embedded in the Euclidean plane and a set of demand pairs $P$ of size $|P| = p$ as follows:
\begin{itemize}
    \item Let $V(G) = \mathcal{P}$,
    \item Add edge $(x, y) \in \mathcal{P} \times \mathcal{P}$ to $E(G)$ if points $x$ and $y$ occur consecutively on some line $L \in \mathcal{L}$,
    \item Assign edge $(x, y)$ weight $w(x, y) = \|x - y\|$, where $\| \cdot \|$ denotes Euclidean distance, and
    \item Add demand pair $(s, t)\in \mathcal{P} \times \mathcal{P}$ to $P$ if $s$ and $t$ are the first and last points in $\mathcal{P}$  on some line $L \in \mathcal{L}$, respectively.
\end{itemize}
Using Properties 1-4 of the arrangement of points $\mathcal{P}$ and lines $\mathcal{L}$, it becomes a simple exercise to prove that  any exact preserver of $G, P$ has at least $\Omega(n^{2/3}p^{2/3})$ edges, recovering the \cite{CE06} lower bound. 
However, our goal is to reweight graph $G$ and  make it directed, so that any $\alpha$-approximate preserver of $G, P$ requires $\Omega(n^{2/3}p^{2/3})$ edges. Here is a rough sketch of how we do this:
\begin{enumerate}
    \item For every undirected edge $(x, y) \in E(G)$, if point $x$ has a smaller first coordinate than point $y$, then convert edge $(x, y)$ into directed edge $x \rightarrow y$. 
    \item Let $\mathcal{L} = \{L_1, \dots, L_p\}$ be the lines in $\mathcal{L}$ written by increasing slope. We assign the edges in $E(G)$ corresponding to line $L_i$ weight $(\alpha \cdot n)^i$.
\end{enumerate}

We now give some intuition for why this yields a lower bound against $\alpha$-approximate preservers. 
Consider any line $L_i \in \mathcal{L}$, and let $x_1, \dots, x_k$ be the points in $\mathcal{P}$ on $L_i$ ordered by increasing first coordinate. Then $(x_1, x_k) \in P$ is a demand pair in $P$. We claim that path $\pi_{L_i} = (x_1, \dots, x_k)$ is the unique $\alpha$-approximate $x_1 \leadsto x_k$ path in $G$. This claim is a consequence of the following property of directed paths in $G$, which can be visualized in Figure \ref{fig:over_under}.
\begin{observation}[Over-Under Property]
    Any $x_1 \leadsto x_k$ path $\pi$ such that $\pi \not = \pi_{L_i}$ must contain an edge lying on a line with slope larger than $L_i$.
\end{observation}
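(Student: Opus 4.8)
The plan is to prove this purely geometrically, via a one-dimensional potential argument that tracks the signed vertical deviation of a path from the line $L_i$. Write the equation of $L_i$ as $y = m_i x + b_i$, and for a point $q = (q_x, q_y)$ let $f(q) = q_y - (m_i q_x + b_i)$ be its signed vertical distance to $L_i$, so that $f$ is affine and $f \equiv 0$ on $L_i$. As a preliminary normalization I would rotate the plane (if needed) so that no line of $\mathcal{L}$ is vertical; this preserves all of Properties 1--4, since incidences, pairwise intersection patterns, and Euclidean shortest paths are rotation-invariant. Consequently every edge $x \to y$ of $G$ has strictly positive first-coordinate displacement $\Delta x > 0$, and (since any two lines of $\mathcal{L}$ meet in at most one point, by Property 3) every edge $e$ lies on a \emph{unique} line $L_{j(e)} \in \mathcal{L}$, whose slope I denote $m_{j(e)}$. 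Note also that, $L_i$ being non-vertical, ordering the points of $\mathcal{P}$ on $L_i$ by increasing first coordinate agrees with ordering them along $L_i$, so ``consecutive on $L_i$'' means consecutive indices among $x_1,\dots,x_k$.

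The two facts I would establish first are: (i) all of the points $x_1, \dots, x_k$ of $\mathcal{P}$ on $L_i$ satisfy $f(x_a) = 0$, so any $x_1 \leadsto x_k$ path (indeed walk) $\pi$ in $G$ both starts and ends at $f = 0$; and (ii) traversing an edge $e$ of $G$ changes $f$ by exactly $(m_{j(e)} - m_i)\,\Delta x(e)$, so $f$ strictly increases across $e$ precisely when $m_{j(e)} > m_i$, strictly decreases precisely when $m_{j(e)} < m_i$, and is unchanged precisely when $m_{j(e)} = m_i$. Summing fact (ii) along $\pi$ and using (i) yields the telescoping identity $0 = f(x_k) - f(x_1) = \sum_{e \in \pi} (m_{j(e)} - m_i)\,\Delta x(e)$.

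Now suppose toward a contradiction that $\pi$ uses no edge lying on a line of slope strictly larger than $m_i$. Then every summand $(m_{j(e)} - m_i)\,\Delta x(e)$ is $\le 0$, so the identity forces every summand to vanish; hence $f$ is constant along $\pi$, and since $f = 0$ at the endpoints it is identically $0$ along $\pi$. Therefore every vertex of $\pi$ lies on $L_i$, so every vertex of $\pi$ is one of $x_1, \dots, x_k$. An edge of $G$ between two of these points lies on the unique line through them, which is $L_i$, hence joins two consecutive $x_a, x_{a+1}$ and is directed by increasing first coordinate. Thus $\pi$ is a walk in the directed path $x_1 \to x_2 \to \cdots \to x_k$ from $x_1$ to $x_k$, which can only be $\pi_{L_i}$ itself, contradicting $\pi \neq \pi_{L_i}$.

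I expect the only real care to be in the degenerate bookkeeping rather than the core argument: ruling out vertical lines (handled by the preliminary rotation), and observing that lines of $\mathcal{L}$ parallel to $L_i$ cause no trouble --- an edge on such a line has both endpoints at a fixed \emph{nonzero} value of $f$, so it cannot occur once we know $f \equiv 0$ along $\pi$; the parallel case is therefore absorbed into the ``every summand vanishes'' branch. The potential/telescoping step and the ``unique directed path among $x_1, \dots, x_k$'' step are routine once these points are in place.
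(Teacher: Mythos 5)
Your proof is correct. The paper's actual argument (the proof of Claim~\ref{claim:over_under}) rests on the same two geometric facts you isolate --- every edge moves strictly rightward, and each edge's direction is the slope of the unique line of $\mathcal{L}$ carrying it --- but it is organized differently: the paper locates the \emph{first} edge of $\pi$ that leaves $\pi_{L_i}$, and splits into two cases. If that edge already lies on a line of larger slope it is done; if it lies on a line of smaller slope, the path has dropped strictly below $L_i$, and the paper then argues (implicitly by the same weighted-average-of-slopes fact you make explicit) that the remaining subpath back up to $x_k$ must use an edge of slope exceeding that of $L_i$. Your version replaces this two-case, first-deviation analysis with a single telescoping identity $\sum_e (m_{j(e)} - m_i)\,\Delta x(e) = 0$ and a contradiction argument, which is cleaner and also handles the parallel-line degeneracy explicitly (the paper glosses over it). One cosmetic mismatch worth noting: the paper orders $L_1,\dots,L_p$ by increasing slope with ties broken arbitrarily, and what the weight argument downstream actually needs is an edge on a line $L_j$ with \emph{index} $j > i$; an edge of strictly larger slope always has larger index, so your slope-based conclusion suffices for the application, but the equal-slope case is exactly why your explicit treatment of parallel lines matters.
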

Using the Over-Under Property, it follows that any $x_1 \leadsto x_k$ path $\pi \neq \pi_{L_i}$ must contain an edge $e$ lying on a line with slope larger than $L_i$. However, by our exponential weighting scheme, $$w(e) \geq (\alpha \cdot n)^{i+1} >  \alpha \cdot w(\pi_{L_i}).$$ 
Therefore, any $\alpha$-approximate preserver of $G, P$ must contain every edge in $\pi_{L_i}$. This directly implies an $\Omega(n^{2/3}p^{2/3})$ lower bound for any $\alpha$-approximate preserver of $G, P$, completing the sketch of \cref{item:thm:main2:item1} of \cref{thm:main2}. 

The Over-Under weighting scheme is also used in our construction for approximate hopset lower bounds, combined with the previous undirected exact hopset lower bound of \cite{bodwin2023folklore} and simplifications from \cite{DBLP:conf/soda/WilliamsXX24}.

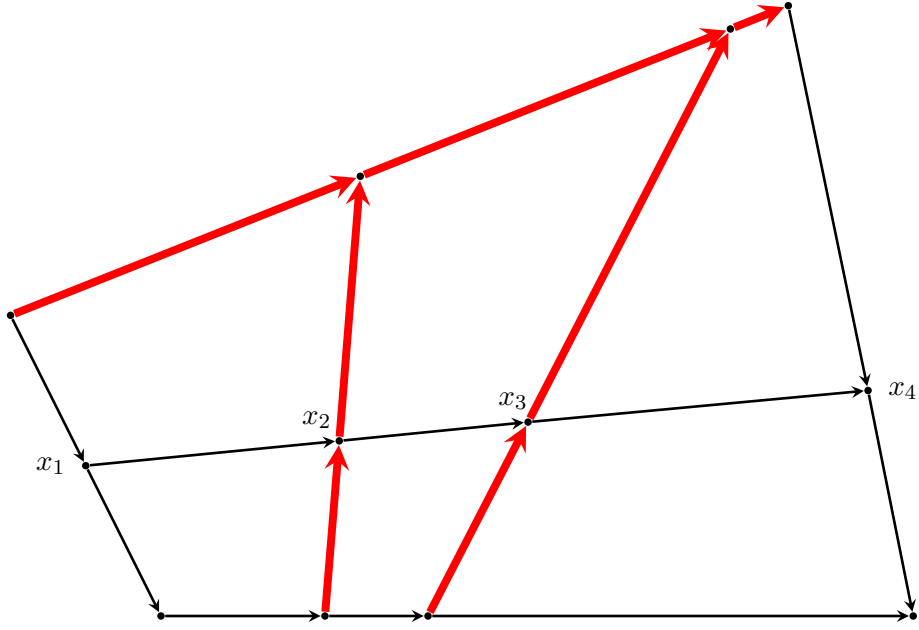
\begin{figure}[ht]
    \centering
    \begin{tikzpicture}
		\node at (0,0) [circle, fill, inner sep = 1pt] (A){};
		\node at (10,0) [circle, fill, inner sep = 1pt] (B){};
		\node at (-1,2) [circle, fill, inner sep = 1pt] (C){};
		\node at (9.4,3) [circle, fill, inner sep = 1pt] (D){};
		\node at (-2,4) [circle, fill, inner sep = 1pt] (E){};
		\node at (8.34, 8.12) [circle, fill, inner sep = 1pt] (F){};
		\node at (2.65, 5.85) [circle, fill, inner sep = 1pt] (G){};
		\node at (2.18, 0) [circle, fill, inner sep = 1pt] (H){};
		\node at (7.57, 7.81) [circle, fill, inner sep = 1pt] (I){};
		\node at (4.88, 2.58) [circle, fill, inner sep = 1pt] (J){};
		\node at (3.55, 0) [circle, fill, inner sep = 1pt] (K){};
		\node at (2.37, 2.33) [circle, fill, inner sep = 1pt] (L){};
		
		\node at (C) [label = left:$x_1$] (){};
  		\node at (L) [label = {[xshift=-0.3cm, yshift=-0.1cm]$x_2$}] (){};
    	\node at (J) [label = {[xshift=-0.2cm, yshift=-0.1cm]$x_3$}] (){};
		\node at (D) [label = right:$x_4$] (){};

		\draw [-stealth, line width = 1pt] (F) to[] (D);
		\draw [-stealth, line width = 1pt] (D) to[] (B);
		
		\draw [-stealth, line width = 1pt] (E) to[] (C);
		\draw [-stealth, line width = 1pt] (C) to[] (A);
				
		\draw [-stealth, line width = 1pt] (A) to[] (H);
		\draw [-stealth, line width = 1pt] (H) to[] (K);
		\draw [-stealth, line width = 1pt] (K) to[] (B);
		
		\draw [-stealth, line width = 1pt] (C) to[] (L);
		\draw [-stealth, line width = 1pt] (L) to[] (J);
		\draw [-stealth, line width = 1pt] (J) to[] (D);
		
		\draw [-stealth, line width = 3pt, color = red] (E) to[] (G);
		\draw [-stealth, line width = 3pt, color = red] (G) to[] (I);
		\draw [-stealth, line width = 3pt, color = red] (I) to[] (F);
		
		\draw [-stealth, line width = 3pt, color = red] (K) to[] (J);
		\draw [-stealth, line width = 3pt, color = red] (J) to[] (I);
		
		\draw [-stealth, line width = 3pt, color = red] (H) to[] (L);
		\draw [-stealth, line width = 3pt, color = red] (L) to[] (G);
		
    \end{tikzpicture}
    \caption{This figure depicts an example graph constructed by a point-line incidence system. All edges on lines with a larger slope than the line containing $x_1$ and $x_4$ are marked as red, wider lines. Except the path $(x_1, x_2, x_3, x_4)$, all other paths from $x_1$ to $x_4$ use red edges, which have weight much larger than the edges on $(x_1, x_2, x_3, x_4)$. }
    \label{fig:over_under}
\end{figure}

\subsection{\texorpdfstring{$O(m)$}{O(m)}-size Shortcut Set and Hopset Lower Bounds}

Here we give an overview view of our $O(m)$-size shortcut set and hopset lower bounds. 

\label{sec:other_tech}

\paragraph{Shortcut Set Lower Bound.} Our construction for an $\Omega(n^{2/9})$ diameter lower bound for  $O(m)$-size shortcut sets is a refinement of a prior shortcut set lower bound due to \cite{DBLP:conf/soda/WilliamsXX24}. The construction of~\cite{DBLP:conf/soda/WilliamsXX24} obtains an $\Omega(n^{1/5})$ lower bound for $O(m)$-size shortcut sets. Their constructed graph is a layered graph with $\Theta(n^{1/5})$ layers, and each layer has $\Theta(n^{4/5})$ vertices. Our construction intuitively can be viewed as a concatenation of $n^{1/5}$ copies of their graphs, where the last layer of the previous copy is identified with the first layer of the next copy. Due to technical reasons, the number of vertices in each layer needs to blow up by a factor of $\Theta((n^{1/5})^3)$. Thus, the total number of vertices is $\Theta(n^{9/5})$, and the number of layers is $\Theta(n^{2/5})$, which intuitively gives an $\Omega(n^{2/9})$ lower bound. 

\paragraph{Unweighted Exact Hopset Lower Bound.} Our construction for unweighted exact hopset lower bounds is an adaptation of our layered construction for the $\Omega(n^{2/9})$ lower bound for $O(m)$-size shortcut sets. The high-level intuition for the higher $\Omega(n^{2/7})$ lower bound is the following: in the construction for shortcut sets, we need the critical paths to be the unique paths between their start and end nodes; in comparison, in hopsets, we only need the critical paths to be the unique \textit{shortest} paths between their start and end nodes. This allows us to create an unlayered version of shortcut set construction, which reduces the total number of vertices in the graph, and leads to a higher lower bound.

\section{Preliminaries}
In this section, we present the notations and definitions used in our paper.
\subsection*{Notation}
\begin{itemize}
    \item For integers $n \ge 0$, we use $[n]$ to denote $\{1, \ldots, n\}$ and $\ints{n}$ to denote $\{0, \ldots, n - 1\}$. For integers $i, j \geq 0$, where $i < j$, we use $[i, j]$ to denote $\{i, i+1, \dots, j-1, j\}$. 
    \item Let  $G = (V, E, w)$ be a weighted graph with weight function $w: E \rightarrow \mathbb{R}$. For vertex $v \in V$, we use $\deg_G(v)$ to denote the number of edges incident to $v$ in $G$. For $s, t \in V$, we use $w(s, t)$ to denote the edge weight of the edge from $s$ to $t$, $\dist_G(s, t)$ to denote the distance from $s$ to $t$ in $G$ and $\pi_G(s, t)$ to denote a shortest path from $s$ to $t$. The subscripts $G$ might be dropped if it is clear from context. 
    \item For $n, p \ge 1$, we use $\textsc{UDP}(n, p)$ (resp. $\textsc{DDP}(n, p)$, $\textsc{DADP}(n, p)$) to denote the best bound on the size of exact distance preservers for $n$-node undirected (resp. directed, directed acyclic) weighted graphs with $p$ demand pairs. 
    \item For a vertex $v \in V(G)$, we use $\text{outdeg}_G(v)$ to denote the number of outgoing edges incident to $v$ in $G$, and we use $\text{indeg}_G(v)$ to denote the number of incoming edges incident to $v$ in $G$. We let $\deg_G(v) = \text{indeg}_G(v) + \text{outdeg}_G(v)$.  

    \item We will frequently use symbol $\pi$ to denote paths. For a simple path $\pi$, $|\pi|$ denotes its number of edges and $w(\pi)$ denotes the total edge weights of edges on $\pi$. For two vertices $x, y$ on $\pi$ where $x$ appears before $y$, $\pi[x, y]$ or $\pi[x \leadsto y]$ denotes the subpath of $\pi$ from $x$ to $y$. 
    
    \item Given a path $\pi$ and a set of nodes $S$, we use $\pi \cap S$ to denote the subpath of $\pi$ induced on the nodes of $S$. Formally, $\pi \cap S$ is the path obtained by listing the nodes in $V(\pi) \cap S$ in the order they appear on path $\pi$. 

    \item Given a graph $G$, a set of demand pairs $P \subseteq V(G) \times V(G)$, and a distance preserver $H$ of $G, P$, we say that $H$ is a minimal distance preserver if no edge can be deleted from $H$ while ensuring that $H$ remains a distance preserver of $G, P$. If shortest paths in $G$ are unique shortest paths, then there is a unique minimal distance preserver $H$ of $G$. 
\end{itemize}

\subsection*{Definitions}

\begin{definition}[Tiebreaking Scheme]
    Given a graph $G$ and a set of demand pairs $P \subseteq V(G) \times V(G)$, we define a tiebreaking function $\pi(\cdot, \cdot)$ with respect to $G, P$ as follows. For each demand pair $(s, t) \in P$, fix a shortest path $\pi_{s, t}$ in $G$. Then we define the output of tiebreaking function $\pi(\cdot, \cdot)$ on input $(s, t)$ to be $$\pi(s, t) = \pi_{s, t}$$
    for all $(s, t) \in P$. Additionally, we let $\pi(P)$ denote the image of $P$ under $\pi(\cdot, \cdot)$, i.e., $\pi(P)$ denotes the collection of paths $\{\pi(s, t) \mid (s, t) \in P\}$. Likewise, we let
    $$
    \cup_{\pi \in \pi(P)} \pi
    $$
    denote the graph obtained by taking the (non-disjoint) union of the paths in set $\pi(P)$. 
\end{definition}

\begin{definition}[Consistent Tiebreaking Scheme]
\label{def:consistency}
Let $\pi(\cdot, \cdot)$ be a tiebreaking scheme with respect to $G, P$. Then $\pi(\cdot, \cdot)$ is consistent if 
for all paths $\pi_1, \pi_2 \in \pi(P)$ and nodes $x, y \in V(G)$ such that $x$ precedes $y$ in both $\pi_1$ and $\pi_2$, then we have that $\pi_1[x \leadsto y] = \pi_2[x \leadsto y]$. 
\end{definition}

\begin{definition}[Branching Events \cite{CE06}]
Let $G$ be a directed graph, let $P$ be a set of demand pairs, and let $\pi(\cdot, \cdot)$ be a tiebreaking scheme. Let $H = \cup_{(s, t) \in P} \pi(s, t)$ be the exact distance preserver implied by $\pi(\cdot, \cdot)$. 

A branching event is a pair of directed edges $(x, y), (x, z) \in E(H)$ in $H$, for any $x, y, z \in V(G)$. For each branching event $(x, y), (x, z)$, we associate a pair of paths $\pi_1, \pi_2 \in \pi(P)$ such that $(x, y) \in \pi_1$ and $(x, z) \in \pi_2$. If there is more than one choice of paths $\pi_1, \pi_2$, then we choose arbitrarily.
\label{def:be}
\end{definition}
We say that a branching event $(x, y), (x, z)$ is \textit{between} paths $\pi_1, \pi_2 \in \pi(P)$ if paths $\pi_1, \pi_2$ are associated with branching event $(x, y), (x, z)$ as in Definition \ref{def:be}.

\section{Directed to Undirected Preserver Reduction}
\label{sec:reduction}

The goal of this section will be to prove Theorem \ref{thm:reduction} and Corollaries \ref{corr:reduction} and \ref{corr:apsp}. We restate Theorem \ref{thm:reduction} below.

\begin{theorem*}
If $\textnormal{\UDP}(n, p) = O(n^{\lambda}p^{\mu} + n)$ for constants $\lambda, \mu > 0$, then $$\textnormal{\DDP}(n, p) = O\left( n^{\frac{1}{2-\lambda}}p^{\frac{1+\mu-\lambda}{2-\lambda}} + n^{1/2}p + n\right).$$  
\end{theorem*}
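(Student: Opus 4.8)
The plan is to prove the theorem by composing two extremal reductions, as outlined in \cref{sec:reduction_tech}: first a reduction from directed distance preservers to distance preservers on DAGs, and then a reduction from distance preservers on DAGs to undirected distance preservers. Writing $\DADP(n,p)$ for the extremal size of distance preservers on $n$-node DAGs with $p$ demand pairs, the second reduction will give $\DADP(n,p) \le \UDP(n,p)$, and the first will give that $\DADP(n,p) = O(n^{\lambda} p^{\mu} + n)$ implies $\DDP(n,p) = O\!\left(n^{\frac{1}{2-\lambda}} p^{\frac{1+\mu-\lambda}{2-\lambda}} + n^{1/2}p + n\right)$; composing the two yields the theorem, and \cref{corr:reduction} follows by substituting $\lambda=1/2,\mu=1$ (or the hypothetical improved undirected exponents).

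For the DAG-to-undirected reduction -- the easy half -- I would fix a DAG $D$ with topological order $v_1,\dots,v_n$, let $W$ exceed the total edge weight of $D$, and replace each directed edge $(v_i,v_j)$ (so $i<j$) by an \emph{undirected} edge of weight $w(v_i,v_j)+W(j-i)$. The key check is that for $i<j$ the shortest $v_i$--$v_j$ path in the reweighted undirected graph traverses the same edge set as in $D$: along any undirected $v_i$--$v_j$ path the ``$W$-part'' telescopes by the triangle inequality to at least $W(j-i)$, with equality exactly for paths whose indices increase monotonically -- these are precisely the directed paths of $D$ -- and among those the residual weights reproduce $\dist_D$. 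Hence any minimal distance preserver of $D$ is (edge-identically) a distance preserver of the reweighted undirected graph, giving $\DADP(n,p)\le\UDP(n,p)$; running the same construction algorithmically yields the $O(m)$-time reduction of \cref{corr:apsp}.

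For the directed-to-DAG reduction -- the technical core -- I would take a minimal distance preserver $H$ of a directed $G$ with $p$ demand pairs, set $M=|E(H)|$, and assume $M\ge C(n^{1/2}p+n)$ for a large constant $C$ (otherwise the claimed bound already holds). As noted in \cref{sec:reduction_tech}, one may assume the $p$ demand-pair shortest paths are pairwise edge-disjoint, so $\sum_i |E(\pi_i)|=M$ and any vertex of degree $\delta$ lies on $\Omega(\delta)$ of them. An averaging argument over the $\Omega(M)$ edge-slots incident to vertices of degree $\ge M/(2n)$ then produces a single demand path $\pi$ containing $k=\Omega(M/p)$ such ``heavy'' vertices. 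Applying \cref{obs:undirected} to $\pi$ yields a DAG $D$ on $V(\pi)$ -- the union of the remaining demand paths restricted to $\pi$, with subpath-induced weights -- which I would argue is a bona fide (minimal) distance-preserver instance on a DAG with $\le n$ nodes and $\le p$ demand pairs, so $|E(D)|\le\DADP(n,p)=O(n^{\lambda}p^{\mu}+n)$. The crux is then to show that the heaviness of $\pi$ forces $|E(D)|$ to be large: the many edges leaving heavy vertices of $\pi$ must, for the right choice of $\pi$, re-enter $\pi$ and thus appear as edges of $D$, while the edges and paths that interact only sparsely with $\pi$ are charged to a genuinely smaller recursive instance of $\DDP$. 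Balancing the heaviness threshold against the $\DADP$ bound and the ensuing recursion -- with the $n^{1/2}p$ and $n$ terms arising as the low-density base cases -- produces the exponents $\tfrac{1}{2-\lambda}$ and $\tfrac{1+\mu-\lambda}{2-\lambda}$.

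The step I expect to be the main obstacle is exactly this last quantification: high $H$-degree of a vertex $v\in V(\pi)$ does not automatically yield many $D$-edges at $v$, since a path through $v$ contributes a $D$-edge only if it later returns to $\pi$. Making this rigorous will require (i) choosing $\pi$ so that a constant fraction of the paths through its heavy vertices genuinely re-intersect $\pi$ -- equivalently, showing the paths that fail to do so can be absorbed into a strictly smaller preserver instance -- and (ii) tracking the parameters of that recursion carefully enough that its fixed point is precisely the stated bound and not something weaker. A secondary point is confirming that the induced subpath weights make $D$ a legitimate DAG distance-preserver instance, in particular that no induced subpath of a demand path becomes non-shortest in $D$.
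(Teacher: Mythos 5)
Your plan is the paper's plan: chain a directed-to-DAG reduction with a DAG-to-undirected reduction, and your DAG-to-undirected step (reweight edge $(v_i,v_j)$ by $+W(j-i)$ along a topological order) is exactly the paper's Lemma \ref{lem:dag_und}, including the telescoping argument. The composition also matches: the paper's DAG instance lives on only $O(\textnormal{\DDP}(n,p)/p)$ vertices (the vertices of the chosen path $\pi^*$), which makes the resulting bound self-referential in $\textnormal{\DDP}(n,p)$; solving that inequality is what produces the exponents $\frac{1}{2-\lambda}$ and $\frac{1+\mu-\lambda}{2-\lambda}$, consistent with your ``fixed point'' remark.

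The genuine gap is the step you flag yourself, and the resolution you sketch for it is not the one that works. You propose either (i) choosing $\pi^*$ so that a constant fraction of the paths through its heavy vertices re-intersect $\pi^*$, or (ii) absorbing the non-re-intersecting paths into a smaller recursive $\textnormal{\DDP}$ instance. The paper needs neither. After normalizing so that the $p$ demand paths are pairwise edge-disjoint, \emph{every} vertex has $H$-degree $\Omega(|E(H)|/n)$, and \emph{every} path has length $\Omega(|E(H)|/p)$ (the cleaning step in Appendix \ref{app:ass}), it counts edges incident to $V(\pi^*)$ by partitioning the paths meeting $\pi^*$ into: $\pi^*$ itself; the set $Q$ of paths meeting $\pi^*$ at exactly one vertex; and paths meeting it at $\ge 2$ vertices. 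Edge-disjointness means each path in $Q$ contributes at most $2$ incident edges, so the entire ``lost'' contribution is $O(|Q|)=O(p)$ — this is exactly the origin of the $n^{1/2}p$ term, not a recursion. For paths meeting $\pi^*$ at $k\ge 2$ vertices, the Path Intersection Claim (consistency plus edge-disjointness forces the common vertices to appear in \emph{reverse} order on the other path) gives $\Theta(k)$ incident edges, matched one-to-two with the $k-1$ DAG edges. Hence $\sum_{v\in V(\pi^*)}\deg_H(v)=\Theta(|\pi^*|+|Q|+|E(D)|)$, and combining with the degree lower bound yields $\textnormal{\DDP}(n,p)^2/(np) \le O\paren{\textnormal{\DDP}(n,p)/p + p + \textnormal{\DADP}(\textnormal{\DDP}(n,p)/p,\,p)}$, whose three cases give the three terms of the theorem. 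Your secondary worry — that the induced subpaths are unique shortest paths in the DAG — is legitimate but easy (Claim \ref{claim:usp_dag}): $D$-distances dominate $G$-distances and agree on pairs lying on a common induced path. As written, your proposal leaves the central counting lemma unproved and points toward a harder route to it, so it does not yet constitute a proof.
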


We will need several intermediate claims and lemmas before we can prove Theorem \ref{thm:reduction}. Let $G, P$ be an $n$-node directed, weighted graph and  a set of $|P| = p$ demand pairs, such that the minimal distance preserver $H$ of $G, P$ has $$|E(H)| = \Theta(\textsc{DDP}(n, p))$$ edges. Let $\pi(\cdot, \cdot)$ be a tiebreaking scheme associated with $G, P$. We will make the following assumptions on $G$, $P$, $H$,   and $\pi(\cdot, \cdot)$, which we can assume without loss of generality:
\begin{enumerate}
    \item Path $\pi(s, t)$ is a unique shortest $ s\leadsto t$ path in $G$ for all $s, t \in V(G)$. In particular, this implies that $\pi(\cdot, \cdot)$ is consistent.
    \item Paths in $\pi(P)$ are pairwise edge-disjoint.
    \item Every node $v \in V(G)$ has degree at least  $\deg_H(v) \geq \frac{|E(H)|}{4n}$ in $H$.
    \item Every path $\pi \in \pi(P)$ has at least $|\pi| \geq \frac{|E(H)|}{4p}$ edges. 
\end{enumerate}
Assumptions 1 and 2 hold without loss of generality due to the Independence Lemma for weighted distance preservers (Lemma 33 of \cite{BHT22}). 
Note that Assumption 1 implies that tiebreaking scheme $\pi(\cdot, \cdot)$ is consistent. Assumptions 3 and 4 follow from standard arguments in this area, so we defer the proof of these assumptions to Appendix \ref{app:ass}.

Our extremal reduction from $\textsc{DDP}(n, p)$ to $\textsc{UDP}(n, p)$ will happen in two steps. First, in Section \ref{sec:dir_to_dag}, we will reduce $\textsc{DDP}$ to $\textsc{DADP}$, culminating in the proof of Lemma \ref{lem:dir_dag}. Second, in Section \ref{sec:dag_to_und} we will show that $\textsc{DADP}(n, p) \leq \textsc{UDP}(n, p)$ and prove Corollary \ref{corr:apsp}.  We combine these two reductions in Section \ref{subsec:finishing_reduction} to complete the proofs of Theorem \ref{thm:reduction} and Corollary \ref{corr:reduction}. 

\subsection{Directed to DAG Reduction}

We begin our reduction from $\textsc{DDP}$ to $\textsc{DADP}$ with the following observation about the structure of paths in $\pi(P)$. 

\label{sec:dir_to_dag}

\begin{claim}[Path Intersection Claim]
    Fix two distinct shortest paths $\pi_1, \pi_2 \in \pi(P)$. Let $$v_1, \dots, v_k \in V(\pi_1) \cap V(\pi_2)$$ be the nodes in $V(\pi_1) \cap V(\pi_2)$, listed in the order they appear in path $\pi_1$. Then nodes in $V(\pi_1) \cap V(\pi_2)$ appear on path $\pi_2$ in the reverse order $v_k, \dots, v_1$. 
    \label{clm:path_inter}
\end{claim}
\begin{proof}
    Suppose for the sake of contradiction that there are two vertices $x, y \in V(\pi_1) \cap V(\pi_2)$, where $x \neq y$, such that node $x$ comes before node $y$ in path $\pi_1$ and in path $\pi_2$. Then by the consistency of tiebreaking scheme $\pi(\cdot, \cdot)$, it follows that $\pi_1[x, y] = \pi_2[x, y]$. However, this implies that there exists an edge $e \in E(\pi_1) \cap E(\pi_2)$, contradicting Assumption 2 about the collection of paths $\pi(P)$. 
\end{proof}

Now, for any path $\pi^* \in \pi(P)$, we will define an associated collection of acyclic paths $\Pi_{\textsc{dag}}$. This will correspond to the acyclic collection of paths discussed in the technical overview in Observation \ref{obs:undirected}. 

\begin{definition}[DAG Path Collection]
\label{def:dag}
Fix a  path $\pi^* \in \pi(P)$. For every other path $\pi \in \pi(P) \setminus \{\pi^*\}$ we will define a subpath $\pi_{\textnormal{\textsc{dag}}}$ of $\pi$ induced on the nodes of $\pi^*$. This will yield an acyclic collection of paths $\Pi_{\textnormal{\textsc{dag}}} = \{\pi_{\textnormal{\textsc{dag}}} \mid \pi \in \pi(P) \setminus \{\pi^*\}\}$.  Given a path $\pi \in \pi(P) \setminus \{\pi^*\}$, we construct $\pi_{\textnormal{\textsc{dag}}}$ as follows.
\begin{itemize}
    \item Let $v_1, \dots, v_k \in V(\pi^*) \cap V(\pi)$ be the nodes in $ V(\pi^*) \cap V(\pi)$, listed in the order they appear in path $\pi$. 
    \item We define $\pi_{\textnormal{\textsc{dag}}}$ to be
    $$
    \pi_{\textsc{dag}} = (v_1, \dots, v_k).
    $$
    \item We assign edge $(v_i, v_{i+1}) \in \pi_{\textnormal{\textsc{dag}}}$ weight $w(v_i, v_{i+1}) = \dist_G(v_i, v_{i+1})$.  
\end{itemize}
 We define directed graph $D$ be the graph obtained by taking the (non-disjoint) union of paths in $\Pi_{\textnormal{\textsc{dag}}}$, i.e., $$D = \bigcup_{\pi_{\textnormal{\textsc{dag}}} \in \Pi_{\textnormal{\textsc{dag}}}} \pi_{\textnormal{\textsc{dag}}}.$$
\end{definition}
We now verify that $\Pi_{\textsc{dag}}$ is an acyclic collection of shortest paths in graph $D$. 
\begin{claim}
Fix a path $\pi^* \in \pi(P)$, and let $\Pi_{\textnormal{\textsc{dag}}}$ and $D$ be as defined in Definition \ref{def:dag}. Then graph $D$ is a DAG, and $\pi_{\textnormal{\textsc{dag}}}$ is a unique shortest path in $D$ for all $\pi_{\textnormal{\textsc{dag}}} \in \Pi_{\textnormal{\textsc{dag}}}$. 
\label{claim:usp_dag}
\end{claim}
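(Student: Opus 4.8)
The plan is to establish the two assertions separately: first that $D$ is a DAG, and second that each $\pi_{\textnormal{\textsc{dag}}} \in \Pi_{\textnormal{\textsc{dag}}}$ is a unique shortest path in $D$. For the first assertion, I would fix an ordering of the nodes $V(\pi^*)$ by their position along $\pi^*$, say $u_1, u_2, \dots, u_\ell$ in the order $\pi^*$ visits them, and show that \emph{every} edge of $D$ goes from a later node to an earlier node in this ordering — i.e., $D$ is oriented consistently with the \emph{reverse} topological order of $\pi^*$. The key tool is the Path Intersection Claim (Claim \ref{clm:path_inter}): for any $\pi \in \pi(P) \setminus \{\pi^*\}$, the nodes $V(\pi^*) \cap V(\pi)$ appear along $\pi$ in the reverse of the order they appear along $\pi^*$. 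Since $\pi_{\textnormal{\textsc{dag}}}$ lists exactly those nodes in the order they appear along $\pi$, each edge $(v_i, v_{i+1})$ of $\pi_{\textnormal{\textsc{dag}}}$ connects two nodes that are \emph{adjacent in $V(\pi^*)\cap V(\pi)$ along $\pi$}, hence \emph{adjacent in the reverse-$\pi^*$ restriction}, so $v_{i+1}$ strictly precedes $v_i$ in the $u_1,\dots,u_\ell$ ordering. Every edge of $D$ is such an edge for some $\pi$, so $D$ respects the fixed ordering $u_\ell, u_{\ell-1}, \dots, u_1$ as a topological order, and is therefore acyclic.

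For the second assertion — that each $\pi_{\textnormal{\textsc{dag}}}$ is a \emph{unique shortest path} in $D$ — I would argue via the edge weights, which by Definition \ref{def:dag} are $w(v_i, v_{i+1}) = \dist_G(v_i, v_{i+1})$. First observe that for $\pi_{\textnormal{\textsc{dag}}} = (v_1,\dots,v_k)$ coming from path $\pi$, the nodes $v_1, \dots, v_k$ appear in this order along $\pi$, and $\pi$ is a unique shortest path in $G$ (Assumption 1); hence $\dist_G(v_1, v_k) = \sum_{i=1}^{k-1} \dist_G(v_i, v_{i+1}) = w(\pi_{\textnormal{\textsc{dag}}})$, so $\pi_{\textnormal{\textsc{dag}}}$ has total weight exactly $\dist_G(v_1,v_k)$. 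Next I need that any path $Q$ from $v_1$ to $v_k$ in $D$ has weight at least $\dist_G(v_1,v_k)$, with equality only if $Q = \pi_{\textnormal{\textsc{dag}}}$: this follows because every edge $(a,b)$ of $D$ has weight $\dist_G(a,b) \ge$ the "true" $G$-distance, so $w(Q) \ge \dist_G(v_1, v_k)$ by the triangle inequality in $G$ applied along $Q$; and any deviation from $\pi_{\textnormal{\textsc{dag}}}$ either strictly increases the weight (if some edge on $Q$ is not a tight subpath of the unique shortest $v_1\leadsto v_k$ path in $G$) or, if weights happened to stay tight, would contradict the uniqueness of the shortest $v_1 \leadsto v_k$ path in $G$ when lifted back to $G$. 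I would make this precise by noting: an edge $(u_j, u_i)$ of $D$ with $j > i$ arises from consecutive nodes of $V(\pi^*)\cap V(\pi')$ along some $\pi'$, and so $w(u_j, u_i) = \dist_G(u_j, u_i)$ equals the length of $\pi'[u_j \leadsto u_i]$; concatenating such segments along $Q$ yields a walk in $G$ from $v_1$ to $v_k$ of total $G$-length equal to $w(Q)$, which is $\ge \dist_G(v_1, v_k)$ with equality forcing the walk to be \emph{the} shortest path $\pi[v_1 \leadsto v_k]$, which forces $Q = \pi_{\textnormal{\textsc{dag}}}$.

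I expect the main obstacle to be the uniqueness part of the second assertion: one must carefully rule out an alternate $v_1 \leadsto v_k$ path in $D$ that happens to have the same total weight. The cleanest way to handle this is to reduce it to uniqueness of shortest paths in $G$ (Assumption 1) by the "lifting" argument just described — mapping a $D$-path back to a concatenation of shortest-path segments in $G$ — rather than trying to reason about $D$ intrinsically. A subtle point to be careful about is that this lift is a \emph{walk} in $G$, not necessarily a simple path, so I should either argue the walk is in fact simple (which follows because its total length equals $\dist_G(v_1,v_k)$, and a shortest walk with no repeated cost can be taken simple) or phrase the uniqueness statement in $G$ for walks achieving the distance. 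Everything else (acyclicity, the weight lower bound) is a routine consequence of Claim \ref{clm:path_inter} and the triangle inequality, so the write-up should be short.
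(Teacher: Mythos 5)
Your proposal is correct and matches the paper's proof essentially step for step: acyclicity via the reverse-of-$\pi^*$ ordering as a topological order, justified by Claim \ref{clm:path_inter}, and shortest-path uniqueness by comparing $D$-weights to $G$-distances and lifting a competing $D$-path back to a walk in $G$ that must coincide with the unique shortest path. The one step worth tightening is the final ``which forces $Q = \pi_{\textsc{dag}}$'': you should note explicitly that $D$ contains no chord $(v_i, v_j)$ skipping intermediate nodes of $\pi_{\textsc{dag}}$, because such an edge would require $v_i$ to precede $v_j$ on some other path $\pi'$ as well as on $\pi$, contradicting Claim \ref{clm:path_inter} (resp.\ edge-disjointness) --- the paper's own proof is equally terse on exactly this point.
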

\begin{proof}
    Let $<_D$ be the total ordering on the vertices of $V(D)$ that is the reverse of the ordering of the vertices in path $\pi^*$. That is, if node $x$ comes before node $y$ in $\pi^*$, then $y <_D x$. It is immediate that $<_D$ is a valid topological ordering for directed graph $D$ because every path $\pi_{\textsc{dag}} \in \Pi_{\textsc{dag}}$ respects $<_D$ by Claim \ref{clm:path_inter}.

    To see that all paths in $\Pi_{\textsc{dag}}$ are unique shortest paths in $D$,  observe the following properties of graph $D$:
    \begin{itemize}
        \item For all $s, t \in V(D)$,
            $$
            \dist_D(s, t) \geq \dist_G(s, t).
            $$
        \item For all vertices $s, t \in V(D)$, if $s, t \in V(\pi_{\textsc{dag}})$ for some $\pi \in \Pi_{\textsc{dag}}$,  then
        $$
        \dist_D(s, t) = \dist_G(s, t).
        $$
    \end{itemize}
These two properties imply that every path $\pi_{\textsc{dag}} \in \Pi_{\textsc{dag}}$ is a shortest path in $D$. To see that $\pi_{\textsc{dag}} $ is a \textit{unique} shortest path in $D$, observe that if $\pi_{\textsc{dag}}$ is an $s \leadsto t$ path in $D$, then path $\pi_{\textsc{dag}} $ is the unique (possibly non-contiguous) subpath of path $\pi(s, t) \subseteq G$ in $D$. Then since $\pi(s, t)$ is a unique shortest path in $G$, we conclude that $\pi_{\textsc{dag}}$ is a unique shortest path in $D$. 
\end{proof}

Our end goal is to upper bound the size $|E(H)|$ of graph $H$ by upper bounding the size $|E(D)|$ of DAG $D$. As such, we need to prove that $|E(D)|$ grows with $|E(H)|$. We achieve this in a roundabout way with the following claim.

\begin{claim}
\label{claim:degree}
    Fix a shortest path $\pi^* \in \pi(P)$, and let $\Pi_{\textnormal{\textsc{dag}}}$ and $D$ be as defined in Definition \ref{def:dag}. Let $Q \subseteq P$ be the set of demand pairs $(s, t)$ in $P$ such that $\pi(s, t)$ intersects path $\pi^*$ at exactly 1 node. Then
    $$
    \sum_{v \in V(\pi^*)} \deg_H(v) = \Theta(|\pi^*| + |Q| + |E(D)|).
    $$
\end{claim}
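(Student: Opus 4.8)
\textbf{Proof proposal for Claim \ref{claim:degree}.}

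The plan is to carefully account for which edges of $H$ are incident to nodes on $\pi^*$, and to match this count (up to constant factors) against the three quantities $|\pi^*|$, $|Q|$, and $|E(D)|$. First I would fix $\pi^* = \pi(s^*, t^*)$ and partition the edges of $H$ incident to $V(\pi^*)$ into two groups: the edges of $\pi^*$ itself, and all other edges incident to $V(\pi^*)$. The first group contributes exactly $2|\pi^*| - (\text{endpoint corrections})$ to $\sum_{v \in V(\pi^*)} \deg_H(v)$, which is $\Theta(|\pi^*|)$; this handles the $|\pi^*|$ term and also deals with the degenerate case where $H$ is a single path. For the second group, every such edge $e = (x,y)$ with, say, $x \in V(\pi^*)$ lies on some path $\pi \in \pi(P)$ (since $H$ is minimal, every edge of $H$ is used by some demand path — this is where minimality of $H$ enters), and $e \notin E(\pi^*)$. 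Because paths in $\pi(P)$ are pairwise edge-disjoint (Assumption 2), each such edge $e$ is charged to a \emph{unique} path $\pi \neq \pi^*$, so it suffices to bound, for each $\pi \in \pi(P) \setminus \{\pi^*\}$, the number of edges of $\pi$ that have an endpoint on $\pi^*$, and then sum over $\pi$.

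The key observation is that, for a fixed $\pi \in \pi(P) \setminus \{\pi^*\}$, if $\pi$ meets $V(\pi^*)$ in $k = k_\pi$ nodes $v_1, \dots, v_k$ (listed in the order they appear on $\pi$), then the edges of $\pi$ touching $V(\pi^*)$ are: the (at most two) edges of $\pi$ incident to each $v_i$. An internal occurrence $v_i$ (i.e. $v_i$ not an endpoint of $\pi$) contributes up to $2$ such edges, and each endpoint of $\pi$ that happens to lie on $\pi^*$ contributes $1$. So the number of edges of $\pi$ incident to $V(\pi^*)$ is $\Theta(k_\pi)$ when $k_\pi \geq 1$, and the corresponding contribution to $\sum_v \deg_H(v)$ (counting each such edge once for each endpoint lying on $\pi^*$) is also $\Theta(k_\pi)$ — with the caveat that when $k_\pi = 1$ the contribution is $\Theta(1)$ rather than $0$. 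Now I split the sum over $\pi$ according to whether $k_\pi = 1$ or $k_\pi \geq 2$. The paths with $k_\pi = 1$ are exactly (a superset related to) the set $Q$: by definition $Q$ is the set of demand pairs $(s,t)$ with $\pi(s,t) \cap \pi^*$ a single node, so $\sum_{\pi : k_\pi = 1} \Theta(1) = \Theta(|Q|)$. Here I should be slightly careful: $Q$ counts demand pairs, and distinct demand pairs give distinct paths in $\pi(P)$ by construction of the tiebreaking scheme, so this is a clean $\Theta(|Q|)$, absorbing also the lower-order endpoint terms and the "$k_\pi = 0$" paths (which contribute $0$). For the paths with $k_\pi \geq 2$, recall from Definition \ref{def:dag} that $\pi_{\textsc{dag}} = (v_1, \dots, v_{k_\pi})$ has exactly $k_\pi - 1$ edges, and $\sum_{\pi : k_\pi \geq 2} (k_\pi - 1) = |E(D)|$ — again using that the $\pi_{\textsc{dag}}$ are pairwise edge-disjoint in $D$, which follows from Claim \ref{clm:path_inter} / Claim \ref{claim:usp_dag} since each $\pi_{\textsc{dag}}$ is a unique shortest path in the DAG $D$ and distinct ones cannot share an edge. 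Since $k_\pi \geq 2$ implies $k_\pi = \Theta(k_\pi - 1)$, we get $\sum_{\pi : k_\pi \geq 2} \Theta(k_\pi) = \Theta(|E(D)|)$.

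Combining: $\sum_{v \in V(\pi^*)} \deg_H(v) = \Theta(|\pi^*|) + \Theta(|Q|) + \Theta(|E(D)|) = \Theta(|\pi^*| + |Q| + |E(D)|)$, as claimed. The two directions (upper and lower bound) both fall out of this charging, since the charging is a bijection-up-to-constants between (endpoint-on-$\pi^*$)-incidences of non-$\pi^*$ demand-path edges and the edges of $D$ plus the "single-intersection" paths, plus the $\Theta(|\pi^*|)$ from $\pi^*$'s own edges.

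\textbf{Main obstacle.} I expect the main subtlety to be the bookkeeping around low-order terms and the exact correspondence with $Q$ — specifically, making sure that (a) edges of $H$ incident to $V(\pi^*)$ really are all accounted for (this needs minimality of $H$, so that no "useless" edges exist), (b) the edge-disjointness assumptions are invoked correctly so that each edge is charged exactly once, and (c) the paths $\pi$ with $k_\pi = 1$ match $Q$ up to a constant factor despite $Q$ being defined via demand pairs rather than paths, and despite a demand pair possibly having $\pi^*$ as its own path (which is excluded, or contributes to the $|\pi^*|$ term). The $\Theta$ (rather than exact equality) in the statement gives enough slack that none of these corrections is serious, but they must be stated cleanly; the genuinely structural inputs are just Claim \ref{clm:path_inter} (reverse ordering $\Rightarrow$ $D$ is a DAG with edge-disjoint $\pi_{\textsc{dag}}$'s) and Assumption 2 (edge-disjointness of $\pi(P)$).
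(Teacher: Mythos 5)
Your proposal is correct and follows essentially the same route as the paper: both partition the edges incident to $V(\pi^*)$ according to whether they come from $\pi^*$ itself, from paths meeting $\pi^*$ in exactly one node (yielding $\Theta(|Q|)$), or from paths meeting it in at least two nodes (yielding $\Theta(|E(D)|)$ via $|\pi_{\textsc{dag}}| = k_\pi - 1$ and pairwise edge-disjointness). The bookkeeping caveats you flag are exactly the ones the paper handles, and none of them causes a problem.
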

\begin{proof}
In order to prove Claim \ref{claim:degree}, it will suffice to tightly bound the number of distinct edges incident to nodes in $\pi^*$, since
$$
 \sum_{v \in V(\pi^*)} \deg_H(v) = \Theta(\left|  \{ e \in E(H) \mid e = (u, v), \text{ where $u \in \pi^*$ or $v \in \pi^*$} \}  \right|).
$$
Let $\Pi \subseteq \pi(P)$ be the set of all paths in $\pi(P)$ in graph $H$ that have a nonempty intersection with path $\pi^*$.
Observe that every edge incident to a node in $\pi^*$ in $H$ is contained in a path in $\Pi$. 
We partition the set of paths $\Pi$ into sets $\Pi_1, \Pi_2,$ and $\Pi_3$, where
$$
\Pi_1 = \{\pi^*\}, \quad \Pi_2 = \pi(Q), \quad \text{ and } \quad \Pi_3 = \Pi \setminus (\Pi_1 \cup \Pi_2). 
$$
We let $E_1, E_2,$ and $E_3$ denote the set of edges in paths in $\Pi_1, \Pi_2,$ and $\Pi_3$, respectively, that are incident to nodes in $\pi^*$. Then since sets $\Pi_1, \Pi_2$, and $\Pi_3$ partition $\Pi$, and paths in $\Pi$ are pairwise edge-disjoint by Assumption 2, we conclude that
$$
\sum_{v \in V(\pi^*)} \deg_H(v) = \Theta(|E_1|+ |E_2|+|E_3|).
$$
We will  bound the sizes of $E_1, E_2$, and $E_3$ separately. 
\begin{itemize}
    \item (Size of $|E_1|$.) Set $E_1$ has size $|E_1| = \Theta(|\pi^*|)$. 
    \item (Size of $|E_2|$.) Every path $\pi \in \pi(Q)$ has either 1 or  2 edges incident to a node in $\pi^*$.  
    Then since paths in $\pi(P)$ are pairwise edge-disjoint by Assumption 2,  the edges of the paths in $\pi(Q)$ contribute $\Theta(|Q|)$ edges to the sum of the degrees of the nodes in $\pi^*$ in $H$.
    \item (Size of $|E_3|$.) Fix a path $\pi \in \Pi_3$, and let $\pi_{\textsc{dag}}$ denote the corresponding path in $\Pi_{\textsc{dag}}$. We observe that since $|V(\pi) \cap V(\pi^*)| > 1 $,  
    $$
    |\{(u, v) \in E(\pi) \mid u \in V(\pi^*) \text{ or } v\in V(\pi^*) \}| = \Theta(|V(\pi) \cap V(\pi^*)|) = \Theta(|\pi_{\textsc{dag}}|). 
    $$
    Then we conclude that
    $$
    |E_3| = \sum_{\pi \in \Pi_3} |\{(u, v) \in E(\pi) \mid u \in V(\pi^*) \text{ or } v\in V(\pi^*) \}| = \Theta\left(\sum_{\pi \in \Pi_3} |\pi_{\textsc{dag}}| \right) = \Theta(|E(D)|),
    $$
    where the first and last equality follow from the fact that paths in $\Pi$ are pairwise edge-disjoint and that paths in  $\Pi_{\textsc{dag}}$ are pairwise edge-disjoint, respectively. 
\end{itemize}
Putting it all together, we conclude that
    $$
    \sum_{v \in V(\pi^*)} \deg_H(v) = 
 \Theta(|E_1| + |E_2|+|E_3|) = \Theta(|\pi^*| + |Q| + |E(D)|),
    $$
    as desired.
\end{proof}

We now have all the ingredients we need to prove our reduction from $\textsc{DDP}$ to $\textsc{DADP}$. 

\begin{lemma}[Directed to DAG Reduction]
$$
\textnormal{\DDP}(n, p) \leq O \left(\textnormal{\DADP}\left(\frac{\textnormal{\DDP}(n, p)}{p}, \hspace{2mm} p\right)^{1/2} \cdot n^{1/2}p^{1/2} + n^{1/2}p + n\right).
$$
\label{lem:dir_dag}
\end{lemma}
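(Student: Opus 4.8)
The plan is to combine Claim~\ref{claim:degree} with the trivial averaging bound to extract a path $\pi^*$ whose associated DAG $D$ is dense, bound $|E(D)|$ by $\textsc{DADP}$, and solve the resulting recursive inequality. Let $d = \tfrac{|E(H)|}{4p}$; by Assumption~4 every path in $\pi(P)$ has at least $d$ edges, and by Assumption~1 all shortest paths are unique, so $D$ consists of unique shortest paths (Claim~\ref{claim:usp_dag}) and is a DAG. I would first argue that there exists a path $\pi^* \in \pi(P)$ for which $\sum_{v \in V(\pi^*)} \deg_H(v)$ is large. Indeed, $\sum_{\pi \in \pi(P)} \sum_{v \in V(\pi)} \deg_H(v) \geq \sum_{v \in V(H)} \deg_H(v)^2 \cdot (\text{something})$ — more carefully, each vertex $v$ lies on at least $\deg_H(v)/2$ paths (each edge incident to $v$ is covered by some path, and by edge-disjointness these are distinct paths, so $v$ appears on $\Omega(\deg_H(v))$ of them), hence
\[
\sum_{\pi \in \pi(P)} \sum_{v \in V(\pi)} \deg_H(v) = \Omega\left(\sum_{v \in V(H)} \deg_H(v)^2\right) = \Omega\left(\frac{|E(H)|^2}{n}\right)
\]
by Cauchy--Schwarz (using $\sum_v \deg_H(v) = 2|E(H)|$ and that there are at most $n$ vertices). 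Averaging over the $p$ paths, some $\pi^*$ satisfies $\sum_{v \in V(\pi^*)} \deg_H(v) = \Omega\!\left(\frac{|E(H)|^2}{np}\right)$.

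Next, fix this $\pi^*$ and form $D$ and $Q$ as in Definition~\ref{def:dag} and Claim~\ref{claim:degree}. The claim gives $\frac{|E(H)|^2}{np} = O(|\pi^*| + |Q| + |E(D)|)$. Now $|\pi^*| \leq n$ trivially, and $|Q| \leq p$ since $Q \subseteq P$; so if the $\Omega(|E(H)|^2/(np))$ term dominates both, we get $|E(D)| = \Omega(|E(H)|^2/(np))$ (otherwise $|E(H)| = O(\sqrt{np \cdot n} + \sqrt{np \cdot p}) = O(n^{1/2}p^{1/2}\cdot\max\{n^{1/2},p^{1/2}\})$ which is absorbed into the $n^{1/2}p + n$ terms in the statement, using $p \le n^2$). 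On the other hand $D$ has at most $p - 1$ paths (one per element of $\pi(P) \setminus \{\pi^*\}$), so $|E(D)| \leq \textsc{DADP}(|V(D)|, p)$; and $|V(D)| \leq |V(\pi^*)| = |\pi^*| + 1$. Here I need an upper bound on $|\pi^*|$: since $H$ is a distance preserver and $\pi^* = \pi(s,t)$ for some demand pair, and — more to the point — I can use that $\textsc{DDP}$-type reductions let us assume $|\pi^*| = O(|E(H)|/p) = O(\textsc{DDP}(n,p)/p)$, which is exactly the first argument of $\textsc{DADP}$ in the lemma statement. Actually this needs care: Assumption~4 lower-bounds path lengths, not upper-bounds them; so I would instead observe that $|V(\pi^*) \cap V(\pi)| \le |V(D)|$ and that $|E(D)| = \Theta(\sum_{\pi \in \Pi_3}|\pi_{\textsc{dag}}|)$, and combine with the fact that $D$ is a DAG on its vertex set so applying $\textsc{DADP}$ requires knowing $|V(D)|$. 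The cleanest route: since every vertex of $D$ lies on $\pi^*$, and $\pi^*$ is a shortest path in the unweighted-reducible sense, standard preprocessing (contracting low-degree path segments, as in the deferred Assumptions) lets us take $|V(\pi^*)| = O(\textsc{DDP}(n,p)/p)$.

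Finally, assemble the recursion. Writing $X = \textsc{DDP}(n,p) = \Theta(|E(H)|)$, the above yields
\[
\frac{X^2}{np} = O\!\left( \textsc{DADP}\!\left(\frac{X}{p},\, p\right) + n + p \right),
\]
so either $X = O(n^{1/2}p + n)$ (from the lower-order terms, using $p \le n^2$), or
\[
X = O\!\left( \sqrt{np \cdot \textsc{DADP}\!\left(\tfrac{X}{p},\, p\right)} \right) = O\!\left( \textsc{DADP}\!\left(\tfrac{\textsc{DDP}(n,p)}{p},\, p\right)^{1/2} \cdot n^{1/2} p^{1/2} \right),
\]
which is exactly the claimed bound. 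The main obstacle I anticipate is making the $|V(\pi^*)|$ bound rigorous — i.e.\ justifying that we may assume the chosen path has only $O(\textsc{DDP}(n,p)/p)$ vertices so that the first argument of $\textsc{DADP}$ is correct — since this requires either invoking the path-length normalization from the deferred Assumptions or a separate contraction argument; the averaging step and the invocation of Claim~\ref{claim:degree} are routine by comparison.
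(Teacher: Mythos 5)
Your overall architecture matches the paper's: find a path $\pi^*$ whose incident degree sum is large, invoke \cref{claim:degree}, bound $|E(D)|$ by $\textsc{DADP}$, and solve the resulting recursion. However, the step you yourself flag as the ``main obstacle'' is a genuine gap, and your proposed fix does not close it. You select $\pi^*$ by maximizing $\sum_{v\in V(\pi^*)}\deg_H(v)$ via Cauchy--Schwarz; nothing then prevents this $\pi^*$ from having $\Theta(n)$ vertices, in which case the first argument of $\textsc{DADP}$ becomes $n$ rather than $\textsc{DDP}(n,p)/p$, and the lemma's bound is not obtained. The deferred assumptions (\cref{lem:cleaning}) only \emph{lower}-bound path lengths and vertex degrees; they give no upper bound on $|\pi^*|$, so the ``standard preprocessing / contracting low-degree path segments'' you appeal to is not an available step. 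The same missing bound also breaks your case analysis: when the $|\pi^*|$ term of \cref{claim:degree} dominates, your substitution $|\pi^*|\le n$ gives $X^2/(np)=O(n)$, i.e.\ $X=O(np^{1/2})$, which is not $O(n^{1/2}p+n)$ for $1\ll p\ll n$ (take $p=n^{1/2}$: you get $n^{5/4}$ versus the claimed $O(n)$).

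The repair is to reverse the order of quantifiers, which is what the paper does. Since the paths in $\pi(P)$ are pairwise edge-disjoint and cover $E(H)$, their average length is $|E(H)|/p$, so some path has length at most that; combined with Assumption~4 (every path has length at least $|E(H)|/(4p)$), that path has length exactly $\Theta(|E(H)|/p)$. Take it as $\pi^*$. Assumption~3 (minimum degree $|E(H)|/(4n)$ in $H$) then yields $\sum_{v\in V(\pi^*)}\deg_H(v)\ge |\pi^*|\cdot|E(H)|/(4n)=\Omega(|E(H)|^2/(np))$ with no Cauchy--Schwarz needed, and now $|V(D)|\le|V(\pi^*)|=\Theta(\textsc{DDP}(n,p)/p)$ holds by construction; moreover the dominant-$|\pi^*|$ case becomes $X^2/(np)=O(X/p)$, i.e.\ $X=O(n)$, as required. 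With that substitution the rest of your argument (invoking \cref{claim:degree}, the bounds $|Q|\le p$ and $|E(D)|\le\textsc{DADP}(|V(D)|,p)$ via \cref{claim:usp_dag}, and the three-way case split) goes through and coincides with the paper's proof.
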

\begin{proof}
  Let $G, P$ be an $n$-node directed, weighted graph and  a set of $|P| = p$ demand pairs, such that the minimal distance preserver $H$ of $G, P$ has $$|E(H)| = \Theta(\textsc{DDP}(n, p))$$ edges. Let $\pi(\cdot, \cdot)$ be a tiebreaking scheme associated with $G, P$. We can assume that Assumptions 1-4 discussed at the beginning of Section \ref{sec:reduction} hold  without loss of generality.
  In particular, these assumptions imply that there exists a path $\pi^* \in \pi(P)$ such that $|\pi^*| = \Theta\left(\frac{|E(H)|}{p} \right)$, since the average length of paths in $\pi(P)$ is $\Theta\left(\frac{|E(H)|}{p} \right)$, and every path in $\pi(P)$ has length at least $\frac{|E(H)|}{4p}$.

  Let the collection of paths $\Pi_{\textsc{dag}}$ and DAG $D$ be as defined in Definition \ref{def:dag}, with respect to path $\pi^*$. By Assumption 3, we have that
  $$
 \sum_{v \in V(\pi^*)} \deg_H(v) = \Omega\left( \frac{|E(H)|}{p} \cdot \frac{|E(H)|}{n}\right) = \Omega\left( 
  \frac{\textsc{DDP}(n, p)}{np}^2 \right).
  $$
    Applying Claim \ref{claim:degree}, it follows that
    $$
    \Omega\left( 
  \frac{\textsc{DDP}(n, p)^2}{np} \right) \leq  \Theta(|\pi^*| + |Q| + |E(D)|),
    $$
    where $Q \subseteq P$ is the set of demand pairs $(s, t) \in P$ such that $\pi(s, t)$ intersects path $\pi^*$ at exactly 1 node. We make the following observations: 
    \begin{itemize}
        \item $|\pi^*| = \Theta\left( \frac{\textsc{DDP}(n, p)}{p}\right)$ by our choice of $\pi^*$,
        \item $|Q| \leq p$ since $Q \subseteq P$, and
        \item $|E(D)| \leq \textsc{DADP}(|V(D)|, |\Pi_{\textsc{dag}}|)$.  
    \end{itemize}
    The final observation follows from the fact that each path $\pi_{\textsc{dag}} \in \Pi_{\textsc{dag}}$ is a unique shortest path between its endpoints in $D$ by Claim \ref{claim:usp_dag}.  Consequently, if we define demand pairs 
    $$
    P_D = \{(s, t) \in V(D) \times V(D) \mid \text{ $\exists \text{ } \pi_{\textsc{dag}} \in \Pi_{\textsc{dag}}$ s.t.    $\pi_{\textsc{dag}}$ is an $s \leadsto t$ path in $D$} \},
    $$
    then any distance preserver of $D, P_D$ will contain every path in $\Pi_{\textsc{dag}}$ and thus $E(D)$. Using our observations and the fact that $|\Pi_{\textsc{dag}}| \leq p$ and $|V(D)| \leq |V(\pi^*)| = \Theta(\frac{\textsc{DDP}(n, p)}{p})$, we obtain the following inequality:
    $$
    \Omega\left( 
  \frac{\textsc{DDP}(n, p)^2}{np} \right) \leq  \Theta(|\pi^*| + |Q| + |E(D)|) \leq \Theta\left( \frac{\textsc{DDP}(n, p)}{p} + p + \textsc{DADP}\left(  \frac{\textsc{DDP}(n, p)}{p}, p \right) \right)  
    $$
    We now split our analysis into three cases, based on which of the three terms on the right hand side is the largest.
    \begin{itemize}
        \item $ \left( \text{Term }\frac{\textsc{DDP}(n, p)}{p} \text{ dominates.}\right)$ In this case, we find that
        $$
        \frac{\textsc{DDP}(n, p)^2}{np} \leq O\left( \frac{\textsc{DDP}(n, p)}{p} \right).
        $$
        Rearranging, we conclude that
        $$
        \textsc{DDP}(n, p) = O(n).
        $$
        \item $\left(\text{Term } p \text{ dominates.}\right)$ In this case, we find that
        $$
        \frac{\textsc{DDP}(n, p)^2}{np} \leq O\left(p \right).
        $$
        Then multiplying both sides of the inequality by $np$ and taking the square root of both sides, we get
        $$
        \textsc{DDP}(n, p) = O(n^{1/2}p).
        $$
        \item $\left(\text{Term } \textsc{DADP}\left(  \frac{\textsc{DDP}(n, p)}{p},  p \right)  \text{ dominates.}\right)$ In this case, our inequality is 
        $$
        \frac{\textsc{DDP}(n, p)^2}{np} \leq O\left( \textsc{DADP}\left(  \frac{\textsc{DDP}(n, p)}{p}, \quad p \right)\right).
        $$
        Multiplying both sides by $np$, and taking the square root of both sides, we obtain
        $$
        \textsc{DDP}(n, p) = O\left(\textsc{DADP}\left( \frac{\textsc{DDP}(n, p)}{p}, \quad p \right)^{1/2}\cdot n^{1/2}p^{1/2} \right)
        $$
    \end{itemize}
    Combining our three bounds from our three different cases, we conclude that
$$
\textsc{DDP}(n, p) \leq O \left(\textsc{DADP}\left(\frac{\textsc{DDP}(n, p)}{p}, \quad p\right)^{1/2} \cdot n^{1/2}p^{1/2} + n^{1/2}p + n\right),
$$
completing the proof.
\end{proof}

\subsection{DAG to Undirected Reduction}
\label{sec:dag_to_und}

We now present our reduction to convert shortest paths in DAGs to shortest paths in undirected graphs.

\begin{lemma}[DAG to Undirected Reduction]
    $$
    \textnormal{\DADP}(n, p) \leq \textnormal{\UDP}(n, p)
    $$
    \label{lem:dag_und}
\end{lemma}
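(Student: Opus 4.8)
The plan is to prove $\textnormal{\DADP}(n, p) \leq \textnormal{\UDP}(n, p)$ by exhibiting a weight transformation that turns any DAG instance into an undirected instance on the same vertex set and the same demand pairs, preserving the identity of shortest paths. Concretely, let $G = (V, E, w)$ be an $n$-node weighted DAG with $p$ demand pairs $P$, and let $v_1, \dots, v_n$ be a topological order of $V$, so that every edge $(v_i, v_j) \in E$ has $i < j$. Let $W$ be a number strictly larger than $\sum_{e \in E} |w(e)|$ (in the unweighted case, $W = n$ suffices, which is what feeds into \cref{corr:apsp}). Construct an undirected graph $G' = (V, E', w')$ by replacing each directed edge $(v_i, v_j) \in E$ with an undirected edge $\{v_i, v_j\}$ of weight $w'(\{v_i, v_j\}) = w(v_i, v_j) + W \cdot (j - i)$.

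The first step is to record the telescoping identity: for any directed path $\pi = (v_{i_0}, v_{i_1}, \dots, v_{i_\ell})$ in $G$ (so $i_0 < i_1 < \dots < i_\ell$), the corresponding undirected path in $G'$ has weight $w'(\pi) = w(\pi) + W \cdot (i_\ell - i_0)$, since the "index-gap" terms telescope. In particular, for a demand pair $(s, t) = (v_a, v_b) \in P$ with $a < b$ (if $a > b$ there is no $s \leadsto t$ path in the DAG, so we may assume $a < b$ for every demand pair, or simply discard such pairs), \emph{every} $s$–$t$ path in $G$ that stays within $G$'s edge directions has undirected $G'$-weight equal to its $G$-weight plus the fixed offset $W(b-a)$. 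The second, and main, step is to argue that no $G'$-path can do better by "going backwards" along an edge: if an undirected $s$–$t$ walk in $G'$ uses an edge $\{v_i, v_j\}$ ($i < j$) in the direction from $v_j$ to $v_i$, then it must later recover those $j - i$ index units, and each unit costs at least $W$ minus a total correction bounded by $\sum_e |w(e)| < W$; more carefully, one shows that the total $W$-scaled cost of any $s$–$t$ walk is at least $W \cdot (b - a)$ with equality only for monotone (i.e.\ DAG-respecting) paths, because the signed index displacements along any walk from $v_a$ to $v_b$ sum to exactly $b - a$, while the unsigned displacements sum to at least $b - a$, with strict inequality as soon as one backward edge is used; the leftover $w$-terms, being bounded in absolute value by $\sum_e |w(e)| < W$, cannot compensate for even a single extra unit of index backtracking. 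Hence every shortest $s$–$t$ path in $G'$ is a shortest $s$–$t$ path in $G$ (viewed as an undirected path), and conversely shortest $s$–$t$ paths in $G$ map to shortest $s$–$t$ paths in $G'$; in fact the map is a bijection between the two shortest-path sets.

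Given this equivalence, the reduction for distance preservers is immediate: let $H'$ be a distance preserver of $(G', P)$ of size $\textnormal{\UDP}(n, p)$, and let $H$ be the subgraph of $G$ consisting of the directed edges whose undirected images lie in $H'$. For each $(s, t) = (v_a, v_b) \in P$, $H'$ contains some shortest $s$–$t$ path of $G'$, which by the equivalence is (the image of) a shortest $s$–$t$ path of $G$; that path lies in $H$, so $\dist_H(s,t) = \dist_G(s,t)$ and $H$ is a distance preserver of $(G, P)$ with $|E(H)| \leq |E(H')| \leq \textnormal{\UDP}(n, p)$. Taking the supremum over all DAG instances $(G, P)$ on $n$ nodes and $p$ demand pairs yields $\textnormal{\DADP}(n, p) \leq \textnormal{\UDP}(n, p)$.

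The main obstacle is the second step — verifying rigorously that backtracking along edges never helps in $G'$, i.e.\ that the $W$-scaled term strictly dominates and forces monotonicity. The clean way to handle it is to split the weight of any $s$–$t$ walk as $\sum (\text{gap terms}) + \sum (\text{original } w \text{ terms})$, bound the second sum in absolute value by $\sum_{e \in E}|w(e)| < W$, and observe that the first sum equals $W$ times the total variation of the index sequence along the walk, which is at least $|b - a|$ and exceeds it by at least $2$ (hence the $W$-gap to the optimum is at least $W$, which the $w$-correction cannot close) whenever the walk is not index-monotone; a subtlety to handle carefully is that a shortest path need not be simple a priori, so one should first note that shortest walks in $G'$ can be taken simple (all $G'$-weights are positive, since $j - i \geq 1$ and $W > \sum_e |w(e)|$ forces $w'(\{v_i,v_j\}) = w(v_i,v_j) + W(j-i) > 0$) and then run the monotonicity argument on simple paths. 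For \cref{corr:apsp}, the same construction with $W = n$ gives an $O(m)$-time transformation from an $n$-node, $m$-edge DAG to an $n$-node, $m$-edge undirected graph under which $\dist_G(v_a, v_b) = \dist_{G'}(v_a, v_b) - n(b - a)$ for all $a \le b$, so one APSP call on $G'$ plus $O(n^2)$ post-processing — or, if only $O(m)$ additional time is allowed, recovering distances lazily — recovers all DAG distances; I would state this as a direct corollary right after \cref{lem:dag_und}.
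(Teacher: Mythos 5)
Your proposal is correct and follows essentially the same route as the paper: the same topological-order reweighting $w'(v_i,v_j)=w(v_i,v_j)+W\cdot(j-i)$ with $W$ exceeding the total edge weight, the same telescoping identity for monotone paths, and the same argument that any index-backtracking incurs an extra $W$ that the original weights cannot offset, forcing shortest paths in $G'$ to coincide with shortest paths in $G$. The only cosmetic difference is that the paper invokes the unique-shortest-path assumption (via the Independence Lemma) to identify $\pi$ with $\pi'$, whereas you argue the two shortest-path sets coincide directly; both yield the preserver inequality.
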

\begin{proof}
       Let $G = (V, E, w)$ be an $n$-node weighted directed acyclic graph, and let $P \subseteq V(G) \times V(G)$ be a set of $|P|=p$ demand pairs. Let $<_G$ be a topological ordering of DAG $G$, and let
       $v_1, \dots, v_n$ be the nodes of $G$, ordered according to $<_G$. Let $W$ be a sufficiently large number (e.g., chose $W$ to be larger than the sum of all edge weights in $G$). We may assume without loss of generality that for all $(s, t) \in P$, there exists a unique shortest $s \leadsto t$ path $\pi(s, t)$ in $G$, by the Independence Lemma for weighted distance preservers (Lemma 33 of \cite{BHT22}).

       We define an undirected graph $G' = (V, E', w')$ as follows. For every directed edge $(v_i, v_j) \in E(G)$, add the undirected edge $(v_i, v_j)$ to $E(G')$, and assign it weight
       $$
       w'(v_i, v_j) := w(v_i, v_j) + W \cdot (j-i).
       $$
       This completes the construction of graph $G'$.

        Fix a pair of nodes $v_i, v_j \in V$  such that $v_i$ can reach $v_j$ in $G$. Let $\pi $ denote the unique shortest $v_i \leadsto v_j$  path in $G$, and let $\pi'$ denote a shortest $v_i \leadsto v_j$ path in $G'$. We claim that paths $\pi$ and $\pi'$ are identical, up to a reweighting of their edges. Formally, we claim if path $\pi$ has the corresponding node sequence $(x_1, x_2, \dots, x_k)$, then $\pi'$ has the corresponding node sequence  $(x_1, x_2, \dots, x_k)$, as well.

Suppose that path  $\pi$ has the corresponding node sequence $( x_1, x_2, \dots, x_k)$, where $x_1 = v_i$ and $x_k = v_j$. For all $\ell \in [1, k]$, let $q_{\ell} \in [1, n]$ be the index such that node $x_{\ell} = v_{q_{\ell}}$.  
Notice that since $\pi$ is a path in DAG $G$, we must have that $q_{\ell} < q_{\ell+1}$ for all $\ell \in [1, k-1]$. Then the distance between $v_i$ and $v_j$ in $G'$ is at most
\begin{align*}
\dist_{G'}(v_i, v_j) & \leq w'(\pi) = \sum_{i=1}^{k-1} w'(x_i, x_{i+1}) = \sum_{i=1}^{k-1} \left( w(x_i, x_{i+1}) + W \cdot |q_{i+1} - q_i|  \right) \\
& = \sum_{i=1}^{k-1}\left( w(x_i, x_{i+1}) + W \cdot (q_{i+1} - q_i)  \right) \\
& = w(\pi) + W \cdot(q_k - q_1) \\
& = \dist_G(s, t) +   W \cdot (j-i).
\end{align*}
In the above sequence of inequalities, we made use of the fact that since $q_{\ell} < q_{\ell + 1}$ for all $ \ell \in [1, k-1]$, it follows that
$$\sum_{i=1}^{k-1}|q_{i+1} - q_i| = \sum_{i=1}^{k-1}(q_{i+1} - q_i) = q_k - q_1 = j- i.$$

Let path $\pi'$ have the corresponding node sequence $(y_1, y_2, \dots, y_{k'})$, where $y_1 = v_i$ and $y_{k'} = v_j$.
For all $\ell \in [1, k']$, let $r_{\ell} \in [1, n]$ be the index such that node $y_{\ell} = v_{r_{\ell}}$. 
We claim that since $\pi'$ is a shortest $v_i \leadsto v_j$ path in $G'$, we have that $r_{\ell} < r_{\ell + 1}$ for all $\ell \in [1, k'-1]$. Informally speaking, this is equivalent to claiming that path $\pi'$ respects the total order $<_G$ in undirected graph $G'$.

Suppose towards contradiction that there exists an $\ell \in [1, k]$ such that $r_{\ell} > r_{\ell + 1}$. Then 
$$\sum_{i=1}^{k'-1}|r_{i+1} - r_i| > \sum_{i=1}^{k'-1}(q_{i+1} - q_i) = q_{k'} - q_1 = j - i.$$
This immediately implies that
\begin{align*}
w'(\pi') & = \sum_{i=1}^{k'-1}\left(w(y_i, y_{i+1}) + W \cdot |r_{i+1} - r_i| \right) \\
& \geq w(\pi') + W \cdot (j - i + 1) \\
& = \left(w(\pi') + W\right) + W \cdot (j-i) \\
& > w(\pi) + W\cdot (j-i) \\
& = \dist_{G'}(s, t),
\end{align*}
where the final inequality follows the fact that if $W$ is sufficiently large (e.g., larger than the sum of all edge weights in $G$), then $W > w(\pi)$. This contradicts our assumption that $\pi'$ is a shortest $v_i \leadsto v_j$ path in $G'$, so we conclude that $r_{\ell} < r_{\ell + 1}$ for all $\ell \in [1, k']$. 

Then by identical calculations as before, we have that
\begin{align*}
    w'(\pi') & = \sum_{i=1}^{k'-1}\left( w(y_i, y_{i+1}) + W \cdot (q_{i+1} - q_i)  \right) \\
& = w(\pi') + W \cdot(q_k - q_1) \\
& = w(\pi') +   W \cdot (j-i)
\end{align*}
Since $\pi'$ is a shortest path in $G'$, then
$$
w'(\pi') = w(\pi') + W \cdot (j - i)  \leq \dist_{G'}(v_i, v_j) \leq \dist_G(v_i, v_j) + W \cdot (j-i),
$$
so 
$
w(\pi') \leq \dist_G(v_i, v_j).
$ Since path $\pi$ is the unique shortest $v_i \leadsto v_j$  path in $G$, we conclude that paths $\pi$ and $\pi'$ are identical, e.g., $(x_1, \dots, x_k) = (y_1, \dots, y_{k'})$. 

Then every path $\pi \in \pi(P)$ is a unique shortest path between its endpoints in undirected graph $G'$. Informally, this means that the unique shortest paths between demand pairs $P$ are identical in $G$ and $G'$.  
As a consequence, any distance preserver of $G, P$ is a distance preserver of $G', P$, and vice versa. We conclude that 
$\textsc{DADP}(n, p) \leq \textsc{UDP}(n, p)$.
\end{proof}

Our  reduction from $\textsc{DADP}$ to $\textsc{UDP}$ also implies an algorithmic reduction from APSP in DAGs to APSP in undirected graphs, as we claimed in Section \ref{sec:intro:preserver} in Corollary \ref{corr:apsp}. We now prove Corollary \ref{corr:apsp}. 

\begin{corollary}[restatement of Corollary \ref{corr:apsp}]
\label{corr:apsp-main}
    Let $G$ be an $n$-node, $m$-edge weighted DAG. Then we can compute all-pairs shortest paths on $G$ in $O(m)$ time, plus the time it takes to compute all-pairs shortest paths on an $n$-node, $m$-edge undirected weighted graph.
\end{corollary}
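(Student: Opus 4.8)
The plan is to make the extremal reduction of Lemma~\ref{lem:dag_und} algorithmic: I will check that building the undirected graph $G'$ from the DAG $G$ takes linear time, invoke undirected APSP on $G'$ as a black box, and then recover every directed distance of $G$ from the undirected distances of $G'$ with only $O(n^2)$ post-processing, which is absorbed by the cost of APSP itself.

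Concretely, I would first compute a topological ordering $v_1, \dots, v_n$ of $G$ by the standard $O(n+m)$-time algorithm, and in the same scan compute a number $W$ strictly larger than $\sum_{e \in E(G)} |w(e)|$; since every path in a DAG is simple, $W$ then exceeds $|w(\pi)|$ for every path $\pi$ in $G$. Next I would construct $G' = (V, E', w')$ exactly as in Lemma~\ref{lem:dag_und}: for each directed edge $(v_i, v_j) \in E(G)$ --- necessarily with $i < j$ because the ordering is topological --- add the undirected edge $\{v_i, v_j\}$ of weight $w'(v_i, v_j) = w(v_i, v_j) + W\cdot(j - i)$. Two remarks: this takes $O(m)$ time, and because $W\cdot(j-i) \ge W > |w(v_i,v_j)|$ the graph $G'$ has strictly positive edge weights even when $G$ has negative ones, so the call to undirected APSP on the $n$-node, $m$-edge graph $G'$ is an ordinary nonnegative-weight computation. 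This invocation is precisely the ``plus the time for undirected APSP'' term of the statement.

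Finally, from the table $\{\dist_{G'}(v_i, v_j)\}$ I would read off the directed distances. For $i = j$ the answer is $0$; for $i > j$ the vertex $v_i$ cannot reach $v_j$ in $G$ (all edges move forward in the topological order), so $\dist_G(v_i, v_j) = \infty$. For $i < j$, set $d := \dist_{G'}(v_i, v_j) - W\cdot(j-i)$. Following the proof of Lemma~\ref{lem:dag_und}, if $v_i$ reaches $v_j$ in $G$ then in fact $\dist_{G'}(v_i,v_j) = \dist_G(v_i,v_j) + W\cdot(j-i)$, so $d = \dist_G(v_i, v_j)$, and in particular $|d| < W$; if $v_i$ does not reach $v_j$, then every $v_i$--$v_j$ path in $G'$ is non-monotone in the topological indices, so its total index displacement exceeds $j-i$ by at least $2$, forcing $d > W$ (or $\dist_{G'}(v_i,v_j) = \infty$). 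Hence the single comparison ``$d < W$'' decides reachability and, when it holds, returns the correct distance; if the actual shortest path is wanted, the proof of Lemma~\ref{lem:dag_und} already shows that a shortest $v_i$--$v_j$ path in $G'$ visits the same node sequence as the shortest $v_i \leadsto v_j$ path in $G$, so paths transfer with no extra work. This post-processing spends $O(1)$ per output entry, i.e.\ $O(n^2)$ total, which is dominated by the $\Omega(n^2)$ cost of writing down an APSP table.

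The only genuinely delicate point --- inherited directly from Lemma~\ref{lem:dag_und} --- is the reachability test: separating the reachable regime $|d| < W$ from the unreachable regime $d > W$, which relies on the choice of $W$ larger than the total absolute edge weight together with the integrality of the index displacements along any path of $G'$. A minor secondary issue is bit-complexity: the new weights have the form $w(e) + W\cdot(j-i)$ with $W = O(m\cdot\max_e|w(e)|)$, so they exceed the original weights by only $O(\log(nm))$ additional bits, which does not affect the running time in the standard word-RAM or comparison-addition models.
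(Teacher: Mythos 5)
Your proposal is correct and follows essentially the same route as the paper: construct $G'$ from Lemma \ref{lem:dag_und} in $O(n+m)$ time via topological sorting and per-edge reweighting, run undirected APSP on $G'$ as a black box, and recover each directed distance by subtracting $W\cdot(j-i)$, using a threshold comparison against $W$ to detect unreachability (the paper's test $\dist_{G'}(v_i,v_j)\ge W\cdot(j-i+1)$ is equivalent to yours). Your additional remarks on handling negative weights, parity of index displacements, and bit-complexity are correct refinements but not substantive departures.
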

\begin{proof}
First, we observe that given a directed acyclic graph $G$, we can construct graph $G'$ from Lemma \ref{lem:dag_und} in $O(n+m)$ time. Topological sorting can be done in $O(n+m)$ time. We can reweigh edge $(v_i, v_j)$ by adding $+W \cdot (j - i)$ weight to $(v_i, v_j)$ in $O (1)$ time.  Thus constructing undirected, weighted graph $G'$ can be done in $O(n+m)$ time. 
Let $v_1, \dots, v_n$ be the topological ordering of $G$ from Lemma \ref{lem:dag_und}.

For all $v_i, v_j \in V(G)$,  if
$$
\dist_{G'}(v_i, v_j) \geq W \cdot (j - i+1),
$$
then we report that $\dist_G(v_i, v_j) = \infty$.
Otherwise, we report that $$\dist_{G}(v_i, v_j) = \dist_{G'}(v_i, v_j) - W \cdot (j- i).$$ 
\end{proof}

\subsection{Finishing the Reduction}
\label{subsec:finishing_reduction}

We can chain our reductions in Lemmas \ref{lem:dir_dag} and \ref{lem:dag_und} to prove Theorem \ref{thm:reduction}. 

\begin{proof}[Proof of Theorem \ref{thm:reduction}]
    Suppose that $\textsc{UDP}(n, p) = O(n^{\lambda} p^{\mu} + n)$. Then we calculate
    \begin{align*}
\textsc{DDP}(n, p) & \leq O \left(\textsc{DADP}\left(\frac{\textsc{DDP}(n, p)}{p}, p\right)^{1/2} \cdot n^{1/2}p^{1/2} + n^{1/2}p + n\right) \\
 & \leq O \left(\textsc{UDP}\left(\frac{\textsc{DDP}(n, p)}{p}, p\right)^{1/2} \cdot n^{1/2}p^{1/2} + n^{1/2}p + n\right) \\
  & \leq O \left(\left(\left(\frac{\textsc{DDP}(n, p)}{p} \right)^{\lambda} p^{\mu} + \frac{\textsc{DDP}(n, p)}{p} \right)^{1/2} \cdot n^{1/2}p^{1/2} + n^{1/2}p + n\right) \\
 &  \leq O \left( \left( \frac{\textsc{DDP}(n, p)^{\lambda/2}}{p^{\lambda/2}} \cdot p^{\mu/2} + \frac{\textsc{DDP}(n, p)^{1/2}}{p^{1/2}} \right)  \cdot n^{1/2}p^{1/2} + n^{1/2}p + n\right) \\
  &  \leq O \left( \textsc{DDP}(n, p)^{\lambda/2} \cdot n^{1/2}p^{\frac{1  + \mu- \lambda}{2}} + \textsc{DDP}(n, p)^{1/2}\cdot n^{1/2} + n^{1/2}p + n 
  \right)
    \end{align*}
This upper bound has four terms. Suppose the first term $ \textsc{DDP}(n, p)^{\lambda/2} \cdot n^{1/2}p^{\frac{1  + \mu- \lambda}{2}}$ dominates. Then
\begin{align*}
    \textsc{DDP}(n, p)^{1 - \lambda/2} & \leq   O\left(n^{1/2}p^{\frac{1  + \mu- \lambda}{2}}\right),
\end{align*}
so in this case we conclude $$ \textsc{DDP}(n, p) \leq O\left(  n^{\frac{1}{2-\lambda}}p^{\frac{1 + \mu - \lambda}{2 - \lambda}}  \right).$$
Likewise, when the second term $\textsc{DDP}(n, p)^{1/2}\cdot n^{1/2}$ dominates, we get that $\textsc{DDP}(n, p)  = O(n)$. Putting it all together, we conclude that
$$
\textsc{DDP}(n, p) = O\left(n^{\frac{1}{2-\lambda}}p^{\frac{1 + \mu - \lambda}{2 - \lambda}} + n^{1/2}p + n \right). 
$$
\end{proof}

We can use a similar analysis to prove Corollary \ref{corr:reduction}. We   restate Corollary \ref{corr:reduction} in greater detail below for convenience.
\begin{corollary}[cf. Corollary \ref{corr:reduction}]
\label{corr:reduction-main}
Suppose that consistent tiebreaking schemes are optimal for directed distance preservers when $p \leq n^{2/3}$, or equivalently, 
$$
\textnormal{\DDP}(n, p) = \Theta(\min(n^{2/3}p + n, np^{1/2})) = \Theta(n^{2/3}p + n),
$$
when $p \leq n^{2/3}$. 
Then consistent tiebreaking schemes are optimal for undirected distance preservers when $p \leq n$, or equivalently, 
$$
\textnormal{\UDP}(n, p) = \Theta(\min(n^{1/2}p + n, np^{1/2})) = \Theta(n^{1/2}p + n),
$$
when $p \leq n$. 
\end{corollary}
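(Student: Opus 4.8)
The plan is to prove the implication directly, by observing that the chain of inequalities underlying the reduction Lemma~\ref{lem:dir_dag} is essentially \emph{two-sided}, so that feeding in the hypothesized optimal directed bound forces a matching lower bound on the undirected side. Concretely, I would first isolate the key inequality hidden inside the proof of Lemma~\ref{lem:dir_dag}. Take the worst-case instance $G,P$ on $n$ nodes and $p=|P|$ demand pairs whose minimal preserver $H$ satisfies $|E(H)|=\Theta(\DDP(n,p))$, together with Assumptions~1--4, and let $\pi^*\in\pi(P)$ be a path with $|\pi^*|=\Theta(\DDP(n,p)/p)$ (which exists by Assumption~4). Combining the degree lower bound $\sum_{v\in V(\pi^*)}\deg_H(v)=\Omega(\DDP(n,p)^2/(np))$ (from Assumptions~3 and~4) with Claim~\ref{claim:degree} (an \emph{equality} up to constants), Claim~\ref{claim:usp_dag} (so the resulting DAG $D$ satisfies $|E(D)|\le\DADP(|V(D)|,p)$ with $|V(D)|\le|\pi^*|+1$), and Lemma~\ref{lem:dag_und} ($\DADP\le\UDP$), one obtains
\[
\Omega\!\left(\frac{\DDP(n,p)^2}{np}\right)\ \le\ O\!\left(\frac{\DDP(n,p)}{p}\;+\;p\;+\;\UDP\!\left(\Theta\!\left(\tfrac{\DDP(n,p)}{p}\right),\,p\right)\right).
\]

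Next I would substitute the hypothesis $\DDP(n,p)=\Theta(n^{2/3}p+n)$ and restrict to the regime $n^{1/3}\ll p\le n^{2/3}$, where $\DDP(n,p)=\Theta(n^{2/3}p)=\omega(n)$ and $\DDP(n,p)^2/(np)=\Theta(n^{1/3}p)=\omega(p)$. Then the first two terms on the right-hand side cannot dominate, so the $\UDP$ term must, yielding $\UDP(\Theta(n^{2/3}),\,p)=\Omega(n^{1/3}p)$. Writing $N=\Theta(n^{2/3})$ (so $n^{1/3}=\Theta(N^{1/2})$ and $p\le\Theta(N)$), this says exactly $\UDP(N,p)=\Omega(N^{1/2}p)$ for $N^{1/2}\ll p\le N$. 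Given an arbitrary target pair $(N,q)$ with $q\le N$: if $q\gg N^{1/2}$, choose $n=\Theta(N^{3/2})$ large enough that $q\le n^{2/3}$, apply the above with $p=q$, and absorb constant factors (using monotonicity of $\UDP$ in its first argument) to conclude $\UDP(N,q)=\Omega(N^{1/2}q)$; if $q=O(N^{1/2})$, then $N^{1/2}q+N=\Theta(N)$ and the trivial bound $\UDP(N,q)\ge\UDP(N,1)=\Omega(N)$ (one demand pair along a path with $N-1$ edges) already suffices. Hence $\UDP(N,q)=\Omega(N^{1/2}q+N)$ for all $q\le N$, and combined with the always-available consistent-tiebreaking upper bound $\UDP(N,q)=O(\min(N^{1/2}q+N,\,Nq^{1/2}))=O(N^{1/2}q+N)$ (the Consistent Tiebreaking Schemes theorem), this gives $\UDP(N,q)=\Theta(N^{1/2}q+N)$ for all $q\le N$, i.e.\ consistency is optimal for undirected distance preservers in that range.

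The main obstacle is not any one hard step but making sure the reduction is used in a genuinely two-sided way: one must check that Claim~\ref{claim:degree} is an up-to-constants \emph{equality} (not merely an upper bound), and that Assumptions~3--4 supply the matching $\Omega(\DDP(n,p)^2/(np))$ lower bound on $\sum_{v\in V(\pi^*)}\deg_H(v)$ --- otherwise the chain collapses to the one-directional Lemma~\ref{lem:dir_dag} and nothing new is obtained. The rest is routine bookkeeping: treating the boundary regime $q=O(N^{1/2})$ separately, and choosing $n$ as a suitable constant multiple of $N^{3/2}$ so that the substitution $N=\Theta(n^{2/3})$, $p=q\le n^{2/3}$ is legitimate and the $\Theta(1)$ slack in $|V(D)|$ is harmless.
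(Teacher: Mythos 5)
Your proposal is correct and follows essentially the same route as the paper: the paper simply chains the already-proved statement of Lemma~\ref{lem:dir_dag} with Lemma~\ref{lem:dag_und}, plugs in $\DDP(n,p)=\Theta(n^{2/3}p)$, observes that for $p\gtrsim n^{1/3}$ the $\UDP$ term must dominate, and then substitutes $N=n^{2/3}$ exactly as you do (your re-derivation of the intermediate inequality $\DDP(n,p)^2/(np)\lesssim |\pi^*|+|Q|+|E(D)|$ is just the unsquared form of the lemma's conclusion, so reopening the proof is not actually necessary). The remaining bookkeeping — the trivial $\Omega(N)$ bound for $p\le N^{1/2}$, monotonicity of $\UDP$ to absorb the $\Theta(n^{2/3})$ slack, and matching against the consistent-tiebreaking upper bound — matches the paper's treatment.
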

\begin{proof}
    Suppose that $$
\textsc{DDP}(n, p) = \Theta(\min(n^{2/3}p + n, np^{1/2})) =  \Theta(n^{2/3}p + n) = \Theta(n^{2/3}p),
$$
when $n^{1/3} \leq p \leq n^{2/3}$. 
Now by combining Lemmas \ref{lem:dir_dag} and \ref{lem:dag_und}, we get the inequality
$$
\textsc{DDP}(n, p) \leq  O \left(\textsc{UDP}\left(\frac{\textsc{DDP}(n, p)}{p}, p\right)^{1/2} \cdot n^{1/2}p^{1/2} + n^{1/2}p + n\right).
$$
Plugging in $\textsc{DDP}(n, p) = \Theta(n^{2/3}p)$, we compute that
$$
n^{2/3}p \leq  O \left(\textsc{UDP}\left(n^{2/3}, p\right)^{1/2} \cdot n^{1/2}p^{1/2} + n^{1/2}p + n\right).
$$
Let $c> 0$ be a universal constant such that
$$
n^{2/3}p \leq  c \cdot \left(\textsc{UDP}\left(n^{2/3}, p\right)^{1/2} \cdot n^{1/2}p^{1/2} + n^{1/2}p + n\right),
$$
for sufficiently large $n$. Then for all $10cn^{1/3} \leq p \leq n^{2/3}$,
$$
n^{2/3} p \leq 10c \cdot \textsc{UDP}\left(n^{2/3}, p\right)^{1/2} \cdot n^{1/2}p^{1/2}.
$$
Dividing both sides by $10cn^{1/2}p^{1/2}$ and taking the square root, we obtain
$$
\textsc{UDP}(n^{2/3}, p) = \Omega(n^{1/3}p).
$$
If we let $N = n^{2/3}$, this is equivalent to
$$
\textsc{UDP}(N, p) = \Omega(N^{1/2}p),
$$
for $10c \cdot N^{1/2} \leq p \leq N$. Moreover, it is true unconditionally that $\textsc{UDP}(N, p) = \Theta(N)$ when $p \leq N^{1/2}$. We conclude that
$$
\textsc{UDP}(N, p) = \Omega(N^{1/2}p + N),
$$
for $p \leq N$. Consistent tiebreaking schemes imply that
$$
\textsc{UDP}(N, p) = O(\min(N^{1/2}p + N, Np^{1/2})) = O(N^{1/2}p + N),
$$
when $p \leq N$, so we conclude that 
$$
\textsc{UDP}(N, p) = \Theta(N^{1/2}p + N) = \Theta(\min(N^{1/2}p + N, Np^{1/2})) ,
$$
when $p \leq N$, as desired.
\end{proof}

\section{Exact Distance Preserver Upper Bound}
\label{sec:pres_up}

The goal of this section will be to prove the following theorem. 

\begin{theorem}
Every $n$-node directed unweighted graph $G$ with demand pairs $P$ of size $p = |P|$ admits a distance preserver with
$$
\widetilde{O}\left(n^{5/6}p^{2/3} + n \right)$$ 
edges.
\label{thm:unw_dp}
\end{theorem}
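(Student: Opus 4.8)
The plan is to simulate the undirected, unweighted preserver bound of \cite{BV21}, with the Dense Low-Diameter Cluster Lemma (\cref{obs:directed}, formalized in the paper as \cref{lem:low-diam-scc}) playing the role that the elementary clustering lemma \cref{lem:und_lem} plays there. Let $G$ be the given $n$-node directed unweighted graph, $P$ the $p$ demand pairs, and $H$ a minimal distance preserver of $G,P$. Exactly as in \cref{sec:reduction}, I would first reduce to the case where the chosen shortest paths $\pi(P)$ are unique, consistent and pairwise edge-disjoint, where every $v$ has $\deg_H(v)=\Omega(|E(H)|/n)$, and where every demand path has length $\Omega(|E(H)|/p)$; write $m=|E(H)|$ and $d=m/n$. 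If $m=\widetilde{O}(n)$ we are done, so assume $d$ is large. If $H$ is acyclic then $H$ is a distance preserver of a DAG, so \cref{lem:dag_und} together with the classical Coppersmith--Elkin bound gives $m\le \UDP(n,p)=O(\min\{n^{1/2}p+n,\,np^{1/2}\})$, and one checks $\min\{n^{1/2}p+n,\,np^{1/2}\}=O(n^{5/6}p^{2/3}+n)$ for every $p\ge 1$ (use the first term when $p\le n$ and the second when $p\ge n$). It thus remains to handle the case where $H$ is dense and contains a directed cycle.

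\paragraph{The dense regime.} Here $d$ is large, so \cref{lem:low-diam-scc} yields a set $S\subseteq V(G)$ with $|S|=\Theta(d)$, with $\deg_H(v)=\Omega(d)$ for every $v\in S$, and with $S$ strongly connected and of weak diameter $\polylog(n)$ in $H$. Fix $v_0\in S$. Since subpaths of shortest paths are shortest paths and $S$ has weak diameter $\widetilde{O}(1)$, any demand path meets $S$ within a window of $\widetilde{O}(1)$ consecutive vertices, hence contributes only $\widetilde{O}(1)$ edges incident to $S$; as $H$ is minimal, the $\Omega(d\cdot|S|)=\Omega(d^2)$ edges incident to $S$ are covered by such short windows, of which there are at most $p$. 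I would then run the \cite{BV21} argument: treat $S$ as a source set and cover the edges near $S$ by a \emph{sourcewise} distance preserver for the $\le p$ demand (sub)paths having an endpoint in $S$, exploiting, as in \cite{BV21}, that this sourcewise preserver is unusually sparse because $S$ has small weak diameter --- for a fixed target $t$ the distances $\{\dist_H(v,t):v\in S\}$ occupy a length-$\widetilde{O}(1)$ interval around $\dist_H(v_0,t)$, and a pigeonhole over these $\widetilde{O}(1)$ offsets collapses the sourcewise preserver to a bounded number of single-source preservers. Balancing this sparse sourcewise bound against the density estimate $\Omega(d^2)$ for the edges incident to $S$, and invoking the acyclic case for the low-degree remainder $H-S$ on fewer vertices, caps $m$ at $\widetilde{O}(n^{5/6}p^{2/3}+n)$. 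In the directed setting the sourcewise sub-preserver is governed by the \emph{weighted} bound $\UDP$ reached through \cref{lem:dag_und}, not the sharper unweighted undirected bound available to \cite{BV21}; this weighted-versus-unweighted slack is exactly the $n^{1/6}$ gap between our $\widetilde{O}(n^{5/6}p^{2/3})$ and their $O(n^{2/3}p^{2/3})$.

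\paragraph{Main obstacle.} The crux is \cref{lem:low-diam-scc} itself: extracting a cluster that is simultaneously dense, of high minimum degree, \emph{strongly connected}, and of small weak diameter. Density and high minimum degree follow from applying \cref{lem:und_lem} to the underlying undirected graph of $H$, but strong connectivity and small weak diameter are the genuinely directed ingredients, and this is where \cref{lem:dag_und} is used in the contrapositive: a cycle-free minimal preserver of an unweighted digraph has only $\UDP(n,p)$ edges, so a dense one must contain many short directed cycles, and it is this abundance of short cycles that forces a dense, low-weak-diameter, strongly connected piece to appear inside $H$. The secondary, more routine but still delicate, difficulty is faithfully porting the \cite{BV21} sourcewise-preserver pigeonhole computation into the directed world and verifying that the only loss incurred is the claimed $n^{1/6}$ factor.
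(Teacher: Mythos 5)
There is a genuine gap, in fact several. First, your whole plan rests on a version of the Dense Low-Diameter Cluster Lemma in which $S$ has weak diameter $\polylog(n)$; the actual statement (\cref{lem:low-diam-scc}) only gives weak diameter $O\bigl(\widehat{\ell}p^2/(nd^2)\bigr)$, which with the final choice $d=\varphi=p^{2/3}/n^{1/6}$ can be as large as polynomial in $n$ (e.g.\ $n^{1/2}$ near the threshold). Since the sourcewise preserver of \cref{lem:sourcewise_pres} costs $O((nsph)^{1/2}+n)$, the diameter $h$ enters the bound multiplicatively, and the dependence on $\widehat{\ell}$ cannot be wished away: the paper must split on $\widehat{\ell}$ and invoke the $D$-preserver bound $O(n^2/\widehat{\ell})$ of \cite{BCE05} (\cref{thm:big_hell}) for large $\widehat{\ell}$, a case your proposal omits entirely. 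Second, you do not actually prove the cluster lemma; you identify it as "the crux" and propose to derive strong connectivity and small weak diameter from the contrapositive of \cref{lem:dag_und} ("a dense acyclic preserver is impossible, hence many short cycles, hence a dense low-diameter strongly connected piece"). That is only the intuition stated in the technical overview. Having many short directed cycles somewhere in $H$ does not place them on the high-degree vertices produced by \cref{lem:und_lem}, nor does it bound pairwise roundtrip distances within a single cluster. The paper's actual proof is quite different: it counts $\Omega(nd^2)$ high-degree branching events via Cauchy--Schwarz, builds a path-intersection graph to find one demand path $\pi^*$ carrying $\Omega(nd^2/p^2)$ branching events with each of many other paths, and then uses consistency plus the length bound $|\pi|\le 2\widehat{\ell}$ to show that consecutive branching vertices have small roundtrip distance. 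None of this machinery appears in your sketch.

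Two further problems. Your opening reduction assumes the demand paths can be taken pairwise edge-disjoint "exactly as in \cref{sec:reduction}"; but the cleaning procedure that achieves edge-disjointness (\cref{lem:cleaning}) contracts subpaths into weighted edges, and the paper explicitly notes it is unusable for the unweighted upper bound of \cref{sec:pres_up}. This is precisely why the proof there works with the overlap parameter $\widehat{\ell}$ (which can far exceed $\ell$) and why \cref{lem:many-paths} and \cref{clm:induced_paths} need a careful charging argument rather than edge-disjointness. Finally, your plan to finish by "invoking the acyclic case for the low-degree remainder $H-S$" does not match the structure of the argument: the remainder after removing one cluster need not be acyclic, and the paper instead iterates the cluster extraction $O(p/\varphi^2)$ times, deleting the $\Omega(\varphi^2)$ handled demand pairs at each step and stopping only when the average degree drops below $\varphi$. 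As written, your proposal identifies the right high-level ingredients but leaves the central lemma unproved and would not yield the claimed bound.
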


We will need to introduce some preliminary definitions first.

\subsection{Preliminary Definitions}

Throughout Section \ref{sec:pres_up}, we will use $G$ to denote an $n$-node directed unweighted graph, and we will use $P \subseteq V(G) \times V(G)$ to denote an associated set of demand pairs of size $|P| = p$. Additionally, we will associate with $G$ and  $P$ a \textit{consistent} tiebreaking scheme $\pi(\cdot, \cdot)$. Recall that for each demand pair $(s, t) \in P$,  our tiebreaking scheme fixes an $s \leadsto t$ shortest path $\pi(s, t)$. We let $\pi(P)$  denote this collection of shortest paths in $G$, i.e., 
$$
\pi(P) = \{\pi(s, t) \mid (s, t) \in P\}.
$$
Throughout this section, we will assume that every edge $e \in E(G)$ is contained in a path in $\pi(P)$. (This assumption is without loss of generality because we can safely remove all other edges in $G$ without increasing distances between demand pairs.)

Given graph $G$ and demand pairs $P$, we define the following quantities. 
\begin{itemize}
    \item We define quantity $\ell = \ell(G, P)$ to be 
    $$
    \ell(G, P) := \left \lceil \frac{|E(G)|}{p} \right \rceil.
    $$
    Quantity $\ell$ roughly corresponds to the average number of edges each path in $\pi(P)$  contributes to $|E(G)|$.
    \item We define quantity $d = d(G, P)$ to be
    $$
    d(G, P) := \left \lceil \frac{|E(G)|}{n} \right \rceil.
    $$
    Quantity $d$ roughly corresponds to the average degree of nodes in $G$. 
\end{itemize}
We will assume without loss of generality that $\ell$ and $d$ are sufficiently large. If $\ell \leq c$ or $d \leq c$ for some constant $c$, then $|E(G)| \leq cp$ or $|E(G)| \leq cn$, respectively. Thus our assumption that $\ell > c$ and $d > c$ for a sufficiently large constant $c$ will add $$O(\ell p + dn) = O(p + n)$$
additional edges to our final preserver $H$. With this assumption, it follows that
$$
\ell p = \Theta(|E(G)|) = \Theta(dn). 
$$
We will use the equality repeatedly in our proof. We now define an analogous quantity $\widehat{\ell}$  that captures how much paths in $\pi(P)$ overlap. 
\begin{itemize}
    \item We define $\widehat{\ell} = \widehat{\ell}(G, P)$ to be
    $$
    \widehat{\ell} := \frac{\sum_{\pi \in \pi(P)}|\pi|}{p}.
    $$
    Quantity $\widehat{\ell}$ roughly corresponds to the average length of the paths in $\pi(P)$. Note that $\widehat{\ell} \gg \ell$ in general, and $\widehat{\ell} = \Theta(\ell)$  when paths in $\pi(P)$ are pairwise edge-disjoint. 
\end{itemize}

We will make the following assumption about paths in $\pi(P)$. For every path $\pi = \pi(s, t)$, we have that
$$
\widehat{\ell} / 2 \leq |\pi| \leq 2\widehat{\ell}. 
$$
This assumption holds without loss of generality by a standard bucketing argument. That is, we can bucket the demand pairs $(s, t) \in P$ into buckets $P_0, \dots, P_{\log n}$ so that 
$$
P_i = \{(s, t) \in P \mid 2^i \leq  \dist_G(s, t) \leq 2^{i+1} \}
$$
for $i \in [0, \log n]$. Then we can apply our bound on $G, P_i$ for each $i  \in [0, \log n]$. This will worsen our upper bound by at most a logarithmic factor.

\subsection{Ingredient 1: the Dense Low-Diameter Cluster Lemma}
The first ingredient in our distance preserver construction will be the following lemma, which roughly states that if our distance preserver is sufficiently dense, then it contains a low-diameter subset of vertices containing many high-degree nodes.
We defer the proof of this lemma to Section \ref{subsec:low_diam_scc}.

\begin{lemma}[Dense Low-Diameter Cluster Lemma]
Let $G$ be an $n$-node directed unweighted graph with associated set of demand pairs $P \subseteq V(G) \times V(G)$ of size $|P| = p$.\footnote{We assume without loss of generality that every edge in $E(G)$ lies on a path in $\pi(P)$ for a consistent tiebreaking scheme $\pi(\cdot, \cdot)$.} Let $\ell = \ell(G, P)$, $d = d(G, P)$, and $\widehat{\ell} = \widehat{\ell}(G, P)$. If $G$ has average degree $d$, where $d \geq 10p/n^{1/2}$, then $G$ contains a collection of nodes $S \subseteq V(G)$ such that
\begin{enumerate}
    \item $|S| = \Theta(d)$, 
    \item{(Dense.)} For all $s \in S$, $\deg_G(s) = \Omega(d)$, and
    \item{(Low diameter.)} For all $s, t \in S$, $\dist_G(s, t) = O\left(\frac{\widehat{\ell}p^2}{nd^2 }\right)$.
\end{enumerate}
\label{lem:low-diam-scc}
\end{lemma}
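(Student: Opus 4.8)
The plan is to follow the template of the undirected Lemma~\ref{lem:und_lem} — build the cluster around one carefully chosen vertex — while using the hypothesis $d \ge 10p/n^{1/2}$, equivalently $|E(G)| = \Theta(dn) \gg n^{1/2}p + n$, to manufacture the cyclic structure that an undirected graph has for free. Write $D := \widehat{\ell}\, p^2/(nd^2)$ for the target radius; the hypothesis forces $nd^2 \ge 100p^2$, hence $D \le \widehat{\ell}/100$, so $D$ is much smaller than a typical path length, and the standing assumption $|\pi| = \Theta(\widehat{\ell})$ for all $\pi \in \pi(P)$ is available.

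First I would reformulate the goal. Since every edge of $G$ lies on a path of $\pi(P)$, a vertex is "dense" ($\deg_G(v) = \Omega(d)$) exactly when it sits in an edge-rich region, so it suffices to exhibit a vertex $r$ together with $\Omega(d)$ dense vertices all lying in $B^{\mathrm{out}}_G(r,D) \cap B^{\mathrm{in}}_G(r,D)$, the set of vertices that $r$ reaches within $D$ hops and that reach $r$ within $D$ hops: any two such vertices $s,t$ then satisfy $\dist_G(s,t),\dist_G(t,s) \le 2D$ via the detour through $r$, which gives both strong connectivity and diameter $O(D)$ at once. So the combinatorial task is to find a vertex $r$ that lies on $\Omega(d)$ short directed cycles meeting $\Omega(d)$ distinct dense vertices. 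I emphasize that this is precisely where the directed case is harder than the undirected one: density of a neighborhood is not enough, we need a genuinely \emph{strongly connected} dense pocket of small diameter (e.g. a long "thick cycle" is dense and strongly connected but has large diameter, so the argument must exploit the sparsity of distance preservers to rule this out).

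To locate $r$ and a candidate region, I would use an averaging step: the total path length is $p\widehat{\ell} = \Omega(|E(G)|) = \Omega(dn)$, so charging the deep interiors of the $\pi \in \pi(P)$ to their vertices yields a vertex $r$ that lies $\Omega(\widehat{\ell})$-deep on $k = \Omega(p\widehat{\ell}/n) = \Omega(d)$ paths $\pi_1,\dots,\pi_k$. Since $D \ll \widehat{\ell}$, the length-$D$ subpaths of each $\pi_i$ ending at $r$ (respectively starting at $r$) are well defined and certify that $\Omega(d)$ dense vertices lie in $B^{\mathrm{in}}_G(r,D)$ and $\Omega(d)$ lie in $B^{\mathrm{out}}_G(r,D)$; moreover these $k$ segments deposit $\Omega(dD)$ edge incidences into the double ball $B := B^{\mathrm{out}}_G(r,D) \cup B^{\mathrm{in}}_G(r,D)$, so $B$ is edge-rich relative to its size. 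What remains — and this is the crux — is to upgrade "many dense vertices reachable from $r$, many reaching $r$" into "many dense vertices both reachable from and reaching $r$ within $D$ hops." Here I would invoke distance-preserver sparsity in the style of \cite{CE06} together with the DAG-to-undirected reduction of Lemma~\ref{lem:dag_und}: if the edges confined to $B$ were acyclic — or broke into few acyclic pieces — then $B$, being a DAG that is a union of subpaths of the $p$ consistent unique shortest paths, would carry only $O(|B|^{1/2}p + |B|)$ edges, which contradicts the $\Omega(dD)$ edges forced into $B$ once $D$ is as large as claimed; hence $B$ contains a strongly connected set of $\Omega(d)$ dense vertices, and since it lies inside the $D$-ball of $r$ this set has diameter $O(D)$. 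Making this final step rigorous — bounding $|B|$, controlling how the $p$ paths fragment into subpaths inside $B$, and balancing these quantities against $\Omega(dD)$ so that the exponents in $D = \widehat{\ell}p^2/(nd^2)$ come out exactly — is the main obstacle I expect; the precise value of $D$ is what makes this counting close, and I anticipate needing a somewhat delicate, possibly iterative, ball-growing argument rather than a single application of the \textsc{DADP} bound.
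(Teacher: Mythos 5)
Your proposal does not follow the paper's route, and the two steps on which it hinges are, respectively, unjustified and incorrect, so this is a genuine gap rather than a complete alternative proof. First, the claim that the $k=\Omega(d)$ length-$D$ path segments around $r$ ``deposit $\Omega(dD)$ edge incidences'' into $B$ fails: in this unweighted setting the paths in $\pi(P)$ are \emph{not} pairwise edge-disjoint (the paper explicitly notes that the cleaning lemma which enforces edge-disjointness produces weighted graphs and hence cannot be used in Section~\ref{sec:pres_up}), so all $\Omega(d)$ paths could traverse the same $O(D)$ edges near $r$ and $B$ need not be edge-rich at all. Second, the crux step — upgrading ``$B$ is too dense to be a DAG'' to ``$B$ contains $\Omega(d)$ dense vertices pairwise at distance $O(D)$'' — is both left open by you and sketched incorrectly: non-acyclicity of $B$ yields one directed cycle, not $\Omega(d)$ mutually close dense vertices, and a strongly connected set lying inside the \emph{union} $B^{\mathrm{out}}_G(r,D)\cup B^{\mathrm{in}}_G(r,D)$ does not have diameter $O(D)$ (only membership of every vertex in the \emph{intersection} gives the $2D$ bound via $r$, which is exactly the property you have not established). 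You correctly flag this as the main obstacle, but no mechanism in the proposal actually produces short directed cycles.

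For comparison, the paper manufactures the short cycles from \emph{consistency}: if two shortest paths $\pi$ and $\pi^*$ meet at several vertices, those vertices must appear in reverse orders on the two paths (else they would share an edge), so consecutive meeting points $s_i,s_{i+1}$ form directed cycles whose lengths sum to at most $|\pi|+|\pi^*|=O(\widehat{\ell})$; if $\pi$ and $\pi^*$ have $k=\Omega(nd^2/p^2)$ branching events, most consecutive pairs therefore have roundtrip distance $O(\widehat{\ell}/k)=O(\widehat{\ell}p^2/(nd^2))$. The quantitative work is then a Cauchy--Schwarz count showing $\Omega(nd^2)$ branching events at vertices of outdegree $\ge d/4$, followed by a deletion procedure on a path-intersection graph that isolates one path $\pi^*$ receiving $\Omega(|S|d)$ such events, each from a partner path contributing $\Omega(nd^2/p^2)$ of them; a final averaging over the high-degree vertices of $\pi^*$ produces the cluster. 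None of this machinery — or a substitute for it — appears in your proposal.
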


An important feature of the set of vertices $S$ identified in Lemma \ref{lem:low-diam-scc} is that $S$ contains $\Theta(d)$ nodes each of degree $\Omega(d)$ in $G$. We will see how this property implies that many paths in $\pi(P)$ pass through nodes in $S$, as we state in the following lemma.

\begin{lemma}
    \label{lem:many-paths}
    Let $G, P, S$ and $\ell, d, \widehat{\ell}$ be as described in Lemma \ref{lem:low-diam-scc}.
    Let $Q \subseteq P$ be the set of demand pairs $(s, t) \in P$ such that $\pi(s, t)$ contains a node in $S$. Then 
    $$|Q| = \Omega(d^2).$$
\end{lemma}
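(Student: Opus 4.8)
I want to lower-bound the number of demand pairs $(s,t) \in P$ whose shortest path $\pi(s,t)$ passes through the low-diameter cluster $S$. The key is that every node $v \in S$ has $\deg_G(v) = \Omega(d)$, and by assumption every edge of $G$ lies on a path in $\pi(P)$. So I will count, via double counting, the pairs $(v, \pi)$ with $v \in S$, $\pi \in \pi(P)$, and $v \in V(\pi)$. On one hand, each $v \in S$ has $\Omega(d)$ incident edges, each on some path of $\pi(P)$, so the number of (path, edge-at-$v$) incidences is $\Omega(d \cdot |S|) = \Omega(d^2)$ since $|S| = \Theta(d)$. On the other hand, each path $\pi \in \pi(P)$ that contributes is counted once for each node of $S$ it passes through; I need to argue this multiplicity is not too large, so that the number of \emph{distinct} contributing paths is still $\Omega(d^2)$.

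\textbf{Bounding the multiplicity.} The natural worry is that a single path $\pi$ could pass through all $\Theta(d)$ nodes of $S$, collapsing the count. To control this, I would use the low-diameter property: for all $s,t \in S$, $\dist_G(s,t) = O\!\left(\frac{\widehat\ell p^2}{nd^2}\right)$. If a path $\pi$ of length $|\pi| \le 2\widehat\ell$ passes through $k$ nodes of $S$, consecutive such nodes along $\pi$ are at distance $\ge 1$ apart (unweighted), so $k \le |\pi| + 1$; but more usefully, I should compare against the possibility that all of $S$ fits inside a short window. Actually the cleaner route: since $\pi$ is a shortest path, any two nodes $u, w \in V(\pi) \cap S$ satisfy $\dist_G(u,w) = \dist_\pi(u,w) \le O\!\left(\frac{\widehat\ell p^2}{nd^2}\right)$ (using that $\pi$ is a shortest path and the low-diameter bound on $S$), so all of $V(\pi) \cap S$ lies within a subpath of $\pi$ of length $O\!\left(\frac{\widehat\ell p^2}{nd^2}\right)$, hence $|V(\pi) \cap S| = O\!\left(\frac{\widehat\ell p^2}{nd^2}\right)$. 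Wait — I should double-check whether this bound is actually $o(d)$ in the regime of interest; if it is comparable to $d$ or larger the argument needs the edge-counting version instead. Let me instead phrase the double count in terms of edges: the number of edges of $G$ incident to $S$ is $\Omega(d^2)$ (sum of degrees, minus $O(d^2)$ for double-counted internal edges, still $\Omega(d^2)$ after choosing constants), each such edge lies on a unique path if we first invoke edge-disjointness of $\pi(P)$ (Assumption 2 / Independence Lemma), and a single path $\pi$ contains at most $|\pi| \le 2\widehat\ell$ edges total, but only $O(|V(\pi)\cap S|)$ of them touch $S$.

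\textbf{Putting it together and the main obstacle.} Combining: $|Q| \cdot \max_\pi |E(\pi) \text{ incident to } S| \ge \Omega(d^2)$, and $\max_\pi |E(\pi)\text{ incident to }S| = O\!\left(\frac{\widehat\ell p^2}{nd^2} + 1\right)$ by the low-diameter argument above. For this to yield $|Q| = \Omega(d^2)$, I need $\frac{\widehat\ell p^2}{nd^2} = O(1)$ — and indeed this is exactly the kind of regime the lemma's hypotheses are tuned for, since the low-diameter guarantee $\dist_G(s,t) = O\!\left(\frac{\widehat\ell p^2}{nd^2}\right)$ is only meaningful (i.e., $\Omega(1)$ or the cluster degenerates to a single point) precisely when this quantity is not too small; if it is a large constant or super-constant I would instead conclude $|Q| = \Omega\!\left(\frac{d^2}{\widehat\ell p^2/(nd^2)}\right) = \Omega\!\left(\frac{n d^4}{\widehat\ell p^2}\right)$ and check this is still $\Omega(d^2)$ under the standing assumption $d \ge 10p/n^{1/2}$ (which gives $d^2 \ge 100 p^2/n$, i.e., $nd^2 \ge 100 p^2$, so $\frac{\widehat\ell p^2}{nd^2} \le \widehat\ell/100$ — hmm, that's not automatically $O(1)$). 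The honest main obstacle is thus pinning down exactly which auxiliary inequality among $d, \widehat\ell, p, n$ makes the multiplicity $O(1)$; I expect the intended argument uses that the cluster $S$ is obtained from a short \emph{initial} segment of paths (so the relevant $\widehat\ell$ is really the segment length, not the full path length), making $\frac{\widehat\ell p^2}{nd^2}$ genuinely $O(1)$ by construction, and then $|Q| = \Omega(d^2)$ drops out immediately from the double count. I would reconstruct that quantitative relationship from the proof of Lemma~\ref{lem:low-diam-scc} in Section~\ref{subsec:low_diam_scc} and feed it in here.
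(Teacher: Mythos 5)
There is a genuine gap, and it sits exactly where you flagged it: the per-path multiplicity cannot be bounded by $O(1)$, and no inequality hidden in the proof of \cref{lem:low-diam-scc} rescues this. Your double count correctly establishes that the edges of $G$ incident to $S$ number $\Omega(d^2)$ and, by edge-disjointness, distribute among the paths of $\pi(Q)$ (this matches the first half of the paper's \cref{clm:induced_paths}). But a single shortest path can meet $S$ in as many as $\Theta\bigl(\widehat{\ell}p^2/(nd^2)\bigr)$ nodes, and in the regime where \cref{thm:unw_dp} is actually applied (with $d$ set to $\varphi = p^{2/3}/n^{1/6}$ and $\widehat{\ell}$ as large as the threshold $t = n^{4/3}\varphi/p^{4/3}$) this quantity is as large as $n^{1/2}$, not $O(1)$. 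So dividing $\Omega(d^2)$ by the worst-case multiplicity yields only $|Q| = \Omega\bigl(nd^4/(\widehat{\ell}p^2)\bigr)$, which is far weaker than $\Omega(d^2)$ precisely when it matters. Your guess that the cluster is built from short initial segments of paths is not how \cref{lem:low-diam-scc} works either; its proof gives no better control on $|V(\pi)\cap S|$ than the diameter bound you already used.

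The missing idea is an extremal input of a different kind: the paper does not bound the multiplicity at all. Instead it forms the induced path system $\Pi_J = \pi(Q)[S]$ on the vertex set $S$ (\cref{def:graph_J}), observes that induced subsystems of consistent systems remain consistent (\cref{clm:consistent}), and shows via an at-most-2-to-1 map that the graph $J$ spanned by $\Pi_J$ has $|E(J)| = \Omega(d^2) - |Q|$ distinct edges on only $|S| = \Theta(d)$ vertices. The Coppersmith--Elkin bound for consistent tiebreaking schemes then caps the number of distinct edges of $|Q|$ consistent paths on $\Theta(d)$ nodes at $O\bigl(d\,|Q|^{1/2}\bigr)$. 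Comparing $\Omega(d^2) \le O\bigl(d\,|Q|^{1/2}\bigr)$ forces $|Q| = \Omega(d^2)$ regardless of how the edge mass is distributed among the paths. In short: where you try to control how concentrated the incidences can be on a single path, the paper invokes a global upper bound on how many distinct edges a small consistent path system can span, and that substitution is what closes the argument.
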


Before we can prove  Lemma \ref{lem:many-paths}, we will first need to introduce some new notation and establish several technical claims. Let $\Pi \subseteq \pi(P)$ be a  collection of paths in $\pi(P)$, and let $S \subseteq V(G)$ be a set of nodes in $G$. 
Let $\Pi[S]$ denote the collection of paths in $\Pi$ induced on the nodes in set $S$. Formally, for a path $\pi \in \Pi$, we define the path $\pi[S]$ induced on the set $S$ to be the path obtained by deleting every node in $\pi \setminus S$ from $\pi$. Likewise, we let $\Pi[S] = \{\pi[S] \mid \pi \in \Pi\}$. Note that the paths in $\Pi[S]$ do not necessarily correspond to paths in graph $G$, as $E(\pi[S]) \not \subseteq E(\pi)$, in general. 

We quickly observe that if a collection of paths $\Pi$ is consistent, then the collection of paths $\Pi[S]$ is also consistent. 
\begin{claim}
    \label{clm:consistent}
    Let $\Pi$ be a consistent collection of paths, and let $S \subseteq V(G)$ be a set of nodes. Then the collection of paths $\Pi[S]$ is also consistent. 
\end{claim}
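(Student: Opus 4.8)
The plan is to unwind the definition of consistency (Definition~\ref{def:consistency}) and observe that the operation of restricting a path to a vertex subset $S$ commutes with the operation of extracting a subpath between two vertices that already lie in $S$. Everything reduces to this commutation, so the proof will be short.

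First I would record an elementary fact about the induced-path operation. For a path $\pi$, recall that $\pi[S]$ is obtained by deleting from $\pi$ every node not in $S$; in particular the relative order of the surviving nodes is unchanged, so for $x,y \in S$, $x$ precedes $y$ on $\pi[S]$ if and only if $x$ precedes $y$ on $\pi$. The key observation is then that for any $x, y \in S$ with $x$ preceding $y$ on $\pi$,
\[
(\pi[S])[x \leadsto y] \;=\; (\pi[x \leadsto y])[S],
\]
since both sides are, by definition, the sequence of nodes of $S$ appearing on $\pi$ strictly between $x$ and $y$, together with $x$ and $y$ themselves, listed in $\pi$-order.

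With this in hand, verifying consistency of $\Pi[S]$ is immediate. Take any two paths $\pi_1[S], \pi_2[S] \in \Pi[S]$, with $\pi_1, \pi_2 \in \Pi$, and any nodes $x,y$ such that $x$ precedes $y$ on both $\pi_1[S]$ and $\pi_2[S]$. Necessarily $x, y \in S$, since every node of $\pi_i[S]$ lies in $S$. By the order-preservation remark, $x$ precedes $y$ on $\pi_1$ and on $\pi_2$ as well, so consistency of $\Pi$ gives $\pi_1[x \leadsto y] = \pi_2[x \leadsto y]$. Restricting both sides to $S$ and applying the commutation identity yields
\[
(\pi_1[S])[x \leadsto y] = (\pi_1[x \leadsto y])[S] = (\pi_2[x \leadsto y])[S] = (\pi_2[S])[x \leadsto y],
\]
which is precisely the consistency condition for the collection $\Pi[S]$.

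This argument is essentially bookkeeping, and I do not expect a genuine obstacle: the only point that needs a careful (but routine) check is the commutation identity between the induced-path operation $(\cdot)[S]$ and the subpath operation $(\cdot)[x \leadsto y]$, together with the observation that deleting nodes outside $S$ leaves the order of the surviving nodes intact.
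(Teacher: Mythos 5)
Your proof is correct; the paper states this claim as a quick observation and omits the proof entirely, and your argument (order preservation under restriction plus the commutation of $(\cdot)[S]$ with $(\cdot)[x\leadsto y]$) is exactly the routine bookkeeping the authors intend. No gaps.
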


We  need an additional definition and technical claim before we can prove Lemma \ref{lem:many-paths}. Given a collection of paths $\Pi$ and a set of nodes $S \subseteq V(G)$, we define an auxiliary graph $J(S, \Pi)$ and an associated collection of paths $\Pi_J$ as follows. 

\begin{definition}[Graph $J$ and paths $\Pi_J$]
    Let $\Pi$ be a collection of paths, and let $S \subseteq V(G)$ be a set of nodes. Then we define a graph $J = J(S, \Pi)$ and associated collection of paths $\Pi_J$ as follows. 
    \begin{itemize}
        \item Let $V(J) = S$,
        \item Let $\Pi_J = \Pi[S]$, and
        \item Let the edge set $E(J)$ of $J$ be
        $$
        E(J) = \bigcup_{\pi \in \Pi_J} E(\pi).
        $$
    \end{itemize}
    \label{def:graph_J}
\end{definition}

The definition of graph $J$ and paths $\Pi_J$ we be useful in the proof of Lemma \ref{lem:many-paths}, when combined with the following technical claim.

\begin{claim}
   Let $G, P, S$ and $\ell, d, \widehat{\ell}$ be as described in Lemma \ref{lem:low-diam-scc}. Let $Q \subseteq P$ be the set of demand pairs $(s, t) \in P$ such that $\pi(s, t)$ contains a node in $S$.  Let $J = J(S, \pi(Q))$ and $\Pi_J = \pi(Q)[S]$ be the graph and collection of paths specified in Definition \ref{def:graph_J} with respect to set $S$ and paths $\pi(Q)$. Then
    $$
    |V(J)| = |S| = \Theta(d) \quad \text{ and } \quad |E(J)| = \Omega(d^2) - |Q|.
    $$
    \label{clm:induced_paths}
\end{claim}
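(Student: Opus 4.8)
The first equality is immediate: $|V(J)| = |S|$ by the definition of $J$, and $|S| = \Theta(d)$ is property 1 of Lemma \ref{lem:low-diam-scc}. So the plan is entirely about the lower bound $|E(J)| \geq \Omega(d^2) - |Q|$, and the idea is to bound the in- and out-degree of each $v \in S$ inside $J$ from below by its degree in $G$, up to an additive loss that gets charged to the two "boundary" vertices of each path of $\pi(Q)$.

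Here is how I would set it up. Fix $v \in S$ and consider its out-edges in $G$. For each out-edge $e = (v,x)$, the standing assumption of this section that every edge of $G$ lies on a path of $\pi(P)$ guarantees some such path, and since that path passes through $v \in S$ it in fact lies in $\pi(Q)$; fix one, call it $\pi_e \in \pi(Q)$. If $v$ is not the last vertex of $S$ occurring on $\pi_e$, let $n(e) \in S$ be the next vertex of $S$ after $v$ on $\pi_e$; then $(v, n(e)) \in E(\pi_e[S]) \subseteq E(J)$. The step I expect to be the main obstacle is checking that $e \mapsto n(e)$ is injective over such $e$: if $n(e_1) = n(e_2) =: w$ with $e_1 \neq e_2$, then $\pi_{e_1}$ and $\pi_{e_2}$ leave $v$ along different edges, yet by the consistency of $\pi(\cdot,\cdot)$ (Definition \ref{def:consistency}), since $v$ precedes $w$ on both paths we must have $\pi_{e_1}[v \leadsto w] = \pi_{e_2}[v \leadsto w]$, forcing the out-edge of $v$ to agree — a contradiction. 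Hence $\text{outdeg}_J(v) \geq \text{outdeg}_G(v) - b^{\mathrm{out}}(v)$, where $b^{\mathrm{out}}(v)$ is the number of out-edges $e$ of $v$ for which $v$ is the last vertex of $S$ on $\pi_e$. A symmetric argument using in-edges and the "previous vertex of $S$" map gives $\text{indeg}_J(v) \geq \text{indeg}_G(v) - b^{\mathrm{in}}(v)$.

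The rest is bookkeeping. Distinct out-edges of $v$ lie on distinct paths (a simple path uses at most one out-edge of $v$), so $e \mapsto \pi_e$ shows $b^{\mathrm{out}}(v) \leq |\{\pi \in \pi(Q) : v \text{ is the last vertex of } S \text{ on } \pi\}|$; summing over $v \in S$ and using that every path of $\pi(Q)$ has exactly one such vertex gives $\sum_{v \in S} b^{\mathrm{out}}(v) \leq |\pi(Q)| = |Q|$, and likewise $\sum_{v \in S} b^{\mathrm{in}}(v) \leq |Q|$. Therefore
$$2|E(J)| = \sum_{v\in S}\bigl(\text{indeg}_J(v) + \text{outdeg}_J(v)\bigr) \;\geq\; \sum_{v\in S}\deg_G(v) - 2|Q| \;\geq\; |S|\cdot\Omega(d) - 2|Q| \;=\; \Omega(d^2) - 2|Q|,$$
where the last inequality uses properties 1 and 2 of Lemma \ref{lem:low-diam-scc} ($|S| = \Theta(d)$ and $\deg_G(v) = \Omega(d)$ for every $v \in S$). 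Dividing by $2$ gives $|E(J)| = \Omega(d^2) - |Q|$. I note that the low-diameter property (property 3 of Lemma \ref{lem:low-diam-scc}) plays no role here — only $|S| = \Theta(d)$ and the minimum-degree property are used — and that consistency is invoked only through Definition \ref{def:consistency} applied to the original paths in $\pi(Q)$, not to $\Pi_J$.
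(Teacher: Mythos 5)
Your proof is correct and follows essentially the same strategy as the paper's: both charge the $\Omega(d^2)$ edges of $G$ incident to $S$ to edges of $J$, losing $O(1)$ edges per path of $\pi(Q)$ at the boundary and invoking consistency to bound the multiplicity of the charging. The paper organizes this as a global at-most-two-to-one map $\varphi$ from a set $E''$ of ``interior'' edges onto $E(J)$, whereas you organize it as a per-vertex degree count with an injective ``next $S$-vertex'' map; the two bookkeepings are interchangeable.
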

\begin{proof}
It is immediate from Definition \ref{def:graph_J} that $|V(J)| = |S| = \Theta(d)$. What remains is to prove that $|E(J)| = \Omega(d^2) - |Q|$.  

Let $E' \subseteq E(G)$ be the set of edges in $G$ incident to a node in $S$. Formally, let $$
E' = \{(u, v) \in E(G) \mid u \in S \text{ or } v \in S\}.
$$ 
By Lemma \ref{lem:low-diam-scc}, $|E'| = \Omega(d) \cdot |S| = \Omega(d^2)$. We can assume without loss of generality that every edge $e \in E'$ lies on a path $\pi$ in $\pi(Q)$. We will define a subset $E'' \subseteq E'$ of $E'$ as follows. 
$$
E'' = \{e \in E' \mid \text{there exists $\pi \in \pi(Q)$ and $s, t \in V(\pi) \cap S$ such that $e \in \pi[s \leadsto t]$} \}.
$$
Informally, $E''$ corresponds to the set of edges in $E'$ that lie between two nodes $s, t \in S$ on a path $\pi \in \pi(Q)$. Notice that for every path $\pi \in \pi(Q)$, among the set of edges $E(\pi) \cap E'$, only the first  and  last edge in $E(\pi) \cap E'$ to appear on path $\pi$ are not contained in $E''$. Consequently,
$$
|E(\pi) \cap E''| \geq |E(\pi) \cap E'| - 2.
$$
This immediately gives the bound
$$
|E''| \geq |E'| - 2|Q| = \Omega(d^2) - 2|Q|. 
$$

To prove Claim \ref{clm:induced_paths}, we want to translate our lower bound on $|E''|$ into a lower bound on $|E(J)|$. Towards this end, we will define a mapping $\varphi:E'' \mapsto E(J)$ as follows. Let $e$ be an edge in $E''$. Since $e \in E''$, there exists a path $\pi \in \pi(Q)$ and nodes $s, t \in V(\pi) \cap S$ such that $e \in \pi[s \leadsto t]$. Moreover, we can assume without loss of generality that $(s, t) \in E(J)$, by choosing nodes $s$ and $t$ that are closest to edge $e$ on path $\pi$. We let $\varphi(e) = (s, t)$. We can define $\varphi(e)$ in this way for all $e \in E''$.

We claim that for every edge $(s, t) \in E(J)$, the number of edges in the preimage of edge $(s, t)$ is at most two, i.e.,   $\left|\varphi^{-1}((s, t))\right| \leq 2$. Fix an edge $(s, t) \in E(J)$. Let $E''_{(s, t)} \subseteq E''$ denote the set of edges $e$ in $E''$ such that $\varphi(e) = (s, t)$. If $|E''_{(s, t)}| = 0$, then we are done. Otherwise, there must exist a path $\pi^* \in \pi(P)$ and nodes $s, t \in V(\pi^*) \cap S$ such that node $s$ precedes node $t$ in path $\pi^*$. Since the collection of paths $\pi(P)$ is consistent, for every path $\pi \in \pi(P)$ such that $s, t \in V(\pi)$ and node $s$ precedes node $t$, we have that $\pi[s \leadsto t] = \pi^*[s \leadsto t]$. Additionally, since $(s, t) \in E(J)$, path $\pi^*[s \leadsto t]$ is internally vertex-disjoint from set $S$. Consequently, only the first and last edge of path $\pi^*[s \leadsto t]$ are contained in $E''$. This implies that $|E''_{(s, t)}| \leq 2$, as claimed. 

Putting everything together, we find that
$$
|E(J)| \geq \frac{1}{2} \cdot \left|\varphi^{-1}(E(J))\right| = \frac{1}{2} \cdot |E''| = \Omega(d^2) - |Q|, 
$$
as claimed. 
\end{proof}

With Definition \ref{def:graph_J} and  Claim \ref{clm:induced_paths} in hand, we are now ready to prove Lemma \ref{lem:many-paths}. 

\begin{proof}[Proof of Lemma \ref{lem:many-paths}]
    Let $G, P, S$ and $\ell, d, \widehat{\ell}$ be as described in Lemma \ref{lem:low-diam-scc}. Let $Q \subseteq P$ be the set of demand pairs $(s, t) \in P$ such that $\pi(s, t)$ contains a node in $S$.  Let $J = J(S, \pi(Q))$ and $\Pi_J = \pi(Q)[S]$ be the graph and collection of paths specified in Definition \ref{def:graph_J} with respect to set $S$ and paths $\pi(Q)$.

    Note that by Claim \ref{clm:consistent}, the collection of paths $\Pi_J$ is consistent. Additionally, by Claim \ref{clm:induced_paths} $|V(J)| = \Theta(d)$ and $|E(J)| \geq \Omega(d^2) - |Q|$. Now if $|Q| \geq \Omega(d^2)$, then we have finished the proof of Lemma \ref{lem:many-paths}. Otherwise, we may assume that $|E(J)| = \Omega(d^2)$. 

    Then $\Pi_J$ is a consistent collection of $|\Pi_J| = |Q|$ paths, over a vertex set $S$ of size $|S| = \Theta(d)$. Moreover, the number of distinct edges in $\Pi_J$ is at least $|E(J)| = \Omega(d^2)$. By Theorem 1.1 of \cite{CE06}, a consistent tiebreaking scheme for $x$ demand pairs over $y$ nodes has at most $O(yx^{1/2})$ distinct edges. Plugging in $x \geq |Q|$ and $y = |S| =\Theta(d)$, this implies that graph $J$ has at most $|E(J)| = O(d|Q|^{1/2}|)$ distinct edges. Putting everything together,
    $$
    \Omega(d^2) = |E(J)| = O(d|Q|^{1/2}),
    $$
    so $|Q| = \Omega(d^2)$, as claimed.
\end{proof}

\subsection{Ingredient 2: Sourcewise Distance Preservers}

The second ingredient in our distance preserver construction will be a sourcewise distance preserver upper bound for a collection of sources $S$ with low diameter. This sourcewise preserver is a direct generalization of the sourcewise preserver of \cite{BV21} to the directed setting, and follows from a similar argument.

\begin{lemma}[cf. Lemma 8 of \cite{BV21}]
Let $G$ be an $n$-node directed graph, and let $S \subseteq V(G)$ be a set of $|S| = s$ nodes of weak diameter $h$ (i.e., $\dist_G(s, t) \leq h$ for all $s, t \in S$). Let $P \subseteq S \times V$ be a set of $|P| = p$ sourcewise demand pairs in $G$. Then there exists a distance preserver of $G, P$ of size
$$
O\left((nsph)^{1/2} + n \right).
$$
    \label{lem:sourcewise_pres}
\end{lemma}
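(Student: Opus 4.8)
The plan is to follow the argument behind Lemma~8 of \cite{BV21}, adapting each step to directed graphs. First I would set up the usual reductions: by the Independence Lemma for weighted distance preservers (Lemma 33 of \cite{BHT22}) we may assume every demand pair $(s,t)\in P$ has a \emph{unique} shortest path $\pi(s,t)$, so the induced tiebreaking scheme $\pi(\cdot,\cdot)$ is consistent and $H:=\bigcup_{(s,t)\in P}\pi(s,t)$ is the unique minimal preserver; deleting every edge of $G$ not on some $\pi(s,t)$ lets us assume $E(G)=E(H)$. Next, fix an arbitrary canonical source $s_0\in S$. Every node $v$ reachable from some element of $S$ is reachable from \emph{all} of $S$ (route through $s_0$), and the weak-diameter hypothesis $\dist_G(s_0,s')\le h$, $\dist_G(s',s_0)\le h$ together with the triangle inequality in both directions gives $|\dist_G(s',v)-\dist_G(s_0,v)|\le h$ for every such $v$ and every $s'\in S$. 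Write $\mathrm{lev}(v):=\dist_G(s_0,v)$.

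With this in hand, the core argument goes as follows. Since the $j$-th node of a path $\pi(s,t)$ is at distance exactly $j$ from $s$, its level lies in $[j-h,\,j+h]$; hence every source lies in levels $[0,h]$, and consecutive nodes of any $\pi(s,t)$ differ in level by at most $2h+1$, so every edge of $H$ joins levels within $2h+1$ of each other. Fix a length parameter $r\ge h$ (chosen at the end) and cut each $\pi(s,t)$ at positions $0,r,2r,\dots$ into at most $\lceil \ell_{s,t}/r\rceil$ \emph{segments}, each a consistent shortest path of length $\le r$; the total segment count is $O\!\left(p+\tfrac1r\sum_{(s,t)}\ell_{s,t}\right)$. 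A segment whose first node has level in a band of $\Theta(r)$ consecutive levels is confined, by the position/level relation and $r\ge h$, to that band widened by $O(r)$ levels on each side. Partition the levels into consecutive blocks of $\Theta(r)$ levels ($O(n/r)$ blocks); then the segments starting in a block $B_k$ use only the vertices $V_k$ of $O(1)$ neighbouring blocks, and — because a demand path occupies each such band during an interval of only $O(r)$ positions — each path starts only $O(1)$ segments in $B_k$. Setting $x_k$ to be the number of segments starting in $B_k$, we get $\sum_k|V_k|=O(n)$, $x_k=O(p)$, and $\sum_k x_k=O(p+\tfrac1r\sum\ell_{s,t})$. Every edge of $H$ lies on a segment that starts in some block, so applying the directed Coppersmith--Elkin bound (a consistent collection of $x$ shortest paths on $y$ nodes has $O(y\,x^{1/2})$ distinct edges, Theorem~1.1 of \cite{CE06}) blockwise yields $|E(H)|\le O(n)+\sum_k O\!\left(|V_k|\,x_k^{1/2}\right)$. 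Choosing $r$ and finishing the arithmetic — crucially using that all $p$ paths emanate from the $s$-node, diameter-$h$ set $S$, which lets one re-source the band-restricted segments and replace the crude factor $n$ by $(nsh)^{1/2}$ — drives the right-hand side down to $O\!\left((nsph)^{1/2}+n\right)$.

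The main obstacle is exactly this final accounting. The naive blockwise summation, using only $\sum|V_k|=O(n)$ and $x_k=O(p)$, gives the source-oblivious bound $O(np^{1/2})$; squeezing out the factor $(sh/n)^{1/2}$ requires following \cite{BV21} and exploiting that within an $O(h)$-wide band of levels the family of segments behaves like a sourcewise instance with only $O(s)$ effective sources of bounded spread, so that the per-band Coppersmith--Elkin bound improves. (In the complementary regime $sh>n$ there is nothing to do: $O((nsph)^{1/2}+n)\ge O(np^{1/2})$, and $O(np^{1/2})$ already bounds any consistent family of $p$ shortest paths.) Beyond this, the directed-specific care needed is minor: weak diameter already suffices for the two-sided triangle-inequality estimate on $\mathrm{lev}$; $\mathrm{lev}(\cdot)$ is not monotone along a demand path but stays within $h$ of the (monotone) position, which is precisely what makes the level-blocking work; and one must invoke the directed Coppersmith--Elkin exponent $y\,x^{1/2}$ rather than the undirected $y^{1/2}x+y$.
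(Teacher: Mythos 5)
Your proposal does not follow the paper's argument, and the step where all of the content of the lemma lives is missing. Your segment-and-block decomposition, combined with the directed Coppersmith--Elkin bound $O(yx^{1/2})$ applied per block, only yields the source-oblivious bound $O(np^{1/2})$ (as you yourself note), since away from the first $O(h)$ levels the segments in a block are just an arbitrary consistent family of at most $p$ shortest paths and nothing about $S$ is visible anymore. The claimed rescue --- that ``within an $O(h)$-wide band of levels the family of segments behaves like a sourcewise instance with only $O(s)$ effective sources of bounded spread, so that the per-band Coppersmith--Elkin bound improves'' --- is circular: an improved bound for a sourcewise instance with $s$ sources of diameter $h$ is precisely the statement of the lemma you are trying to prove, and you never exhibit the mechanism by which the factor $(sh/n)^{1/2}$ is actually extracted. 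The phrase ``re-source the band-restricted segments'' is not an argument; after re-sourcing, the new per-band demand pairs no longer originate in a small low-diameter set, so no known per-band bound improves.

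For contrast, the paper's proof (following Lemma 8 of \cite{BV21}) never partitions into levels at all. It labels each edge of a minimal preserver $H$ by a demand pair all of whose shortest paths must use that edge, labels each branching event by the pair of labels of its two edges, and then bounds the number of branching events by $O(sp(2h+1))$: if there were more, pigeonholing over the triple (source $t\in S$, demand pair $(u,v)\in P$, value of $\dist_H(t,v_i)-\dist_H(u,v_i)$, which lies in $[-h,h]$ by the weak-diameter hypothesis and the two-sided triangle inequality) produces two branching points $v_i,v_j$ on $\pi(u,v)$ with equal distance offsets, whence $\pi(t,v_i)\circ\pi(u,v)[v_i,v_j]$ is a shortest $t\leadsto v_j$ path avoiding the labeled edge of the second branching event, contradicting the labeling. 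The bound $O((nb)^{1/2}+n)$ on edges in terms of branching events $b$ then gives $O((nsph)^{1/2}+n)$. If you want to salvage your approach, you would essentially have to reproduce this distance-offset pigeonhole inside each band, at which point the level decomposition is superfluous.
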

\begin{proof}
We defer the proof of Lemma \ref{lem:sourcewise_pres} to Appendix \ref{app:swise} due to its similarities with the proof of Lemma 8 in \cite{BV21}.
\end{proof}

\subsection{Main Construction}

In the first part of our main construction, we will  use Lemmas \ref{lem:low-diam-scc} and \ref{lem:sourcewise_pres} to construct an exact distance preserver $H$ that is sparse when $\widehat{\ell}$ is small.

\paragraph{Construction of distance preserver $H$ for small $\widehat{\ell}$.} Let $\pi(\cdot, \cdot)$ be a consistent tiebreaking scheme of $G, P$. 
We may assume without loss of generality that every edge in $E(G)$ is contained in a path in $\pi(P)$. 
Let $\varphi$ be a parameter of the construction that we will optimize later.  Our choice of $\varphi$ will satisfy $\varphi \geq 10p/n^{1/2}$. 
While the average degree of $G$ is at least $\varphi$, we will repeat the following process.

By Lemma \ref{lem:low-diam-scc} and Lemma \ref{lem:many-paths}, there exists a set of nodes $S \subseteq V(G)$ of size $|S| = \Theta(\varphi)$ such that:
\begin{itemize}
    \item The set $Q \subseteq P$ of demand pairs  $(s, t)$ such that $\pi(s, t)$ contains a node in $S$ is of size $|Q| = \Omega(\varphi^2)$. 
    \item The set $S$ has weak diameter  $O(\widehat{\ell}p^2/(n\varphi^2))$ in $G$.  
\end{itemize}

We will handle $\Theta(\varphi^2)$ paths in $Q$ using two instances of the sourcewise distance preserver claimed in Lemma \ref{lem:sourcewise_pres}. For each demand pair $(x, y) \in Q$ such that $\pi(x, y)$ contains node $s \in S$, we split demand pair $(x, y)$ into the two demand pairs $(x, s)$ and $(s, y)$. We let $P_1 \subseteq S \times V$ and $P_2 \subseteq V \times S$ be the two resulting sets of demand pairs. Note that in order to construct a distance preserver of $G, Q$, it suffices to take the union of a distance preserver of $G, P_1$ and a distance preserver of $G, P_2$. We can apply Lemma \ref{lem:sourcewise_pres} with parameters $s_1 := |S| = \Theta(\varphi)$, $p_1:= \min(|Q|, \varphi^2)=\Theta(\varphi^2)$,  and $h_1 := O(\widehat{\ell}p^2/(n\varphi^2))$ to construct a preserver $H_1$ of demand pairs $Q$ of size
$$
O\left((ns_1p_1h_1)^{1/2}+n \right) = O\left(
n^{1/2}\varphi^{3/2} \cdot \frac{\widehat{\ell}^{1/2}p}{n^{1/2}\varphi} + n
\right) = O\left(\widehat{\ell}^{1/2}\varphi^{1/2}p + n\right).
$$

We add the edges of $H_1$ to our final distance preserver $H$ of $G, P$. Since we have handled all demand pairs in $Q$, we can delete $Q$ from $P$ and repeat the above process. Formally, we update $G, P$ as follows:
\begin{itemize}
    \item $P \leftarrow P \setminus Q$, and
    \item $E(G) \leftarrow \bigcup_{(s, t) \in P}E(\pi(s, t))$.
\end{itemize}
We continue this process until the average degree of $G$ is less than $\varphi$. 

Since $p_1=\Theta(\varphi^2)$, we will repeat this process at most $p/p_1=O(p/\varphi^2)$ times before the average degree of $H_1$ is less than $\varphi$.
This phase of the construction contributes at most
$$
\frac{p}{p_1} \cdot O\left(\widehat{\ell}^{1/2}\varphi^{1/2}p + n\right) = 
O\left(
    \widehat{\ell}^{1/2}\cdot \frac{p^2}{\varphi^{3/2}} + \frac{np}{\varphi^2} \right)
$$
edges to $H$. 
We finish the construction of $H$ by adding all remaining $O(n\varphi)$ edges in $H_1$ to $H$. This yields the following lemma.
\begin{lemma}
For every $\varphi \geq 10p/n^{1/2}$, there exists a distance preserver $H$ of $G, P$ with size
    $$
    |E(H)| = \widetilde{O}\left(
    \widehat{\ell}^{1/2}\cdot \frac{p^2}{\varphi^{3/2}} + \frac{np}{\varphi^2} +n\varphi \right).
    $$
    \label{lem:small_hell}
\end{lemma}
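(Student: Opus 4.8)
The plan is to turn the construction sketched just above the statement into a clean iterative argument. I would process the instance in rounds. At the start of a round I maintain a directed unweighted graph $G$, a demand set $P$ (initially the input), and a consistent tiebreaking scheme $\pi(\cdot,\cdot)$ with $E(G)=\bigcup_{(s,t)\in P}E(\pi(s,t))$ and all paths in $\pi(P)$ of length $\Theta(\widehat\ell)$ (the latter by the bucketing assumption from the start of the section, which costs only an $O(\log n)$ factor). While the current average degree is at least $\varphi$, I invoke \cref{lem:low-diam-scc} — whose hypothesis $d\ge 10p/n^{1/2}$ holds since $\varphi\ge 10p/n^{1/2}$ and the current $p$ only shrinks — to obtain a cluster $S$ that is dense ($\deg_G(s)=\Omega(d)\ge\Omega(\varphi)$ for $s\in S$), has weak diameter $O(\widehat\ell p^2/(n\varphi^2))$, and (after passing to a subcluster when the current average degree exceeds $\varphi$) has size $\Theta(\varphi)$; \cref{lem:many-paths} then yields a set $Q\subseteq P$ of $\Omega(\varphi^2)$ demand pairs whose tiebreak paths meet $S$.

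In each round I handle $\Theta(\varphi^2)$ of the pairs in $Q$. For such a pair $(x,y)$ with $s\in S\cap V(\pi(x,y))$, I split it into $(x,s)$ and $(s,y)$; since $s$ lies on the shortest path $\pi(x,y)$ we have $\dist_G(x,y)=\dist_G(x,s)+\dist_G(s,y)$, so preservers for the two sourcewise instances $P_1\subseteq S\times V$ and $P_2\subseteq V\times S$ together preserve the original pairs. I apply \cref{lem:sourcewise_pres} to $G,P_1$ and to $G^{\mathrm{rev}},P_2$ with $s_1=|S|=\Theta(\varphi)$, $p_1=\Theta(\varphi^2)$, $h_1=O(\widehat\ell p^2/(n\varphi^2))$, getting a preserver $H_1$ of size $O((ns_1p_1h_1)^{1/2}+n)=O(\widehat\ell^{1/2}\varphi^{1/2}p+n)$; I add $H_1$ to the output $H$, delete the handled pairs from $P$, and reset $E(G)\leftarrow\bigcup_{(s,t)\in P}E(\pi(s,t))$. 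The correctness point to verify is that this reset changes neither distances nor shortest paths for surviving demand pairs (their $\pi(s,t)$ are untouched, so $\dist$ can neither drop below nor exceed $|\pi(s,t)|$), so $\pi(\cdot,\cdot)$ remains consistent and valid and all hypotheses are re-established for the next round; hence the final $H$ is a genuine preserver of the original $G,P$, since every original pair is either handled in some round or survives to the end, where the leftover edge set (average degree $<\varphi$, hence size $<n\varphi$) is added to $H$.

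For the size bound I would argue that each round removes $\Theta(\varphi^2)$ demand pairs, so there are $O(p/\varphi^2)$ rounds, each contributing $O(\widehat\ell^{1/2}\varphi^{1/2}p+n)$ edges, for a total of $O(\widehat\ell^{1/2}p^2/\varphi^{3/2}+np/\varphi^2)$; adding the $O(n\varphi)$ leftover edges and the $O(\log n)$ bucketing overhead gives the claimed $\widetilde O(\widehat\ell^{1/2}p^2/\varphi^{3/2}+np/\varphi^2+n\varphi)$. (Alternatively one can handle all of $Q$ each round, keeping the actual average degree $d$ in place of $\varphi$ in the per-round bound; a short convexity computation, using $\sum_i|Q_i|=p$, $|Q_i|=\Omega(d_i^2)$, $d_i\ge\varphi$, shows $\sum_i|Q_i|^{1/2}d_i^{-1/2}=O(p/\varphi^{3/2})$ and that the number of rounds is $O(p/\varphi^2)$, so the totals are unchanged — this sidesteps the need to subsample $S$.)

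I expect the main obstacle to be the parameter bookkeeping across rounds rather than any single hard idea: one must check that $\widehat\ell$ changes by at most a constant factor (all surviving paths were bucketed to length $\Theta(\widehat\ell)$), that $p$ only decreases and $d\ge\varphi$ inside the loop, and hence that the weak-diameter value fed into \cref{lem:sourcewise_pres} stays $O(\widehat\ell p^2/(n\varphi^2))$ with the original $p$; and one must reconcile the cluster size $\Theta(d)$ delivered by \cref{lem:low-diam-scc} with the $\Theta(\varphi)$ used in the sourcewise step, either by subsampling the cluster and re-running the counting argument of \cref{lem:many-paths} for the subcluster (its proof only needs that the chosen nodes have degree $\Omega(\varphi)$), or by the convexity argument above. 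Everything else is routine.
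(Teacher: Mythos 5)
Your proposal is correct and follows essentially the same route as the paper: iteratively extract the dense low-diameter cluster of \cref{lem:low-diam-scc}, use \cref{lem:many-paths} to find $\Omega(\varphi^2)$ demand pairs through it, split them at the cluster and apply \cref{lem:sourcewise_pres} twice, delete the handled pairs, and charge $O(p/\varphi^2)$ rounds at $O(\widehat\ell^{1/2}\varphi^{1/2}p+n)$ each plus the $O(n\varphi)$ leftover. Your extra care about reconciling the $\Theta(d)$ cluster size with the $\Theta(\varphi)$ used in the sourcewise step (via subsampling or the convexity variant) addresses a detail the paper glosses over, and is a welcome addition rather than a deviation.
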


To complement the distance preserver of Lemma \ref{lem:small_hell}, we will construct a distance preserver that is sparse when $\widehat{\ell}$ is large.

\paragraph{Distance preserver $H$ for large $\widehat{\ell}$.} 

To preserve distances when $\widehat{\ell}$ is large, we will make use of a theorem of \cite{BCE05}, which states that we can preserve distances between vertices that are far apart in $H$ using few edges.
\begin{theorem}[Theorem 2.24 of \cite{BCE05}]
    \label{thm:big_hell}
    Given a (possibly directed) unweighted graph $G$, a $D$-preserver is a subgraph $H$ of $G$ such that for all $s, t \in V(G)$ with $\dist_G(s, t) \geq D$, we have that $$\dist_H(s, t) = \dist_G(s, t).$$
    There exists a $D$-preserver $H$ of $G$ with  size 
    $
    O(n^2/ \widehat{\ell}).
    $
\end{theorem}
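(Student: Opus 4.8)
The plan is to take $H$ to be exactly the union of a \emph{consistent} family of shortest paths between all ``far'' pairs, and then to bound its size by a charging argument. Concretely, fix a consistent tiebreaking scheme $\pi(\cdot,\cdot)$ on $G$ (such a scheme always exists: e.g.\ perturb the unit edge weights to be pairwise distinct with total perturbation $<1$, which makes shortest paths unique and hence consistent), let $\mathcal{D}=\{(s,t)\in V(G)\times V(G) : \dist_G(s,t)\ge D\}$, and set $H=\bigcup_{(s,t)\in\mathcal{D}}\pi(s,t)$. Since $H$ is a subgraph of $G$ containing a full shortest $s\leadsto t$ path for every far pair, it is immediately a $D$-preserver; the entire content of the theorem is the bound $|E(H)|=O(n^2/D)$. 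Applying this with $D=\widehat{\ell}$ and using $\widehat{\ell}\le n$ (so the $O(n)$ ``base'' cost is absorbed) yields the stated $O(n^2/\widehat{\ell})$ bound. The argument never uses undirectedness, so it applies verbatim to directed graphs with ``shortest path'' and consistency read in the directed sense of Definition~\ref{def:consistency}.

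For the bound, consider any edge $e=(x,y)\in E(H)$ together with a witnessing far path $\pi=\pi(s,t)$ with $e\in\pi$. Deleting $e$ from $\pi$ leaves a prefix $\pi[s\leadsto x]$ and a suffix $\pi[y\leadsto t]$ whose edge counts sum to $|\pi|-1\ge D-1$, so at least one of them has $D':=\lfloor (D-1)/2\rfloor$ edges. Call $e$ \emph{prefix-heavy} (with respect to $\pi$) in the first case and \emph{suffix-heavy} in the second; thus every edge of $H$ is prefix-heavy or suffix-heavy for some witness. I will bound the number of prefix-heavy edges; the suffix-heavy count is symmetric (using out-edges and downstream vertices in place of in-edges and upstream vertices). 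If $e=(x,y)$ is prefix-heavy via $\pi$, let $a_1,\dots,a_{D'}$ be the $D'$ vertices of $\pi$ immediately preceding $x$ (so $(a_{D'},x)\in\pi$), and charge $e$ to the $D'$ vertex-pairs $(a_1,y),\dots,(a_{D'},y)$, which are distinct since the $a_i$ are distinct vertices of a simple path.

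The key step is that each pair $(a,y)$ receives a charge from \emph{at most one} prefix-heavy edge. Suppose $e=(x,y)$ and $e'=(x',y)$ are both prefix-heavy and both charge $(a,y)$, via witnesses $\pi$ and $\pi'$ respectively. Then $a$ precedes $x$ precedes $y$ on $\pi$, and $a$ precedes $x'$ precedes $y$ on $\pi'$; by consistency of $\pi(\cdot,\cdot)$ we get $\pi[a\leadsto y]=\pi'[a\leadsto y]$, and comparing the last edges of this common subpath gives $(x,y)=(x',y)$, i.e.\ $e=e'$. Hence the total number of charges equals $(\#\text{prefix-heavy edges})\cdot D'$ and is at most $n^2$, so there are $O(n^2/D)$ prefix-heavy edges; adding the symmetric bound for suffix-heavy edges gives $|E(H)|=O(n^2/D)$ (for $D\le 4$ the bound is trivial, taking $H=G$). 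With $D=\widehat{\ell}$ this is $O(n^2/\widehat{\ell})$.

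The only genuine obstacle is getting the charging bookkeeping exactly right: in particular, making sure the ``at most one edge per pair'' claim really follows from Definition~\ref{def:consistency}, and handling the directed case without the luxury of swapping the roles of $s$ and $t$ — which is why the prefix-heavy/suffix-heavy split must be carried out as two separate symmetric cases rather than folded into a single ``WLOG.'' Everything else is routine off-by-one arithmetic.
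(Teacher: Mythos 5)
Your proof is correct. Note that the paper does not actually prove this statement --- it is imported as a black box from Theorem 2.24 of \cite{BCE05} --- so there is no internal proof to compare against; what you have written is a valid self-contained derivation of the standard $O(n^2/D)$ bound, via the usual mechanism (consistency plus a charge of each edge to $\Theta(D)$ ordered vertex pairs, with injectivity of the charging forced by Definition~\ref{def:consistency}). The prefix-heavy/suffix-heavy split and the two separate symmetric injectivity arguments are handled correctly, including the directed case. The one inaccuracy is the parenthetical justification for the existence of a consistent tiebreaking scheme: making edge weights pairwise distinct does \emph{not} by itself make shortest paths unique (two paths on disjoint edge sets with distinct weights can still tie in total weight). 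This is cosmetic --- uniqueness, hence consistency, follows from a random perturbation or from weights of the form $1+2^{-i}$ with distinct $i$, and the paper itself assumes consistent tiebreaking schemes exist throughout --- but the stated reason should be replaced by one of these correct constructions.
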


As an immediate consequence of Theorem \ref{thm:big_hell}, we can obtain a distance preserver $H$ of $G, P$ with size $O(n^2/\widehat{\ell})$, since we can assume all shortest paths $\pi$ in $\pi(P)$ are of length at least $|\pi| \geq \widehat{\ell}/2$ without loss of generality, by our earlier bucketing argument.

\begin{lemma}
    \label{lem:big_hell}
    There exists a distance preserver $H$ of $G, P$ with  size
    $
    O(n^2/ \widehat{\ell}).
    $
\end{lemma}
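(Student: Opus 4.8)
The plan is to obtain Lemma~\ref{lem:big_hell} directly from Theorem~\ref{thm:big_hell} by choosing the distance threshold $D$ to match the (bucketed) path-length scale $\widehat{\ell}$. Recall that we have already reduced to the case where all demand pairs have comparable distance: by the bucketing argument in the preliminary definitions, we may assume without loss of generality that every path $\pi = \pi(s,t) \in \pi(P)$ satisfies $\widehat{\ell}/2 \le |\pi| \le 2\widehat{\ell}$. Since $\pi(s,t)$ is a shortest $s \leadsto t$ path in the unweighted graph $G$, this gives $\dist_G(s,t) = |\pi(s,t)| \ge \widehat{\ell}/2$ for every $(s,t) \in P$.

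First I would set $D := \widehat{\ell}/2$ and apply Theorem~\ref{thm:big_hell} to obtain a $D$-preserver $H$ of $G$, i.e.\ a subgraph $H$ with $\dist_H(s,t) = \dist_G(s,t)$ for every pair with $\dist_G(s,t) \ge D$, of size $O(n^2/D) = O(n^2/\widehat{\ell})$. Next I would check that $H$ is in fact an exact distance preserver of $G, P$: for every demand pair $(s,t) \in P$ we have $\dist_G(s,t) \ge \widehat{\ell}/2 = D$ by the previous paragraph, so the $D$-preserver guarantee applies and yields $\dist_H(s,t) = \dist_G(s,t)$ (in particular $s$ reaches $t$ in $H$). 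This is exactly the definition of an exact ($\alpha = 1$) distance preserver, and its size is $O(n^2/\widehat{\ell})$, completing the proof.

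There is essentially no hard step here; the only point requiring care is the legitimacy of invoking the bucketing reduction at this stage. Concretely, one should note that the bound is being proved for a single bucket $P_i$ (on which all distances lie in $[2^i, 2^{i+1}]$, forcing $\widehat{\ell} = \Theta(2^i)$ and hence $\dist_G(s,t) \ge \widehat{\ell}/2$ for every pair in the bucket), and that re-assembling the $O(\log n)$ buckets costs only a logarithmic factor, which is harmless when this lemma is later combined with Lemma~\ref{lem:small_hell}. With that observation the lemma is immediate.
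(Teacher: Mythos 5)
Your proof is correct and is essentially identical to the paper's: the paper likewise invokes Theorem~\ref{thm:big_hell} with the threshold set to the bucketed lower bound $\widehat{\ell}/2$ on demand-pair distances, yielding size $O(n^2/\widehat{\ell})$. Your added care about applying the bound per bucket and reassembling with a logarithmic loss matches the paper's earlier bucketing setup.
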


\paragraph{Finishing the proof of Theorem \ref{thm:unw_dp}.}

\begin{claim}
    \label{clm:pres_up}
    Together, Lemmas \ref{lem:small_hell} and \ref{lem:big_hell} imply a distance preserver $H$ of $G, P$ of size 
    $$
    |E(H)| = \widetilde{O}\left(  \frac{n^{2/3}p^{4/3}}{\varphi} + \frac{np}{\varphi^2} + n\varphi \right).
    $$
\end{claim}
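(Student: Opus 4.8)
The plan is straightforward: for the given graph $G,P$ and parameter $\varphi$ (which we keep subject to the standing assumption $\varphi \ge 10p/n^{1/2}$ so that \cref{lem:small_hell} applies), we output whichever of the two preservers from \cref{lem:small_hell} and \cref{lem:big_hell} is smaller, and then argue that the worst case over the uncontrolled quantity $\widehat\ell = \widehat\ell(G,P)$ is governed by the crossover point between the two size bounds.

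First I would isolate the $\widehat\ell$-dependent terms. \cref{lem:small_hell} produces a preserver of size $\widetilde O\!\big(\widehat\ell^{1/2} p^2/\varphi^{3/2} + np/\varphi^2 + n\varphi\big)$, while \cref{lem:big_hell} produces one of size $O(n^2/\widehat\ell)$. The terms $np/\varphi^2$ and $n\varphi$ already appear verbatim in the target bound and cannot be improved this way, so it suffices to show that $\min\{\widehat\ell^{1/2} p^2/\varphi^{3/2},\ n^2/\widehat\ell\} = O(n^{2/3}p^{4/3}/\varphi)$ for \emph{every} value of $\widehat\ell$, since $\widehat\ell$ is fixed by $G,P$ and we do not get to choose it.

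The remaining step is a one-line optimization. The first expression is increasing in $\widehat\ell$ and the second is decreasing, so their minimum is maximized at the crossover value $\widehat\ell^\star$ where $\widehat\ell^{1/2} p^2/\varphi^{3/2} = n^2/\widehat\ell$, i.e. $\widehat\ell^\star = n^{4/3}\varphi/p^{4/3}$, at which both expressions equal $n^{2/3}p^{4/3}/\varphi$. Concretely: if $\widehat\ell \le \widehat\ell^\star$ then $\widehat\ell^{1/2} p^2/\varphi^{3/2} \le (\widehat\ell^\star)^{1/2} p^2/\varphi^{3/2} = n^{2/3}p^{4/3}/\varphi$, and we take the preserver of \cref{lem:small_hell}; if instead $\widehat\ell \ge \widehat\ell^\star$ then $n^2/\widehat\ell \le n^2/\widehat\ell^\star = n^{2/3}p^{4/3}/\varphi$, and we take the preserver of \cref{lem:big_hell}. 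In either case the resulting preserver has size $\widetilde O\!\big(n^{2/3}p^{4/3}/\varphi + np/\varphi^2 + n\varphi\big)$, which is the claimed bound.

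There is no real obstacle here; the argument is pure bookkeeping. The only points that need care are (i) retaining the hypothesis $\varphi \ge 10p/n^{1/2}$ throughout, and (ii) recognizing that because $\widehat\ell$ is determined by the instance rather than chosen by us, we must bound the minimum of the two sizes \emph{uniformly} in $\widehat\ell$ — which is exactly what the crossover computation delivers, the bound $n^{2/3}p^{4/3}/\varphi$ being tight precisely when $\widehat\ell = \widehat\ell^\star$.
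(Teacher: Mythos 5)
Your proposal is correct and is essentially identical to the paper's proof: the crossover value $\widehat{\ell}^\star = n^{4/3}\varphi/p^{4/3}$ is exactly the threshold $t$ the paper uses, and the two-case comparison (use Lemma \ref{lem:small_hell} below the threshold, Lemma \ref{lem:big_hell} above it) matches the paper's argument step for step.
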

\begin{proof}
We will use the construction from Lemma \ref{lem:small_hell} when $\widehat{\ell}$ is small, and we will use the construction from Lemma \ref{lem:big_hell} when $\widehat{\ell}$ is large. Specifically, we define our threshold $t$ for $\widehat{\ell}$ to be $$t = \frac{n^{4/3}\varphi}{p^{4/3}}.$$
We calculate the following:
\begin{itemize}
    \item When $\widehat{\ell} \leq t$,
    $$
     \widehat{\ell}^{1/2}\cdot \frac{p^2}{\varphi^{3/2}} \leq t^{1/2} \cdot \frac{p^2}{\varphi^{3/2}} = O\left(\frac{n^{2/3}p^{4/3}}{\varphi} \right),
    $$
    so the distance preserver $H$ of Lemma \ref{lem:small_hell} has size
    $$
    |E(H)| = \widetilde{O}\left(\frac{n^{2/3}p^{4/3}}{\varphi} + \frac{np}{\varphi^2} + n\varphi \right),
    $$
    as desired.
    \item When $\widehat{\ell} > t$, distance preserver $H$ of Lemma \ref{lem:big_hell} has size
    $$
    |E(H)| = \widetilde{O}(n^2/\widehat{\ell}) \leq   \widetilde{O}(n^2/t) = \widetilde{O}\left( \frac{n^{2/3}p^{4/3}}{\varphi}\right),
    $$
    as desired. \qedhere
\end{itemize}
\end{proof}

We are now ready to prove Theorem \ref{thm:unw_dp}.
\begin{proof}[Proof of Theorem \ref{thm:unw_dp}]
The bounds stated in Theorem \ref{thm:unw_dp} will follow directly from the bounds achieved in Claim \ref{clm:pres_up} after parameter balancing. Initially, we will assume that $n^{1/2} \leq p \leq n$. 

Straightforward calculations show that the three terms in Claim \ref{clm:pres_up} are balanced when 
$$
\varphi = \frac{p^{2/3}}{n^{1/6}}.
$$
Recall that parameter $\varphi$ must be at least 
$
\varphi \geq 10p/n^{1/2}
$
for Lemma \ref{lem:low-diam-scc} to go through. This holds for  our choice of $\varphi = \frac{p^{2/3}}{n^{1/6}}$ when $p \leq n$. 
Plugging this value of $\varphi$ into Claim \ref{clm:pres_up}, we obtain the upper bound
$$
\widetilde{O}\left( n^{5/6}p^{2/3} + \frac{n^{4/3}}{p^{1/3}} + n^{5/6}p^{2/3} \right).
$$
We observe that $\frac{n^{4/3}}{p^{1/3}} \leq n^{5/6}p^{2/3}$ when $p \geq n^{1/2}$. We conclude that
$
|E(H)| = \widetilde{O}( n^{5/6}p^{2/3})
$ when $n^{1/2} \leq p \leq n$. 
For the settings where $p < n^{1/2}$ or $p > n$, we observe that the existing upper bounds of $\min(n^{2/3}p+n, np^{1/2})$ for directed distance preservers due to \cite{CE06, Bodwin21} are smaller than our claimed bound of $\widetilde{O}(n^{5/6}p^{2/3} + n)$.
\end{proof}

\subsection{Proving the Dense Low-Diameter Cluster Lemma}
\label{subsec:low_diam_scc}
The objective of this section is to prove Lemma \ref{lem:low-diam-scc}.

\begin{lemma*}[Dense Low-Diameter Cluster Lemma]
Let $G$ be an $n$-node directed unweighted graph with associated set of demand pairs $P \subseteq V(G) \times V(G)$ of size $|P| = p$. Let $\ell = \ell(G, P)$, $d = d(G, P)$, and $\widehat{\ell} = \widehat{\ell}(G, P)$. If $G$ has average degree $d$, where $d \geq 10p/n^{1/2}$, then $G$ contains a collection of nodes $S \subseteq V(G)$ such that
\begin{enumerate}
    \item $|S| = \Theta(d)$, 
    \item{(Dense.)} For all $s \in S$, $\deg_G(s) = \Omega(d)$, and
    \item{(Low diameter.)} For all $s, t \in S$, $\dist_G(s, t) = O\left(\frac{\widehat{\ell}p^2}{nd^2 }\right)$.
\end{enumerate}
\end{lemma*}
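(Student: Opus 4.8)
The plan is to convert the hypothesis $d \ge 10p/n^{1/2}$ into the statement that $G$ contains \emph{many short directed cycles}, and then pigeonhole these cycles into a single low-diameter round-trip ball centered at a high-degree vertex. Note $d \ge 10p/n^{1/2}$ is equivalent to $\ell = \lceil |E(G)|/p\rceil \ge 10\sqrt n$, hence also $\widehat{\ell}\ge \ell \ge 10\sqrt n$. As in \cref{sec:reduction}, I would first reduce (Independence Lemma) to the case where the paths of $\pi(P)$ are pairwise edge-disjoint, so that \cref{clm:path_inter} applies and $\deg_G(v) = \Theta(p_v)$, where $p_v$ is the number of paths through $v$; then $\sum_v p_v = \Theta(p\widehat{\ell})$ and, after this reduction, $\widehat{\ell} = \Theta(\ell)$, so it suffices to produce a cluster of weak diameter $O(p/d) = O(\widehat{\ell}p^2/(nd^2))$ (the last equality using $\widehat{\ell}\ge\ell = dn/p$). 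Write $V_{\mathrm{hi}} := \{v : \deg_G(v) \ge d/C\}$ for a large absolute constant $C$.

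\emph{Step 1 (many short cycles).} For distinct $\pi_1,\pi_2 \in \pi(P)$ sharing the vertices $v_1,\dots,v_k$ listed in $\pi_1$-order, \cref{clm:path_inter} says they appear in $\pi_2$ in reverse order, so for each $i$ the walk $\pi_1[v_i\leadsto v_{i+1}]\cup\pi_2[v_{i+1}\leadsto v_i]$ is a directed cycle $C^i_{\pi_1,\pi_2}$ of length $\dist_G(v_i,v_{i+1})+\dist_G(v_{i+1},v_i)$. The number of triples $(\{\pi_1,\pi_2\},i)$ is $\sum_{\{\pi_1,\pi_2\}}(k_{\pi_1,\pi_2}-1) = \sum_v\binom{p_v}{2} - O(p^2)$, which by convexity of $\binom{\cdot}{2}$ and $\sum_v p_v = \Theta(p\widehat{\ell})$ is $\Omega((p\widehat{\ell})^2/n)$ (here $\widehat{\ell}^2 \gg n$ makes the $O(p^2)$ loss negligible). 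On the other hand, for a fixed pair the lengths telescope: $\sum_i |C^i_{\pi_1,\pi_2}| \le |\pi_1| + |\pi_2| = O(\widehat{\ell})$, so the total length over all triples is $O(\widehat{\ell}p^2)$. Hence at least half of these $\Omega((p\widehat{\ell})^2/n)$ cycles have length at most $\rho$ for some $\rho = O(n/\widehat{\ell}) = O(p/d)$.

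\emph{Step 2 (extracting the cluster).} The key structural point is that edge-disjointness forbids any three paths from sharing two common vertices (a consequence of \cref{clm:path_inter}), so each unordered pair $\{x,y\}$ arises as $\{v_i,v_{i+1}\}$ for at most one triple; consequently the short cycles from Step 1 give $\Omega((p\widehat{\ell})^2/n)$ \emph{distinct} vertex pairs $\{x,y\}$ with $\dist_G(x,y),\dist_G(y,x)\le\rho$. A second counting step discards the pairs touching $V(G)\setminus V_{\mathrm{hi}}$: the triples with a low-degree endpoint number at most $\sum_{v:\,p_v<d/C}\binom{p_v}{2} = O\!\left(\tfrac dC\sum_v p_v\right) = O\!\left(\tfrac1C\cdot\tfrac{(p\widehat{\ell})^2}{n}\right)$, using $p\widehat{\ell}\ge dn$, so for $C$ large they are an $o(1)$ fraction and we retain $\Omega((p\widehat{\ell})^2/n)$ distinct pairs $\{x,y\}\subseteq V_{\mathrm{hi}}$ of round-trip distance $\le\rho$. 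Therefore $\sum_{v\in V_{\mathrm{hi}}}\bigl|\{u\in V_{\mathrm{hi}} : \dist_G(v,u)\le\rho,\ \dist_G(u,v)\le\rho\}\bigr| = \Omega((p\widehat{\ell})^2/n)$, and averaging over the $|V_{\mathrm{hi}}|\le n$ centers yields $v^\star\in V_{\mathrm{hi}}$ whose round-trip $\rho$-ball $B$ satisfies $|B\cap V_{\mathrm{hi}}| = \Omega((p\widehat{\ell})^2/n^2) = \Omega(d^2) = \Omega(d)$ (using $\widehat{\ell}\ge dn/p$ again). Taking $S$ to be any $\Theta(d)$ vertices of $B\cap V_{\mathrm{hi}}$ finishes the proof: $|S| = \Theta(d)$; each $s\in S$ has $\deg_G(s)\ge d/C = \Omega(d)$; and for $s,t\in S$, $\dist_G(s,t) \le \dist_G(s,v^\star) + \dist_G(v^\star,t)\le 2\rho = O(\widehat{\ell}p^2/(nd^2))$.

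\emph{Main obstacle.} I expect two delicate points. First, Step 1: one must pit the convexity lower bound $\Omega((p\widehat{\ell})^2/n)$ against the telescoping length budget $O(\widehat{\ell}p^2)$ to extract a constant fraction of \emph{short} cycles, and this is exactly where $d\ge 10p/n^{1/2}$ (i.e.\ $\widehat{\ell}\gg\sqrt n$) enters. Second, and more substantively, the reduction to edge-disjoint paths collapses $\widehat{\ell}$ to $\Theta(\ell)$; recovering the bound with $\widehat{\ell}$ when shortest paths genuinely overlap requires refining the cycle count to track only those shared-vertex pairs whose order is \emph{reversed} between the two paths (the others merely contribute the gap between $\widehat{\ell}$ and $\ell$), and re-running the telescoping and pigeonhole with this bookkeeping. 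Everything else is routine convexity and averaging.
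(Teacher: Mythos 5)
Your overall skeleton---extract many ``divergence points'' between pairs of consistent shortest paths, use the reverse-order property to turn consecutive divergence points into short directed round trips via telescoping, argue the resulting close pairs are distinct, and average to find a dense low-diameter center---is the same as the paper's. But there is a genuine gap at your very first move. You reduce to pairwise edge-disjoint paths via the Independence Lemma, and everything downstream depends on that reduction: the identity $\deg_G(v) = \Theta(p_v)$, the applicability of Claim~\ref{clm:path_inter} (which is proved \emph{from} Assumption~2, edge-disjointness), the convexity count $\sum_v \binom{p_v}{2} = \Omega((p\widehat{\ell})^2/n)$ of reverse-order consecutive pairs, and the ``at most one triple per vertex pair'' distinctness claim. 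That reduction is not available here: the lemma concerns \emph{unweighted} graphs (it feeds the unweighted preserver upper bound of Section~\ref{sec:pres_up}), and making consistent shortest paths edge-disjoint contracts shared subpaths into weighted edges---the paper notes explicitly in Appendix~\ref{app:ass} that the resulting graph ``may be weighted'' and that the cleaning lemma ``is not useful for our unweighted directed distance preserver upper bound.'' In the unweighted setting paths genuinely overlap, which is the entire reason the parameter $\widehat{\ell}$ (rather than $\ell$) appears in the statement and in the downstream case analysis.

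You flag this yourself as the ``main obstacle,'' but the sketched fix is not carried out and does not work as stated. When two consistent paths overlap, their shared vertices decompose into contiguous blocks on which the two paths coincide; only the boundaries between consecutive blocks yield reverse-order pairs and hence directed cycles, while all the shared-vertex incidences interior to a block yield none. Your count $\sum_v \binom{p_v}{2}$ is therefore dominated by useless same-order incidences whenever overlap is heavy (e.g., $p$ paths sharing a long common prefix contribute $\Theta(\widehat{\ell}p^2)$ incidences and zero cycles), and the assertion that these ``merely contribute the gap between $\widehat{\ell}$ and $\ell$'' is unjustified. The quantity one must count instead is pairs of \emph{distinct out-edges at a common tail}---the paper's branching events---whose number is $\Omega(nd^2)$ by Cauchy--Schwarz on outdegrees (Claim~\ref{clm:be}) regardless of overlap; the paper then localizes these to a single heavy path $\pi^*$ (Lemma~\ref{lem:base_path}) before running the telescoping (Claim~\ref{clm:pi_close-to-close}) and distinctness (Claim~\ref{clm:distinct_pairs}, which relies on consistency and the unique path-pair associated to each branching event, not on edge-disjointness) arguments that you describe. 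Replacing your Step~1 with a branching-event count along these lines would repair the proof.
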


Before proving this lemma, we will need to introduce several new definitions, which will be useful for finding our dense low-diameter cluster. Let $G, P, S$ and $\ell, d, \widehat{\ell}$ be as in the statement of Lemma \ref{lem:low-diam-scc}.

Recall the definition of a branching event from Definition \ref{def:be}. 
 We say that a branching event $(x, y), (x, z)$ is a \textit{high-degree branching event} if $\text{outdeg}_G(x) \geq d/4$. A simple argument will show that $G$ contains many high-degree branching events. 
 \begin{claim}[\cite{CE06}]
     $G$ contains $\Omega(nd^2)$ high-degree branching events.
     \label{clm:be}
 \end{claim}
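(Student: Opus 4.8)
\textbf{Proof plan for Claim~\ref{clm:be}.} The plan is to count high-degree branching events directly from the degree distribution of $G$, following the standard argument of Coppersmith and Elkin. Recall that a branching event at $x$ is a pair of distinct edges $(x,y),(x,z)$, and it is high-degree when $\outdeg_G(x)\ge d/4$. First I would restrict attention to the out-edges of $G$: since $|E(G)| = \Theta(dn)$ and every edge contributes to the out-degree of exactly one vertex, we have $\sum_{x\in V(G)}\outdeg_G(x) = \Theta(dn)$. Let $V_{\mathrm{high}} = \{x : \outdeg_G(x)\ge d/4\}$ be the set of high-out-degree vertices. The key observation is that the low-out-degree vertices contribute only a small fraction of the total out-degree: $\sum_{x\notin V_{\mathrm{high}}}\outdeg_G(x) \le (d/4)\cdot n \le \tfrac14\sum_x \outdeg_G(x)$ when $d$ is the true average out-degree up to constants. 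Hence $\sum_{x\in V_{\mathrm{high}}}\outdeg_G(x) = \Omega(dn)$.

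Next I would count the branching events contributed by vertices in $V_{\mathrm{high}}$. A vertex $x$ with $\outdeg_G(x) = d_x$ is the center of exactly $\binom{d_x}{2} = \Omega(d_x^2)$ branching events, all of which are high-degree when $x\in V_{\mathrm{high}}$ (since then $d_x \ge d/4$). Therefore the number of high-degree branching events is at least
$$
\sum_{x\in V_{\mathrm{high}}}\binom{\outdeg_G(x)}{2} = \Omega\!\left(\sum_{x\in V_{\mathrm{high}}}\outdeg_G(x)^2\right) \ge \Omega\!\left(\frac{d}{4}\cdot\sum_{x\in V_{\mathrm{high}}}\outdeg_G(x)\right) = \Omega(d\cdot dn) = \Omega(nd^2),
$$
where the middle inequality uses $\outdeg_G(x)\ge d/4$ for each such $x$, and the last equality uses the bound $\sum_{x\in V_{\mathrm{high}}}\outdeg_G(x)=\Omega(dn)$ established above. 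This gives the claimed $\Omega(nd^2)$ high-degree branching events.

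The only subtlety to be careful about is the definition of $d$: the lemma hypotheses state that $G$ has \emph{average degree} $d$, i.e.\ $d = \Theta(|E(G)|/n)$, so $\sum_x\deg_G(x) = \Theta(dn)$ and consequently $\sum_x\outdeg_G(x) = \Theta(dn)$ as well (every edge has exactly one tail). One must also confirm that the constant in the threshold $d/4$ is chosen consistently so that the low-out-degree vertices genuinely absorb at most, say, half of the total out-degree; adjusting the constant $1/4$ to a smaller constant if necessary poses no difficulty since we only need an $\Omega(\cdot)$ bound. I do not anticipate a real obstacle here — this is the entry point of the argument and is essentially a convexity/pigeonhole computation; the genuinely hard work in proving the Dense Low-Diameter Cluster Lemma comes later, in using these $\Omega(nd^2)$ branching events together with the path structure to extract the low-diameter cluster $S$.
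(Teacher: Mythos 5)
Your proposal is correct and follows essentially the same route as the paper: lower-bound the total out-degree of vertices with $\outdeg_G(x)\ge d/4$ by $\Omega(nd)$, then sum the $\Omega(\outdeg_G(x)^2)$ branching events centered at each such vertex. The only cosmetic difference is the final step, where the paper invokes Cauchy--Schwarz on $\sum\outdeg_G(x)^2$ while you use the pointwise bound $\outdeg_G(x)\ge d/4$; both yield $\Omega(nd^2)$.
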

 \begin{proof}
Since $G$ has average degree $d$, it follows that
$$
\sum_{v \in V(G)} \text{\normalfont outdegree}_G(v) = nd/2.
$$
In particular, we have that 
\begin{align*}
& \sum_{\{v \in V(G) \mid \outdeg_G(v) \geq d/4\}} \text{\normalfont outdegree}_G(v)  \\
& =  \sum_{v \in V(G)} \text{\normalfont outdegree}_G(v) \quad -  \sum_{\{v \in V(G) \mid \outdeg_G(v) < d/4\}} \text{\normalfont outdegree}_G(v) \\
& \geq nd/2 - nd/4 \geq nd/4.
\end{align*}
Then the number of high-degree branching events is at least
\begin{align*}
    \sum_{\{v \in V(G) \mid \outdeg_G(v) \geq d/4\}} \text{\normalfont outdegree}^2_G(v) & \\
    & \geq \frac{1}{n} \cdot \left(  \sum_{\{v \in V(G) \mid \outdeg_G(v) \geq d/4\}} \text{\normalfont outdegree}_G(v) \right)^2 \\
    & \geq \frac{1}{n}\cdot \frac{n^2d^2}{16} = \Omega(nd^2), 
\end{align*}
where the second to last inequality follows from Cauchy-Schwarz. 
 \end{proof}
We will also introduce an auxiliary graph, called the path intersection graph, which will be useful in our proof of Lemma \ref{lem:base_path}.

\begin{definition}[Path Intersection Graph $J$]
    We define the vertex set of $J$ to be 
    $$
    V(J) = \pi(P),
    $$
    the set of all shortest paths associated with our tiebreaking scheme $\pi(\cdot, \cdot)$. For every $\pi_1, \pi_2 \in V(J)$, we add the edge $(\pi_1, \pi_2)$ to $J$ and we assign this edge weight equal to the number of high-degree branching events between $\pi_1$ and $\pi_2$. 
\end{definition}

\begin{lemma}
    There exists a path $\pi^* \in \pi(P)$ and a collection of paths $\pi(Q)$, where $Q \subseteq P$, satisfying the following properties:
    \begin{itemize}
        \item Every path $\pi' \in \pi(Q)$ has at least $\Omega(nd^2/p^2)$ high-degree branching events with $\pi^*$.
        \item Let $S \subseteq V(\pi^*)$ be the set of nodes in $\pi^*$ with outdegree at least $d/4$, i.e., $$S = \{v \in V(\pi^*) \mid \outdeg_G(v) \geq d/4\}.$$ 
        The total number of high-degree branching events between path $\pi^*$ and the paths in $\pi(Q)$ is at least $\Omega(|S| d)$.
    \end{itemize}
    \label{lem:base_path}
\end{lemma}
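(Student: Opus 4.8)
The plan is to extract $\pi^*$ and $Q$ from the path intersection graph $J$ by a weighted averaging argument, after discarding its low-weight edges. First I would invoke Claim~\ref{clm:be}: $G$ has $\Omega(nd^2)$ high-degree branching events, and since each branching event is associated (Definition~\ref{def:be}) with exactly one pair of paths of $\pi(P)$, the edge weights of $J$ satisfy $\sum_{e\in E(J)} w(e)=\Omega(nd^2)$. Call an edge $(\pi_1,\pi_2)$ of $J$ \emph{heavy} if $w(\pi_1,\pi_2)\ge c\,nd^2/p^2$ for a small absolute constant $c$. Since $|E(J)|\le\binom{p}{2}<p^2/2$, the light edges carry total weight less than $\tfrac c2\,nd^2$, so choosing $c$ small makes the heavy edges carry weight $\Omega(nd^2)$. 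As $\sum_{\pi\in\pi(P)}\bigl(\text{weight of heavy edges incident to }\pi\bigr)=2\sum_{\text{heavy }e}w(e)=\Omega(nd^2)$ is spread over the $p$ vertices of $J$, some path $\pi^*$ is incident to heavy edges of total weight $\ge\Omega(nd^2/p)$; set $Q=\{(s,t)\in P:(\pi^*,\pi(s,t))\text{ is a heavy edge of }J\}$.

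This choice already gives the first bullet: each $(s,t)\in Q$ satisfies $w(\pi^*,\pi(s,t))\ge c\,nd^2/p^2=\Omega(nd^2/p^2)$. Moreover the total number of high-degree branching events between $\pi^*$ and $\pi(Q)$ equals the heavy-edge weight incident to $\pi^*$, i.e.\ $\Omega(nd^2/p)$, and every such event $(x,y),(x,z)$ has $x\in V(\pi^*)$ with $\outdeg_G(x)\ge d/4$, so $x\in S$; thus all the located events lie at nodes of $S$.

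The remaining task --- and the step I expect to be the crux --- is to upgrade $\Omega(nd^2/p)$ to the asserted $\Omega(|S|\,d)$. This does not follow from maximizing heavy-edge weight alone, because $\pi^*$ might route all of its branching events through a sublinear-in-$|S|$ set of nodes of very large out-degree. Instead I would choose $\pi^*$ so as to balance its heavy-edge weight against $|S_{\pi^*}|\cdot d$: compare the global quantities $\sum_{\pi\in\pi(P)}\beta_\pi^{\mathrm{heavy}}=\Omega(nd^2)$ and $d\sum_{\pi\in\pi(P)}|S_\pi|$, where $\beta_\pi^{\mathrm{heavy}}$ is the heavy-edge weight at $\pi$ and $|S_\pi|\le|\pi|=O(\widehat\ell)$ after the bucketing normalization, so that some path realizes $\beta_{\pi^*}^{\mathrm{heavy}}=\Omega(|S_{\pi^*}|\,d)$ while still having heavy-degree $\Omega(nd^2/p)$. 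Checking that these two sums are in the right ratio --- equivalently, that the high-degree branching events cannot pile up on too few high-out-degree nodes of $\pi^*$ --- is the delicate point, and where I would spend most of the effort.
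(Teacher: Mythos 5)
Your first half matches the paper's argument: the paper likewise discards the edges of $J$ of weight at most $cnd^2/(4p^2)$, observes that since $|E(J)|\le p^2$ this removes only a constant fraction of the total weight $b=\Omega(nd^2)$, and notes that every surviving edge certifies the first bullet. You also correctly identify that the second bullet is the crux and that it cannot be obtained by maximizing incident heavy weight alone. But your proof stops exactly there: you say you would ``check that the two sums are in the right ratio,'' i.e.\ that $d\sum_{\pi}|S_\pi| = O\bigl(\sum_\pi \beta_\pi^{\mathrm{heavy}}\bigr)$, and you do not prove it. Worse, the only concrete bound you offer, $|S_\pi|\le O(\widehat\ell)$, gives $d\sum_\pi|S_\pi| = O(dp\widehat\ell)$, which is \emph{not} $O(nd^2)$ in general: since $\ell p=\Theta(nd)$ this would require $\widehat\ell=O(\ell)$, which fails precisely when paths overlap heavily. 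So as written the balancing step does not close.

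The missing ingredient is a double-counting inequality that compares $\sum_\pi|S_\pi|$ to the number $b$ of high-degree branching events directly, not via $\widehat\ell$: for each path $\pi$ and each $v\in S_\pi$, the outgoing edge of $\pi$ at $v$ pairs with each of the other $\ge d/4-1$ outgoing edges of $v$ to form a high-degree branching event having one edge on $\pi$ at $v$, and each branching event is attributed this way to at most a bounded number of pairs $(\pi,v)$; hence $b\ge \Omega\bigl(d\sum_{\pi}|S_\pi|\bigr)$. With this inequality, $\sum_\pi\beta_\pi^{\mathrm{heavy}}=\Omega(b)\ge \Omega\bigl(d\sum_\pi|S_\pi|\bigr)$, and a single averaging over $\sum_\pi\bigl(\beta_\pi^{\mathrm{heavy}}-\varepsilon d|S_\pi|\bigr)\ge 0$ produces a $\pi^*$ with $\beta_{\pi^*}^{\mathrm{heavy}}=\Omega(|S_{\pi^*}|d)$, satisfying both bullets simultaneously (the lemma does not actually require $\beta_{\pi^*}^{\mathrm{heavy}}=\Omega(nd^2/p)$, so you can drop that constraint). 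The paper reaches the same conclusion slightly differently, by iteratively deleting vertices $\pi$ of $J$ whose incident weight is below $|S_\pi|d/16$ and using the same inequality to show that at most another constant fraction of the weight is lost, so some vertex survives both pruning steps.
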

\begin{proof}
Let $b$ denote the number of high-degree branching events in $G$.  
    By Claim \ref{clm:be}, there is a constant $c>0$ such that $G$ has at least $b \geq cnd^2$ high-degree branching events. Then
    $$
    \sum_{e \in E(J)}w(e) \geq b \geq cnd^2,
    $$
    where $w(e)$ denotes the weight of edge $e \in E(J)$. We will perform the following procedure on $J$. 
    \begin{enumerate}
        \item Delete all edges $e \in E(J)$ with weight $$
        w(e) \leq \frac{cnd^2}{4p^2}.
        $$
        \item While there exists a node $\pi \in V(J)$ such that sum of the edge weights incident to $\pi$ is small, we will delete node $\pi$ from $V(J)$. 
        Formally, for each node $\pi \in V(J)$, let $S_{\pi} \subseteq V(\pi)$ be the set of nodes in path $\pi \in \pi(P)$ with outdegree at least $d/4$, i.e., 
        $$
        S_{\pi} = \{v \in V(\pi) \mid \outdeg_G(v) \geq d/4\}.
        $$
        While there exists a node $\pi \in V(J)$ such that the  sum of the edge weights incident to $\pi$ in $J$ is at most
    $$
    \sum_{\{ (\pi, \pi') \in E(J) \mid \pi' \in V(J) \}} w((\pi, \pi')) \leq \frac{|S_{\pi}|d}{16},
    $$
      we will delete node $\pi$ from $J$.      
    \end{enumerate}
    
    We claim that after the above procedure terminates, graph $J$ is nonempty and still contains a constant fraction of its edge weight. After step 1 of the procedure terminates, $J$ will have total edge weight at least 
    $$
    \sum_{e \in E(J)}w(e)  \geq b - \frac{cnd^2}{4p^2} \cdot p^2 = b - cnd^2/4 = 3/4 \cdot b,
    $$
    since $|E(J)| = p^2$ and $b \geq cnd^2$. 
    Consequently, a large fraction of the edge weights in $J$ survive step 1 of our procedure. Now we will attempt to argue the same is true for the second step of our procedure. 
    
    By the definition of high-degree branching events,
    $$
    b \geq \sum_{\pi \in V(J)} |S_{\pi}| \cdot \frac{d}{4},
    $$
    since there are at each node $v \in S_{\pi}$, there are at least $d/4$ high-degree branching events between $\pi$ and other paths in $V(J)$. Then after step 2 of the procedure terminates, $J$ will have total edge weight at least
    $$
    \sum_{e \in E(J)}w(e)  \geq 3/4 \cdot b - \sum_{\pi \in V(J)} \frac{|S_{\pi}|d}{16} =  3/4 \cdot b - \frac{1}{4} \cdot \left( \sum_{\pi \in V(J)} |S_{\pi}| \cdot \frac{d}{4} \right) \geq b/2.
    $$
    Consequently, after our procedure terminates, the amount of edge weight remaining in $E(J)$ is at least $b/2$, so $E(J)$ and $V(J)$ must be nonempty.

 We have shown that after this procedure terminates, graph $J$ remains nonempty. In particular, we have shown that every surviving node in $V(J)$ is incident to edge weight at least $|S_{\pi}|d/16$, and every surviving edge in $E(J)$ has weight at least $cnd^2/(4p^2)$. Then we can take $\pi^*$ to be any surviving node  $\pi^* \in V(J)$, and we can take $\pi(Q)$ to be the set of all surviving paths adjacent to $\pi^*$ in $J$. The lemma immediately follows.  
\end{proof}

We will denote the path described in Lemma \ref{lem:base_path} as $\pi^*$, and we will use $Q$ to denote the associated set of demand pairs.

\begin{definition}[$\pi$-close nodes]
\label{def:close}
Fix a path $\pi \in \pi(Q)$, and let 
$$
B = \{((s_i, t_i), (s_i, u_i))\}_{i \in [1, k]}
$$
be the set of all high-degree branching events between $\pi$ and $\pi^*$, written so that
\begin{itemize}
    \item $(s_i, t_i) \in \pi$ and $(s_i, u_i) \in \pi^*$, and
    \item If $i < j \in [1, k]$, then node $s_i$ comes before node $s_j$ on path $\pi$, and node $s_j$ comes before node $s_i$ on path $\pi^*$. Note that nodes $s_i$ and $s_j$ cannot appear in the same order on both $\pi$ and $\pi^*$, by consistency.
\end{itemize}
We say that a pair of nodes $(x, y) \in V(\pi^*) \times V(\pi^*)$ is $\pi$-close if $(x, y) = (s_i, s_{i+1})$ for some $i \in [1, k-1]$. 
\end{definition}

\begin{claim}
    If nodes $x$ and $y$ are in $\pi^*$, and $x$ and $y$ are $\pi_1$-close and $\pi_2$-close for paths $\pi_1, \pi_2 \in \pi(Q)$, then $\pi_1 = \pi_2$. 
    \label{clm:distinct_pairs}
\end{claim}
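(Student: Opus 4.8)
The plan is to use the one feature of Definition~\ref{def:be} not yet exploited: a branching event --- a pair of edges sharing a common tail --- is permanently assigned to a single pair of paths, one traversing each of the two edges. Thus if a fixed branching event at a vertex $x$ can be certified to lie ``between $\pi_1$ and $\pi^*$'' and also ``between $\pi_2$ and $\pi^*$'', then $\{\pi_1,\pi^*\}=\{\pi_2,\pi^*\}$, forcing $\pi_1=\pi_2$. So the entire task is to show that $\pi_1$-closeness and $\pi_2$-closeness of $(x,y)$ exhibit the \emph{same} branching event at $x$.

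First I would unpack the hypothesis, writing $\operatorname{succ}_\sigma(v)$ for the vertex following $v$ on a path $\sigma$. By Definition~\ref{def:close}, $(x,y)$ being $\pi_1$-close means $x=s_i$ and $y=s_{i+1}$, where $s_1,\dots,s_k$ are the common tails of the high-degree branching events ``between $\pi_1$ and $\pi^*$'', listed in the order they appear on $\pi_1$. In particular $x$ precedes $y$ on $\pi_1$, and the branching event at $x$ is the pair of edges $\big((x,\operatorname{succ}_{\pi_1}(x)),(x,\operatorname{succ}_{\pi^*}(x))\big)$, assigned (in the sense of Definition~\ref{def:be}) to the pair $(\pi_1,\pi^*)$. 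The identical unpacking for $\pi_2$ gives that $x$ precedes $y$ on $\pi_2$ and that the pair of edges $\big((x,\operatorname{succ}_{\pi_2}(x)),(x,\operatorname{succ}_{\pi^*}(x))\big)$ is assigned to $(\pi_2,\pi^*)$.

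The key step is then immediate from consistency of $\pi(\cdot,\cdot)$ (Definition~\ref{def:consistency}): since $x$ precedes $y$ on both $\pi_1$ and $\pi_2$, we get $\pi_1[x\leadsto y]=\pi_2[x\leadsto y]$, so in particular $\operatorname{succ}_{\pi_1}(x)=\operatorname{succ}_{\pi_2}(x)$. Hence the two pairs of edges above are literally the same branching event, whose assigned pair of paths is therefore both $(\pi_1,\pi^*)$ and $(\pi_2,\pi^*)$; since Definition~\ref{def:be} fixes this pair uniquely, $\pi_1=\pi_2$.

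I do not expect a real obstacle here --- the argument is short. The only point requiring care is the reading of ``branching event between $\pi$ and $\pi^*$'': by Definition~\ref{def:be} it refers to the \emph{chosen} associated pair of paths, not merely to the existence of some branching event with one edge on $\pi$ and one on $\pi^*$ (many paths may traverse the edge $(x,\operatorname{succ}_{\pi_1}(x))$). It is precisely this uniqueness of ``ownership'', combined with the consistency-forced equality $\operatorname{succ}_{\pi_1}(x)=\operatorname{succ}_{\pi_2}(x)$, that closes the proof; the vertex $y$ is used only to guarantee, via consistency, that $\pi_1$ and $\pi_2$ share the same out-edge at $x$, and the full paths are never compared away from the segment $[x\leadsto y]$.
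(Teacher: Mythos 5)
Your proof is correct and follows essentially the same route as the paper's: both identify the branching event at the common node where $\pi_1$ and $\pi_2$ each branch from $\pi^*$, use consistency on the segment between $x$ and $y$ to conclude the two paths share the same outgoing edge there (so it is literally the same pair of edges), and then invoke the uniqueness of the path-pair assigned to a branching event in Definition~\ref{def:be}. The only cosmetic difference is that the paper anchors the argument at the node that comes first on $\pi_1,\pi_2$ after fixing the opposite WLOG orientation on $\pi^*$ (it works with the successor of $y$), whereas you anchor at $x$ using the ordering built into Definition~\ref{def:close}; the two are mirror images of one another.
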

\begin{proof}
    This claim will follow directly from our choice of a \textit{consistent} tiebreaking scheme $\pi(\cdot, \cdot)$ and our definition of branching events. Let $x$ and $y$ be nodes in $\pi^*$ that  are $\pi_1$-close and $\pi_2$-close, for some $\pi_1, \pi_2 \in \pi(P)$. We may assume wlog that node $x$ comes before node $y$ on path $\pi^*$. 
    Since $\pi_1$ has branching events with $\pi^*$ at nodes $x$ and $y$, by consistency we must have that $y$ comes before $x$ on path $\pi_1$. By an identical argument $y$ comes before $x$ on path $\pi_2$. Then by consistency, it follows that $\pi_1[y, x] = \pi_2[y, x]$.
    
    Let $z$ be the node following $y$ in paths $\pi_1$ and $\pi_2$, so that $(y, z) \in \pi$. Let $w$ be the node following $y$ in path $\pi^*$. Observe that $(y, z), (y, w)$ is a branching event. Moreover, by the definition of $\pi$-close nodes, branching event $(y, z), (y, w)$ is associated with paths $\pi_1, \pi^*$ \textit{and} paths $\pi_2, \pi^*$. However, as stated in Definition \ref{def:be}, every branching event is associated with at most one pair of paths. We conclude that $\pi_1 = \pi_2$. 
\end{proof}

\begin{definition}[close nodes]
We say that nodes $x, y \in V$ are \textit{close} if 
$$
    \dist_G(x, y) + \dist_G(y, x) = O\left(\frac{\widehat{\ell}p^2}{nd^2} \right).
$$
\end{definition}

\begin{claim}
    Fix a path $\pi = \pi(s, t) \in \pi(Q)$, and let $C \subseteq V(\pi^*) \times V(\pi^*)$ be the set of all pairs of nodes in $\pi^*$ that are $\pi$-close. Then at least $|C|/2$ node pairs $(x, y) \in C$ are close. 
    \label{clm:pi_close-to-close}
\end{claim}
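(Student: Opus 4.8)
The plan is to bound the sum $\sum_{(x,y)\in C}\bigl(\dist_G(x,y)+\dist_G(y,x)\bigr)$ by $O(\widehat{\ell})$, and then combine this with the lower bound $|C| = \Omega(nd^2/p^2)$ to conclude via a simple averaging (Markov) argument that a constant fraction of the pairs in $C$ have the small round-trip distance required to be \emph{close}.

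First I would set up notation as in Definition~\ref{def:close}: let $s_1,\dots,s_k$ be the nodes of $\pi^*$ at which high-degree branching events with $\pi$ occur, listed so that $s_1,\dots,s_k$ appear in this order on $\pi$ and in the reverse order $s_k,\dots,s_1$ on $\pi^*$. By definition $C=\{(s_i,s_{i+1}) : i\in[1,k-1]\}$, so $|C|=k-1$. Since $\pi\in\pi(Q)$, Lemma~\ref{lem:base_path} gives $k=\Omega(nd^2/p^2)$; because $d\ge 10p/n^{1/2}$ this quantity exceeds a small constant, so also $|C|=k-1=\Omega(nd^2/p^2)$.

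Next, for a single $\pi$-close pair $(s_i,s_{i+1})$: since $s_i$ precedes $s_{i+1}$ on $\pi$, and $\pi$ is a (unique) shortest path, $\dist_G(s_i,s_{i+1}) \le |\pi[s_i\leadsto s_{i+1}]|$; since $s_{i+1}$ precedes $s_i$ on $\pi^*$, likewise $\dist_G(s_{i+1},s_i)\le |\pi^*[s_{i+1}\leadsto s_i]|$. Now I would sum over $i\in[1,k-1]$. The subpaths $\pi[s_1\leadsto s_2],\pi[s_2\leadsto s_3],\dots,\pi[s_{k-1}\leadsto s_k]$ are internally disjoint and tile $\pi[s_1\leadsto s_k]\subseteq \pi$, so $\sum_{i}|\pi[s_i\leadsto s_{i+1}]| \le |\pi| \le 2\widehat{\ell}$ by the bucketing assumption. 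Symmetrically, the subpaths $\pi^*[s_k\leadsto s_{k-1}],\dots,\pi^*[s_2\leadsto s_1]$ tile $\pi^*[s_k\leadsto s_1]\subseteq \pi^*$, so $\sum_i |\pi^*[s_{i+1}\leadsto s_i]| \le |\pi^*| \le 2\widehat{\ell}$. Hence $\sum_{(x,y)\in C}\bigl(\dist_G(x,y)+\dist_G(y,x)\bigr) \le 4\widehat{\ell}$.

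Finally, by Markov's inequality the number of pairs $(x,y)\in C$ with $\dist_G(x,y)+\dist_G(y,x) > 8\widehat{\ell}/|C|$ is less than $(4\widehat{\ell})/(8\widehat{\ell}/|C|)=|C|/2$; thus at least $|C|/2$ pairs $(x,y)\in C$ satisfy $\dist_G(x,y)+\dist_G(y,x)\le 8\widehat{\ell}/|C| = O\!\bigl(\widehat{\ell}p^2/(nd^2)\bigr)$, using the lower bound on $|C|$. By definition these pairs are close, which proves the claim. I do not expect a real obstacle here; the only point that needs care is the disjointness accounting — that the segments $\pi[s_i\leadsto s_{i+1}]$ (resp. $\pi^*[s_{i+1}\leadsto s_i]$) are edge-disjoint so their lengths sum to at most $|\pi|$ (resp. $|\pi^*|$) — which follows from the monotone ordering of the $s_i$ along each path guaranteed by consistency together with Definition~\ref{def:close}, and the fact that $|\pi|,|\pi^*|\le 2\widehat{\ell}$ by the bucketing reduction.
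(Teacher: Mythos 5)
Your proposal is correct and follows essentially the same route as the paper's proof: both bound $\sum_{(x,y)\in C}\bigl(\dist_G(x,y)+\dist_G(y,x)\bigr)$ by $|\pi|+|\pi^*|\le 4\widehat{\ell}$ (using that the consecutive segments between the $s_i$ tile subpaths of $\pi$ and, in reverse order, of $\pi^*$) and then apply an averaging/Markov argument with the lower bound $|C|=\Omega(nd^2/p^2)$ from Lemma~\ref{lem:base_path}. The only cosmetic difference is the choice of threshold ($8\widehat{\ell}/|C|$ versus the paper's $16\widehat{\ell}/k$), which does not affect the asymptotics.
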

\begin{proof}
    Fix a path $\pi \in \pi(Q)$, and let 
$$ 
B = \{((s_i, t_i), (u_i, v_i))\}_{i \in [1, k]} 
$$ 
be the set of all high-degree branching events between $\pi$ and $\pi^*$, written so that
\begin{itemize} 
    \item $(s_i, t_i) \in \pi$ and $(u_i, v_i) \in \pi^*$, and 
    \item if $i < j \in [1, k]$, then node $s_i$ comes before node $s_j$ on path $\pi$, and node $s_j$ comes before node $s_i$ on path $\pi^*$. 
\end{itemize} 
Recall from Definition \ref{def:close} that the set of all  pairs of $\pi$-close nodes in $\pi^*$ is exactly $C = \{ (s_i, s_{i+1}) \}_{ i \in [1, k-1]}$, where $|C| = k-1$. Additionally, as we assumed at the beginning of Section \ref{sec:pres_up}, $|\pi| \leq 2 \widehat{\ell}$ and $|\pi^*| \leq 2\widehat{\ell}$. In particular, this implies 
$$
\sum_{i=1}^{k-1} \left( \dist_G(s_i, s_{i+1})    + \dist_G(s_{i+1}, s_i)   \right) \leq |\pi| + |\pi^*| \leq 4\widehat{\ell}.
$$
Let $C_1 \subseteq C$ be the set of all $\pi$-close pairs of nodes $(s_i, s_{i+1})$ with roundtrip distance at least $16\widehat{\ell} / k$, i.e.,
$$
C_1 = \{ (s_i, s_{i+1}) \mid  \dist_G(s_i, s_{i+1}) + \dist_G(s_{i+1}, s_i) \geq 16 \widehat{\ell} /k  \}.
$$
By the previous inequality, we observe that $$|C_1| \cdot 16\widehat{\ell} / k \leq |\pi| + |\pi^*| \leq  4 \widehat{\ell},$$ so $|C_1| \leq k/4$. Let $C_2 = C \setminus C_1$. It follows that $|C_2| \geq |C|/2$. Additionally, by Lemma \ref{lem:base_path}, we know that $|C| = k-1 = \Omega(nd^2/p^2)$. Then for all $(s_i, s_{i+1}) \in C_2$,
$$
\dist_G(s_i, s_{i+1}) + \dist_G(s_{i+1}, s_i) \leq 16\widehat{\ell} / k = O\left(\frac{\widehat{\ell}p^2}{nd^2}\right).
$$
We conclude that all pairs of nodes $(x, y) \in C_2$ are close, establishing the claim. 
\end{proof}

\begin{claim}
\label{clm:many_close}
    Let $S \subseteq V(\pi^*)$ be the set of nodes $v$ in $\pi^*$ of degree at least $\deg_G(v) \geq d/4$. There are $\Omega(|S| d)$ distinct pairs of nodes in $S \times S$ that are close. 
\end{claim}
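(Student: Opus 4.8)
The plan is to derive Claim~\ref{clm:many_close} by stitching together Lemma~\ref{lem:base_path} with Claims~\ref{clm:pi_close-to-close} and~\ref{clm:distinct_pairs}; the combinatorial substance already lives in those statements, and what remains is a careful count. Let $\pi^*$ and $\pi(Q)$ be the path and path-collection produced by Lemma~\ref{lem:base_path}, and let $S$ be the set from that lemma, namely the nodes of $\pi^*$ with $\outdeg_G(\cdot)\ge d/4$. (If one wants the degree-$\ge d/4$ version exactly as stated, also run the argument below on the edge-reversed graph $G^R$, whose high-degree branching events are the high-in-degree ``merging events'' of $G$: path lengths are unchanged and the quantity $\dist_G(x,y)+\dist_G(y,x)$ defining \emph{close} is symmetric, so this run produces $\Omega(|\{v \in V(\pi^*): \text{indeg}_G(v)\ge d/4\}|\cdot d)$ close pairs; since a degree-$\ge d/4$ node has out- or in-degree $\ge d/8$, the two runs together cover a constant fraction of all degree-$\ge d/4$ nodes of $\pi^*$. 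I use the out-degree version below.)

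For $\pi \in \pi(Q)$ let $k_\pi$ be the number of high-degree branching events between $\pi$ and $\pi^*$, and let $C_\pi \subseteq V(\pi^*)\times V(\pi^*)$ be the set of $\pi$-close pairs, so $|C_\pi| = k_\pi - 1$ by Definition~\ref{def:close}. Each node occurring in a pair of $C_\pi$ is the common tail of the two edges of a high-degree branching event, hence has out-degree $\ge d/4$; thus $C_\pi \subseteq S\times S$. By Claim~\ref{clm:pi_close-to-close}, at least $|C_\pi|/2 = (k_\pi-1)/2$ of the pairs in $C_\pi$ are \emph{close}. By Claim~\ref{clm:distinct_pairs}, any fixed pair of nodes of $\pi^*$ is $\pi$-close for at most one $\pi \in \pi(Q)$, so the sets $C_\pi$, and hence the close-pair subsets inside them, are pairwise disjoint. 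Therefore the number of distinct close pairs lying in $S\times S$ is at least $\sum_{\pi \in \pi(Q)}(k_\pi-1)/2$.

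It remains to bound this sum. By the first bullet of Lemma~\ref{lem:base_path}, each $\pi \in \pi(Q)$ has $k_\pi = \Omega(nd^2/p^2)$; the hypothesis $d\ge 10p/n^{1/2}$ forces $nd^2/p^2 \ge 100$, so (tracking the constant from Lemma~\ref{lem:base_path}'s deletion threshold, or simply because $d$ is assumed large) each $k_\pi \ge 2$ and thus $k_\pi - 1 \ge k_\pi/2$. Writing $B^* = \sum_{\pi\in\pi(Q)} k_\pi$ for the total number of high-degree branching events between $\pi^*$ and $\pi(Q)$, the second bullet of Lemma~\ref{lem:base_path} gives $B^* = \Omega(|S|\,d)$, whence
$$
\sum_{\pi \in \pi(Q)} \frac{k_\pi-1}{2} \;\ge\; \frac14\sum_{\pi\in\pi(Q)} k_\pi \;=\; \frac{B^*}{4} \;=\; \Omega(|S|\,d),
$$
completing the proof.

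There is no new idea here; the one point needing attention is that the bookkeeping converting the branching-event count $B^* = \Omega(|S|d)$ — a count of \emph{pairs of edges} — into a matching count of distinct \emph{close node-pairs} does not lose more than a constant factor. The two potential sources of loss, the ``$-1$'' in $|C_\pi| = k_\pi - 1$ and overcounting across different paths, are controlled respectively by $k_\pi \ge 2$ (guaranteed by $d \ge 10p/n^{1/2}$) and by Claim~\ref{clm:distinct_pairs}.
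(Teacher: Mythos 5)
Your proof is correct and follows essentially the same route as the paper's: partition the $\pi$-close pairs by path, use Claim~\ref{clm:distinct_pairs} for disjointness, Claim~\ref{clm:pi_close-to-close} to keep half of each set, and Lemma~\ref{lem:base_path} to lower-bound the total. Your extra care with the ``$-1$'' in $|C_\pi|=k_\pi-1$ and with the outdegree-versus-degree mismatch between Lemma~\ref{lem:base_path} and the claim's statement of $S$ is a genuine (if minor) tightening of bookkeeping that the paper's own proof glosses over.
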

\begin{proof}
Let $\pi(Q) = \{\pi_1, \dots, \pi_k\}$, for some integer $k$. For $i \in [1, k]$, let $C_i \subseteq V(\pi^*) \times V(\pi^*)$ be the set of all pairs of nodes in $\pi^*$ that are $\pi_i$-close. Note that if $(x, y) \in C_i$, then $\deg_G(x) \geq d/4$ and $\deg_G(y) \geq d/4$ by Definition \ref{def:close}, so $(x, y) \in S \times S$. For $i \in [1, k]$, let $C_i' \subseteq C_i$ be the set of all pairs of nodes $(x, y) \in C_i$ that are close. 
We will need the following statements proved earlier. 
\begin{itemize}
    \item By Claim \ref{clm:distinct_pairs}, $C_i \cap C_j = \emptyset$ for $i \neq j \in [1, k]$. 
    \item By Claim \ref{clm:pi_close-to-close}, $|C_i'| \geq |C_i|/2$ for $i \in [1, k]$. 
    \item By Lemma \ref{lem:base_path}, the total number of high-degree branching events between path $\pi^*$ and the paths in $\pi(Q)$ is at least $\Omega(|S| d)$. This implies that $\sum_{i \in [1, k]}|C_i| = \Omega(|S| d)$.
\end{itemize}
Putting it all together, the total number of distinct close pairs in $S \times S$ is at least
$$
\left|\bigcup_{i \in [1, k]}C_i'\right| = \bigcup_{i \in [1, k]} \left|C_i'\right| \geq \frac{1}{2} \cdot \bigcup_{i \in [1, k]} \left|C_i\right| = \Omega(|S| d).
$$
\end{proof}

We are now ready to prove Lemma \ref{lem:low-diam-scc}.
\begin{proof}[Proof of Lemma \ref{lem:low-diam-scc}]
    Let $S \subseteq V(\pi^*)$ be the set of nodes referenced in Claim \ref{clm:many_close}. By Claim \ref{clm:many_close},  each node $u \in S$ is close to $\Omega(d)$ distinct nodes in $S$ on average. Then there must exist a node $v \in S$ that is close to $\Omega(d)$ distinct nodes in $S$. Let $T$ be the set of nodes in $S$ close to $v$. Observe that for all $x, y \in T$,
    $$
    \dist_G(x, y) \leq \dist_G(x, v) + \dist_G(v, y) = O\left(\frac{\widehat{\ell}p^2}{nd^2}\right).
    $$
    Moreover, by our choice of set $S$ in Claim \ref{clm:many_close}, $\deg_G(x) = \Omega(d)$ for all $x \in S$. We conclude that set $S$ satisfies all properties described in the statement of Lemma \ref{lem:low-diam-scc}. 
\end{proof}

\section{Directed Approximate Preserver Lower Bound}

\label{sec:aprx_pres_lb}

\begin{theorem}
\label{thm:aprx_pres_lb}
For any $n$ and $\alpha = \alpha(n)$, there exist $n$-node directed weighted graphs $G$ with demand pairs $P$ of size $|P| = p$, with $p \in [n^{1/2}, n^2]$, such that any $\alpha$-approximate preserver on $G, P$ has at least
$$
\Omega(n^{2/3}p^{2/3})
$$
edges.
\end{theorem}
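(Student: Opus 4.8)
The plan is to instantiate the Over--Under weighting scheme from \cref{sec:apx_hop_tech} on the point--line incidence construction underlying the $\Omega(n^{2/3}p^{2/3})$ exact-preserver lower bound of Coppersmith and Elkin \cite{CE06}. For a given $p \in [n^{1/2}, n^2]$, I would start from an arrangement of $n$ points $\mathcal{P} \subseteq \mathbb{R}^2$ and $p$ lines $\mathcal{L}$ realizing the tight side of the \szt\ incidence bound, i.e.\ with $\Omega(n^{2/3}p^{2/3})$ point--line incidences; in the standard grid construction the lines all have the form $y = ax + b$, so none is vertical. Build $G$ as in the technical overview: $V(G) = \mathcal{P}$; for each line $L$ and each pair of points of $\mathcal{P}$ consecutive on $L$ (equivalently, adjacent in the order of their first coordinates, which is unambiguous since $L$ is not vertical) add an edge oriented towards the larger first coordinate; discard lines carrying at most one point of $\mathcal{P}$; take one demand pair per surviving line, namely its leftmost and rightmost point of $\mathcal{P}$ (rescaling $p$ by a constant if necessary so that $|P| = p$); and write $\pi_L$ for the canonical directed path that visits all points of $\mathcal{P}$ on $L$ from left to right. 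Since two distinct lines share at most one point, every edge of $G$ lies on a unique line, so $|E(G)|$ equals the incidence count minus $p$, which is $\Omega(n^{2/3}p^{2/3})$ using $p \le n^2$. For the reweighting, list $\mathcal{L} = \{L_1, \dots, L_p\}$ in non-decreasing order of slope (ties broken arbitrarily) and give every edge on $L_i$ weight $(\alpha n)^i$; we may assume $\alpha \ge 1$.

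The heart of the argument is the Over--Under Property: for each $i$, any directed path of $G$ between the endpoints of the $i$-th demand pair other than $\pi_{L_i}$ must use an edge lying on a line of \emph{strictly} larger slope than $L_i$. I would prove this purely geometrically. Every edge of $G$ strictly increases the first coordinate, so every directed path of $G$ is $x$-monotone and hence is the graph of a piecewise-linear function of $x$, while $\pi_{L_i}$ is the graph of the affine function cutting out $L_i$. If a directed path $\pi$ between the same endpoints differs from $\pi_{L_i}$, then, as the two functions agree at the two endpoint abscissae, $\pi$ is strictly above $L_i$ at some abscissa or strictly below it at some abscissa. In the ``above'' case, let $x^\star$ be the supremum of the abscissae, up to the witnessing point, at which $\pi$ stays weakly below $L_i$; then $\pi$ meets $L_i$ at $x^\star$ and lies strictly above $L_i$ immediately to the right of $x^\star$, so the edge of $\pi$ covering the interval just to the right of $x^\star$ has slope strictly greater than that of $L_i$. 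The ``below'' case is symmetric (take an infimum of abscissae past the witnessing point at which $\pi$ is weakly above $L_i$). In either case the offending edge sits on some line $L_j$ of strictly larger slope, and since $\mathcal{L}$ is sorted by slope this forces $j > i$.

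Granting the Over--Under Property, exponential weights finish the proof with only routine arithmetic: $L_i$ carries at most $n - 1$ edges, so $w(\pi_{L_i}) < n \cdot (\alpha n)^i$, while by the Property any competing path between the two endpoints contains an edge of weight at least $(\alpha n)^{i+1} = \alpha n \cdot (\alpha n)^i > \alpha \cdot w(\pi_{L_i})$. Since $\alpha \ge 1$, this shows $\pi_{L_i}$ is the unique shortest path for its demand pair $(s,t)$, so $\dist_G(s,t) = w(\pi_{L_i})$ and every other $s \leadsto t$ path is longer than $\alpha \cdot \dist_G(s,t)$; hence any $\alpha$-approximate preserver of $G, P$ must retain all of $\pi_{L_i}$, for every $i$, and therefore equals $G$, yielding $\Omega(n^{2/3}p^{2/3})$ edges. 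The only genuinely non-routine step is the geometric Over--Under Property; the main subtlety to watch is to phrase it in terms of \emph{strictly} larger slope, so that it is unaffected by the parallel lines the grid construction does contain (routing $\pi$ through a line parallel to $L_i$ does not by itself create a larger-index edge, but the geometric argument still produces a strictly steeper edge elsewhere on $\pi$). I would also note that the edge weights range over $(\alpha n)^1, \dots, (\alpha n)^p$, so the lower bound graph has aspect ratio $2^{\poly(n)}$, which is why it is consistent with the polynomial-aspect-ratio upper bounds mentioned after \cref{thm:main2}.
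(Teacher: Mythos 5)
Your proposal is correct and follows essentially the same route as the paper: the same point--line incidence graph with edges directed by increasing first coordinate, the same exponential Over--Under weighting by slope order, and the same conclusion that each $\pi_{L_i}$ is the unique $\alpha$-approximate path for its demand pair. Your crossing-point argument for the Over--Under Property is just a slightly more careful rendering of the paper's own geometric case analysis (and your insistence on \emph{strictly} larger slope cleanly handles parallel lines, a point the paper glosses over).
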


\subsection*{Construction}
Let $\mathcal{X}$ and $\mathcal{L}$ be an arrangement of $|\mathcal{X}|=n$ points and $|\mathcal{L}|=p$ distinct lines in $\mathbb{R}^2$ with $\varphi = \Omega(n^{2/3}p^{2/3})$ point-line incidences (e.g., take $\mathcal{X}$ and $\mathcal{L}$ to be the arrangement of points and lines Elekes' construction in \cite{elekes2001sums}). We may assume that every line in $\mathcal{L}$ has finite slope wlog (e.g., by rotating the arrangement of points and lines). 
Given $\mathcal{X}$ and $\mathcal{L}$, we will construct a directed graph $G$ with associated demand pairs $P$. We choose the vertices of $G$ to be the points in $\mathcal{X}$, i.e.,
$$
V(G) = \mathcal{X}.
$$
For each line $L \in \mathcal{L}$, we will add edges to $E(G)$ and demand pairs to $P$ as follows.
\begin{itemize}
\item Let $x_1, \dots, x_k$ denote the points lying on $L$, ordered so that the first coordinates of points are strictly increasing.  
\item Add directed edge $(x_i, x_{i+1})$ to $E(G)$ for $i \in [1, k-1]$.
\item Add demand pair $(x_1, x_k)$ to $P$. 
\end{itemize}
What remains is to assign weights to the edges of $G$.  
Let $L_1, \dots, L_{p}$ denote the lines in $\mathcal{L}$, ordered by increasing slope (breaking ties arbitrarily). If edge $(x, y) \in E(G)$ satisfies $x, y \in L_i$, then we assign edge $(x, y)$ weight
$$
w((x,y)) = \left(2\alpha n \right)^{i}.
$$
Weight $w((x, y))$ is well-defined because distinct lines intersect on at most one point. 

\subsection*{Analysis}

\begin{claim}
Fix a line $L \in \mathcal{L}$. Let $x_1, \dots, x_k$ denote the points lying on $L$, ordered by increasing first coordinate. Then $(x_1, \dots, x_k)$ is the unique $\alpha$-approximate shortest path between $x_1$ and $x_k$ in $G$.
\label{claim:over_under}
\end{claim}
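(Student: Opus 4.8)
The plan is to reduce the claim to a single geometric statement — the \emph{Over--Under Property}: every $x_1\leadsto x_k$ path in $G$ other than $\pi_L:=(x_1,\dots,x_k)$ must use an edge lying on a line of $\mathcal{L}$ of strictly larger slope than $L$ — and then close the argument with a short weight computation. First I would record two easy structural facts about $G$: by the orientation rule every edge strictly increases the first coordinate, so any $x_1\leadsto x_k$ path $\pi=(y_0=x_1,\dots,y_m=x_k)$ is automatically simple and strictly $x$-monotone, and all first coordinates along $\pi$ lie in the interval $I=[a,b]$ spanned by the first coordinates of $x_1$ and $x_k$ (these being the leftmost and rightmost points of $\mathcal X$ on $L$). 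Consequently $\pi$ can be viewed as the graph of a piecewise-linear function $f$ on $I$ whose linear piece over $[\operatorname{coord}(y_j),\operatorname{coord}(y_{j+1})]$ has slope equal to that of the line of $\mathcal L$ carrying the edge $(y_j,y_{j+1})$; here I use the wlog assumption that all lines have finite slope. The segment of $L$ over $I$ is similarly the graph of an affine function $g$ of slope $\operatorname{slope}(L)$, and $f$ and $g$ agree at the endpoints $a,b$ because $y_0,y_m\in L$.

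Next I would prove the Over--Under Property by examining $h:=f-g$. This is piecewise linear on $I$ with the same breakpoints as $f$, vanishes at $a$ and $b$, and its slope on the piece corresponding to edge $(y_j,y_{j+1})$ is $\operatorname{slope}(\text{that edge's line})-\operatorname{slope}(L)$. If $\pi\ne\pi_L$ then some $y_j$ lies off $L$ (otherwise $x$-monotonicity together with the fact that edges join \emph{consecutive} points on a line forces $\pi=\pi_L$), so $h\not\equiv 0$ and hence $h(c)\ne 0$ for some interior point $c$. If $h(c)>0$, then $h$ rises from $0$ to a positive value on $[a,c]$, so it cannot be non-increasing there and some linear piece of $h$ inside $[a,c]$ has strictly positive slope; if $h(c)<0$, the mirror argument on $[c,b]$ (where $h$ rises from a negative value back to $0$) gives the same conclusion. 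Either way there is an edge of $\pi$ on a line of slope exceeding $\operatorname{slope}(L)$; since $\mathcal L=\{L_1,\dots,L_p\}$ is sorted by increasing slope and $L=L_i$, this line is some $L_j$ with $j\ge i+1$.

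To conclude, note that $\pi_L$ consists of $k-1\le n-1$ edges, each of weight $(2\alpha n)^i$, so $w(\pi_L)=(k-1)(2\alpha n)^i<n(2\alpha n)^i<(2\alpha n)^{i+1}$ (using $\alpha\ge 1$), and in particular $\operatorname{dist}_G(x_1,x_k)=w(\pi_L)$. Any other $x_1\leadsto x_k$ path $\pi$ contains, by the Over--Under Property, an edge of weight at least $(2\alpha n)^{i+1}=2\alpha n\cdot(2\alpha n)^i>\alpha\, n\,(2\alpha n)^i>\alpha\cdot w(\pi_L)$, so $w(\pi)>\alpha\cdot\operatorname{dist}_G(x_1,x_k)$ and $\pi$ is not an $\alpha$-approximate shortest path. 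Hence $\pi_L$ is the unique $\alpha$-approximate shortest path from $x_1$ to $x_k$.

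The main obstacle is establishing the Over--Under Property cleanly; the weight arithmetic is routine. In particular, one must make sure the piecewise-linear reformulation is legitimate, which is precisely where the wlog reduction to finite-slope lines, the orientation rule forcing strict $x$-monotonicity, and the consecutive-points structure of the edge set all get used — the implication ``$y_j$ off $L$ $\Rightarrow$ $h$ has a positive-slope piece'' is the one I would be most careful to phrase correctly.
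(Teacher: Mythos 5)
Your proof is correct and follows essentially the same route as the paper: establish the Over--Under Property (every deviating $x_1\leadsto x_k$ path must use an edge on a line of strictly larger slope) and then apply the exponential weighting to rule out all such paths. The paper argues the geometric step by taking the first edge where the path leaves $L$ and noting the line from that point back to $x_k$ has larger slope, whereas you run a mean-value argument on the piecewise-linear difference $f-g$; these are interchangeable formalizations of the same idea, and your weight arithmetic matches the paper's.
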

\begin{proof}
Let $L = L_i$ for some $i \in [1, p]$. Let $\pi = (x_1, \dots, x_k)$ denote the $(x_1, x_k)$-path in $G$ traveling along line $L$. Observe that
$$
w(\pi) = \sum_{j=1}^{k-1}w((x_j, x_{j+1})) = k \cdot (2\alpha n)^i \leq  2^i \alpha^i n^{i+1}.
$$
Let $\pi'$ be an $(x_1, x_k)$-path such that $\pi' \neq P$.  
 Let $(x, y) \in E(G)$ be the first edge on path $\pi'$ such that $(x, y) \in E(\pi') \setminus E(\pi)$. If $x, y \in L_j$ for some $j > i$, then
$$
w(\pi') \geq w((x, y)) = (2\alpha)^j n^{j} \geq (2\alpha)^{i+1} n^{i+1} > \alpha \cdot w(\pi),
$$
so $\pi'$ is not an $\alpha$-approximate shortest path.

Otherwise, $x, y \in L_j$ for some $j < i$. Consider the line $L'$ passing through points $y$ and $x_k$. Since $x, y \in L_j$ for some $j < i$, the slope of line $L'$ is strictly greater than the slope of line $L_i$. Additionally, by construction, the first coordinates of the nodes on path $\pi'$ are strictly increasing. 
Then the subpath $\pi'[y, x_k]$ of $\pi'$ must contain an edge  $(x', y') \in E(G)$  such that $x', y' \in L_{j'}$ for some $j' > i$. We conclude that $\pi'$ is not an $\alpha$-approximate shortest path.
\end{proof}

\begin{claim}
Any $\alpha$-approximate preserver on $G, P$ has at least $\varphi - p = \Omega(n^{2/3}p^{2/3})$ edges. 
\label{claim:preserver_lb}
\end{claim}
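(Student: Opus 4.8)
The plan is to deduce the claim directly from \cref{claim:over_under}, the Over--Under analysis. Fix a line $L \in \mathcal{L}$, and let $x_1, \dots, x_k$ be the points of $\mathcal{X}$ lying on $L$, ordered by increasing first coordinate; by construction $(x_1, x_k) \in P$, and the path $\pi_L := (x_1, \dots, x_k)$ along $L$ belongs to $G$. \cref{claim:over_under} tells us that $\pi_L$ is the \emph{unique} $\alpha$-approximate shortest $x_1 \leadsto x_k$ path in $G$; in particular $\pi_L$ is a genuine shortest path, so $\dist_G(x_1, x_k) = w(\pi_L)$. Now let $H$ be any $\alpha$-approximate preserver of $G, P$. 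Then $H$ contains some $x_1 \leadsto x_k$ path $\pi'$ with $w(\pi') \le \alpha \cdot \dist_G(x_1, x_k) = \alpha \cdot w(\pi_L)$, i.e.\ $\pi'$ is an $\alpha$-approximate shortest $x_1 \leadsto x_k$ path in $G$. By the uniqueness in \cref{claim:over_under} we must have $\pi' = \pi_L$, and hence $E(\pi_L) \subseteq E(H)$. Since $L$ was arbitrary, $\bigcup_{L \in \mathcal{L}} E(\pi_L) \subseteq E(H)$.

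The second step is to count these forced edges. I would first observe that the edge sets $E(\pi_L)$ are pairwise disjoint across distinct lines: every edge $(x, y) \in E(\pi_L)$ joins two points of $\mathcal{X}$ that both lie on $L$, so if $(x, y)$ were also an edge of $\pi_{L'}$ for some $L' \ne L$, then $x, y$ would be two distinct common points of $L$ and $L'$, contradicting that distinct lines in $\mathcal{L}$ meet in at most one point. Consequently $|E(H)| \ge \sum_{L \in \mathcal{L}} |E(\pi_L)|$. Since $\pi_L$ has exactly $|L \cap \mathcal{X}| - 1$ edges and $\sum_{L \in \mathcal{L}} |L \cap \mathcal{X}|$ is by definition the number $\varphi$ of point--line incidences of the arrangement, we get $|E(H)| \ge \sum_{L \in \mathcal{L}} \big(|L \cap \mathcal{X}| - 1\big) = \varphi - p$.

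Finally I would record the asymptotics: the Elekes arrangement gives $\varphi = \Omega(n^{2/3} p^{2/3})$, and throughout the stated regime (e.g.\ after restricting to $p \le c_0 n^2$ for a suitable absolute constant $c_0$, which is standard and costs nothing) we have $p = O(n^{2/3} p^{2/3})$ with a small enough leading constant, so $\varphi - p = \Omega(n^{2/3} p^{2/3})$; combined with the bound $|E(H)| \ge \varphi - p$ this proves \cref{claim:preserver_lb} and hence \cref{thm:aprx_pres_lb}. I do not expect any real obstacle in this argument: all of the difficulty is already absorbed into \cref{claim:over_under}, and what remains is the bookkeeping that the forced edges are edge-disjoint across lines and that their total count is $\varphi - p$.
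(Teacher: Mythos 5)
Your proposal is correct and follows essentially the same route as the paper: invoke \cref{claim:over_under} to force every edge of each line-path (hence every edge of $G$) into the preserver, then count $\sum_{L}(|L\cap\mathcal{X}|-1)=\varphi-p$, using that distinct lines share at most one point so the line-paths are edge-disjoint. Your write-up just makes explicit two steps the paper leaves implicit (that a subgraph path achieving stretch $\alpha$ must coincide with the unique $\alpha$-approximate path, and the edge-disjointness across lines), which is fine.
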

\begin{proof}
By Claim \ref{claim:over_under}, every edge $(x, y) \in E(G)$ lies on a unique $\alpha$-approximate shortest path between a demand pair $(s, t) \in P$. Then any $\alpha$-approximate preserver of $G$ must contain every edge in $E(G)$, and must be of size at least
$$
|E(G)| = \sum_{L \in \mathcal{L}}  |\{ x \in \mathcal{X} \mid x \in L  \}| - 1  = \varphi - p. 
$$
\end{proof}

\section{Directed Approximate Hopset Lower Bound}
\label{sec:dir_apx_hop_lb}

\begin{theorem}
For any $n$ and $\alpha = \alpha(n)$, there exists an $n$-node directed weighted graph $G$ such that any $O(n)$-size $\alpha$-approximate hopset $H$ of $G$ reduces the hopbound of $G \cup H$ to $\Omega(n^{1/2})$. 
\label{thm:dir_apx_hop_lb}
\end{theorem}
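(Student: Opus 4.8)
\emph{Plan.} I would run the ``Over-Under'' reduction sketched in Section~\ref{sec:apx_hop_tech}, starting from the undirected exact hopset lower bound of~\cite{bodwin2023folklore} with the simplifications of~\cite{DBLP:conf/soda/WilliamsXX24}. The first step is to extract from that construction a graph $G_0$ on $\Theta(n)$ vertices together with critical pairs $(s_1,t_1),\dots,(s_r,t_r)$ and paths $\pi_1,\dots,\pi_r$ such that: each $\pi_i$ is the unique shortest $s_i\leadsto t_i$ path in $G_0$ with $\Theta(n^{1/2})$ edges; the $\pi_i$ are pairwise edge-disjoint; $G_0$ has a natural acyclic orientation in which each $\pi_i$ is a directed path; the \emph{Over-Under property} holds, namely every directed $s_i\leadsto t_i$ path other than $\pi_i$ uses an edge on some $\pi_j$ with $j>i$; and finally, the lower bound holds in ``chord'' form: for any set $H_0$ of at most $cn$ exact hopset edges of $G_0$, some index $i$ has hop-distance $\Omega(n^{1/2})$ from $s_i$ to $t_i$ in the subgraph formed by $\pi_i$ together with those chords of $H_0$ whose $G_0$-shortest path lies inside $\pi_i$.

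\emph{Construction.} Next I would orient $G_0$ as above, delete every edge not on some $\pi_i$, and assign every edge of $\pi_i$ the weight $(2\alpha n)^i$, which is well defined by edge-disjointness; call the result $G$ (rescaling to exactly $n$ vertices at the end). Because $w(\pi_i)=|\pi_i|(2\alpha n)^i\le n(2\alpha n)^i$ while, by the Over-Under property, every other $s_i\leadsto t_i$ path contains an edge of weight $\ge(2\alpha n)^{i+1}=2\alpha n\cdot(2\alpha n)^i>\alpha\,w(\pi_i)$, the path $\pi_i$ becomes the unique $\alpha$-approximate shortest $s_i\leadsto t_i$ path in $G$, and $\dist_G(s_i,t_i)=w(\pi_i)$. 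This is the entire point of the weighting: it forces an $\alpha$-approximate hopset to behave like an exact one, at the cost of exponential aspect ratio (cf.\ the remark after the theorem).

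\emph{Analysis.} Then, given an $\alpha$-approximate hopset $H$ of $G$ with $|H|\le cn$, I would argue as follows. Fix $i$ and any $s_i\leadsto t_i$ path $P$ in $G\cup H$ of weight $\le\alpha\,\dist_G(s_i,t_i)$; replacing each $H$-edge of $P$ by a shortest path in $G$ yields a walk $P'$ with $w(P')\le w(P)$, which is a simple path since $G$ is acyclic, and which must equal $\pi_i$ -- otherwise it contains an edge of weight $\ge(2\alpha n)^{i+1}>\alpha\,w(\pi_i)$, a contradiction. Hence every $H$-edge on $P$ is a \emph{$\pi_i$-chord}: an edge $(u,v)$ with $u,v\in V(\pi_i)$ in order and $\dist_G(u,v)=w(\pi_i[u\leadsto v])$; and conversely any $s_i\leadsto t_i$ path using only $\pi_i$-edges and $\pi_i$-chords of $H$ has weight exactly $w(\pi_i)$ and is $\alpha$-approximate. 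So the minimum hop count of an $\alpha$-approximate $s_i\leadsto t_i$ path in $G\cup H$ equals the hop-distance from $s_i$ to $t_i$ in $\pi_i$ together with the $\pi_i$-chords of $H$. Taking $H_0$ to be the collection of all $\pi_j$-chords of $H$ over all $j$ (reweighted to their $G_0$-distances, which are legitimate exact hopset edges of $G_0$ since each $\pi_j$ is a shortest path in $G_0$), we have $|H_0|\le|H|\le cn$ and the chords of $H_0$ along $\pi_i$ are exactly the $\pi_i$-chords of $H$; the extracted lower bound then produces an index $i$ with hop-distance $\Omega(n^{1/2})$. Since the hopbound must certify the pair $(s_i,t_i)$ (which lies in the transitive closure of $G$), $G\cup H$ has hopbound $\Omega(n^{1/2})$, proving the theorem.

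\emph{Main obstacle.} I expect the bulk of the work to be the first step: verifying that the construction of~\cite{bodwin2023folklore,DBLP:conf/soda/WilliamsXX24} admits a natural acyclic orientation with the Over-Under property (this is exactly what legitimizes ``directing the undirected lower bound'', and is the analogue of the slope argument in Claim~\ref{claim:over_under}), that its critical paths can be taken edge-disjoint, and that its lower bound is available in the chord/hop-distance form above -- all of which require unpacking the internals of that construction rather than invoking it as a black box. A secondary concern is avoiding spurious polylogarithmic losses so that the bound is $\Omega(n^{1/2})$ rather than $\widetilde\Omega(n^{1/2})$; this should follow from the clean form of the \cite{DBLP:conf/soda/WilliamsXX24} simplification.
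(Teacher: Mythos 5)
There is a genuine gap, and it sits exactly where you flagged the ``main obstacle'': the package of properties you want to extract from \cite{bodwin2023folklore,DBLP:conf/soda/WilliamsXX24} in your first step does not exist, because pairwise edge-disjointness of the critical paths is incompatible with the chord-form lower bound you then invoke. In that construction (and in the paper's adaptation, \cref{lem:prop_dir_apx_hop}) the graph has $\Theta(n)$ vertices and $\Theta(n)$ edges but $\Theta(n)$ critical paths each of length $\Theta(\sqrt{n})$, so the paths necessarily overlap heavily: a typical edge lies on $\Theta(\sqrt{n})$ of them. If you instead insist on edge-disjoint paths of length $\Theta(\sqrt{n})$ in a $\Theta(n)$-edge graph, you can have only $O(\sqrt{n})$ of them, and then a budget of $cn$ exact chords gives each path $c\sqrt{n}$ chords --- enough to collapse its hop-distance to $O(\log n)$ (e.g.\ a balanced binary hierarchy of interval chords), so the ``chord form'' statement you want as a black box is simply false in that regime. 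The real engine of the lower bound is the bounded-intersection property (\cref{item:lem:prop_dir_apx_hop:item4} of \cref{lem:prop_dir_apx_hop}: any $k$ paths share at most $O(\sqrt{n}/k)$ vertices) together with a potential argument summed over \emph{all} $\Theta(n)$ paths, where each added edge $(u,v)$ helps only the $k$ paths through both endpoints and helps each by at most their common subpath length $O(\sqrt{n}/k)$.

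This also breaks your weighting scheme as written: since an edge lies on many critical paths, ``assign every edge of $\pi_i$ the weight $(2\alpha n)^i$'' is not well defined once edge-disjointness is dropped, and your Over-Under property indexed by a linear order on paths has no analogue. The paper's fix is to assign weights per edge rather than per path: the diagonal edge between layers $i$ and $i+1$ gets weight $(2\alpha n)^{q_i}$ (with $q$ the bit-reversal permutation), and each critical path with threshold $d$ uses only diagonal edges of weight at most $(2\alpha n)^{d-1}$, while any deviating path must use a diagonal edge of weight at least $(2\alpha n)^{d} > \alpha\, w(\pi)$ (\cref{item:lem:prop_dir_apx_hop:item3} of \cref{lem:prop_dir_apx_hop}). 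Your downstream reduction (replacing hopset edges by shortest paths in the DAG and concluding that an $\alpha$-approximate path can only use $\pi$-chords) is sound in spirit and is essentially how the paper's potential argument proceeds, but it must be run over the overlapping family with the intersection bound, not over an edge-disjoint one. A secondary omission: to handle hopsets of size $cn$ for an arbitrary constant $c$ the paper must subsample every $c$-th layer (\cref{lem:dir_apx_hop_lb_c}); this cannot be absorbed into the black box, since the base construction only tolerates roughly $0.01n$ edges before the potential argument degrades.
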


Our directed approximate hopset lower bound construction is based on the undirected exact hopset lower bound of \cite{bodwin2023folklore}, but will use a weighting scheme similar to our approximate preserver lower bound in Section \ref{sec:aprx_pres_lb}. We will assume wlog that $n$ is a power of $4$.

\paragraph{Vertex Set $V$.} Our vertex set $V$ will correspond to integer points in $\mathbb{Z}^2$. Specifically, we take $V$ to be 
$$
V = [ \sqrt{n}/2] \times [2\sqrt{n}].
$$
(Recall that $\sqrt{n}$ is a power of 2, since $n$ is a power of 4.) 
For each $i \in [\sqrt{n} / 2]$, let $L_i = \{(i, y) \in \mathbb{Z}^2 \mid (i, y) \in V\}$ denote the nodes in $V$ with first coordinate $i$. 

\paragraph{Edge Set $E$.} For each node $(x, y) \in V$, we add the directed edges
$$
((x, y), (x+1, y)) \text{ and } ((x, y), (x+1, y+1))
$$
to $E$, if they are contained in $V \times V$. We will now assign edge weights to edges in $E$. 

Our weighting scheme will rely on a permutation  $q$ on $\ints{\sqrt{n} / 2}$ that was used in~\cite{DBLP:conf/soda/WilliamsXX24}. For any integer $i \in \ints{\sqrt{n} / 2}$, $q_i$ is defined as the integer whose $(\log (\sqrt{n} / 2))$-bit binary representation is the reversal of the $(\log (\sqrt{n} / 2))$-bit  binary representation of $i$. For instance $q_0 = 0$ and $q_1 = \sqrt{n} / 4$ (recall $\sqrt{n}$ is a power of $2$).

For each edge $e$ in $L_i \times L_{i+1}$, $i \in [\sqrt{n} - 1]$:
\begin{itemize}
    \item If $e = ((x, y), (x+1, y))$ for some $(x, y) \in L_i$, then assign edge $e$ weight
    $$
w(e) = 1.
$$
    \item Otherwise, $e = ((x, y), (x+1, y+1))$ for some $(x, y) \in L_i$. Assign edge $e$ weight
    $$
    w(e) = (2\alpha n)^{q_{i}}.
    $$
\end{itemize}

\paragraph{Critical Paths $P$.}
We will define a collection of directed paths in $G$ with certain properties that will guarantee that $G$ implies an $\alpha$-approximate hopset lower bound. 

For each node $v \in [1] \times [\sqrt{n}]$ and each $d \in [\sqrt{n}]$, we iteratively define a critical path $\pi$ as follows. The starting node of path $\pi$ is $v$. If path $\pi$ contains its $i$th node $v_i$ in layer $L_i$, then we define its $(i+1)$th node $v_{i+1}$ in layer $L_{i+1}$ as follows: 
\begin{itemize}
    \item If $d  \le q_i$, then let 
    $$
    v_{i+1} = v_i + (1, 0),
    $$
    if $v_{i+1} \in V$ (in fact, this condition always holds).
    \item If $d > q_i$, then let
    $$
    v_{i+1} = v_i + (1, 1),
    $$
    if $v_{i+1} \in V$ (in fact, this condition always holds). 
\end{itemize}
The path $\pi$ begins at node $v_1 = v \in L_1$, and terminates at node $v_{\sqrt{n}/2} \in L_{\sqrt{n}/2}$. We add $\pi$ to our collection of paths $P$. 

\begin{lemma}
The above graph has the following properties:
\begin{enumerate}
    \item \label{item:lem:prop_dir_apx_hop:item1} $|V| = n$, $|E| = \Theta(n)$, and $|P| = \Theta(n)$.
    \item \label{item:lem:prop_dir_apx_hop:item2} Every path $\pi \in P$  contains exactly one node in each layer $L_i$, $i \in [\sqrt{n}/2]$. 
    \item \label{item:lem:prop_dir_apx_hop:item3} Every path $\pi \in P$ is the unique $\alpha$-approximate shortest path between its endpoints. 
    \item \label{item:lem:prop_dir_apx_hop:item4} For any $k$ distinct paths $\pi_1, \dots, \pi_k \in P$, where $k \in [|P|]$, the intersection $\cap_{i=1}^k \pi_i$ has size at most 
    $$
    \left| \bigcap_{i=1}^k \pi_i \right| \leq \frac{2\sqrt{n}}{k}.
    $$
\end{enumerate}
\label{lem:prop_dir_apx_hop}
\end{lemma}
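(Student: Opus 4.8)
The plan is to check the four properties in turn; only \cref{item:lem:prop_dir_apx_hop:item4} requires real work. For \cref{item:lem:prop_dir_apx_hop:item1}, $|V| = (\sqrt n/2)\cdot(2\sqrt n) = n$ directly; every vertex in layers $L_1,\dots,L_{\sqrt n/2-1}$ has out-degree $1$ or $2$ (the horizontal out-edge always stays in $V$, and the diagonal out-edge stays in $V$ unless the $y$-coordinate equals $2\sqrt n$) while every vertex in $L_{\sqrt n/2}$ has out-degree $0$, so $n/2\le|E|\le 2n$ for large $n$; and the construction enumerates $\sqrt n\cdot\sqrt n=n$ pairs $(v,d)$, so $|P|\le n$, while for a fixed starting node the values $d\in\{1,\dots,\sqrt n/2\}$ give pairwise distinct paths --- if $d<d'\le\sqrt n/2$, then the unique edge-layer $i_0$ with $q_{i_0}=d$ (which exists since $q$ restricts to a bijection of the edge-layers $\{1,\dots,\sqrt n/2-1\}$ onto themselves) is a diagonal edge-layer of the $d'$-path but a horizontal one of the $d$-path --- so $|P|\ge(\sqrt n)(\sqrt n/2)=\Theta(n)$. \cref{item:lem:prop_dir_apx_hop:item2} is immediate from the construction: the $i$-th node of any critical path lies in $L_i$, the path runs from $L_1$ to $L_{\sqrt n/2}$, and all its nodes stay in $V$ because the $y$-coordinate starts at most $\sqrt n$ and grows by at most $1$ over fewer than $\sqrt n/2$ steps.

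For \cref{item:lem:prop_dir_apx_hop:item3} I would run the Over--Under argument from \cref{sec:aprx_pres_lb}, adapted to this layered graph. Since every edge of $G$ goes from some $L_i$ to $L_{i+1}$, any path between fixed vertices $s\in L_i$ and $t\in L_j$ has exactly $j-i$ edges, one vertex per layer, and exactly $y(t)-y(s)$ diagonal edges. Fix the critical path $\pi$ for $(v,d)$ and let $m^*$ be its number of diagonal edges; by the bijection property of $q$, the diagonal edge-layers of $\pi$ are exactly those $i$ with $q_i\in\{1,\dots,m^*\}$, so the heaviest edge of $\pi$ has weight $w^*:=(2\alpha n)^{m^*}$ (read as $1$ when $m^*=0$) and $w(\pi)<\sqrt n\cdot w^*$. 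Any path $\pi'\neq\pi$ with the same endpoints uses the same number $m^*$ of diagonal edges but not at the same set of layers, hence uses a diagonal edge at some layer with $q$-value at least $m^*+1$; so $w(\pi')\ge(2\alpha n)^{m^*+1}=2\alpha n\cdot w^*>\alpha\sqrt n\cdot w^*>\alpha\cdot w(\pi)$. Therefore $\pi$ realizes $\dist_G(s,t)$ and is the unique $\alpha$-approximate shortest path between its endpoints.

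For \cref{item:lem:prop_dir_apx_hop:item4} the first step is structural: the set of layers in which any family of critical paths all coincide is an interval of consecutive layers. For two critical paths with (effective) parameters $d,d'$, the layer-indexed difference $\delta_i$ of their $y$-coordinates satisfies $\delta_{i+1}-\delta_i=\mathbf{1}[q_i<d']-\mathbf{1}[q_i<d]$, which has constant sign over all edge-layers; hence $\delta_i$ is monotone and $\{i:\delta_i=0\}$ --- exactly the layers where the two paths agree --- is an interval, while two distinct paths of the same ``shape'' have $\delta_i$ a nonzero constant and never meet. Intersecting over all pairs, $\bigcap_{i=1}^k\pi_i$ is an interval of layers, empty unless the $\pi_i$ have pairwise distinct shapes, hence distinct parameters $d_1<\dots<d_k$ in $\{1,\dots,\sqrt n/2\}$ with $d_k-d_1\ge k-1$ (so $k\le\sqrt n/2$, the case $k>\sqrt n/2$ being vacuous). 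If $\bigcap_{i=1}^k\pi_i=[L_a,L_b]$, then for each edge-layer $i\in\{a,\dots,b-1\}$ all $k$ paths traverse the same edge from $L_i$ to $L_{i+1}$ and hence the same edge \emph{type}, which forces $q_i<d_1$ or $q_i\ge d_k$. So $\{a,\dots,b-1\}$ is a run of consecutive integers disjoint from $T:=\{i:q_i\in[d_1,d_k)\}$.

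The last step --- which I expect to be the crux --- bounds the length of this run, and is exactly where the choice of the bit-reversal permutation is essential. Since $q$ is an involution, $T=q([d_1,d_k))$ is the bit-reversal image of a value-interval of length $r:=d_k-d_1\ge k-1$. A value-interval of length $r$ always contains a dyadic block of length $\ell\ge(r+1)/4$, and the bit-reversal image of a dyadic block of length $\ell$ is an aligned arithmetic progression in $\{0,\dots,\sqrt n/2-1\}$ of common difference $(\sqrt n/2)/\ell$; since $T$ contains this progression, any run of consecutive integers disjoint from $T$ has length at most $(\sqrt n/2)/\ell-1\le 2\sqrt n/(r+1)-1\le 2\sqrt n/k-1$. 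Hence $\abs{\bigcap_{i=1}^k\pi_i}=b-a+1\le 2\sqrt n/k$, as required. All the graph-theoretic steps above are routine once the interval structure is in place; the genuine content is this equidistribution property of the bit-reversal permutation $q$ of \cite{DBLP:conf/soda/WilliamsXX24}, from which it should largely be inherited.
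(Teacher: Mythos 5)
Your proposal is correct and follows essentially the same route as the paper: the same counting for items 1--2, the same over--under exponential-weight argument for item 3, and for item 4 the same key step of locating a dyadic block of length $\Omega(k)$ inside $[d_1,d_k)$ whose bit-reversal image is an arithmetic progression of common difference $O(\sqrt n/k)$ hitting every sufficiently long run of layers. The only (minor, and slightly more self-contained) deviation is that you justify the interval structure of $\cap_i\pi_i$ by monotonicity of the $y$-coordinate difference between two critical paths, whereas the paper deduces contiguity from the unique-shortest-path property of item 3; both are valid.
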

\begin{proof} We prove each of these items below.
\begin{enumerate}
    \item The size bounds on $|V|$ and $|E|$ follow directly from the construction. In our construction of critical paths $P$, we add $\Theta(n)$ paths to $P$, but it is not immediate that these paths are distinct. Note that if two paths have distinct parameters $d_1 < d_2$ respectively, they will use different edges between $L_{j}$ and $L_{j+1}$, for $q_j = d_1$. If $d_1 = d_2$, then the two paths begin at different vertices on $L_1$. Thus, $|P| = \Theta(n)$.

    \item Each path $\pi \in P$ begins at a vertex $v_1 = (1, y) \in [1] \times [\sqrt{n}]$. Each edge in $\pi$ corresponds to a vector of form $(1, 0)$ or $(1, 1)$ in $G$. Then the $i$th node $v_i$ of $\pi$ (if it exists) will satisfy
    $$
    v_i \in  \{i \} \times [y, y + i] \subseteq \{i\} \times [\sqrt{n}, \sqrt{n}+i] \subseteq V,
    $$
    for $i \in [\sqrt{n}/2]$. We conclude that $\pi$ contains a node in each layer $L_i$, $i \in [1, \sqrt{n}/2]$. Moreover, $\pi$ can contain at most one node in each layer $L_i$ because every edge in $G$ is a directed edge in $L_i \times L_{i+1}$. 

    \item Fix a path $\pi \in P$ with start node $s \in L_1$ and end node $t \in L_{\sqrt{n}/2}$. Let $d \in [\sqrt{n}]$ be the parameter used to construct $\pi$. 
    Consider an $s \leadsto t$ path $\pi'$, where $\pi' \neq \pi$. Suppose that $\pi'$ contains an edge $e' \in L_i \times L_{i+1}$ corresponding to vector $(1, 1)$, and $d \le q_i$. Then
    $$
    w(\pi') \geq w(e') = (2\alpha n)^{q_i} \geq (2\alpha n)^{d} \geq 2\alpha \cdot (n \cdot (2\alpha n)^{d-1}) \geq 2\alpha \cdot w(\pi),
    $$
    where the final inequality follows from the fact that every edge in $\pi$ has weight at most $(2\alpha n)^{d-1}$. Then path $\pi'$ is not an $\alpha$-approximate shortest $s \leadsto t$ path. 

    This implies that any $\alpha$-approximate shortest $s \leadsto t$ path $\pi'$ takes an edge $e' \in L_i \times L_{i+1}$ corresponding to vector $(1, 1)$, only if path $\pi$ takes an edge $e \in L_i \times L_{i+1}$ corresponding to vector $(1, 1)$. Moreover, since $\pi$ and $\pi'$ are both $s \leadsto t$ paths, they must take exactly the same number of edges corresponding to vector $(1, 1)$. We conclude that path $\pi'$ takes an edge $e' \in L_i \times L_{i+1}$ corresponding to vector $(1, 1)$ if and only if path $\pi$ takes an edge $e \in L_i \times L_{i+1}$ corresponding to vector $(1, 1)$. Then $\pi = \pi'$, completing the proof.
    
    \item Let $d_1, \ldots, d_k$ be the parameters used to construct $\pi_1, \ldots, \pi_k$. First, observe that two distinct paths sharing the same parameter $d$ cannot intersect, so all values $d_1, \ldots, d_k$ are distinct. wlog, we assume $d_1 \le d_k - k + 1$. Second, by \cref{item:lem:prop_dir_apx_hop:item3}, all paths $\pi_1, \ldots, \pi_k$ are unique shortest paths, so $\cap_{i=1}^k \pi_i$ must be a contiguous path. 

    Suppose for some $i\in [\sqrt{n}/2 - 1]$, $d_1 \le q_i < d_k$, then $\pi_1$ and $\pi_k$ cannot use the same edge between $L_i$ and $L_{i+1}$, implying $\cap_{i=1}^k \pi_i$ also does not contain such an edge. The interval $[d_1, d_k)$ must contain an interval of the form $\mathcal{I} = [a \cdot 2^b, (a+1) \cdot 2^b)$ (a dyadic interval) where $2^b \ge k / 4$. By construction of the permutation $q$, for all values of $i$ whose last $(\log((\sqrt{n} / 2))-b)$-bit binary representation is the reversal of $a$, $q_i$ falls in the interval $\mathcal{I}$, and hence $d_1 \le q_i < d_k$. Regardless where $\cap_{i=1}^k \pi_i$ begins, the next such $i$ is within $2^{\log((\sqrt{n} / 2))-b} \le \frac{2\sqrt{n}}{k}$ layers away, at which point $\cap_{i=1}^k \pi_i$ must end. Therefore, $|\cap_{i=1}^k \pi_i| \le \frac{2\sqrt{n}}{k}$.
\end{enumerate}
\end{proof}

Using the hopset lower bound argument of \cite{bodwin2023folklore}, we can use Lemma \ref{lem:prop_dir_apx_hop} to argue that any $0.01n$-size $\alpha$-approximate hopset $H$ of $G$ has hopbound $\Omega(\sqrt{n})$. In order to extend this lower bound to any hopset $H$ of size $|H| \leq cn$ for an arbitrarily large constant $c > 0$, we will need to modify our construction of $G$ and $P$, as specified in the following lemma.

\begin{lemma}
For any $n, c > 0$, there exists a directed weighted graph $G' = (V', E')$ and associated collection of paths $P'$ satisfying the following properties:
\begin{enumerate}
    \item \label{item:lem:dir_apx_hop_lb_c:item1} $|V'| = \Theta(n/c)$, and $|P'| = \Theta(n)$. 
    \item \label{item:lem:dir_apx_hop_lb_c:item2}  Every path $\pi \in P'$ is of length $\Theta(n/c)$. 
    \item \label{item:lem:dir_apx_hop_lb_c:item3}  Every path $\pi \in P'$ is the unique $\alpha$-approximate shortest path between its endpoints.
    \item \label{item:lem:dir_apx_hop_lb_c:item4}  For any $k$ distinct paths $\pi_1, \dots, \pi_k \in P$, where $k \in [|P|]$, the intersection $\cap_{i=1}^k \pi_i$ has size at most 
    $$
    \left| \bigcap_{i=1}^k \pi_i \right| \leq \frac{2 \sqrt{n}}{ck}.
    $$
\end{enumerate}
\label{lem:dir_apx_hop_lb_c}
\end{lemma}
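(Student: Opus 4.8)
The plan is to produce $G', P'$ by rescaling the construction behind \cref{lem:prop_dir_apx_hop} so that it carries $\Theta(c)$ times as many critical paths per vertex; this extra factor of $c$ is exactly what lets the counting argument of \cite{bodwin2023folklore}, applied to $G'$, rule out hopsets of size $cn$ rather than only $0.01n$. Recall that the graph of \cref{lem:prop_dir_apx_hop} is a layered grid with moves $(1,0)$ and $(1,1)$, the diagonal edge between $L_i$ and $L_{i+1}$ weighted $(2\alpha n)^{q_i}$ for the bit-reversal permutation $q$, and critical paths indexed by a start node in $L_1$ together with a direction parameter $d$. I would keep this template but (i) rescale the number of layers so that every critical path has the length required by \cref{item:lem:dir_apx_hop_lb_c:item2}, and (ii) enlarge the direction-parameter set by a factor of $\Theta(c)$ — for instance by replacing the single permutation $q$ with a $\Theta(c)$-fold family $q^{(1)},\dots,q^{(\Theta(c))}$ of bit-reversal permutations on shifted, dilated copies of the (now shorter) index set, where the canonical path of direction $(j,d)$ takes the diagonal edge at layer $i$ iff $d>q^{(j)}_i$. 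With the $\Theta(\sqrt n)$ start nodes unchanged this gives $|P'|=\Theta(n)$ while $|V'|=\Theta(n/c)$, which is \cref{item:lem:dir_apx_hop_lb_c:item1}, and the Over-Under weights are assigned exactly as before, with exponents now drawn from the larger value set of the permutation family.

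\cref{item:lem:dir_apx_hop_lb_c:item1,item:lem:dir_apx_hop_lb_c:item2} then follow by counting. \cref{item:lem:dir_apx_hop_lb_c:item3} is proved exactly as \cref{item:lem:prop_dir_apx_hop:item3}: an $s\leadsto t$ path that differs from its canonical path $\pi$ must somewhere insert a diagonal edge from a strictly heavier weight class than any edge of $\pi$, and the exponential gap between weight classes makes that single edge heavier than $\alpha\cdot w(\pi)$; one only re-checks that the exponent arithmetic still closes, which it does since every path has length $O(\sqrt n)\le 2\alpha n$. The substance is \cref{item:lem:dir_apx_hop_lb_c:item4}. Given $k$ pairwise-intersecting critical paths, first note (as in \cref{lem:prop_dir_apx_hop}) that their directions are pairwise distinct — two paths with a common direction are exact $y$-translates and share no vertex — so that $\bigcap_{i}\pi_i$ is a contiguous subpath; then bound its length by a dyadic-interval argument. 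When all $k$ directions use the same permutation $q^{(j)}$ this is literally the argument of \cref{item:lem:prop_dir_apx_hop:item4} run on a stack that is $\Theta(c)$ times shorter, which is where the extra $c$ in $\tfrac{2\sqrt n}{ck}$ appears; when several permutations are involved one must instead show that two paths built from distinct $q^{(j)}$'s have agreeing prefix diagonal-counts on only $O(\sqrt n/(ck))$ layers of any window.

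This cross-permutation case is the step I expect to be the main obstacle: one has to design the family $\{q^{(j)}\}$ so that its members are mutually as spread out as bit-reversal is against itself. A clean route is to interleave the value sets of all the $q^{(j)}$'s into one large index set and run a single dyadic decomposition on that combined set, which would reduce the cross-permutation case to the intra-permutation one; checking that this interleaving does not degrade the per-permutation guarantees is the delicate part. Once \cref{item:lem:dir_apx_hop_lb_c:item4} is in place the lemma is complete, and \cref{thm:dir_apx_hop_lb} for size-$cn$ hopsets follows by applying it with $\Theta(cn)$ in place of $n$ and $\Theta(c)$ in place of $c$, so that $G'$ has $\Theta(n)$ vertices and the \cite{bodwin2023folklore} counting bound tolerates a hopset of size $cn$.
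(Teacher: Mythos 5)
Your proposal diverges from the paper's proof and, as you yourself flag, leaves the essential step unproven. The paper does not rebuild the construction with a larger family of permutations at all: it simply takes the graph $G$ and paths $P$ from \cref{lem:prop_dir_apx_hop} and \emph{subsamples layers}, keeping only $L_c, L_{2c}, \dots$, connecting the consecutive retained vertices of each critical path by a single edge weighted $\dist_G(u,v)$, and letting $P'$ be the restrictions of the paths in $P$. All four properties are then inherited for free: $|P'| = |P| = \Theta(n)$ with no need to enlarge the path family, $|V'| = \Theta(n/c)$, distances (hence unique $\alpha$-approximate shortest paths) are preserved because each new edge carries the exact $G$-distance, and \cref{item:lem:dir_apx_hop_lb_c:item4} follows by contraposition — an intersection of size $> 2\sqrt{n}/(ck)$ in $G'$ would lift to an intersection of size $> 2\sqrt{n}/k$ in $G$, contradicting \cref{item:lem:prop_dir_apx_hop:item4} of \cref{lem:prop_dir_apx_hop}.

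The gap in your route is exactly the cross-permutation case of \cref{item:lem:dir_apx_hop_lb_c:item4}, and it is not a minor verification: with $\Theta(c)$ permutation classes and only $\Theta(\sqrt{n}/c)$ layers, the within-class dyadic argument applied to $k$ paths that happen to be spread evenly across the classes gives roughly $k/c$ paths per class and hence an intersection bound of only $O(\sqrt{n}/k)$ — a factor of $c$ weaker than the required $2\sqrt{n}/(ck)$. So the bound genuinely hinges on distinct permutations forcing divergence against each other, i.e., on a new ``mutual spreading'' property of the family $\{q^{(j)}\}$ that you propose but do not construct or verify; the interleaving idea also quietly requires more than two outgoing edge types per layer, which would need a fresh argument for \cref{item:lem:dir_apx_hop_lb_c:item3}. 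As written, the proposal does not establish the lemma; the layer-subsampling reduction sidesteps all of these issues.
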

\begin{proof}
We will first construct $G', P'$ and then analyze them.
    \paragraph{Construction.} We will choose $V'$ to be a subset of $V$. Specifically, we define $V'$ to contain every $ci$-th layer $L_{ci}$ of $G$ for $i \in [\lfloor \sqrt{n}/(2c) \rfloor]$, i.e.,  
    $$
V' = L_c \cup L_{2c} \cup \dots  \cup L_{\lfloor \sqrt{n}/(2c) \rfloor \cdot c }.
$$
For each path $\pi \in P$ and each index $i \in [\lfloor \sqrt{n}/(2c) \rfloor - 1]$, we will add an edge to $E'$ as follows. 
\begin{itemize}
    \item Note that $\pi$ contains exactly one node in layer $L_{ci}$ and exactly one node in layer $L_{c(i+1)}$ by Property 2 of Lemma \ref{lem:prop_dir_apx_hop}. Let $u = \pi \cap L_{ci}$, let $v = \pi \cap L_{c(i+1)}$. 
    \item Add edge $(u, v)$ to $E'$, and assign $(u, v)$ weight equal to the distance in $G$ from $u$ to $v$, i.e., 
    $$
    w((u, v)) = \dist_{G}(u, v).
    $$
\end{itemize}
For each path $\pi \in P$, we define a path $\pi'$ in $G'$ by removing all nodes in $\pi \setminus V'$ from $\pi$ and taking $\pi'$ to be the resulting path. Path $\pi'$ is well-defined in $G'$ by our construction of $E'$. 
    \paragraph{Analysis.} \cref{item:lem:dir_apx_hop_lb_c:item1,item:lem:dir_apx_hop_lb_c:item2,item:lem:dir_apx_hop_lb_c:item3} follow directly from Lemma \ref{lem:prop_dir_apx_hop} and the observation that for all $u, v \in V'$, $\dist_{G'}(u, v) = \dist_G(u, v)$. 
    To prove \cref{item:lem:dir_apx_hop_lb_c:item4}, note that if $k$ paths $\pi_1', \dots, \pi_k' \in P'$ have an intersection of size $|\cap_i \pi_i'| > \frac{2 \sqrt{n}}{ck}$, then there exist $k$ paths $\pi_1, \dots, \pi_k \in P$ with an intersection of size $|\cap_i \pi_i| > \frac{2 \sqrt{n}}{k}$, contradicting \cref{item:lem:prop_dir_apx_hop:item4} of Lemma \ref{lem:prop_dir_apx_hop}. \qedhere
\end{proof}

We are now ready to finish the proof of Theorem \ref{thm:dir_apx_hop_lb}. 

\begin{proof}[Proof of Theorem \ref{thm:dir_apx_hop_lb}]
Let graph $G' = (V', E')$ and paths $P'$ be constructed with parameters $n, c > 0$. We will show that for any hopset $H$ of size $|H| \leq n = O(c|V'|)$, $G\cup H$ has hopbound $\Omega(\sqrt{n}/c)$, completing the proof.  We use $\hopdist_{G'}(u, v)$ to denote the $\alpha$-approximate hopdistance from $u$ to $v$ in $G'$, and for each $\pi \in P'$, we let $s_{\pi}$ and $t_{\pi}$ denote its start and end node, respectively.

Similar to the proof of \cite{bodwin2023folklore}, we define a potential function with respect to any hopset $H$ (with a slight abuse of notation, we use $G' \cup H$ to denote $(V', E' \cup H)$ in the following):
$$
\Phi(H) = \sum_{\pi \in P'} \hopdist_{G' \cup H}(s_{\pi}, t_{\pi}).
$$
Initially, $\Phi(\emptyset) = |P'| \cdot \Theta(\sqrt{n}/c) = \Theta(n^{3/2}/c)$, since every path $\pi \in P'$ is a unique $\alpha$-approximate shortest path between its endpoints in $G'$ and has hopbound $|\pi| = \Theta(\sqrt{n}/c)$.  We will now upper bound the quantity $\Phi(H) - \Phi(H \cup \{(u, v)\})$ for all $(u, v) \in V' \times V'$ in the transitive closure of $G'$. This quantity corresponds to the decrease in potential after adding one additional edge to our hopset. 

Observe that the hopset edge $(u, v)$ only decreases the hoplength of a path $\pi \in P$ if $u, v \in \pi$. Let $\pi_1, \dots, \pi_k \in P'$ be the paths in $P'$ containing nodes $u$ and $v$. Then $\hopdist_{G'}(u, v) \leq \frac{2\sqrt{n}}{ck}$, by \cref{item:lem:dir_apx_hop_lb_c:item4} of Lemma \ref{lem:dir_apx_hop_lb_c}. It follows that
$$\Phi(H) - \Phi(H \cup \{(u, v)\}) \leq k \cdot \frac{2 \sqrt{n}}{ck} = O(\sqrt{n}/c).$$
Then for any hopset $H$ of size $|H| \leq n$, it follows that  $$
\Phi(H) \geq \Phi(\emptyset) - n \cdot O(\sqrt{n}/c) = \Omega(n^{3/2}/c).
$$
Then by an averaging argument, there is a path $\pi \in P'$ such that the $\alpha$-approximate hopdistance between $s_{\pi}$ and $t_{\pi}$ in $G' \cup H$ is $\Phi(H) / |P'| = \Omega(\sqrt{n}/c)$. We conclude that the $\alpha$-approximate hopbound of $G' \cup H$ is $\Omega(\sqrt{n}/c)$ as desired. 
\end{proof}

\section{\texorpdfstring{$O(m)$}{O(m)}-size Shortcut Set Lower Bound}
\label{sec:shortcut}

\begin{theorem}
\label{thm:shortcut}
There exist $n$-node directed graphs $G$ with $|E(G)| = m$ edges, such that any $O(m)$-size shortcut set $H$ of $G$ reduces the diameter of $G \cup H$ to $\Omega(n^{2/9})$.
\end{theorem}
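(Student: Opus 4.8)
The plan is to build a layered graph that iterates the $\Omega(n^{1/5})$-lower-bound construction of~\citep{DBLP:conf/soda/WilliamsXX24} a polynomial number of times in series. Recall that their construction is a layered graph on $\Theta(r)$ layers (for a parameter $r$), each layer of size $\Theta(r^4)$, and a family of $\Theta(r^5)$ ``critical paths,'' each passing through one vertex per layer, with the property that any $k$ critical paths intersect in at most $O(r/k)$ vertices; this intersection bound forces any $O(m)$-size shortcut set to leave some critical path with $\Omega(r)$ hops. I would take one such block $B$ with parameter $r$, so $B$ has $\Theta(r)$ layers and $\Theta(r^5)$ total vertices, and then chain $t = \Theta(r)$ copies $B_1,\dots,B_t$ of $B$ by identifying the last layer of $B_j$ with the first layer of $B_{j+1}$. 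A critical path of the concatenated graph is a concatenation of one critical path from each block that agrees on the shared layers; by an averaging/counting argument this yields $\Theta(r^5)$ long critical paths (the blocks can be made identical so the shared-layer vertex set lines up), each with $\Theta(r \cdot r) = \Theta(r^2)$ hops. The total number of layers is $\Theta(r^2)$, and (after the vertex-count blowup discussed below) the total number of vertices is $n = \Theta(r^9)$, giving $r = \Theta(n^{1/9})$ and hopbound $\Theta(r^2) = \Omega(n^{2/9})$.

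The key step is the potential-function argument, run exactly as in~\citep{DBLP:conf/soda/WilliamsXX24}: let $\Phi(H) = \sum_{\pi} \mathrm{hopdist}_{G\cup H}(s_\pi, t_\pi)$ summed over all long critical paths $\pi$. Initially $\Phi(\emptyset) = \Theta(r^5) \cdot \Theta(r^2) = \Theta(r^7)$. A single shortcut edge $(u,v)$ can only help critical paths containing both $u$ and $v$; if $\pi_1,\dots,\pi_k$ are those paths, they share both endpoints of a subpath, so their common intersection has size at most $O(r^2/k)$ \emph{within a single block} — and this is the crucial point where the blowup comes in. Because paths from distinct blocks can branch freely at the shared layers, an edge $(u,v)$ internal to block $B_j$ is only constrained by the intersection bound of $B_j$ itself; to make the per-edge potential drop be $O(r^2/k \cdot k) = O(r^2)$ uniformly, each block's layer size must be large enough that the ``$k$ paths through $u$ and $v$'' counting is not spoiled by paths that merge only because there are relatively few vertices per layer — this is why each layer is inflated by a $\Theta(r^3)$ factor over the base construction, so a block has $\Theta(r \cdot r^3 \cdot r) = \Theta(r^5)$ vertices and the whole graph has $\Theta(r \cdot r^5) = \Theta(r^6)$... (here I will need to recompute: with $t=\Theta(r)$ blocks sharing layers, $n = \Theta(r^2 \cdot r^{3}\cdot r^{?})$; the exponents must be chased carefully so that $n = \Theta(r^9)$). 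Granting the per-edge bound $\Phi(H) - \Phi(H \cup \{(u,v)\}) = O(r^2)$, an $O(m) = O(n) = O(r^9)$-size hopset... no: $m = \Theta(n)$, and we add $O(m)$ edges, so $\Phi(H) \geq \Phi(\emptyset) - O(r^9)\cdot O(r^2)$, which is negative — so in fact the number of \emph{critical paths} and the per-edge drop must be balanced against $m = \Theta(r^6)$ edges, i.e. the block-size and path-count exponents are tuned precisely so that $\Phi(\emptyset) \gg m \cdot (\text{per-edge drop})$; the paper's statement ``number of vertices is $\Theta(n^{9/5})$'' in the overview indicates the correct bookkeeping is $n = \Theta(r^{9/5}\cdot(\text{base }n))$ where base $n = r^5$, giving the $n^{2/9}$ after re-expressing $r$ in terms of the final $n$. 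I would follow that bookkeeping verbatim.

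The main obstacle — and the step I expect to require real care — is exactly this exponent accounting together with the proof of the iterated intersection bound: I must show that for any $k$ of the long concatenated critical paths, the pairwise (indeed common) intersection is $O(r^2/k)$ \emph{across block boundaries}, which does not follow formally from the single-block bound since two long paths could agree on a full block and diverge only in the next. The fix is the standard one used in such concatenation arguments: arrange the block gadgets (via a permutation-of-reversed-binary-digits labeling analogous to the $q_i$ in Section~\ref{sec:dir_apx_hop_lb}) so that agreement in one block forces a controlled pattern of agreement/disagreement in neighboring blocks, yielding a geometric-series bound on the total intersection length. Once that combinatorial lemma is in place, the potential argument and the final calculation $r = \Theta(n^{1/9})$, hopbound $= \Omega(r^2) = \Omega(n^{2/9})$, go through mechanically.
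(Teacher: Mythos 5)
Your high-level plan is the same as the paper's (and matches its technical overview): iterate the $\Omega(n^{1/5})$ construction of \cite{DBLP:conf/soda/WilliamsXX24} over $\Theta(r)$ periods, inflate the layers, and rerun the potential argument. However, as written the proposal has two genuine gaps that prevent the argument from closing. First, your critical-path count is wrong. You posit $\Theta(r^5)$ long critical paths, and then correctly observe that this makes the potential argument fail ($\Phi(\emptyset)=\Theta(r^7)$ against $m=\Theta(r^9)$ shortcut edges), but you never resolve the contradiction. The resolution is that the concatenated graph must carry $\Theta(c^3r^6)=\Theta(r^9)$ critical paths --- one for each of the $\Theta(c^3r^4)$ starting vertices in a constant fraction of the first layer \emph{and} each of the $r^2$ choices of $(d_1,d_2)$ --- each of length $\Theta(cr)$. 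With that count, $\Phi(\emptyset)=\Theta(c^4r^7/\Delta)$, the per-edge drop is $O(cr/\Delta + r^2/\Delta^2)$, and the budget $|H|=\eps_1\min\{c^3r^6,\,c^4r^5\Delta\}$ exceeds $|E|=\Theta(c^4r^5)$ by any desired constant factor once $c=r/B$. Without fixing the path count, no choice of exponents makes $\Phi(\emptyset)\gg m\cdot(\text{per-edge drop})$.

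Second, the combinatorial core --- which you explicitly flag as ``the step I expect to require real care'' --- is exactly the content of \cref{lem:shortcut-properties} and is not supplied. The paper's formulation is worth noting because it is cleaner than the ``common intersection of $k$ paths'' bound you propose: since every critical path is the \emph{unique} path between its endpoints (itself a nontrivial claim, proved via uniqueness of the subset-sum representation $\sum_{i\in S}\mathbf{w}_i=\mathbf{y}-\mathbf{x}$), a shortcut edge $(u,v)$ spanning $g$ layers determines a unique subpath $\sigma$, and one shows directly that only $O(1+(r/g)^2)$ critical paths contain $\sigma$, because a shared window of $g$ layers contains a dyadic interval of length $\Omega(g)$ in the bit-reversal permutation, which pins both $d_1$ and $d_2$ to within $O(r/g)$. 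This is precisely the ``controlled pattern of agreement across block boundaries'' you gesture at, but it must be proved, and it is where the factor-$r^2$ (not $r$) in the denominator of the per-edge drop comes from. Until both the path count and this lemma are in place, the proposal is an accurate roadmap rather than a proof.
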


Our construction is parameterized by two parameters $r$ and $c$, where $r$ is a parameter controlling the size of the graph already used in \cite{DBLP:conf/soda/WilliamsXX24} and $c$ intuitively denotes the number of copies of their construction we use. We assume $r$ is a power of $2$ wlog.

\paragraph{Vertex Set $V$. } Our graph has $2cr + 1$ layers, indexed by integers between $0$ and $2cr$. Each layer contains vertices from $[4 cr] \times [4 cr] \times [4 cr^2]$, which can be viewed as points in the $3D$ grid. For $i \in \ints{2cr + 1}$ and $\mathbf{x} \in [4 cr] \times [4 cr] \times [4 cr^2]$, we also index the vertex with coordinates $\mathbf{x}$ in the $i$-th layer by $(i, \mathbf{x})$. 

\paragraph{Edge Set $E$. } 
Similar to \cref{sec:dir_apx_hop_lb}, we also use a permutation $q$ on $\ints{r}$ that was used in~\cite{DBLP:conf/soda/WilliamsXX24}. For any integer $i \in \ints{r}$, $q_i$ is defined as the integer whose $(\log r)$-bit binary representation is the reversal of the $(\log r)$-bit  binary representation of $i$. For instance $q_0 = 0$ and $q_1 = r / 2$ (recall $r$ is a power of $2$). For every even $i \in \ints{2cr}$, $\mathbf{w}_i = (1, 0, q_{(i / 2) \bmod{r}})$; for every odd $i \in \ints{2cr}$, $\mathbf{w}_i = (0, 1, q_{((i - 1) / 2) \bmod{r}})$. Note that the sequence of vectors $\{\mathbf{w}_i\}_i$ has period $2r$. 

Then for every $i \in \ints{2cr}$, and every point $\mathbf{x} \in [4 cr] \times [4 cr] \times [4 cr^2]$, we add an edge from $(i, \mathbf{x})$ to $(i + 1, \mathbf{x})$, and an edge from $(i, \mathbf{x})$ to $(i + 1, \mathbf{x} + \mathbf{w}_i)$ if $\mathbf{x} + \mathbf{w}_i$ stays inside the grid. 

\paragraph{Critical Paths $P$. } We define a set of paths in the graph, which has certain desired properties. Eventually, we will argue that at least one of these paths will have long distances between its endpoints even if we add $O(m)$ shortcut edges to the graph. 

For every $\mathbf{x} \in [2 cr] \times [2 cr] \times [2 cr^2]$ and every $(d_1, d_2) \in [r] \times [r]$, we define a critical path as follows. The start node of the path is $(0, \mathbf{x})$, and the path travels from layer $0$ to layer $2cr$. Between layer $i$ and layer $i+1$, there are two potential edges for this path to travel, one corresponding to $\mathbf{w}_i$, and one corresponding to $\mathbf{0} = (0, 0, 0)$. The path takes the edge corresponding to $\mathbf{w}_i$ if $\mathbf{w}_i \in (\{1\} \times \{0\} \times \ints{d_1}) \cup (\{0\} \times \{1\} \times \ints{d_2})$ (i.e., if $\mathbf{w}_i$ is in the form of $(1, 0, z)$, then we require $z < d_1$; otherwise, $\mathbf{w}_i$ is in the form of $(0, 1, z)$, and we require $z < d_2$), otherwise, the path takes the edge corresponding to $\mathbf{0}$. It is not difficult to verify that the critical paths defined this way do not fall out of the grid. 

\begin{lemma}
\label{lem:shortcut-properties}
The above graph has the following properties:
\begin{enumerate}
    \item \label{item:lem:shortcut-properties:item1} The number of vertices $|V| = \Theta(c^4 r^5)$, the number of edges $|E| = \Theta(c^4 r^5)$, the number of critical paths $|P| = \Theta(c^3 r^6)$. 
    \item \label{item:lem:shortcut-properties:item2}
    Every path $\pi \in P$ is the unique path from its start node to its end node. 
    \item \label{item:lem:shortcut-properties:item3}
    For any path $\sigma$ in the graph of length $g \le 2r$, there are $O((r / g)^2 )$ critical paths in $P$ containing $\sigma$. 
\end{enumerate}
\end{lemma}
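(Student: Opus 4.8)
The three items are handled separately, and essentially all the work is in the third.

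\emph{Item 1} is a direct count. There are $2cr+1$ layers, each with $|[4cr]\times[4cr]\times[4cr^2]| = \Theta(c^3r^4)$ vertices, so $|V| = \Theta(c^4r^5)$; every vertex outside the last layer carries the ``stay'' edge $(i,\mathbf{x})\to(i+1,\mathbf{x})$ plus at most one more, and the stay edges alone number $\Theta(c^4r^5)$, so $|E| = \Theta(c^4r^5)$. The number of critical-path index tuples $(\mathbf{x},d_1,d_2)$ is $2cr\cdot 2cr\cdot 2cr^2\cdot r\cdot r = \Theta(c^3r^6)$, and these give pairwise distinct paths: different $\mathbf{x}$ gives a different start vertex, while if $d_1\neq d_1'$ (say), then since $q$ is a permutation of $\ints{r}$ there is an index $j$ with $\min(d_1,d_1')\le q_j<\max(d_1,d_1')$, and on the even step of the first block with block-index $j$ the two paths take different edges. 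Hence $|P| = \Theta(c^3r^6)$.

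\emph{Item 2.} Partition the $2cr$ steps into $c$ consecutive blocks of $2r$ steps; within each block the even steps carry the vectors $(1,0,q_0),\dots,(1,0,q_{r-1})$ and the odd steps carry $(0,1,q_0),\dots,(0,1,q_{r-1})$. For the critical path $\pi = (\mathbf{x},d_1,d_2)$ one computes the start-to-end displacement $(\Delta_1,\Delta_2,\Delta_3) = \bigl(cd_1,\ cd_2,\ c\binom{d_1}{2}+c\binom{d_2}{2}\bigr)$, using that $\pi$ moves on an even step with block-index $j$ exactly when $q_j<d_1$ and that $q$ is a bijection onto $\ints{r}$. Now let $\pi'$ be any path with the same start and end, moving on the set $A$ of even steps and $B$ of odd steps; matching the first two coordinates forces $|A|=cd_1$ and $|B|=cd_2$. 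Writing $a_k$ for the number of even steps of $A$ in block $k$, the third-coordinate contribution of $A$ inside block $k$ is at least $\binom{a_k}{2}$ (the $a_k$ smallest of the values $0,1,\dots,r-1$), with equality iff those steps are exactly the ones with block-index $j$ satisfying $q_j<a_k$; summing over blocks and using convexity of $x\mapsto\binom{x}{2}$ with $\sum_k a_k=cd_1$ shows the total contribution of $A$ is at least $c\binom{d_1}{2}$, with equality iff $a_k=d_1$ for every $k$ and $\pi'$ agrees with $\pi$ on all even steps. The analogous bound $c\binom{d_2}{2}$ holds for $B$. Since the two contributions must sum to $\Delta_3 = c\binom{d_1}{2}+c\binom{d_2}{2}$, both bounds are tight, forcing $A$ and $B$ to be the move-sets of $\pi$; hence $\pi'=\pi$.

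\emph{Item 3.} Say $\sigma$ spans layers $i_0,\dots,i_0+g$. Since every edge raises the layer index by one, a critical path contains $\sigma$ iff it contains it as a contiguous subpath, which happens iff its move pattern on steps $i_0,\dots,i_0+g-1$ equals that of $\sigma$ \emph{and} it is positioned at $\sigma$'s vertices; given a parameter pair $(d_1,d_2)$ with the matching move pattern, the latter determines $\mathbf{x}$ uniquely (run the now-known move pattern backward from $\sigma$'s first vertex). So it suffices to bound the number of $(d_1,d_2)\in[r]\times[r]$ whose move pattern on the window matches $\sigma$. An even step $2s$ and an odd step $2s+1$ both have block-index $s \bmod r$, and over the window $s$ runs over $\Theta(g)$ consecutive integers; since $g\le 2r$ and $r$ is a power of $2$, these contain a dyadic block-index interval $D$ with $|D| = 2^b = \Theta(g)$ such that for every $j\in D$ the window contains both an even and an odd step of block-index $j$. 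Reducing a dyadic interval modulo the power of two $r$ again yields a dyadic interval of $\ints{r}$, so by the bit-reversal definition of $q$ the set $\{q_j : j\in D\}$ is a full arithmetic progression in $\ints{r}$ with common difference $\delta = r/2^b = \Theta(r/g)$. For $(d_1,d_2)$ to match $\sigma$ on the even steps with block-index in $D$, the set of $j\in D$ on which $\sigma$ moves (among even steps) must equal $\{j\in D : q_j<d_1\}$; this pins $d_1$ to an interval of the progression of length at most $\delta$, or rules out all $d_1$ if $\sigma$'s pattern on $D$ is not of this downward-closed form, and likewise for $d_2$ on the odd steps. Hence there are $O(\delta^2) = O((r/g)^2)$ admissible pairs, and therefore $O((r/g)^2)$ critical paths contain $\sigma$ (for $g = O(1)$ the bound is trivial since $|[r]\times[r]| = r^2$).

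\emph{Main obstacle.} Items 1 and 2 are routine once the convexity-plus-permutation viewpoint is in place. The delicate step is Item 3: one must carefully isolate a dyadic sub-window of block-indices of size $\Theta(g)$ that is entirely ``covered'' by both the even and the odd steps of $\sigma$'s window (handling the two or three boundary steps and the reduction modulo $r$), and then verify that matching $\sigma$ on that sub-window confines each of $d_1$ and $d_2$ to an arithmetic-progression segment of length $\Theta(r/g)$ rather than to all of $[r]$. A minor but necessary point is that the count of parameter pairs genuinely upper-bounds the count of critical paths, i.e.\ that the start vertex $\mathbf{x}$ is determined by $(d_1,d_2)$ and $\sigma$ and simply contributes nothing when it falls outside the allowed box.
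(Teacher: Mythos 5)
Your proposal is correct and follows essentially the same route as the paper: the same displacement computation $(cd_1, cd_2, c\binom{d_1}{2}+c\binom{d_2}{2})$ with the third coordinate minimized uniquely by the critical path for Item 2, and the same dyadic-interval/bit-reversal argument showing the admissible $(d_1,d_2)$ are confined to an $O(r/g)\times O(r/g)$ box for Item 3. Your convexity argument in Item 2 is a clean way of justifying the paper's assertion that no other subset attains the minimum third-coordinate sum, but the underlying idea is identical.
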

\begin{proof}
We prove each of the items below. 
\begin{enumerate}
    \item These  bounds are elementary. 
    \item Fix an arbitrary critical path $\pi$ from $(0, \mathbf{x})$ to $(2cr, \mathbf{y})$, associated with $(d_1, d_2) \in [r] \times [r]$. Each path from $(0, \mathbf{x})$ to $(2cr, \mathbf{y})$ corresponds to a unique subset $S$ of $\ints{2cr}$, i.e., if $i \in S$, the path uses the edge corresponding to $\mathbf{w}_i$ between layer $i$ and layer $i+1$ and otherwise it uses the edge corresponding to $\mathbf{0}$. Thus, it suffices to argue that there is a unique subset $S$ of $\ints{2cr}$ so that $\sum_{i \in S} \mathbf{w}_i = \mathbf{y} - \mathbf{x}$. 

    We have $c$ copies of the vector $(1, 0, z)$ for $z \in \ints{r}$, and $c$ copies of the vector $(0, 1, z)$ for $z \in \ints{r}$. By construction, $\mathbf{y} - \mathbf{x} = (c d_1, c d_2, c \cdot \frac{d_1 (d_1 - 1)}{2} + c \cdot \frac{d_2 (d_2 - 1)}{2})$. In order for any subset of $\{\mathbf{w}_i\}_i$ to be equal to $\mathbf{y} - \mathbf{x}$, we must pick exactly $c d_1$ vectors of the form $(1, 0, z)$, and $c d_2$ vectors of the form $(0, 1, z)$. Also, $c \cdot \frac{d_1 (d_1 - 1)}{2} + c \cdot \frac{d_2 (d_2 - 1)}{2}$ is the smallest sum of obtainable by picking such a subset of vectors, and it is simple to see that no other subset can achieve the same sum. 

    \item First of all, it is simple to see that two critical paths with the same associated values of $(d_1, d_2)$ do not share any vertices. Therefore, it suffices to bound the number of distinct $(d_1, d_2)$ so that there is a critical path associated with $(d_1, d_2)$ that uses $\sigma$. We additionally assume $g \ge 2$, as otherwise, the claimed bound is simply $O(r^2)$, which is the total number of distinct $(d_1, d_2)$. 

    Now consider a path $\pi$ associated with value $(d_1, d_2)$ and another path $\pi'$ associated with value $(d_1', d_2')$ where $|d_1 - d_1'| > 16 r / g$. We will show that $\pi$ and $\pi'$ cannot both contain $\sigma$ as a subpath. First, because $\sigma$ has length $g$, it must travel at least $(g-1)/2$ layers where the corresponding vector $w_i$ has the form $(1, 0, z)$ for $z \in \ints{r}$, where $z$ can take value $q_{j \bmod {r}}$ for $j$ in an interval $\mathcal{I}$ of length at least $(g-1)/2$. This interval must contain a dyadic interval $\mathcal{I}'$ of length at least $(g-1)/8$ (dyadic intervals are intervals of the form $[a \cdot 2^k, (a+1) \cdot 2^k)$ for nonnegative integers $a, k$). Say $\mathcal{I}' = [a \cdot 2^k, (a+1) \cdot 2^k)$ for some nonnegative integers $a, k$. Then $\{q_{i \bmod{r}}\}_{i \in \mathcal{I}'}$ contains all numbers from $\ints{r}$ that is congruent to $\hat{a}$ modulo $\frac{r}{2^k}$, where $\hat{a}$ is the $\log(\frac{r}{2^k})$-bit reversal of $a$. Now, suppose for the sake of contradiction that $|d_1 - d'_1| > 16r / g$. Then because $2^k = |\mathcal{I}'| \ge (g-1) / 8 \ge g/16$, we must have $|d_1 - d'_1| > \frac{r}{2^k}$. Then there must exist some $z = q_i$ for some $i \in \mathcal{I}'$ such that $z \in [\min\{d_1, d_1'\}, \max\{d_1, d_1'\})$. By construction, exactly one of $\pi$ and $\pi'$ uses the edge corresponding to $\mathbf{w}_i = (1, 0, z)$ and the other uses the edge corresponding to $\mathbf{0}$, so they cannot both contain $\sigma$ as a subpath. This is a contradiction, so we must have $|d_1 - d'_1| \le 16r / g$. 

    Similarly, we can show that $|d_2 - d'_2| \le 16r / g$. Thus, if a path $\pi'$ with value $(d_1', d_2')$ needs to contain $\sigma$ together with the path $\pi$ with value $(d_1, d_2)$, we must have $|d_1 - d'_1| \le 16r / g$ and $|d_2 - d'_2| \le 16r / g$, so the number of such distinct values $(d_1', d_2')$ is $O((r/g)^2)$. By previous discussions, this imply that the number of critical paths containing $\sigma$ is $O((r/g)^2)$. 
\end{enumerate}
\end{proof}

\begin{proof}[Proof of \cref{thm:shortcut}]
    We consider the graph $G=(V, E)$ with the critical paths $P$ constructed above. 

    We let $E_\Delta$ be the set of shortcut edges in $G$ that connects every pair of vertices $(u, v)$ where $u$ can reach $v$ and $u$ and $v$ are at most $\Delta$ layers away for some parameter $\Delta$.
    For any critical path $\pi \in P$, we use $s_\pi$ to denote its start node and $t_\pi$ to denote its end node. 

    We define a potential function with respect to any shortcut set $H$:
    \[
    \Phi(H) := \sum_{\pi \in P} \dist_{G \cup E_\Delta \cup H}(s_\pi, t_\pi). 
    \]
    Initially, we clearly have $\Phi(\emptyset) = |P| \cdot cr / \Delta = \Theta(c^4 r^7 / \Delta)$. 

    Now we upper bound $\Phi(H) - \Phi(H \cup \{e\})$, i.e., the potential drop after we add one additional edge to the shortcut set. There are several cases to consider:
    \begin{itemize}
        \item $e$ connects two vertices that are at most $\Delta$ layers apart. In this case, $E \cup E_\Delta \cup H = E \cup E_\Delta \cup H \cup \{e\}$, as $e \in E_\Delta$, so $\Phi(H) - \Phi(H \cup \{e\}) = 0$. 
        \item $e$ connects two vertices $u, v$ that are $g > \Delta$ layers apart. Consider any path $\pi \in P$ that contains both $u$ and $v$. It is not difficult to see that $\dist_{G \cup E_\Delta \cup H}(s_\pi, t_\pi) - \dist_{G \cup E_\Delta \cup H \cup \{e\}}(s_\pi, t_\pi) \le g/\Delta$. As there is a unique path in $G$ from $s_\pi$ to $t_\pi$ by \cref{lem:shortcut-properties} \cref{item:lem:shortcut-properties:item2}, there must also be a unique path $\sigma$ in $G$ from $u$ to $v$. Thus, by \cref{lem:shortcut-properties} \cref{item:lem:shortcut-properties:item3}, there are $O(1 + (r/g)^2)$ critical paths in $P$ containing $\sigma$, which are exactly the critical paths that contain both $u$ and $v$. Therefore, the total potential drop in this case is 
        \[
        O(1 + (r/g)^2) \cdot \frac{g}{\Delta} = O\left(\frac{g}{\Delta} + \frac{r^2}{g \Delta}\right),
        \]
        which is always upper bounded by
        \[
        O\left(\frac{cr}{\Delta} + \frac{r^2}{\Delta^2}\right)
        \]
        for $g \in (\Delta, 2cr]$. 
    \end{itemize}
    Therefore, if $|H| = \eps_1 \min\left\{ c^3 r^6, c^4 r^5 \Delta \right\}$ for some sufficiently small constant $\eps_1 > 0$, total potential drop $\Phi(\emptyset) - \Phi(H)$ would be bounded by $\eps_2 c^4 r^7 / \Delta$ for some small constant $\eps_2$. Furthermore, the parameter $\eps_2$ can be made arbitrarily small depending on $\eps_1$, and when it is sufficiently small, we will have $\Phi(H) = \Theta(c^4 r^7 / \Delta)$. In this case, by averaging, there must exist a path $\pi \in P$ where $\dist_{G \cup H}(s_\pi, t_\pi) \ge \dist_{G \cup E_\Delta \cup H}(s_\pi, t_\pi) = \Theta(cr / \Delta)$, i.e., the diameter of the graph is at least $\Theta(cr / \Delta)$ after adding a shortcut set of size $|H|=\eps_1 \min\left\{ c^3 r^6, c^4 r^5 \Delta \right\}$. 

    By setting $c = r / B$ for some large constant $B$, we get that $|V|, |E| = \Theta(r^9 / B^4)$ by \cref{lem:shortcut-properties} \cref{item:lem:shortcut-properties:item1}, and the number of edges allowed in the shortcut set is $\eps_1 \min\left\{r^9 / B^3, r^9 \Delta  / B^4 \right\}$, which can be any constant factor larger than $|E|$ by setting $B$ and $\Delta$ big enough. The diameter bound will be $\Theta(r^2 / B\Delta)$, which is $\Theta(|V|^{2/9})$ for constant values of $B$ and $\Delta$. 

\end{proof}

\section{\texorpdfstring{$O(m)$}{O(m)}-size Exact Hopset Lower Bound in Unweighted Graphs}
\label{sec:hopset}

The following construction is similar to the construction in \cref{sec:shortcut}, and the main difference is that the construction in \cref{sec:shortcut} is a layered graph, while the construction in this section is an unlayered graph. In comparison, we are able to utilize an unlayered graph construction in this section because it suffices to ensure unique shortest paths have good structures in a construction for hopset lower bound, while the construction in \cref{sec:shortcut} requires unique paths to have good structures. The unlayered construction complicates the analysis, but leads to an improved lower bound. 

\begin{theorem}
\label{thm:hopset}
There exist $n$-node undirected unweighted graphs $G$ with $|E(G)| = m$ edges, such that any $O(m)$-size exact hopset $H$ of $G$ reduces the hopbound of $G \cup H$ to $\Omega(n^{2/7})$.
\end{theorem}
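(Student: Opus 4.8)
The plan is to adapt the layered construction of \cref{sec:shortcut} into an \emph{unlayered} graph and then re-run the potential-function argument of \cref{thm:shortcut} essentially verbatim. The crucial point, as in the technical overview, is that a hopset lower bound only needs each critical path to be a \emph{unique shortest} path between its endpoints rather than the \emph{unique} path; this relaxation lets one collapse the $\Theta(cr)$ layers of \cref{sec:shortcut} into a single constant-dimensional integer grid, saving a factor $\Theta(cr)$ in the vertex count. Concretely, I would build an undirected unweighted graph $G$ on a grid whose coordinates simultaneously encode ``position along a critical path'' (replacing the layer index) together with the data that previously lived inside the $[4cr]\times[4cr]\times[4cr^2]$ layers, with edge offsets governed by the same bit-reversal permutation $q$ on $\ints{r}$ used in \cref{sec:shortcut}; arranging that each vertex is incident to $O(1)$ edges gives $|V| = |E| = \Theta(c^3 r^4)$ (versus $\Theta(c^4 r^5)$ in the layered graph), while the critical paths, still indexed by a start vertex and a pair $(d_1,d_2)\in[r]\times[r]$, number $|P| = \Theta(c^3 r^6)$ and have length $\Theta(c(d_1+d_2)) = O(cr)$.

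The heart of the proof is the unlayered analogue of \cref{lem:shortcut-properties}. The size bounds are routine. For the new property --- every critical path $\pi$ is the \emph{unique shortest} $s_\pi\leadsto t_\pi$ path in $G$ --- I would argue in two steps: (a) since $G$ is unweighted, the two ``incrementing'' coordinates force every $s_\pi\leadsto t_\pi$ walk to use a prescribed number of edges of each type, so any walk attaining the minimum length is monotone and in particular traverses no edge backward, whence undirectedness introduces no new candidate paths; (b) among monotone walks, the displacement in the remaining ``accumulator'' coordinate is attained \emph{uniquely} by $\pi$, via exactly the minimality-of-$q$ argument used to prove \cref{lem:shortcut-properties} \cref{item:lem:shortcut-properties:item2}. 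The intersection bound --- a walk $\sigma$ of hop-length $g\le 2r$ lies on at most $O((r/g)^2)$ critical paths --- is the same dyadic-interval analysis as \cref{lem:shortcut-properties} \cref{item:lem:shortcut-properties:item3}. One also records the standard consequence that every subpath of a critical path is itself a unique shortest path, so that any exact ($\dist_G$-weight) $s_\pi\leadsto t_\pi$ path in $G\cup H$, after replacing each hopset edge by a shortest path of $G$, unrolls to exactly $\pi$; hence only chords of $\pi$ can shorten its hopdistance.

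With these facts the argument mirrors the proof of \cref{thm:shortcut}, reading ``hopbound'' for ``diameter''. Fix $\Delta$, let $E_\Delta$ be the exact hopset edges between all pairs at $G$-distance at most $\Delta$, and set $\Phi(H) = \sum_{\pi\in P}\hopdist_{G\cup E_\Delta\cup H}(s_\pi,t_\pi)$, so $\Phi(\emptyset) = \Theta(|P|\cdot cr/\Delta)$. A single hopset edge joining a pair at $G$-distance $g>\Delta$ is a chord of $O(1+(r/g)^2)$ critical paths and shortens each of their hopdistances by at most $g/\Delta$, hence decreases $\Phi$ by $O(cr/\Delta + r^2/\Delta^2)$; consequently a hopset of size $\eps_1\cdot\Theta(\min\{c^3 r^6,\, c^4 r^5\Delta\})$ keeps $\Phi(H) = \Theta(|P|\cdot cr/\Delta)$, and by averaging some critical pair still has hopdistance $\Omega(cr/\Delta)$ in $G\cup H$. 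Setting $c = r/B$ and $\Delta$ to sufficiently large constants then gives $n = |V| = \Theta(r^7)$ and $m = |E| = \Theta(r^7)$, an admissible hopset size of $\Theta(\min\{c^3 r^6, c^4 r^5\Delta\}) = \Theta(r^9) \gg m$, and a surviving hopbound of $\Omega(cr/\Delta) = \Omega(r^2) = \Omega(n^{2/7})$.

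I expect the main obstacle to be the structural lemma, specifically engineering the unlayered grid and edge offsets so that (i) the ``layer'' is genuinely recoverable from a vertex's coordinates along every monotone walk, (ii) the accumulator-coordinate minimality argument still singles out $\pi$ uniquely even though there are no longer ``do-nothing'' edges to pad with, and (iii) undirectedness produces no equal-length shortcut walk. Once \cref{lem:shortcut-properties} has been reproved in the unlayered setting, the intersection bound and the potential computation transfer from \cref{sec:shortcut} with only cosmetic changes.
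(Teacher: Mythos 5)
Your high-level plan matches the paper's: build an undirected, unweighted analogue of the \cref{sec:shortcut} construction, prove an analogue of \cref{lem:shortcut-properties} with ``unique path'' weakened to ``unique shortest path,'' and rerun the $E_\Delta$/potential-function argument. However, the step you yourself flag as ``the main obstacle'' --- actually building the graph and proving that each critical path is the unique shortest path --- is the entire technical content of the paper's proof, and your sketch of it does not work as stated. In an unweighted undirected graph there is no way to make the $q$-offset move ``cost more'' than the do-nothing move if both are single edges, so your step (a) (``any walk attaining the minimum length is monotone, and the prescribed number of edges of each type is forced'') has no justification. The paper resolves this by keeping the layer coordinate $\ints{2cr+1}$ (the graph is \emph{not} collapsed to a single grid --- the saving of the factor $cr$ comes from shrinking each layer from a 3D grid to a 2D grid), encoding the two counters through edge \emph{multiplicities}: the offset moves are subdivided into chains of two edges while the do-nothing move is a single edge, the admissible parameters are restricted to $r/2 < d_2 < d_1$, and uniqueness of the shortest path is proved by pairing forward/backward edge uses and showing that the critical path is the unique walk maximizing the dot product with a carefully chosen functional $\delta=(1,\,1-\tfrac{d_2-0.5}{d_1-0.5},\,\tfrac{1}{d_1-0.5})$ per unit cost. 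None of these devices appear in your proposal, and without something playing their role the uniqueness claim is unsupported.

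There is also a concrete numerical inconsistency: you claim $|V|=|E|=\Theta(c^3r^4)$ together with $|P|=\Theta(c^3r^6)$ critical paths of length $\Theta(cr)$ satisfying the intersection bound. Taking $g=1$ in your own property (3), each edge lies on $O(r^2)$ critical paths, so the total length $\sum_{\pi\in P}|\pi| = \Theta(|P|\cdot cr) = \Theta(c^4r^7)$ must be at most $|E|\cdot O(r^2)=O(c^3r^6)$, which forces $cr=O(1)$. The correct count (as in the paper) is $|P|=\Theta(c^2r^5)$: only vertices of the first layer serve as start nodes, since translates of a path with the same $(d_1,d_2)$ starting in later layers would overlap and destroy property (3). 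Your budget of $\eps_1\min\{c^3r^6,c^4r^5\Delta\}=\Theta(r^9)$ hopset edges is therefore unattainable; the correct budget is $\eps_1\min\{c^2r^5,c^3r^4\Delta\}=\Theta(r^7)$, which is still a large constant factor above $m$ and suffices for the $O(m)$-size statement, so the final bound $\Omega(n^{2/7})$ survives the correction --- but as written the accounting is wrong and the core structural lemma is missing.
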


Similar as before, there are two parameters $r$ and $c$, and we assume $r$ is a power of $2$. 

\paragraph{Vertex Set $V$. } The important vertices in the graph are indexed by $\ints{2cr + 1} \times [4cr] \times [3cr^2]$. We will say that a vertex $(i, j, k)$ is on the $i$-th layer. There might also be other (unimportant) vertices defined in the following.

\paragraph{Edge Set $E$. } Recall $q$ is a permutation on $\ints{r}$. 
For $i \in \ints{2cr}, j \in [4cr], k \in [3cr^2]$, we add the following edges (if some edges fall out of the grid, then we do not add them):
\begin{itemize}
    \item An edge between $(i, j, k)$ and $(i + 1, j, k)$;
    \item A chain of two edges between $(i, j, k)$ and $(i + 1, j, k + q_{(i/2) \bmod{r}})$  if $i$ is even (this creates a new vertex). 
    \item A chain of two edges between $(i, j, k)$ and $(i + 1, j + 1, k + q_{((i-1)/2)\bmod{r}})$  if $i$ is odd (this creates a new vertex). 
\end{itemize}

\paragraph{Critical Paths $P$. } Each critical path is associated with a start node $(0, j, k)$ for some $j \in [c r]$ and $k \in [cr^2]$, together with a pair of values $(d_1, d_2) \in [r] \times [r]$ where $r / 2 < d_2 < d_1$. When at a vertex $(i, j', k')$ for some $j', k'$ and even $i$, the path will travel to $(i + 1, j', k' + q_{(i / 2) \bmod{r}})$ (using a chain of two edges) if $q_{(i / 2) \bmod{r}} \ge d_1$; otherwise, it will travel to $(i + 1,j', k')$ (using one edge). When at a vertex $(i, j', k')$ for some $j', k'$ and odd $i$, the path will travel to $(i + 1, j' + 1, k' + q_{((i - 1) / 2) \bmod{r}})$ (using a chain of two edges) if $q_{((i - 1) / 2) \bmod{r}} \ge d_2$; otherwise, it will travel to $(i + 1, j', k')$ (using one edge). The path stops when it reaches some vertex on the $(2cr)$-th layer. It is not difficult to check that the above rules uniquely determine the path.

\begin{lemma}
\label{lem:hopset-properties}
The above graph has the following properties:
\begin{enumerate}
    \item \label{item:lem:hopset-properties:item1} The number of vertices $|V| = \Theta(c^3 r^4)$, the number of edges $|E| = \Theta(c^3 r^4)$, the number of critical paths $|P| = \Theta(c^2 r^5)$. 
    \item \label{item:lem:hopset-properties:item2}
    Every path $\pi \in P$ is the unique shortest path from its start node to its end node. 
    \item \label{item:lem:hopset-properties:item3}
    For any shortest path $\sigma$ in the graph of length $g \le 2r$, there are $O((r / g)^2 )$ critical paths in $P$ containing $\sigma$. 
\end{enumerate}
\end{lemma}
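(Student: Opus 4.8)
The plan is to mirror the proof of \cref{lem:shortcut-properties}, with the extra care needed because $G$ is an unweighted \emph{undirected} graph in which a ``diagonal'' move is realized by a chain of two edges (so it has effective cost $2$) and a ``straight'' move is a single edge (cost $1$). For \textbf{Item~\ref{item:lem:hopset-properties:item1}} the counts are elementary: the important vertices, indexed by $\ints{2cr+1}\times[4cr]\times[3cr^2]$, number $\Theta(c^3r^4)$; each important vertex on a layer $i<2cr$ creates at most one chain midpoint, so there are $\Theta(c^3r^4)$ unimportant vertices and $\Theta(c^3r^4)$ edges as well. The critical paths are indexed by a start vertex $(0,j,k)$ with $j\in[cr],k\in[cr^2]$ (that is $\Theta(c^2r^3)$ choices) together with a pair $(d_1,d_2)$ with $r/2<d_2<d_1\le r$ ($\Theta(r^2)$ choices), and distinct indices yield distinct paths, so $|P|=\Theta(c^2r^5)$.

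For \textbf{Item~\ref{item:lem:hopset-properties:item2}}, I would first contract every chain to a single weight-$2$ edge; this does not change shortest paths, since a chain midpoint has degree $2$. In the resulting graph every edge joins consecutive layers, so along any path the number of forward layer-crossings minus the number of backward crossings is fixed; writing $T$ for the number of weight-$1$ (straight) edges and $D$ for the number of weight-$2$ (diagonal) edges used by a path from $s_\pi$ to $t_\pi$, the path crosses $T+D\ge 2cr$ layers and has length $T+2D\ge 2cr+D$. Since the $j$-displacement between $s_\pi$ and $t_\pi$ equals the net number of odd-layer diagonal moves, and the construction forces this to be $c(r-d_2)$, every such path uses at least $c(r-d_2)$ odd-layer diagonal edges; a matching argument on the $k$-displacement, using that over the $c$ periods each value $z\in\ints{r}$ occurs exactly $c$ times among the $q$-values, shows it must also use at least $c(r-d_1)$ even-layer diagonal edges. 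Hence $D\ge c(r-d_1)+c(r-d_2)$ and the length is at least $\mathrm{cost}(\pi)=2cr+c(r-d_1)+c(r-d_2)$; equality forces $T=2cr-D$ (the path is forward-monotone), forces the odd-layer $q$-values used to be \emph{exactly} all $z\ge d_2$ and the even-layer ones to be exactly all $z\ge d_1$ (by the same rearrangement/exchange argument that pins down the extremal sums in \cref{sec:shortcut}), and therefore forces the path to move diagonally at an odd layer precisely when $q\ge d_2$ and at an even layer precisely when $q\ge d_1$ — i.e.\ to equal $\pi$. (The range $r/2<d_2<d_1\le r$ is only used, as in \cref{sec:shortcut}, to keep all intermediate coordinates inside the grid.)

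For \textbf{Item~\ref{item:lem:hopset-properties:item3}} I would argue exactly as in \cref{lem:shortcut-properties} \cref{item:lem:shortcut-properties:item3}. Two critical paths with the same $(d_1,d_2)$ are coordinate translates of one another, hence vertex-disjoint, and given $(d_1,d_2)$ the critical path through any fixed important vertex is unique (reconstruct the move sequence from $(d_1,d_2)$ and extend backwards to the start vertex); so it suffices to bound the number of pairs $(d_1,d_2)$ admitting a critical path that contains $\sigma$. If $g$ is below a constant the trivial bound $O(r^2)$ suffices, so assume $g$ large. A $\sigma$ contained in some critical path is forward-monotone, so it spans $\ell$ consecutive layers with $g/2\le\ell\le g$, and at least $\lfloor\ell/2\rfloor\ge g/8$ of its moves start at even layers; the corresponding indices $(i/2)\bmod r$ form an interval of $\ge g/8$ consecutive residues in $\ints{r}$, which contains a dyadic interval $[a2^k,(a+1)2^k)$ with $2^k\ge g/16$. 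On that dyadic interval $q$ takes exactly the residues of an arithmetic progression of common difference $r/2^k\le 16r/g$ spanning all of $\ints{r}$; so if $\sigma$ were contained in critical paths with even-layer parameters $d_1<d_1'$ and $d_1'-d_1>16r/g$, some $q_j$ in that progression would satisfy $d_1\le q_j<d_1'$, making $\sigma$ diagonal at that even layer under the first path and straight under the second — a contradiction. Thus $|d_1-d_1'|=O(r/g)$, and symmetrically (using odd-layer moves) $|d_2-d_2'|=O(r/g)$, so the admissible pairs lie in an $O(r/g)\times O(r/g)$ box and there are $O((r/g)^2)$ critical paths containing $\sigma$.

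The main obstacle is \textbf{Item~\ref{item:lem:hopset-properties:item2}}. Unlike the shortcut-set construction of \cref{sec:shortcut}, where one only needs unique \emph{paths} and a single sign-pattern argument on the displacement $\mathbf y-\mathbf x$, here we must compare \emph{lengths} against all competing paths in an undirected graph; the delicate point is to combine (i) the reduction to forward-monotone paths, (ii) the fact that the $j$-displacement already forces the number of odd-layer diagonal moves, and (iii) the multiplicity-$c$ multiset rearrangement argument that simultaneously forces the number of even-layer diagonal moves and the exact set of $q$-values used — and to check equality holds only for the critical path. The remaining items are bookkeeping essentially identical to \cref{sec:shortcut}.
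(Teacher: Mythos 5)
Your Items 1 and 3 match the paper's proof (elementary counting, and the translate-disjointness plus dyadic-interval argument carried over from \cref{lem:shortcut-properties}). The problem is Item 2, which you correctly identify as the crux but do not actually prove. Your chain is: length $=T+2D\ge 2cr+D$ and $D\ge c(r-d_1)+c(r-d_2)$. The second inequality is justified only by displacement counting, and displacement counting pins down the multiset of $q$-values available to the path only when the path crosses each layer boundary exactly once, i.e.\ when it is already forward-monotone. For a backtracking path, the $j$-displacement does force at least $c(r-d_2)$ forward odd-layer diagonals, but the ``matching argument on the $k$-displacement'' fails: a path may cross a single high-$q$ boundary forward-diagonally several times (with backward crossings in between), so the $k$-contribution of its diagonal moves is not bounded by the sum of the largest distinct $q$-values, and one cannot conclude that at least $c(r-d_1)$ even-layer diagonals are needed. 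Since $d_1$ may be close to $r$ while $d_2$ is close to $r/2$, the slack here is of order $c(r-d_2)$, which is not absorbed by the crude bound $T+D\ge 2cr$; ruling such paths out requires charging each reuse of a boundary against the extra backward crossing it forces, and your write-up asserts rather than performs this accounting (``equality forces $T=2cr-D$'' presumes the very inequality that is in question).

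The paper closes exactly this hole with a different mechanism: after contracting chains to weight-$2$ edges, it pairs every backward crossing of a layer boundary with a forward crossing, leaving one unpaired forward edge per boundary, and then compares each resulting edge set against the fixed linear functional $\delta=\bigl(1,\,1-\tfrac{d_2-0.5}{d_1-0.5},\,\tfrac{1}{d_1-0.5}\bigr)$. The total dot product $\delta\cdot(\mathbf{y}-\mathbf{x})$ is the same for every $s_\pi\leadsto t_\pi$ path, every forward--backward pair has average dot product per unit cost strictly less than $1$, a straight forward edge has exactly $1$, and a diagonal forward edge exceeds $1$ precisely when the critical path would take it; minimality and uniqueness then follow simultaneously for all competing paths, monotone or not, in one exchange argument. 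If you want to keep your displacement-based route, you must first prove the reduction to monotone paths (e.g.\ by a charging argument of the above type); as written, Item 2 has a genuine gap.
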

\begin{proof}
    We consider the three items separately:
    \begin{itemize}
        \item These bounds follow straightforwardly from the construction. 
        \item Fix any $\pi \in P$ with start node $s_\pi = (0, j, k) $, end node $t_\pi$ and associated values $(d_1, d_2)$. We will argue that there is no other path from $s_\pi$ to $t_\pi$ using fewer or same number of edges. 

        First, if a path only uses one of the two edges in the chains of two edges we added, and then travels back, then it can never be a shortest path. Thus, for the purpose of this proof, we can view the chains of two edges as single edges with cost $2$. 

        For any path travelling from $s_\pi$ to $t_\pi$ and for any fixed $i \in \ints{2cr}$, the number of times it crosses from layer $i$ (the set of vertices whose first coordinate equal $i$) to layer $i + 1$ must be exactly one more than the number of times it crosses from layer $i+1$ back to layer $i$. As a result, we can pair up each edge going from layer $i$ to layer $i+1$ with an edge going from layer $i + 1$ to layer $i$, so that after the pairing up, we are left with one edge going from layer $i$ to layer $i+1$. Each of these set of edges (either a pair of two edges or a single edge left at the end) changes the coordinate of the current vertex, i.e., add a vector to the coordinate of the vertex, and we consider the average (over the total costs of edges in the set) dot product between the vector and $\delta = (1, 1 - \frac{d_2 - 0.5}{d_1 - 0.5}, \frac{1}{d_1 - 0.5})$. 

        \begin{enumerate}
            \item A single forward edge:
            \begin{enumerate}
                \item $(1, 0, 0)$: the dot product between it and $\delta$ is $1$. 
                \item $(1, 0, x)$: the dot product between it and $\delta$ is $1 + \frac{x}{d_1 - 0.5}$, and the average per cost is $(1 + \frac{x}{d_1 - 0.5})/2$, which is greater than $1$ when $x \ge d_1$, and is less than $1$ when $x < d_1$. 
                \item $(1, 1, y)$: the dot product between it and $\delta$ is $2 -\frac{d_2-0.5}{d_1 - 0.5} + \frac{y}{d_1 - 0.5}$, and the average per cost is $(2 + \frac{y-d_2 + 0.5}{d_1 - 0.5})/2$, which is greater than $1$ when $y \ge d_2$, and is less than $1$ when $y < d_2$. 
            \end{enumerate}
            \item One arbitrary forward edge and one arbitrary backward edge: 
            \begin{enumerate}
                \item these two edges are the negations of each other: in this case, the sum of the two vectors corresponding to the two edges is $0$, and thus the average dot product between it and $\delta$ per cost is $0$, which is less than $1$. 
                \item $(1, 0, 0) - (1, 0, x)$: in this case, the total cost of the two edges is $3$, so the average dot product between it and $\delta$ per cost is $(-\frac{x}{d_1 - 0.5}) / 3 < 0 < 1$. 
                \item $(1, 0, x) - (1, 0, 0)$: in this case, the total cost of the two edges is $3$, so the average dot product between it and $\delta$ per cost is $(\frac{x}{d_1 - 0.5}) / 3$. Recall $d_1 > r / 2$, so $d_1 - 0.5 \ge r / 2$, which implies $(\frac{x}{d_1 - 0.5}) / 3 \le (\frac{r}{r/2})/3 < 1$.  
                \item $(1, 0, 0) - (1, 1, y)$: in this case, the total cost of the two edges is $3$, so the average dot product between it and $\delta$ per cost is $(-1 + \frac{d_2 - 0.5}{d_1 - 0.5}-\frac{y}{d_1 - 0.5}) / 3 < 0 < 1$ (recall $d_2 < d_1$). 
                \item $(1, 1, y) - (1, 0, 0)$: in this case, the total cost of the two edges is $3$, so the average dot product between it and $\delta$ per cost is $(\frac{y+ d_2 - 0.5}{d_1 - 0.5}) / 3$. Recall $d_1 > r/ 2$ so $d_1 - 0.5 \ge r/2$ and $d_2 < d_1$, so $(\frac{y+ d_2 - 0.5}{d_1 - 0.5}) / 3 < (\frac{y}{d_1 - 0.5} + 1) / 3 \le (\frac{r}{r / 2} + 1) / 3 \le 1$. 
            \end{enumerate}
        \end{enumerate}
    Summarizing the above, the only cases where the average dot product is greater than $1$ is $(1, 0, x)$ for $x \ge d_1$ and $(1, 1, y)$ for $y \ge d_2$ (corresponding to a chain of two edges). The only cases where the average dot product is equal to $1$ is $(1, 0, 0)$ (corresponding to a single edge). All other cases have average dot product less than $1$. The critical path we defined from $s_\pi$ to $t_\pi$ exactly use the edges (or set of edges) whose average dot product per cost is the largest. By construction, if there is an alternative path from $s_\pi$ to $t_\pi$, then it must use a different collection of edge sets. Furthermore, the total dot product between $\delta$ and all the vectors corresponding to the edges on the alternative path must be equal to the total dot product between $\delta$ and all the vectors corresponding to the edges on the critical path. As the critical path uses edges that maximize the average dot product per cost, it implies that the alternative path must have a larger cost. Thus, we conclude that the critical path is the unique shortest path between $s_\pi$ and $t_\pi$. 
    \item First, if two critical paths share the same values $(d_1, d_2)$ and start from distinct vertices on the first layer, they will be disjoint. Thus, it suffices to bound the number of distinct $(d_1, d_2)$, so that there could potentially be a critical path with value $(d_1, d_2)$ that contains $\sigma$. 

    Furthermore, for even $i$, a critical path with values $(d_1, d_2)$ for $d_1 \ge q_{(i / 2) \bmod{r}}$ and a critical path with values $(d_1', d_2')$ for $d_1' < q_{(i / 2) \bmod{r}}$ cannot share the same edge between layer $i$ and layer $i+1$. Similarly, for odd $i$, a critical path with values $(d_1, d_2)$ for $d_2 \ge q_{((i - 1) / 2) \bmod{r}}$ and a critical path with values $(d_1', d_2')$ for $d_2' < q_{((i - 1) / 2) \bmod{r}}$ cannot share the same edge between layer $i$ and layer $i+1$. 

    The rest of the proof follows essentially in the same way as the proof of \cref{lem:shortcut-properties} \cref{item:lem:shortcut-properties:item3}. 
    \end{itemize}
\end{proof}

\begin{proof}[Proof of \cref{thm:hopset}]
    The overall proof given \cref{lem:hopset-properties} is similar to the proof of \cref{thm:shortcut}, so we omit some details and highlight the differences. 

    We consider the graph $G=(V, E)$ and the critical paths $P$ constructed above. 
    
    Let $E_\Delta$ be the set of shortcut edges in $G$ that connects every pair of vertices $(u, v)$ where $u$ can reach $v$ using at most $\Delta$ edges, and the weight of each shortcut edge is the distance from $u$ to $v$. 

    We define a potential function with respect to any hopset $H$:
     \[
    \Phi(H) := \sum_{\pi \in P} \hopdist_{G \cup E_\Delta \cup H}(s_\pi, t_\pi). 
    \]
    Initially, we clearly have $\Phi(\emptyset) = |P| \cdot cr / \Delta = \Theta(c^3 r^6 / \Delta)$. Similar to the proof of \cref{thm:shortcut}, \[\Phi(H) - \Phi(H \cup \{e\}) = 
        O\left(\frac{cr}{\Delta} + \frac{r^2}{\Delta^2}\right).
        \]
    Therefore, if $|H| = \eps_1 \min\{c^2 r^5, c^3 r^4 \Delta\}$ for some sufficiently small constant $\eps_1 > 0$, $\Phi(\emptyset) - \Phi(H) \le \eps_2 c^3 r^6 / \Delta$ for some small constant $\eps_2$, which can be made arbitrarily small depending on $\eps_1$. As a result, $\Phi(H) = \Theta(c^3 r^6 / \Delta)$, which will imply that there exists a critical path $\pi$ where the minimum number of hops of any shortest path from $s_\pi$ to $t_\pi$ is $\Omega(\Phi(H) / |P|) = \Omega(cr / \Delta)$. 

    We set $c = r / B$ for some large constant $B$. By \cref{lem:hopset-properties} \cref{item:lem:hopset-properties:item1}, $|V|, |E| = \Theta(r^7 / B^3)$. The number of edges allowed in the hopset is $\eps_1 \min\{c^2 r^5, c^3 r^4 \Delta\} = \eps_1 \min\{c^7 / B^2, c^7 \Delta / B^3\}$, which can be any constant factor larger than $|E|$ by setting $B$ and $\Delta$ large enough. The hop bound is $\Omega(cr / \Delta) = \Omega(r^2 / B\Delta)$, which is $\Omega(|V|^{2/7})$ for constant values of $B$ and $\Delta$.     
\end{proof}

\section*{Acknowledgements}
We would like to thank Greg Bodwin for helpful discussions and feedback. 

\bibliographystyle{alpha}
% \bibliography{ref}
\bibliography{ref}

\appendix

\section{Proof of Assumptions 3 and 4 from Section \ref{sec:reduction}}
\label{app:ass}
Our proof of Assumptions 3 and 4 will follow from an argument similar to the Cleaning Lemma of \cite{BHT22}. Note that the graph obtained after applying Lemma \ref{lem:cleaning} may be weighted. Consequently, this lemma is not useful for our unweighted directed distance preserver upper bound in Section \ref{sec:pres_up}.

\begin{lemma}
Let $G, P$ be an $n$-node directed, weighted graph and  a set of $|P| = p$ demand pairs, such that the minimal distance preserver $H$ of $G, P$ has $$|E(H)| = \Theta(\textnormal{\DDP}(n, p))$$ edges. Let $\pi(\cdot, \cdot)$ be a tiebreaking scheme associated with $G, P$. We can make the following assumptions on $G$, $P$, $H$,   and $\pi(\cdot, \cdot)$ without loss of generality:
\begin{enumerate}[start=3]
    \item Every node $v \in V(G)$ has degree at least  $\deg_H(v) \geq \frac{|E(H)|}{4n}$ in $H$.
    \item Every path $\pi \in \pi(P)$ has at least $|\pi| \geq \frac{|E(H)|}{4p}$ edges. 
\end{enumerate}
\label{lem:cleaning}
\end{lemma}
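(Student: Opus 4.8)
The plan is to prove Assumptions~3 and~4 by an iterative \emph{cleaning} procedure, modelled on the Cleaning Lemma of \cite{BHT22}. Starting from $G,P,H$, which by the Independence Lemma (Lemma~33 of \cite{BHT22}) we may take to satisfy Assumptions~1 and~2 — the paths $\pi(P)$ are unique shortest paths and pairwise edge-disjoint — we maintain a current instance $(G_i,P_i)$ with consistent tiebreaking scheme $\pi_i(\cdot,\cdot)$ and minimal preserver $H_i=\bigcup_{\pi\in\pi_i(P_i)}\pi$, initialised at $i=0$ with $(G,P,H)$; write $n_i=|V(G_i)|$, $p_i=|P_i|$, $E_i=|E(H_i)|$. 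While there is a \emph{light vertex} $v$ with $\deg_{H_i}(v)<\tfrac{E_i}{4n_i}$ or a \emph{short path} $\pi_i(s,t)$ with $|\pi_i(s,t)|<\tfrac{E_i}{4p_i}$, apply one of two moves. \emph{Vertex move:} delete a light vertex $v$; for each demand pair whose path passes through $v$ internally, $\pi_i(s,t)=\cdots u,v,w\cdots$, add the edge $(u,w)$ of weight $w(u,v)+w(v,w)$ and reroute the path through it, and for a pair with $v$ as an endpoint relocate that endpoint to its path-neighbour. Using Assumptions~1 and~2 one checks that every walk in the new graph lifts to a walk of equal weight in the old one, so distances of all (possibly relocated) demand pairs are unchanged and the rerouted paths are still unique shortest paths and pairwise edge-disjoint; since every edge of $H_i$ at $v$ lies on exactly one path, this destroys exactly $\deg_{H_i}(v)$ edges and creates at most half that many, so $E_{i+1}>E_i(1-\tfrac{1}{4n_i})$, $n_{i+1}=n_i-1$, $p_{i+1}\le p_i$. (This move introduces weighted shortcut edges, which is why the lemma is useless for the unweighted construction of Section~\ref{sec:pres_up}.) \emph{Path move:} delete a short demand pair $(s,t)$; by edge-disjointness this removes exactly $|\pi_i(s,t)|<\tfrac{E_i}{4p_i}$ edges, so $E_{i+1}>E_i(1-\tfrac1{4p_i})$, $p_{i+1}=p_i-1$, $n_{i+1}\le n_i$.

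Each move strictly decreases $n_i+p_i$, so the procedure halts after at most $n+p$ steps at an instance $(G',P',H')$ with $\deg_{H'}(v)\ge\tfrac{|E(H')|}{4|V(G')|}$ for all $v$ and $|\pi|\ge\tfrac{|E(H')|}{4|P'|}$ for all $\pi\in\pi'(P')$ — that is, Assumptions~3 and~4 — while Assumptions~1 and~2 survive throughout. Since $|V(G')|\le n$, $|P'|\le p$, and $\textsc{DDP}$ is monotone in both arguments, it then suffices to prove $|E(H')|=\Omega(\textsc{DDP}(n,p))$: any upper bound on the preserver size of clean instances, applied to $(G',P',H')$ and combined with $|E(H')|\le|E(H)|=\Theta(\textsc{DDP}(n,p))$, bounds $\textsc{DDP}(n,p)$, which is exactly the ``without loss of generality'' claim.

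\paragraph{The crux.} The main obstacle is the global accounting showing $|E(H')|=\Omega(|E(H)|)$, i.e.\ that the whole cascade erodes only a constant fraction of the edges; a crude bound — each move costs a $\tfrac{1}{4n_i}$ or $\tfrac1{4p_i}$ fraction and there are up to $n$ or $p$ moves — only yields a polylogarithmic-factor loss. The fix is to track that a vertex move strictly increases the average degree $2E_i/n_i$ and a path move strictly increases the average path length $E_i/p_i$ (both by the edge-disjointness computation above), and to combine these monotonicities with the structural constraints a minimal preserver with edge-disjoint unique shortest paths must obey — in particular $E_i\le n_ip_i$ and the Coppersmith--Elkin branching-event bound limiting how many high-degree vertices can coexist — to certify that the process cannot pass through a state with $E_i$ much below $E_0=|E(H)|$. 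Handling the interleaving of the two move types (a path move can temporarily lower the average degree, and a vertex move the average path length) is the delicate point; once this accounting is carried out, the lemma follows.
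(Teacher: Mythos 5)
Your cleaning procedure is the same one the paper uses (delete light vertices with shortcut-edge repair, delete short paths), but your write-up has a genuine gap: the ``crux'' paragraph identifies the global accounting as the hard step and then does not carry it out. You gesture at average-degree monotonicity, the bound $E_i\le n_ip_i$, and Coppersmith--Elkin branching events, and conclude with ``once this accounting is carried out, the lemma follows'' --- that is a plan, not a proof, and the interleaving issue you flag (a path move can lower the average degree, a vertex move the average path length) is exactly the part left unresolved. Note also that with your normalization the loss from the crude bound is not merely polylogarithmic: $\prod_i(1-\tfrac{1}{4n_i})$ over $n_i=n,n-1,\dots$ behaves like $n^{-1/4}$, so the naive product genuinely fails.

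The difficulty is self-inflicted, though. You renormalize the thresholds at every step to the \emph{current} instance sizes $n_i,p_i$, which is both what creates the hard accounting and more than the lemma asks for: the conclusion reads $\deg_H(v)\ge\frac{|E(H)|}{4n}$ and $|\pi|\ge\frac{|E(H)|}{4p}$ with the \emph{original} $n$ and $p$ in the denominators, and that is all the downstream argument (Lemma~\ref{lem:dir_dag}) uses. The paper therefore keeps the denominators fixed: a vertex is deleted only if $\deg_G(v)<|E(G)|/(4n)$ and a path only if $|\pi|<|E(G)|/(4p)$, with $n,p$ the original counts. Then at most $n$ vertex deletions each remove at most $|E(G)|/(4n)$ edges and at most $p$ path deletions each remove at most $|E(G)|/(4p)$ edges, so at least half of $|E(H)|$ survives and $|E(H')|=\Omega(\textsc{DDP}(n,p))$ falls out in two lines. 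Replacing your thresholds $E_i/(4n_i)$, $E_i/(4p_i)$ by $E_i/(4n)$, $E_i/(4p)$ makes your argument close; as written, it does not.
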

\begin{proof}
Firstly, we can assume that for all $(s, t) \in P$, $\pi(s, t)$ is a unique shortest path in $G$. Then, necessarily,
$$
H = \bigcup_{\pi \in \pi(P)} \pi.
$$
Secondly, we can assume without loss of generality that paths in $\pi(P)$ are pairwise edge-disjoint. Thirdly, we can assume that $G = H$, since deleting edges in $E(G) \setminus E(H)$ from $H$ does not affect the size of $H$. We will ensure Assumptions 3 and 4 hold without loss of generality by a simple procedure to modify $G, P$.  Let $\Pi = \pi(P)$ denote the collection of shortest paths in $G$ corresponding to demand pairs $P$. We modify $G, P,$ and $\Pi$ as follows. 
\begin{enumerate}
    \item Repeat the following two operations until no longer possible:
    \begin{enumerate}
        \item While there exists a path $\pi = \pi(s, t) \in \Pi$ such that $|\pi| < |E(G)|/(4p)$, delete path $\pi$ from $\Pi$. 
        \item While there exists a node $v \in V(G)$ such that $\deg_G(v) < |E(G)| / (4n)$, remove node $v$ from $G$. For each path $\pi \in \Pi$ that contains node $v$, we repair path $\pi$ as follows. If $v$ is the first node or the last node in path $\pi$, then we repair $\pi$ by deleting node $v$ from $\pi$. Otherwise, let $u$ and $w$ be nodes immediately preceding and succeeding $v$ in $\pi$, respectively. We delete node $v$ from $\pi$ and add an edge from $u$ to $w$ with weight $w((u, w)) = \dist_G(u, w)$. 
    \end{enumerate}
    \item We define our modified version of graph $G$ as 
    $$
G = \bigcup_{\pi \in \Pi} \pi.
    $$
    \item We define our modified version of demand pairs $P$ as
    $$
    P = \{(s, t) \in V(G) \times V(G) \mid \text{there exists an $s \leadsto t$ path in $\Pi$} \}.
    $$
\end{enumerate}
It is straightforward to verify that after our modifications, paths in $\Pi$ are unique shortest paths in $G$.  Moreover, Step 1(a) of our modification deletes at most $p \cdot |E(G)|/(4p) = |E(G)|/4$ edges from $H$, and Step 1(b) of our modification deletes at most $n \cdot |E(G)|/(4n)$ edges from $H$, so we conclude that after Step 1 terminates, there are at least $|E(G)|/2$ edges in $G$, so $G$ is nonempty. Then by our termination conditions, we must have that
\begin{itemize}
    \item Every path $\pi \in \pi(P)$ must satisfy $|\pi| \geq |E(G)|/(4p)$.
    \item Every node $v \in V(G)$ must satisfy $\deg_G(v) \geq |E(G)|/(4n)$. 
\end{itemize}
Finally, we observe that $|E(G)| = \Theta(\textsc{DDP}(n, p))$, since we deleted at most half of the edges originally in $G$. This establishes the lemma.
\end{proof}

\section{Proof of Lemma \ref{lem:sourcewise_pres} from Section \ref{sec:pres_up}}
The goal of this section is to prove Lemma \ref{lem:sourcewise_pres}, which we restate below for convenience. 

\begin{lemma}[cf. Lemma 8 of \cite{BV21}]
Let $G$ be an $n$-node directed graph, and let $S \subseteq V(G)$ be a set of $|S| = s$ nodes of weak diameter $h$ (i.e., $\dist_G(s, t) \leq h$ for all $s, t \in S$). Let $P \subseteq S \times V$ be a set of $|P| = p$ sourcewise demand pairs in $G$. Then there exists a distance preserver of $G, P$ of size
$$
O\left((nsph)^{1/2} + n \right).
$$
\end{lemma}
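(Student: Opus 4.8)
The plan is to port the proof of Lemma~8 of \cite{BV21} to the directed setting; the only place where directedness matters is in bounding how much two demand paths can overlap, and this is exactly where the weak‑diameter hypothesis on $S$ does its work. First I would make the standard reductions: I may assume $H$ is a minimal distance preserver of $G,P$, that the associated shortest paths $\pi(\cdot,\cdot)$ are unique and consistent, and that the paths in $\pi(P)$ are pairwise edge‑disjoint, by the Independence Lemma for weighted distance preservers (Lemma~33 of \cite{BHT22}); one should check that this reduction only duplicates internal path vertices and hence preserves the source set $S$ and its weak diameter $h$ — and if it does not, one instead runs the argument below directly on the consistent, not‑necessarily‑edge‑disjoint, path system at the cost of constant factors. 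Vertices of $H$ of degree below a fixed constant contribute only $O(n)$ edges in total, so I may discard them. Since the paths are edge‑disjoint, $|E(H)| = \sum_{\pi\in\pi(P)}|\pi|$, so the goal becomes to bound this sum by $O((nsph)^{1/2}+n)$ (a final bucketing of the demand pairs by the value $\dist_G(s,t)$ removes any dependence on path lengths at the cost of an absorbed logarithmic factor).

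The key directed ingredient is the overlap bound: any two distinct paths $\pi_1 = \pi(s_1,t_1)$ and $\pi_2 = \pi(s_2,t_2)$ share at most $h+1$ vertices. If they have a common source, consistency and edge‑disjointness force them to share only that source, so assume $s_1 \neq s_2$. Let $v_1,\dots,v_k$ be their common vertices listed in the order they appear on $\pi_1$; by \cref{clm:path_inter} they appear in the reverse order $v_k,\dots,v_1$ on $\pi_2$. The position of $v_j$ along $\pi_1$ equals $\dist_G(s_1,v_j)$, which is strictly increasing in $j$, so (using integral edge weights, the setting of application) $\dist_G(s_1,v_k)\ge \dist_G(s_1,v_1)+(k-1)$, and likewise $\dist_G(s_2,v_1)\ge \dist_G(s_2,v_k)+(k-1)$. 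Since $s_1,s_2\in S$ have weak diameter $h$, $|\dist_G(s_1,u)-\dist_G(s_2,u)|\le h$ for every $u$ reachable from both, in particular for $u\in\{v_1,v_k\}$; chaining the four inequalities gives $2(k-1)\le 2h$, i.e.\ $k\le h+1$. The same argument applied to the sources lying on one path shows each demand path contains at most $h+1$ vertices of $S$. I also record the complementary fact that for a non‑source vertex $v\notin S$ and a source $s\in S$, at most one path of $\pi(P)$ with source $s$ passes through $v$ (two such would agree on the prefix $\pi(s,v)$, contradicting edge‑disjointness), so $v$ lies on at most $s$ demand paths.

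With these facts in hand I would run the counting argument of \cite[Lemma~8]{BV21}. Writing $c_v$ for the number of demand paths through $v$, we have $\sum_v c_v = \sum_{\pi}(|\pi|+1) = |E(H)|+p$, the overlap bound gives $\sum_v \binom{c_v}{2} = \sum_{\{\pi_1,\pi_2\}} |V(\pi_1)\cap V(\pi_2)| \le \binom{p}{2}(h+1)$, and separately $c_v\le s$ for $v\notin S$ with $\sum_{v\in S} c_v\le p(h+1)$; feeding these, together with the fact that the demand paths from a common source form an edge‑disjoint out‑tree with at most $p_s$ leaves (so their union over sources is a disjoint union of such trees), into the Cauchy--Schwarz/LP‑style bookkeeping of \cite{BV21} yields $\sum_{\pi}|\pi| = O((nsph)^{1/2}+n)$; adding back the discarded edges and undoing the bucketing then finishes the proof. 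I expect this last step to be the main obstacle: used crudely, the two structural facts only give roughly $\min\{ns,\ p(nh)^{1/2}\}+n$, which is weaker than $(nsph)^{1/2}+n$, so one genuinely needs the finer source‑tree accounting of \cite[Lemma~8]{BV21} to interpolate between the ``few sources'' and ``short overlaps'' regimes, and reproducing that bookkeeping in the directed setting — where, unlike undirected shortest paths, two paths from distinct sources can cross back and forth as in \cref{clm:path_inter} — is the delicate part; the secondary technical point is confirming the Independence Lemma reduction preserves the sourcewise structure and the parameter $h$.
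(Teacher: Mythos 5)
Your structural groundwork is sound --- the derivation that two demand paths share at most $h+1$ vertices (via \cref{clm:path_inter} and the weak-diameter bound $|\dist_G(s_1,u)-\dist_G(s_2,u)|\le h$) is correct and is the right directed analogue of the distance pigeonhole in the undirected proof. But the argument is incomplete exactly where it matters: you yourself note that feeding the pairwise overlap bound into the counting gives only $O(\min\{ns,\,p(nh)^{1/2}\}+n)$, and the ``finer source-tree accounting'' that is supposed to close the gap to $O((nsph)^{1/2}+n)$ is never carried out. The issue is quantitative: summing the $h+1$ overlap bound over all $\binom{p}{2}$ pairs of paths yields $O(p^2h)$ collisions, whereas the target bound requires only $O(sph)$ of them --- that is, for each fixed demand path $\pi(u,v)$ and each fixed source $t\in S$, the \emph{entire family} of $t$-sourced demand paths must collectively produce only $O(h)$ branching events with $\pi(u,v)$, not $O(h)$ per path. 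Nothing in your proposal establishes this aggregated bound, and it does not follow from edge-disjointness of the $t$-sourced paths alone.

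The paper closes exactly this gap with a labeling-plus-minimality argument that your outline does not contain. Each edge of the minimal preserver $H$ is labeled by a demand pair for which \emph{every} shortest path in $H$ must use that edge, and each branching event inherits the pair of labels of its two edges. If more than $sp(2h+1)$ branching events existed, pigeonhole would give a single source $t\in S$ and a single demand pair $(u,v)$ with $2h+2$ branching events whose divergence points $v_1,\dots,v_{2h+2}$ lie on $\pi(u,v)$; since $u,t\in S$, the quantity $\dist_H(t,v_i)-\dist_H(u,v_i)$ takes values in $[-h,h]$, so two indices $i,j$ agree, which manufactures an alternative shortest $t\leadsto v_j$ path $\pi(t,v_i)\circ\pi(u,v)[v_i,v_j]$ and contradicts the edge labeling. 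This yields $b=O(sph)$ branching events, and the edge count then follows from the Coppersmith--Elkin conversion $|E(H)|=O((nb)^{1/2}+n)$ (\cref{lem:branching_event}) --- a step your proposal also omits; without it, a branching-event or overlap count cannot be turned into an edge count at all. So the two genuinely missing ingredients are the per-source aggregation via the labeling/minimality argument and the invocation of \cref{lem:branching_event}.
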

\label{app:swise}
The proof of this lemma follows from an argument similar to that of Lemma 8 of \cite{BV21}. We will require the following lemma due to \cite{CE06}.
\begin{lemma}[\cite{CE06}; cf. Lemma 7 of \cite{BV21}]
    If there are $b$ branching events associated with a tiebreaking scheme $\pi(\cdot, \cdot)$ of $G, P$, then the associated distance preserver $H$ has size $|E(H)| = O((nb)^{1/2}+n)$. 
    \label{lem:branching_event}
\end{lemma}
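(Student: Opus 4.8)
The plan is to observe that Lemma~\ref{lem:branching_event}, despite being phrased in terms of distance preservers, is really a purely combinatorial fact about the out-degree sequence of $H$. Since $H = \bigcup_{(s,t)\in P}\pi(s,t)$ and, by Definition~\ref{def:be}, \emph{every} pair of distinct edges leaving a common vertex is a branching event, the number $b$ of branching events is exactly $b = \sum_{v \in V(H)} \binom{\outdeg_H(v)}{2}$. Writing $d_v = \outdeg_H(v)$, we have $|E(H)| = \sum_v d_v$, so the statement reduces to bounding $\sum_v d_v$ in terms of $n$ and $\sum_v \binom{d_v}{2}$.

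First I would split the vertices according to whether $d_v \le 1$ or $d_v \ge 2$. The vertices with $d_v \le 1$ contribute at most $|V(H)| \le n$ to $\sum_v d_v$, which is absorbed into the $O(n)$ term. For the set $A = \{v : d_v \ge 2\}$, I would use the elementary bound $d_v \le 2(d_v - 1)$ (valid for $d_v \ge 2$) followed by Cauchy--Schwarz: $\sum_{v \in A}(d_v - 1) \le |A|^{1/2}\bigl(\sum_{v\in A}(d_v-1)^2\bigr)^{1/2}$. Finally, since $(d_v-1)^2 \le d_v(d_v-1) = 2\binom{d_v}{2}$ for every $v$, we get $\sum_{v\in A}(d_v-1)^2 \le 2b$, and with $|A| \le n$ this gives $\sum_{v\in A} d_v \le 2(2nb)^{1/2}$. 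Adding the two contributions yields $|E(H)| \le n + 2(2nb)^{1/2} = O((nb)^{1/2} + n)$, as claimed. (One can check this is tight up to constants, e.g.\ via a complete bipartite preserver, so no finer argument is needed.)

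I do not expect any genuine obstacle here: the whole argument is a convexity/Cauchy--Schwarz estimate, and the only points to watch are that the definition of a branching event counts unordered pairs of out-edges at a vertex (so the count is exactly $\sum_v\binom{d_v}{2}$) and that $|V(H)| \le n$, which justifies the step $|A| \le n$; the distance-preserver structure is not actually used in this lemma. It is worth recording why the lemma is isolated: in the proof of Lemma~\ref{lem:sourcewise_pres} one first argues, following the argument of Lemma~8 of \cite{BV21} and exploiting the low weak diameter of the source set $S$, that a distance preserver built from shortest paths for $p$ sourcewise demand pairs with $|S| = s$ has only $O(sph)$ branching events; Lemma~\ref{lem:branching_event} then converts that count into the size bound $O((n\cdot sph)^{1/2} + n) = O((nsph)^{1/2} + n)$.
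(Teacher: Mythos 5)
Your proof is correct and is essentially the standard argument for this lemma, which the paper does not reprove but cites from \cite{CE06}: with this paper's Definition \ref{def:be} the branching-event count is exactly $\sum_v \binom{\outdeg_H(v)}{2}$, and the bound follows by Cauchy--Schwarz exactly as you write it. The only point worth being explicit about (and you are) is that $|E(H)|=\sum_v \outdeg_H(v)$ and $|V(H)|\le n$, so the distance-preserver structure plays no role in this step.
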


We will now prove Lemma \ref{lem:sourcewise_pres}.

\begin{proof}[Proof of Lemma \ref{lem:sourcewise_pres}.]
Let $\pi(\cdot, \cdot)$ be a tiebreaking scheme whose associated distance preserver 
$$
H = \bigcup_{(s, t) \in P} \pi(s, t)
$$
is minimal. For every edge $e \in E(H)$, we assign edge $e$ a label $\ell(e) \in P$. Specifically, we let $\ell(e) = (s, t) \in P$ for an arbitrary demand pair $(s, t) \in P$ such that every $s \leadsto t$ shortest path in $H$ uses edge $e$. We note that every edge $e \in E(H)$ can be labeled in this way -- otherwise, we could delete edge $e$ from $H$, contradicting the minimality of $H$.  Likewise, for each branching event $b = (e_1, e_2) \in E(H) \times E(H)$, we assign the branching event label $\ell(b) = (\ell(e_1), \ell(e_2))$. 

We will prove Lemma \ref{lem:sourcewise_pres} by bounding the number of branching events and applying Lemma \ref{lem:branching_event}. Suppose for the sake of contradiction that $H$ has greater than $sp(2h+1)$ branching events. Then by applying the pigeonhole principle, there is a node $t \in S$ and a demand pair $(u, v) \in P$ such that there are at least $2h+2$ distinct branching events $b_1, \dots, b_{2h+2}$ such that $\ell(b_i) = ((u, v), (t, w_i))$ for some $w_i \in V(H)$, for all $i \in [2h+2]$. For each branching event $b_i = (e_i^1, e_i^2)$, let $v_i \in V(H)$ be the unique node incident to $e_i^1$ and $e_i^2$. We may assume wlog that nodes $\{v_1, \dots, v_{2h+2}\}$ are ordered by nondecreasing distance $\dist_H(u, v_i)$. 
By the triangle inequality, we know that
$$
\dist_H(u, v_i) \leq \dist_H(u, t) + \dist_H(t, v_i).
$$
Subtracting $\dist_H(t, v_i)$ from both sides and using the fact that $S$ has diameter at most $h$, we obtain
$$
\dist_H(u, v_i) - \dist_H(t, v_i) \leq \dist_H(u, t) \leq h.
$$
Likewise, by the triangle inequality, 
$$
\dist_H(t, v_i) \leq \dist_H(t, u) + \dist_H(u, v_i),
$$
and using the fact that $S$ has diameter at most $h$, we obtain
$$
\dist_H(t, v_i) - \dist_H(u, v_i) \leq \dist_H(t, u) \leq h. 
$$
Combining our two bounds on $\dist_H(u, v_i) - \dist_H(t, v_i)$, we arrive at
$$
-h \leq \dist_H(u, v_i) - \dist_H(t, v_i) \leq h.
$$
Then by the pigeonhole principle, there exists branching events $b_i, b_j$ where $i \neq j \in [2h+2]$ such that 
$$
\dist_H(t, v_i) - \dist_H(u, v_i) = \dist_H(t, v_j) - \dist_H(u, v_j). 
$$
Rearranging, we obtain
$$
\dist_H(t, v_j) = \dist_H(t, v_i) + \dist_H(u, v_j) - \dist_H(u, v_i).
$$
Because we assumed that $\dist_H(u, v_i) \leq \dist_H(u, v_j)$  and $v_i, v_j \in \pi(u, v)$, it follows that
$$
\dist_H(u, v_j) - \dist_H(u, v_i) = \dist_H(v_i, v_j). 
$$
And therefore,
$$
\dist_H(t, v_j) = \dist_H(t, v_i) + (\dist_H(u, v_j) - \dist_H(u, v_i)) = \dist_H(t, v_i) + \dist_H(v_i, v_j). 
$$
This implies that the $t \leadsto v_j$ path
$$
\pi(t, v_i) \circ \pi(u, v)[v_i, v_j]
$$
is a shortest $t \leadsto v_j$ path in $H$, where $\pi(u, v)[v_i, v_j]$ denotes the $v_i \leadsto v_j$ subpath of path $\pi(u, v)$.
We conclude that there is a shortest $t \leadsto v_j$ path that uses an edge in $\pi(u, v)$ to reach node $v_j$. Consequently, not every shortest $t \leadsto v_j$ path includes an edge $e_j$  in the branching event $b_j$. This contradicts our labeling scheme of branching events.

We conclude that $H$ has $O(sph)$ branching events. Plugging in Lemma \ref{lem:branching_event}, we get that
$$
|E(H)| = O((nsph)^{1/2} + n),
$$
completing the proof of Lemma \ref{lem:sourcewise_pres}. 
\end{proof}

\end{document}